\documentclass[
    11pt,
    a4paper,
    DIV=14,
]{scrartcl}

\usepackage{microtype}
\usepackage{iftex}
\ifpdftex
    \usepackage[T1]{fontenc}
    \usepackage[utf8]{inputenc}

\else
    \usepackage{fontspec}
\fi
\usepackage[british]{babel}
\usepackage[
    backend=biber,
    sorting=nyt,
    style=numeric,
    doi=true,            
    eprint=false,            
    isbn=true,            
    url=false,            
    maxnames=6,
    dateabbrev=false,
    date=year,
    giveninits=true
]{biblatex}

\renewbibmacro{in:}{}
\DeclareFieldFormat{pages}{#1}
\DeclareFieldFormat[misc]{title}{`#1'\isdot}
\DeclareFieldFormat[online]{title}{`#1'\isdot}
\DeclareFieldFormat{journaltitle}{#1\isdot}
\DeclareFieldFormat[article,periodical]{volume}{\mkbibbold{#1}}
\DeclareFieldFormat{pages}{#1}
\AtEveryBibitem{%
	\clearfield{number}}    \AtEveryBibitem{\clearfield{pagetotal}}    \AtEveryBibitem{\clearfield{volumes}}
\AtEveryBibitem{\clearfield{language}}

    \AtEveryBibitem{%
        \iffieldundef{doi}
        {
            \iffieldundef{issn}
            {}
            {\clearfield{isbn}}
        }
        {
            \clearfield{issn}%
            \clearfield{isbn}
        }%
    }
\usepackage[normalem]{ulem}

\usepackage{xcolor}
    \definecolor{cblue}{rgb}{0.16, 0.32, 0.75}
    \definecolor{cred}{rgb}{0.7, 0.11, 0.11}
\usepackage{footnote}

\usepackage{
    mathtools,
    amssymb,
    amsfonts,
}
    \allowdisplaybreaks

\usepackage{keytheorems}

\usepackage{
    bbold,
    dsfont,
    mathrsfs,
    nicefrac,
    derivative, 
    tensor,
}

\usepackage[
    colorlinks=true,
    breaklinks=false,
    draft=false,
    bookmarksopen=false,
    bookmarksopenlevel=0,
    bookmarksnumbered=true,
    linkcolor=cred,
    citecolor=cblue,
    urlcolor=black
]{hyperref}

\usepackage{todonotes}
\usepackage{etoolbox}

\usepackage{enumitem}
    \newlist{statements}{enumerate}{1}
    \setlist[statements]{label=(\roman*)}
    \newlist{assumptionlist}{enumerate}{1}
    \setlist[assumptionlist]{label=(\roman*)}

\usepackage{authblk}

\usepackage{ifthen}
\usepackage{ifdraft}

\usepackage{zref-clever}
    \newcommand{\cref}[1]{\zcref{#1}}
    \newcommand{\Cref}[1]{\zcref[S]{#1}}
    \zcsetup{
        cap,
        abbrev,
        nameinlink=false 
    }
    \zcRefTypeSetup{statement}{
        Name-sg = Statement,
        name-sg = statement,
        Name-pl = Statements,
        name-pl = statemens,
    }
    \zcsetup{
        countertype = {statementsi = statement, comp=false, abbrev=false},
    }
    \zcRefTypeSetup{assumptionlistitem}{
        Name-sg = Assumption,
        name-sg = Assumption,
        Name-pl = Assumptions,
        name-pl = Assumptions,
    }
    \zcsetup{
        countertype = {assumptionlisti = assumptionlistitem, comp=false, abbrev=false},
    }

\usepackage{csquotes}

\newkeytheorem{theorem}[
	parent=section,
    refname={Theorem,Theorems},
    Refname={Theorem,Theorems},
]

\newkeytheorem{maintheorem}[
    name={Theorem},
    refname={Theorem,Theorems},
    Refname={Theorem,Theorems},
]

\newkeytheorem{lemma}[
    sibling=theorem,
    refname={Lemma,Lemmas},
    Refname={Lemma,Lemmas},
]
\newkeytheorem{corollary}[
    sibling=theorem,
    refname={Corollary,Corollaries},
    Refname={Corollary,Corollaries},
]
\newkeytheorem{proposition}[
    sibling=theorem,
    refname={Proposition,Propositions},
    Refname={Proposition,Propositions},
]
\newkeytheorem{claim}[
    numbered=no,
    sibling=theorem,
    refname={Claim,Claims},
    Refname={Claim,Claims}
]
\newkeytheorem{definition}[
    sibling=theorem,
    style=definition,
    refname={Definition,Definitions},
    Refname={Definition,Definitions},
]
\newkeytheorem{remark}[
    sibling=theorem,
    style=remark,
    refname={Remark,Remarks},
    Refname={Remark,Remarks}
]

\newkeytheorem{assumption}[
    style=definition,
    name={List of properties},
    refname={list of properties,lists of properties},
    Refname={List of properties,Lists of properties},
]

\newkeytheorem{subassumption}[
    name={},
    parent=assumption,
    style=definition,
    refname={Property,Property},
    Refname={Property,Properties}
]

\NewDocumentEnvironment{alignb}{b}{%
  \begin{align*}
  \refstepcounter{equation} #1 \tag{\theequation}
  \end{align*}
}{\ignorespacesafterend}

\newcommand{\assumpA}{\cref{assump:A-H-cinf,assump:A-U-cinf,assump:A-H-cont,assump:A-U-diff,assump:A-H-bound,assump:A-U-bound,assump:A-H-period,assump:A-H-spec,assump:A-H-conj}}

\newcommand{\assumpB}{\cref{assump:B-V-bound,assump:B-V-sym,assump:B-V-conj,assump:B-V-period,assump:B-V-spec,assump:B-V-conj}}

\newcommand{\assumpC}{\cref{assump:C-commutator-bound,assump:C-H-bound,assump:C-H-conj,assump:C-H-diff,assump:C-H-periodic,assump:C-H-self-adjoint,assump:C-H-spec}}

\newcommand{\bigo}{\mathcal{O}}

\newcommand\tmax{t_{\mathrm{max}}}
\newcommand{\Tcoeff}[2]{ \tensor{\eof*{#1}}{^{\lower0.25em \hbox{$\scriptstyle[#2]$}}}  }
\newcommand{\Ttrunc}[3]{\eof*{#1}\indices{ ^{\lower0.25em \hbox{$\scriptstyle[#2;#3]$} } } }

\newcommand{\cinfty}{\mathcal C^\infty}
\newcommand{\cinftyofh}{\cinfty(H_0)}
\newcommand{\cinftyzero}{\mathcal C^\infty_0}
\newcommand{\cinftyzeroofh}{\cinftyzero(H_0)}
\newcommand{\dom}{\mathcal{D}}

\newcommand{\Kop}[1]{\mathcal{K}_{#1}}
\newcommand{\Kopeff}[2]{\mathcal{K}_{#1}^{(#2)}}

\newcommand{\Sop}[1]{S_{#1}}
\newcommand{\Sopeff}[2]{S_{#1}^{(#2)}}

\DeclarePairedDelimiterX{\avg}[2]{\langle}{\rangle_{#1}}{#2}

\DeclarePairedDelimiterXPP{\osc}[3]{\Delta_{#1}^{(#2)} }{(}{)}{}{#3}

\newcommand{\Heff}[2]{H_{\mathrm{eff},#1}^{(#2)}}
\newcommand{\Heffext}[2]{\hat{H}_{\mathrm{eff},#1}^{(#2)}}
\newcommand{\Heffl}[1]{ H_{\mathrm{eff}} ^{[#1]} }

\newcommand{\HFM}[2]{
    \ifblank{#1}{H_{\mathrm{FM}}^{(#2)}}{H_{\mathrm{FM},#1}^{(#2)}}
}
\newcommand{\HFMext}[2]{
    \ifblank{#1}{\hat{H}_{\mathrm{FM}}^{(#2)}}{\hat{H}_{\mathrm{FM},#1}^{(#2)}}
}
\newcommand{\HFMl}[2]{H_{\mathrm{FM}}^{\ifblank{#2}{}{(#2)}[#1]}}
\newcommand{\one}{\mathds{1}}

\newcommand{\annotateeqn}[2]{\stackrel{\mathclap{#1}}{#2}}

\newcommand{\textfrac}[2]{ {\textstyle \frac{#1}{#2}}}

\newcommand{\ran}{\operatorname{Ran}}

\newcommand{\e}{\mathrm{e}} 
\newcommand{\iu}{\mathrm{i}\mkern1mu} 

\newcommand\dd\odif

\newcommand\creation{a^\ast}
\newcommand\ad\creation
\newcommand\adag\creation
\newcommand\annihilation{a}
\newcommand\an\annihilation
\newcommand\photonnumber{N}
\newcommand\nn\photonnumber

\newcommand{\adj}{^\ast}
\newcommand{\inv}{^{-1}}
\newcommand\close\overline
\newcommand\conj\overline
\newcommand{\primed}{^\prime}

\newcommand{\HH}{\mathcal{H}}
\newcommand{\N}{\mathds{N}} 
\newcommand{\Z}{\mathds{Z}} 
\newcommand{\R}{\mathds{R}} 
\newcommand{\C}{\mathds{C}} 

\newcommand\Ltwo[1]{L^2(#1)}

\DeclarePairedDelimiter\of()

\DeclarePairedDelimiter\eof{[}{]}
\DeclarePairedDelimiter\floorof\lfloor\rfloor

\DeclarePairedDelimiterXPP{\textfloorof}[1]{\begingroup\textstyle}{\lfloor}{\rfloor}{\endgroup}{#1}

\DeclarePairedDelimiterX\comm[2]{[}{]}{
    \ifblank{#1}{\;\cdot\;}{#1}, \ifblank{#2}{\;\cdot\;}{#2}
}
\newcommand\commutator\comm

\DeclarePairedDelimiterX\innerp[2]{\langle}{\rangle}{
    \ifblank{#1}{\;\cdot\;}{#1},\; \ifblank{#2}{\;\cdot\;}{#2}
}

\DeclarePairedDelimiterX\braket[2]{\langle}{\rangle}{
    \ifblank{#1}{\;\cdot\;}{#1}\,\delimsize\vert\, \ifblank{#2}{\;\cdot\;}{#2}
}

\DeclarePairedDelimiterX\ket[1]{\lvert}{\rangle}{#1}

\newcommand{\emptyplaceholder}{\;\cdot\;} 
\DeclarePairedDelimiterX\norm[1]\lVert\rVert{
    \ifblank{#1}{\emptyplaceholder}{#1}
}

\DeclarePairedDelimiterXPP\pnorm[1]{}\lVert\rVert{_{p}}{
    \ifblank{#1}{\emptyplaceholder}{#1}
}

\DeclarePairedDelimiterXPP\twonorm[1]{}\lVert\rVert{_{2}}{
    \ifblank{#1}{\emptyplaceholder}{#1}
}

\DeclarePairedDelimiterXPP\opnorm[1]{}\lVert\rVert{_{\mathrm{op}}}{
    \ifblank{#1}{\emptyplaceholder}{#1}
}

\DeclarePairedDelimiterX\abs[1]\lvert\rvert{#1}
\DeclarePairedDelimiterX\aof[1]\langle\rangle{#1}
\providecommand\given{}
\DeclarePairedDelimiterXPP\Prob[1]{\mathcal{P}}(){}{
   \renewcommand\given{\nonscript\:\delimsize\vert\nonscript\:\mathopen{}}
   #1}

\providecommand\given{}
\newcommand\SetSymbol[1][]{%
   \nonscript\:#1\vert
   \allowbreak
   \nonscript\:
   \mathopen{}}
\DeclarePairedDelimiterX\Set[1]\{\}{%
\renewcommand\given{\SetSymbol[\delimsize]}
#1
}

\newcommand{\evalat}[2]{\left.#1\right\vert_{#2}}

\DeclarePairedDelimiterXPP\domof[1]{\dom}{(}
{)}{}{#1}

\newcommand{\sodv}[2]{\mathrm s\text{-}\frac{\mathrm d #1}{\mathrm d #2}}

\newcommand\blfootnote[1]{%
  \begingroup
  \renewcommand\thefootnote{}\footnote{#1}%
  \addtocounter{footnote}{-1}%
  \endgroup
}

\hypersetup{
    pdftitle={The Floquet--Magnus expansion of unbounded operators},
}

\date{}
\title{The Floquet--Magnus expansion of unbounded operators}

\author[1]{Daniel Burgarth}
\author[2]{Robin Hillier}
\author[1]{Davide Lonigro}
\author[1]{Leonhard Richter\thanks{Corresponding author: \href{mailto:leonhard.richter@fau.de}{leonhard.richter@fau.de}}}

\affil[1]{Institute of Theoretical Physics, Friedrich-Alexander-Universität Erlangen-Nürnberg, Staudtstraße 7/B3, 91058 Erlangen, Germany}
\affil[2]{School of Mathematical Sciences, Lancaster University, Lancaster LA1 4YF, United Kingdom}

\numberwithin{equation}{section}

\begin{document}
\pagestyle{headings}
\maketitle
\vspace{-1cm}
\begin{abstract}
The Floquet--Magnus expansion is a widely used tool to derive effective descriptions of time-periodic quantum systems by approximating their dynamics with a time-independent Hamiltonian.
However, its standard formulation is, strictly speaking, restricted to bounded Hamiltonians.
In this work, we extend its definition and analysis to a broad class of time-periodic unbounded Hamiltonians.
Our approach is based on an a priori distinct nonperturbative framework for the construction of effective Hamiltonians, which we show to reproduce the Floquet--Magnus expansion.
A particular strength of our framework is that it allows us to prove that the resulting effective dynamics approximates the original time evolution propagators to arbitrary order in the high-frequency limit without requiring convergence of the Floquet--Magnus expansion, a condition that is 
already highly restrictive even in the bounded setting.
We illustrate the scope of the method on representative models: the quantum Rabi Hamiltonian in the interaction picture, 
and the periodically driven quantum harmonic oscillator.
\end{abstract}

\blfootnote{2020 \textit{Mathematics Subject Classification}. Primary 46N50; 47N50; 47A58; 81Q10}

\BeforeTOCHead[toc]{{\pdfbookmark[1]{\contentsname}{toc}}}
\tableofcontents
\section{Introduction}
Quantum systems governed by time-dependent Hamiltonians arise naturally across a wide range of physical settings. Time dependence may originate from externally applied driving fields, as commonly encountered in quantum control and quantum technologies, from time-dependent changes of frame introduced to simplify the analysis of the dynamics, or from a combination of both, see e.g.~\cite{cohenReminiscenceClassicalChaos2023,abaninEffectiveHamiltoniansPrethermalization2017}.
While such descriptions are often physically indispensable, they significantly increase the mathematical complexity of the problem: even for comparatively simple models, solving the corresponding time-dependent Schrödinger equation exactly is typically out of reach, making systematic approximation schemes indispensable.

A widely adopted strategy is to approximate the true time-dependent dynamics by an effective autonomous evolution generated by a suitably constructed time-independent Hamiltonian. Beyond their practical usefulness, such effective descriptions play a central conceptual role, as they allow one to isolate the relevant dynamical features on appropriate time scales. However, turning this idea into a controlled and quantitatively reliable approximation scheme is highly nontrivial, especially in infinite-dimensional settings relevant to realistic quantum systems.

Among the available approaches, the Floquet--Magnus (FM) expansion, see \cite{Blanes_Casas_Oteo_Ros_2009} for a pedagogical overview, occupies a distinguished position. For periodically driven systems, Floquet theory ensures that the overall propagator can be factorised into a periodic propagator capturing the microscopic time-scale and an autonomous time evolution describing the long-term stroboscopic dynamics. The generator of the latter is commonly referred to as an effective Hamiltonian for the system. The FM expansion provides a formal series representation of this effective generator in powers of the driving period, or equivalently, in the inverse driving frequency~\cite{Blanes_Casas_Oteo_Ros_2009,bukovUniversalHighfrequencyBehavior2015}.
In the high-frequency regime, it is widely expected—and extensively exploited in applications—that truncations of this series yield accurate approximations of the true dynamics, thereby forming the basis of effective descriptions in quantum control and quantum optics.
As a consequence, the FM expansion has become a central analytical tool in areas ranging from quantum control to the engineering of effective Hamiltonians in many-body systems, where it is routinely used to design and interpret experimentally relevant regimes.

Despite its widespread use, the mathematical status of the FM expansion remains delicate. In its standard formulation, the analysis is typically carried out in operator norm, which restricts its applicability to bounded Hamiltonians~\cite{Blanes_Casas_Oteo_Ros_2009}.
This limitation is severe from a physical perspective, as most Hamiltonians of interest—particularly in quantum optics and many-body physics—are unbounded, 
so that the standard formulation of the FM expansion does not rigorously apply in precisely the regimes where it is most widely used.
An immediate issue in the case of unbounded operators is that the FM expansion is expressed in terms of nested commutators whose domain properties are far from trivial.
Furthermore, even when each term in the expansion is well-defined, establishing convergence of the series is a highly nontrivial task.

Moreover, even within the bounded setting, operator-norm bounds capture only worst-case behaviour and can scale poorly with system parameters (e.g., system size), while existing justifications of the FM expansion rely on convergence properties of the Magnus series that are known to be restrictive.
Available error estimates are typically derived from bounds on the remainder of the exponential series and consequently exhibit an unfavourable growth in time, obscuring the effective time scales on which truncated expansions remain accurate. Besides, they do not vanish in the high-frequency limit \cite{apelSharperMagnusExpansion2025,fangHighorderMagnusExpansion2025}.
This behaviour 
limits their usefulness in regimes of physical interest.

A first step towards overcoming these limitations was taken in \cite{deyErrorBoundsFloquetMagnus2025}, where two of the present authors and collaborators developed a nonperturbative framework for constructing effective Hamiltonians of arbitrary order for periodically driven systems in the bounded case, building on an integration-by-parts technique first developed in \cite{burgarth2022one}.
This approach recovers the FM expansion, while at the same time providing error bounds that do not rely on the convergence of the Magnus series, exhibit a substantially improved scaling in time, and yield errors that vanish in the high-frequency limit.

The purpose of the present work is to extend this program to the mathematically much more demanding—and physically most relevant—setting of unbounded Hamiltonians.
We develop a general construction of effective Hamiltonians for periodically driven quantum systems generated by possibly unbounded operators.
Our approach is formulated in a Hilbert space $\mathcal{H}$ endowed with a reference self-adjoint operator $H_0$ that sets the natural energy scale of the problem.
We adopt a strong-topology framework on the space $\mathcal{C}^\infty(H_0)$ of smooth vectors associated with $H_0$.
Under suitable regularity assumptions relating the time-dependent Hamiltonian to $H_0$, we first derive a general representation of the difference between the propagators generated by two time-dependent Hamiltonians in terms of higher-order relative actions (\cref{th-S:iterated-ip}).
Building on this identity, we develop an explicit recursive procedure (\cref{th-S:U-diff-periods}) to construct effective time-independent Hamiltonians whose associated dynamics approximates the true evolution up to errors of order $T^{L+1}$, where $T$ denotes the driving period and \(L\) is the number of iterations in the recursion, with explicit control at each order.

Instead of generalizing the standard FM approach for finite-dimensional systems to an unbounded setting, we deliberately follow the alternative approach developed in Ref.~\cite{deyErrorBoundsFloquetMagnus2025} and described above.
The reason for this is twofold. 
The first reason is that generalizing the standard approach \cite{Blanes_Casas_Oteo_Ros_2009} would involve studying the existence of the derivative of the exponential map and its inverse on an infinite-dimensional Lie algebra.
There has been a lot of progress in the field of infinite-dimensional Lie algebras in recent decades, see the compendium \cite{gloecknerInfiniteDimensionalLieGroups2026}, but unbounded generators bring further challenges. 
Our approach bypasses this by directly proving error estimates for each truncation of the FM expansion without having to consider the full series and without using perturbation theory.
The second reason is that, similarly to Ref.~\cite{deyErrorBoundsFloquetMagnus2025}, the error estimates provided by our approach are stronger because they automatically converge in the high-frequency limit, no matter whether or not the overall series converges.

We then show that our construction of effective Hamiltonians reproduces the FM expansion order by order (\cref{th-S:FM-comparison}). In this sense, our results provide a natural and robust extension of the FM expansion to the unbounded setting. 
To our knowledge, this constitutes the first systematic and general framework for constructing and justifying FM-type expansions for quantum systems governed by unbounded Hamiltonians.

Our proof requires additional mild assumptions on the relation between the time-dependent Hamiltonian and the spectrum of $H_0$, and on the existence of a suitable conjugation structure for the driving Hamiltonian.
While our assumptions are not expected to be fully necessary, some form of restriction is unavoidable in the unbounded setting, as the FM expansion cannot be expected to be valid in full generality.
The following example illustrates this fact.
Consider the Hamiltonian 
\begin{equation}
    H(t)
    = \sin\of*{\omega t} p^2 + \cos\of*{\omega t} q
    ,
\end{equation}
on \(\HH = \Ltwo{[0,\infty)}\), where \(q\) is the usual position operator, \(p^2\) is the Laplacian with Dirichlet boundary conditions at the origin, and \(\omega \in \R\). 
It is easy to see that the first order in the FM expansion vanishes. 
Therefore, the leading nontrivial order is the second term in the FM expansion, whose formal expression is given by \(\frac{\iu}{2\omega} \comm{p^2}{q} = \frac{1}{\omega} p\). This operator admits no self-adjoint extensions on the half line \cite{Burgarth_Facchi_Fraas_Hillier_2021}.
This means that, in this case, the leading order of the FM expansion does not generate unitary dynamics, rendering the corresponding dynamics ill-defined.
Our assumptions are designed to exclude this type of pathology.

Our framework includes, under some suitable technical requirements, two cases of particular interest. 
First, we can deal with cases where the time dependence of the Hamiltonian arises from the interaction picture of a time-independent Hamiltonian of the form $ H_0 + V$ (\cref{th:summary-interaction-pic}). 
Second, our framework covers cases where the time-dependent Hamiltonian $H(t)$ is a self-adjoint operator with time-independent domain (\cref{th:summary-hyperbolic-setting}), leaning on Kato's theory of evolution equations in the hyperbolic setting \cite{kato1953integration,kato1970linear}.
Additionally, we illustrate the scope of our method on physically relevant examples, including the quantum Rabi model.
In particular, we recover the rotating-wave approximation (RWA) of the quantum Rabi model as the leading-order effective Hamiltonian, and systematically construct higher-order corrections such as the Bloch--Siegert Hamiltonian.

We conclude by noting that the integration-by-parts framework developed in this work is constructive in nature, as it yields explicit error bounds at each order. 
While such bounds are necessarily intricate and typically not optimal in full generality, the method can be specialised to concrete models or classes of systems, leading to significantly sharper and more tractable estimates than currently available in the literature. 
This perspective will be pursued in forthcoming work, where we apply the present framework to physically relevant models of light--matter interaction.

The paper is organised as follows. 
In \cref{sec:assumptions-and-main-results}, we introduce the notation, state our assumptions, and present the main results. 
In \cref{sec:iterated-integration-by-parts}, we develop the main structural method, which is then applied in \cref{sec:Heff} to define the effective Hamiltonians and establish the main results. 
In \cref{sec:interaction-picture} and \cref{sec:hyperbolic-setting}, we discuss the special cases of time-independent Hamiltonians in the interaction picture, and time-dependent Hamiltonians with constant self-adjointness domain, respectively. 
In \cref{sec:examples} we illustrate our results on representative examples.

\section{Main results}\label{sec:assumptions-and-main-results}

Throughout the paper, we consider families of strongly continuous unitary operators generated
by time-dependent Hamiltonians with period $T$. 
The parameter $T$ plays a central role in our analysis, as it governs the accuracy of the approximations arising from the Floquet--Magnus expansion.

For convenience, we adopt the following convention. Given a $1$-periodic Hamiltonian $(H(t))_{t \in \R}$ which we assume to generate a unitary propagator $(U(t,s))_{t,s \in \R}$, we consider the rescaled $T$-periodic Hamiltonians
\begin{equation}\label{eq:rescaling}
H^{(T)}(t) \coloneqq H(t/T), \qquad T > 0,
\end{equation}
which we again assume to generate corresponding propagators $(U^{(T)}(t,s))_{t,s \in \R}$. 
This representation will be used throughout the paper.
Unless otherwise specified, we fix the initial time to be \(0\) and describe the evolution with respect to this reference time; accordingly, we write \(U(t) \coloneqq  U(t,0)\) and \(U^{(T)}(t) \coloneqq  U^{(T)}(t,0)\).

We begin in \cref{sec:assumption-a} by formulating a minimal set of properties under which our results hold. 
These properties provide a convenient framework for the desired results, although some of them may be nontrivial to verify in concrete situations. 
We then turn in \cref{sec:assumption-b,sec:assumption-c} to two more explicit sets of conditions, tailored to physically relevant settings and formulated in a way that is more readily verifiable in applications.

\subsection{General setting}
\label{sec:assumption-a}
In the general setting considered in this work, we will assume that \((H(t))_{t\in\R}\) and \((U(t))_{t\in\R}\) are some families of operators satisfying certain properties which will be specified below and invoked as needed in the subsequent theorems.
In the following and throughout the paper, we denote by \(\N_0\) the non-negative integers \(\Set{0,\,1,\,2,\,\dots}\), and with \(\N\) the strictly positive integers \(\Set{1,\,2,\,\dots}\).
\begin{assumption}\label{assump:A}
Let $\HH$ be a separable Hilbert space, $(U^{(T)}(t))_{t\in\R}$ a family of unitary operators on $\HH$ for every $T>0$, with $U^{(T)}(0)=\one$, $(H(t))_{t\in\R}$ a family of (possibly unbounded but not necessarily self-adjoint) operators on $\HH$, and $H^{(T)}(t):=H(t/T)$, for all $t\in\R$ and $T>0$. Furthermore, let $H_0$ be a (possibly unbounded) self-adjoint operator on $\HH$. Hereafter we adopt the standard notation
\begin{equation}
  \cinftyofh = \bigcap_{m=1}^\infty \domof{H_0^m}.
\end{equation}
\begin{subassumption}\label{assump:A-H-cinf}
    For all $t\in\R$, $\cinftyofh\subseteq \domof{H(t)}$ and $H(t) \cinftyofh \subseteq \cinftyofh$.
\end{subassumption}
\begin{subassumption}\label{assump:A-U-cinf}
    For all $t\in\R$ and \(T >0\), $U^{(T)}(t) \cinftyofh \subseteq \cinftyofh$.
\end{subassumption}
\begin{subassumption}\label{assump:A-H-cont}
    For all $m\in\N_0$ and $\psi\in\cinftyofh$, the map $t\in\R\mapsto H_0^m H(t)\psi$ is continuous.
\end{subassumption}
\begin{subassumption}\label{assump:A-U-diff}
    For all $m\in\N_0$, \(T>0\), and $\psi\in\cinftyofh$, the map $t\in\R\mapsto H_0^m U^{(T)}(t)\psi$ is differentiable and
    \begin{equation}
        \odv{}{t} H_0^m U^{(T)}(t)\psi=-\iu H_0^m H^{(T)}(t)U^{(T)}(t)\psi.
    \end{equation}
\end{subassumption}
\begin{subassumption}\label{assump:A-H-bound}
    For every $m\in\N_0$ and every compact interval $I$, there exist $k_m \in \N_0$, $a_m,\, b_m\geq 0$ such that
    \begin{equation}\label{eq:assump:H-bound}
       \norm{H_0^m H(t) \psi}
       \leq a_m \norm{H_0^{m+k_m}\psi} + b_m \norm{\psi}
    \end{equation}
    for all $\psi\in\cinftyofh$ and $t\in I$.
\end{subassumption}

\begin{subassumption}
\label{assump:A-U-bound}
    For every $m\in \N$, \(T > 0\), and compact interval $I$, there exist $k'_m,\,a'^{(T)}_m,\, b'^{(T)}_m\geq 0$ such that
    \begin{equation}\label{eq:assump:A-U-bound}
        \norm{H_0^m U^{(T)}(t) \psi }
        \leq a'^{(T)}_m \norm{H_0^{m+k'_m}\psi} + b'^{(T)}_m \norm{\psi},
    \end{equation}
    for all $\psi\in\cinftyofh$ and $t\in I$, and
    \begin{equation}\label{eq:assump:A-U-bound-asymptotics}
        a'^{(T)}_m = \bigo(1)
        ,\quad
        b'^{(T)}_m = \bigo(1)
    \end{equation}
    as \(T \to 0\) (cf.~\cref{def:big-O}).
\end{subassumption}
\begin{subassumption}\label{assump:A-H-period}
    $t\mapsto H(t)$ is 1-periodic on $\cinftyofh$, i.e., $H(t+1)\psi= H(t)\psi$ for all $t\in\R$ and $\psi\in\cinftyofh$.
\end{subassumption}
\begin{subassumption}\label{assump:A-H-spec}
    There exist $\alpha_j,\beta_j\in\R$, for $j\in\Z$, with $\alpha_j\leq\beta_j\leq\alpha_{j+1}$, and $K>0$, such that
    \begin{equation}
       \sigma(H_0) \subseteq \bigcup_{j\in\Z} [\alpha_j,\beta_j]
    \end{equation}
    and, if $|i-j|>K$, then
    \begin{equation}
      E_{0}([\alpha_i,\beta_i]) H(t) E_{0}([\alpha_j,\beta_j])
      = 0,
    \end{equation}
    for all $t\in\R$, where $E_{0}$ denotes the projection-valued spectral measure of the self-adjoint operator $H_0$.
\end{subassumption}
\begin{subassumption}\label{assump:A-H-conj}
    There exists a conjugation $J$ on $\HH$ such that, for all $t\in\R$, $JH(t) \subseteq H(-t)J$.
\end{subassumption}
\end{assumption}
The guiding idea behind these properties is the following. 
While the time-dependent Schrödinger equation is generally difficult to analyse directly, we assume the existence of a well-behaved reference Hamiltonian $H_0$ that sets the natural energy scale of the system. 
The time-dependent Hamiltonian $H(t)$ is then controlled relative to this reference operator, in the sense that its size and regularity can be estimated with respect to $H_0$ (see \cref{assump:A-H-cinf,assump:A-H-cont,assump:A-H-bound}). 
This is reminiscent of perturbative frameworks, although our approach is not perturbative in nature.

Furthermore, since the structure of the propagator $U^{(T)}(t)$ is likewise intricate in the time-dependent setting, we will impose additional conditions that control its regularity and growth relative to the energy scale defined by $H_0$, cf.\ \cref{assump:A-U-cinf,assump:A-U-diff,assump:A-U-bound}.
\Cref{assump:A-H-period} encodes the time periodicity of the system (which, as discussed above, can be rescaled to any given period).
\Cref{assump:A-H-spec} requires that $H(t)$ modifies the spectrum and spectral projections of $H_0$ in a controlled (bounded) manner, reinforcing the role of $H_0$ as the underlying energy scale.
Finally, \cref{assump:A-H-conj} can be interpreted as compatibility with time-reversal symmetry via a conjugation, in analogy with structures appearing in the CPT theorem. As we will see, this implies that our construction yields effective Hamiltonians having self-adjoint extensions at every order.

\Cref{assump:A-H-spec,assump:A-H-conj} are primarily of a technical nature and may be further relaxed.
\Cref{assump:A-H-spec}, which restricts the coupling of spectral components of $H_0$ to a finite range, will be used to ensure that the effective Hamiltonians constructed in our approach are at least symmetric.
It is plausible that this requirement could be weakened to allow for sufficiently fast decay of such couplings.
Similarly, \cref{assump:A-H-conj} will be imposed to guarantee that the effective Hamiltonians admit self-adjoint extensions, thereby ensuring that their associated dynamics are well-defined.

Our first result concerns the difference between the unitary dynamics generated by two given time-dependent Hamiltonians.
More precisely, for any prescribed order $L$, this difference can be expressed in terms of a family of higher-order relative actions.
This extends to the unbounded setting the results established in \cite{deyErrorBoundsFloquetMagnus2025}, which themselves build on a higher-order generalization of an integration-by-parts technique introduced in \cite{burgarth2022one} for bounded operators (see also \cite{burgarth2024taming,richter2026quantifying,HahnTrotter24,YiFasterDD26}).

\begin{maintheorem}[see \cref{thm:iterated-integration-by-parts}]\label{th-S:iterated-ip}
    Let $H_0$ be a self-adjoint operator on $\HH$; let \((U_1(t))_{t\in \R}\) be a family of unitary operators with corresponding family \((H_1(t))_{t\in \R}\) such that \cref{assump:A-H-cinf,assump:A-H-cont,assump:A-H-bound} hold for $T=1$, and $t\mapsto U_1(t)^*$ is strongly differentiable on $\cinftyofh$ with
    \begin{equation}
        \frac{\dd}{\dd t}U_1(t)^*\psi = \iu U_1(t) H_1(t) \psi, \quad \forall\psi\in\cinftyofh,
    \end{equation}
    and $U_1(0)=\one$; furthermore, let \((U_2(t))_{t\in \R}\) be a family of unitary operators with corresponding family \((H_2(t))_{t\in \R}\) such that \cref{assump:A-H-cinf,assump:A-H-cont,assump:A-U-diff,assump:A-H-bound,assump:A-U-cinf} hold for $T=1$ and $U_2(0)=\one$.
    Then, for every \(\psi \in \cinftyofh\), \(L \in \N\), and \(t\in\R\), we have
    \begin{equation}
        \label{eq-S:iterated-integration-by-parts}
        \of[\big]{U_1(t) - U_2(t)}\psi
        =
        \begin{multlined}[t]
            \sum_{j=1}^{L+1} (-\iu)^j \Sop{j}(t) U_2(t) \psi
            \\
            \!- \iu \int_0^t U_1(t) U_1(s)\adj \of*{
                \sum_{j=0}^L (-\iu)^j \avg*{T}{\Kop{}\of*{\Sop{j}}}
                + (-\iu)^{L+1} \Kop{s}\of*{\Sop{L+1}(s)}
            } U_2(s) \psi \,\dd s
            ,
        \end{multlined}
    \end{equation}
    where we use the shorthand notation
    \begin{equation}
        \avg{T}{A}
        = \frac{1}{T} \int_0^T A(\tau) \dd \tau
    \end{equation}
    \begin{equation}\label{eq:Kop-summary}
        \Kop{s}(A)
        = H_1(s) A - A H_2(s),
    \end{equation}
    and the operators $S_j(t)$ are iteratively defined on $\cinftyofh$ as follows:
     $\Sop{0}=\one$, and
     \begin{equation}\label{eq:Sop-summary}
         \Sop{j+1}(t)
         = \int_0^t\Kop{s}(\Sop{j}(s))\dd s - t \avg*{T}{\Kop{\cdot} (\Sop{j})}
     \end{equation}
     for all $j\in\N_0$.
\end{maintheorem}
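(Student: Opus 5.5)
The plan is to establish the identity by induction on \(L\), starting from a first-order Duhamel-type formula and then repeatedly integrating by parts. Before starting, I note two facts from \cref{assump:A-H-cinf,assump:A-H-cont,assump:A-H-bound}. Since \(H_1(s),H_2(s)\) leave \(\cinftyofh\) invariant, each \(\Sop{j}(t)\) is a well-defined operator \(\cinftyofh\to\cinftyofh\). Moreover, \(t\mapsto H_0^m\Sop{j}(t)\psi\) is continuous, and \(t\mapsto \Sop{j}(t)\psi\) is continuously differentiable with
\[
\odv{}{t}\Sop{j+1}(t)\psi=\Kop{t}(\Sop{j}(t))\psi-\avg*{T}{\Kop{\cdot}(\Sop{j})}\psi .
\]
I establish this by induction on \(j\). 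The relative bounds give polynomial control \(\norm{H_0^m\Sop{j}(t)\psi}\le \sum_k c_k\norm{H_0^{k}\psi}\), uniformly on compact intervals, so all Riemann integrals can be taken strongly on \(\cinftyofh\) and commuted with powers of \(H_0\) (closedness of \(H_0^m\)). By construction, \(\Sop{j+1}\) has zero average over a period in the sense that \(\Sop{j+1}(0)=0=\Sop{j+1}(T)\). This is the telescoping structure.

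\textbf{Base step.} Consider \(F(s)=U_1(t)U_1(s)\adj U_2(s)\psi\). Using the hypothesis on \(\odv{}{s}U_1(s)\adj\) (I read it as \(\iu U_1(s)\adj H_1(s)\)), \cref{assump:A-U-diff} for \(U_2\) with \(m=0\), and the invariance \(U_2(s)\psi\in\cinftyofh\), I apply a product rule for a strongly differentiable unitary family times a differentiable vector function. The product rule needs uniform boundedness of \(U_1(s)\adj\) together with continuity of \(s\mapsto H_1(s)U_2(s)\psi\); the latter follows from \cref{assump:A-H-cont,assump:A-H-bound} via an \(\varepsilon/3\) argument. This gives
\[
\odv{}{s}F(s)=\iu U_1(t)U_1(s)\adj\Kop{s}(\one)U_2(s)\psi .
\]
Integrating from \(0\) to \(t\) yields \((U_1(t)-U_2(t))\psi=-\iu\int_0^t U_1(t)U_1(s)\adj\Kop{s}(\Sop{0})U_2(s)\psi\,\dd s\), up to sign conventions.

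\textbf{Inductive step.} I split \(\Kop{s}(\Sop{j}(s))=\avg*{T}{\Kop{}(\Sop{j})}+\odv{}{s}\Sop{j+1}(s)\). The averaged term stays in the remainder sum. For the derivative term I integrate by parts using
\[
\odv{}{s}\bigl[U_1(s)\adj \Sop{j+1}(s)U_2(s)\psi\bigr]=\iu U_1(s)\adj\bigl(\Kop{s}(\Sop{j+1}(s))+\textstyle\odv{}{s}\Sop{j+1}(s)\cdot\iu\bigr)\cdots .
\]
More precisely, the three-factor product rule gives \(U_1\adj\bigl(\iu H_1\Sop{j+1}+\dot\Sop{j+1}-\iu\Sop{j+1}H_2\bigr)U_2\psi\). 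Hence
\[
\int_0^t U_1(s)\adj\dot\Sop{j+1}U_2\psi\,\dd s=U_1(t)\adj\Sop{j+1}(t)U_2(t)\psi-\iu\int_0^t U_1\adj\Kop{s}(\Sop{j+1})U_2\psi\,\dd s,
\]
where the boundary term at \(0\) vanishes because \(\Sop{j+1}(0)=0\). Multiplying by \(U_1(t)\) produces the boundary term \(\Sop{j+1}(t)U_2(t)\psi\) and the next-order integral with an extra factor \(-\iu\). Tracking the powers of \(-\iu\) reproduces \eqref{eq-S:iterated-integration-by-parts} at level \(L+1\).

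\textbf{Main obstacle.} The hard step is justifying this three-factor product rule rigorously. It needs strong differentiability of \(s\mapsto \Sop{j+1}(s)U_2(s)\psi\), and this involves the unbounded operators \(\Sop{j+1}(s)\) acting on a moving vector. I would handle it as follows. Write the difference quotient as
\[
\frac{\Sop{j+1}(s+h)-\Sop{j+1}(s)}{h}U_2(s+h)\psi+\Sop{j+1}(s)\frac{U_2(s+h)-U_2(s)}{h}\psi .
\]
For the first piece, control \(H_0^{m}\)-norms of \(U_2(s+h)\psi\) uniformly: differentiability in \cref{assump:A-U-diff} at all orders \(m\) gives continuity, hence local boundedness. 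Combine this with the polynomial relative bounds on \(\Sop{j}\) and \(H_1,H_2\) from \cref{assump:A-H-bound}. For the second piece, use that \(\Sop{j+1}(s)\) is relatively bounded by some \(H_0^{M}\), and that \(H_0^M U_2(\cdot)\psi\) is differentiable. Both are needed to pass to the limit.
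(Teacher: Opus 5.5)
Your proposal is correct and follows essentially the same route as the paper. Both start from the Duhamel formula and split \(\mathcal{K}_s(S_j(s))=\langle\mathcal{K}(S_j)\rangle_T+\frac{d}{ds}S_{j+1}(s)\). Both integrate by parts through the product rule for \(s\mapsto U_1(s)^*S_{j+1}(s)U_2(s)\psi\), with the boundary term at \(0\) vanishing because \(S_{j+1}(0)=0\), and both induct on \(L\). Both rest on the same regularity facts (invariance of \(\mathcal{C}^\infty(H_0)\), uniform relative \(H_0^M\)-bounds and continuity of \(S_j\), continuity of \(H_0^mU_2(\cdot)\psi\)) that the paper collects in its preparatory proposition and lemmas. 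The only difference is cosmetic: in the difference quotient the paper lets \(S_j(t+h)\) act on the difference quotient of \(U_2\), whereas you let the difference quotient of \(S_{j+1}\) act on the moving vector \(U_2(s+h)\psi\); both close with the same ingredients, and your reading of the hypothesis as \(\frac{d}{dt}U_1(t)^*\psi=\iu U_1(t)^*H_1(t)\psi\) is the one the paper's proof actually uses.
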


Given a fixed unitary family $(U^{(T)}(t))_{t\in\R}$ associated with a time-dependent $T$-periodic Hamiltonian $(H^{(T)}(t))_{t\in\R}$, our aim is to exploit \cref{th-S:iterated-ip} to \emph{construct} a time-independent effective Hamiltonian whose generated dynamics approximates $(U^{(T)}(t))_{t\in\R}$ up to a prescribed order $L$.
In addition, we require this approximation to be consistent with the small-period (high-frequency) limit $T \to 0$.
To this end, we consider the ansatz
\begin{equation}\label{eq:Heff-ansatz}
    \Heff{L}{T}
    = \sum_{l=0}^L T^l \Heffl{l},
\end{equation}
where the operators $\Heffl{l}$ are to be determined from the underlying $1$-periodic Hamiltonian $H(t)$ and are independent of the period $T$.

The following theorem provides such a construction and is, in addition, tailored to yield particularly accurate approximations at times that are integer multiples of the period \(T\). It generalises the results of \cite{deyErrorBoundsFloquetMagnus2025}, which are formulated in the bounded setting, to the considerably more delicate unbounded case addressed in this work.

\begin{maintheorem}[{see \cref{def:Heff-definition,thm:Heff-condition,thm:general-error-ip,thm:effective-Hamiltonian-conjugation}}]\label{th-S:U-diff-periods}
    Assume $(U^{(T)}(t))_{t\in\R}$, for every $T>0$, and $(H(t))_{t\in\R}$ satisfy \cref{assump:A-H-cinf,assump:A-U-cinf,assump:A-H-cont,assump:A-U-diff,assump:A-H-bound,assump:A-H-period,assump:A-U-bound}; let $T>0$, and let \(\Heff{L}{T}\) be an operator of the form~\eqref{eq:Heff-ansatz}, with
    \begin{equation}\label{eq:Heff-definition-0-summary}
        \Heffl{0}
		= \avg*{1}{H}
    \end{equation}
    and
    \begin{equation}\label{eq:Heff-definition-l-summary}
        \Heffl{l}
		= - \sum_{j=1}^l (-\iu)^j \Tcoeff{ \avg*{T}{\Kopeff{\cdot,L-j}{T}\of*{\Sopeff{j,L-j}{T} } } }{l}, \quad l\geq 1
        ,
    \end{equation}
    where \(\Tcoeff{A(T)}{k}\) denotes the \(k\)th coefficient of a polynomial \(A(T)\) in \(T\), and the indices \(T\) and \(L\) in \(\Kopeff{t,L}{T}\) and \(\Sopeff{j,L}{T}\) indicate that \cref{eq:Kop-summary,eq:Sop-summary} are to be considered with \(H_1 = \Heff{L}{T}\) and \(H_2(t) = H^{(T)}(t)\), respectively.
    Then the following statements hold:
    \begin{statements}
        \item Each operator $\Heffl{l}$ satisfies \cref{assump:A-H-cinf,assump:A-H-cont,assump:A-U-diff,assump:A-H-bound} and, for every \(L \in \N_0\), we have
    \begin{equation}\label{eq:Heff-condition-summary}
	    \sum_{j=0}^{L} (-\iu)^j \avg*{T}{\Kopeff{\cdot,L}{T}\of{\Sopeff{j,L}{T}}} \psi
        = \bigo\of*{T^{L+1}}
        ,\quad \text{as } T \to 0.
	\end{equation}
    Moreover, the coefficients $\Heffl{l}$, for $l=0,\ldots,L$, are uniquely determined by the requirement that an operator $\Heff{L}{T}$ of the form~\eqref{eq:Heff-ansatz} satisfies \cref{eq:Heff-condition-summary}.

    \item Assume further that \(\Heff{L}{T}\) admits at least one self-adjoint extension \(\Heffext{L}{T}\). Then, for every \(L \in \N_0\) and \(\psi \in \cinftyofh\),
\begin{equation}
     \label{eq:general-error-ip-2-summary}
       \of*{\e^{-\iu t \Heffext{L}{T} } - U^{(T)}(t)}\psi
        = \bigo\of*{T} \qquad\text{for } t = \bigo\of*{T^{-L}};
\end{equation}
moreover, for every \(q\in \Z\) and \(\psi \in \cinftyofh\),
\begin{equation}
    \label{eq:general-error-ip-3-summary}
        \of*{\e^{-\iu qT \Heffext{L}{T} } - U^{(T)}(qT)}\psi
        = \bigo\of*{T^{L+2}}
\end{equation}
as \(T \to 0\).
           \item In particular, if $(H(t))_{t\in\R}$ also satisfies \cref{assump:A-H-spec,assump:A-H-conj}, then \(\Heff{L}{T}\) admits at least one self-adjoint extension \(\Heffext{L}{T}\), and the estimates in \cref{eq:general-error-ip-2,eq:general-error-ip-3} hold.
    \end{statements}
\end{maintheorem}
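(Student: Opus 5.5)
We plan to prove the three statements in order, using \cref{th-S:iterated-ip} as the main tool, with $U_1(t)=\e^{-\iu t\Heffext{L}{T}}$ and $U_2=U^{(T)}$.

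\textbf{Statement (i).} We argue by induction on $l$. The base case is $\Heffl{0}=\avg*{1}{H}$. Because $H$ is $1$-periodic, substituting $\tau\mapsto\tau/T$ shows that $\avg{T}{H^{(T)}}=\avg{1}{H}$ is independent of $T$. The regularity properties \cref{assump:A-H-cinf,assump:A-H-cont,assump:A-H-bound} for $\Heffl{0}$ then follow by integrating the bounds~\eqref{eq:assump:H-bound} over $[0,1]$.

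For the inductive step, we first unfold the recursion~\eqref{eq:Sop-summary}. With $H_1=\Heff{L}{T}$, each $\Sopeff{j,L}{T}(t)$ is a finite sum of iterated time integrals. Each summand is a product of factors $\Heffl{l}$ and $H^{(T)}(s_i)$, multiplied by an explicit power of $T$. After rescaling $s=T\sigma$, every $\Sopeff{j,L}{T}(t)$ for $t\in[0,T]$ carries a factor $T^{j}$ times a polynomial in $T$ whose coefficients involve only $\Heffl{0},\dots,\Heffl{L}$ and $H$. Periodicity enters through the subtraction of $t\avg{T}{\cdot}$ in~\eqref{eq:Sop-summary}, which makes each $\Sop{j}$ $T$-periodic. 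Domain invariance on $\cinftyofh$ and the relative bounds are preserved under products and integrals, with increasing $k_m$.

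Next we expand $\sum_{j}(-\iu)^j\avg{T}{\Kop{}(\Sop{j})}$ in powers of $T$. The coefficient of $T^l$ contains $\Heffl{l}$ linearly, coming from the $j=0$ term $\avg{T}{\Heff{L}{T}-H^{(T)}}$. All other contributions involve only $\Heffl{l'}$ with $l'<l$. Setting this coefficient to zero for $l\le L$ gives exactly~\eqref{eq:Heff-definition-l-summary}. This also proves uniqueness, since the equation is triangular with the identity on the diagonal. The remainder consists of the coefficients of $T^{l}$ with $l\geq L+1$, and these are finite in number, so the remainder is $\bigo(T^{L+1})$ on $\cinftyofh$.

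\textbf{Statement (ii).} We apply \cref{th-S:iterated-ip} with the order $L$ and estimate each term.
\begin{itemize}
\item The boundary terms $\Sop{j}(t)U^{(T)}(t)\psi$ are $\bigo(T^j)$ uniformly in $t$, because the $\Sop{j}$ are periodic and $\bigo(T^j)$ on one period. Here we use \cref{assump:A-U-bound} to control $H_0^mU^{(T)}(t)\psi$ uniformly as $T\to0$.
\item The averaged integrand is $\bigo(T^{L+1})$ by (i). Integrated over a time $t=\bigo(T^{-L})$, it gives $\bigo(T)$.
\item The last integrand $\Kop{s}(\Sop{L+1}(s))$ is $\bigo(T^{L+1})$ and also contributes $\bigo(T)$.
\end{itemize}
At $t=qT$ we get more: $\Sop{j}(qT)=0$ for $j\geq1$ by periodicity, since $\Sop{j}(0)=0$. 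The boundary terms therefore vanish, and we may choose a higher order $L'$ in \cref{th-S:iterated-ip} to improve the remainder to $\bigo(T^{L+2})$.

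\emph{Main obstacle.} The bounds on $U^{(T)}$ in \cref{assump:A-U-bound} hold only on compact intervals. Over the long time $\bigo(T^{-L})$ we must replace them with periodicity: we will write $U^{(T)}(t)=U^{(T)}(t-qT)U^{(T)}(T)^q$ and use the stroboscopic structure. Failing that, we will obtain uniformity through unitarity combined with the commutation structure with $\cinftyofh$. This is the step we expect to require the most care.

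\textbf{Statement (iii).} We need a self-adjoint extension of $\Heff{L}{T}$.
\begin{itemize}
\item \emph{Symmetry.} By \cref{assump:A-H-spec}, each $\Heffl{l}$ is a finite-band operator with respect to the spectral decomposition of $H_0$. We will show it is symmetric on $\cinftyofh$ by checking that the nested commutator or integral structure produces Hermitian terms, with adjoints well defined on the banded pieces.
\item \emph{Self-adjoint extension.} The time reversal $JH(t)\subseteq H(-t)J$ in \cref{assump:A-H-conj} implies, after the substitution $s\to-s$ in the defining integrals, that $J\Heffl{l}\subseteq\Heffl{l}J$. By von Neumann's theorem, a symmetric operator commuting with a conjugation has equal deficiency indices, and hence admits a self-adjoint extension.
\end{itemize}
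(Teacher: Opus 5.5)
Your treatment of (i), of the two estimates in (ii), and of the conjugation half of (iii) follows the paper's route. That route consists of matching coefficients of the $T$-polynomial $\sum_j(-\iu)^j\avg{T}{\Kopeff{\cdot,L}{T}(\Sopeff{j,L}{T})}$, applying \cref{th-S:iterated-ip} with $U_1(t)=\e^{-\iu t\Heffext{L}{T}}$, and using $J\Sopeff{j,l}{T}(s)J=(-1)^j\Sopeff{j,l}{T}(-s)$ with periodicity to get $J\Heffl{l}J=\Heffl{l}$.

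\textbf{The gap: symmetry of $\Heff{L}{T}$.} Without symmetry, von Neumann's criterion cannot be invoked, and your plan gives no way to establish it. You propose to check that ``the nested commutator or integral structure produces Hermitian terms'', but \cref{def:Heff-definition} does not present $\Heffl{l}$ as nested commutators. The recursion produces terms such as $\Heffl{\sigma'}\Tcoeff{\Sopeff{j,L}{T}}{\sigma}$ and $\Tcoeff{\Sopeff{j,L}{T}}{k}(s)H(s)$ with factors $(-\iu)^j$, and none of them is Hermitian by itself.

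Already at first order this is visible, even for bounded $H$. Set $S(t)=\int_0^t(\avg{1}{H}-H(s))\dd{s}$. Then $\Heffl{1}=\iu\int_0^1\of{\avg{1}{H}S(t)-S(t)H(t)}\dd{t}$. It is symmetric only because $\int_0^1\odv{}{t}S(t)^2\,\dd{t}=S(1)^2-S(0)^2=0$, a cancellation that uses $S(0)=S(1)=0$. You give no mechanism that controls such cancellations at arbitrary order.

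Also, \cref{assump:A-H-spec} is not what makes adjoints well defined. \Cref{thm:polynomial_adjoint} needs no band condition; the real role of \cref{assump:A-H-spec} is the reduction to bounded operators described next.

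\textbf{How the paper obtains symmetry.} The argument is indirect and has three steps.
\begin{enumerate}
\item $\Heff{L}{T}=\int_{\mathcal M_L}\mathcal P_L(H(s_1),\dots,H(s_{v_L}))\,\dd{\underline{s}}$ on $\cinftyofh$, by \cref{thm:Heff-integral-polynomial}.
\item For bounded periodic self-adjoint families one proves $\Heff{L}{T}=\HFM{L}{T}$ (\cref{thm:equivalence-FM-bounded}). Take $T$ so small that the Floquet--Magnus series converges; then $\Lambda(T)=\e^{-\iu T\HFM{}{T}}$ differs from $\e^{-\iu T\HFM{L}{T}}$ by $\bigo(T^{L+2})$. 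A non-unitary version of the integration-by-parts identity (needed because symmetry is not yet known) gives $\exp(-\iu T\Heff{L}{T})-\Lambda(T)=\bigo(T^{L+2})$. A holomorphic-logarithm estimate (\cref{thm:continuity-logarithm}) converts this into $T(\HFM{L}{T}-\Heff{L}{T})=\bigo(T^{L+2})$, which forces equality since both are polynomials of degree $L$ in $T$. Finally, $\HFM{L}{T}$ is symmetric because it is built from commutators of skew-adjoint operators.
\item \Cref{assump:A-H-spec} transfers the polynomial identity $\Heff{L}{T}-(\Heff{L}{T})\adj=0$ to the unbounded case (\cref{thm:integral-polynomials-and-spectral-approximations}). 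On $\ran E_0([\alpha_n,\beta_n])$, each $H(s)$ may be replaced by the bounded compression $P_{dK}H(s)P_{dK}$ without changing a degree-$d$ polynomial. Closability and density of $\cinftyzeroofh$ then extend the identity to $\cinftyofh$.
\end{enumerate}

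\textbf{Smaller remarks.}
\begin{itemize}
\item At $t=qT$, the gain to $\bigo(T^{L+2})$ comes simply from integrating $\bigo(T^{L+1})$ integrands over an interval of length $\abs{q}T$. Raising the order $L'$ in \cref{th-S:iterated-ip} does not help, since $\Heff{L}{T}$ only cancels the averaged terms through order $T^{L}$.
\item On your ``main obstacle'': the paper simply takes $I\ni t$ in \cref{assump:A-U-bound}. This tacitly requires the constants to stay $\bigo(1)$ on intervals of length $\bigo(T^{-L})$. That holds in the interaction picture, where the bound is uniform in $t$, but \cref{assump:A-U-bound} does not provide it in general (cf.\ the constant $\e^{\tilde a_m\tmax}$ in \cref{thm:C-U-bound}).
\item Your suggested repairs would not supply that uniformity. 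Even granting $U^{(T)}(t)=U^{(T)}(t-qT)U^{(T)}(T)^q$, iterating \cref{assump:A-U-bound} over $q\sim T^{-L-1}$ periods compounds the constants. Unitarity controls $\norm{\cdot}$, not $\norm{H_0^m\,\cdot}$.
\item To land exactly on \cref{eq:Heff-definition-l-summary}, which uses $\Kopeff{\cdot,L-j}{T}$ and $\Sopeff{j,L-j}{T}$ rather than their level-$L$ counterparts, your triangularity statement must be sharpened. You need that the $T^l$-coefficient of $\avg{T}{\Kopeff{\cdot,L}{T}(\Sopeff{j,L}{T})}$ involves only $\Heffl{l'}$ with $l'\le l-j$, which is what \cref{thm:asymptotic-relation-Sop,thm:asymptotic-relation-Kop} encode.
\end{itemize}
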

In fact, our approach not only yields the scaling of the difference with $T$ as in \cref{eq:general-error-ip-2}, but also provides explicit and computable error bounds.

We now turn to the relation between our construction and the Floquet--Magnus expansion.
To this end, let us first briefly consider the case where the operators $(H(t))_{t\in\R}$ (and hence, for every fixed $T$, $(H^{(T)}(t))_{t\in\R}$) are \emph{bounded}. Then, Floquet theory guarantees the existence of an effective Hamiltonian $\HFM{}{T}$ and a $T$-periodic unitary family $(P(t))_{t\in\R}$ such that
\begin{equation}
    U^{(T)}(t)
    = P(t) \e^{-\iu t \HFM{}{T}},
    \quad t\in\R
    .
\end{equation}
If the following estimate\footnote{
We remark that, in order for \emph{both} \((P(t))_{t\in\R}\) and \(\HFM{}{T}\) to admit a series expansion, the following stricter condition is required \cite{Blanes_Casas_Oteo_Ros_2009}:
\[
\int_{0}^{T} \opnorm{H^{(T)}(\tau)} \dd{\tau} 
     < \xi_F 
 = 0.20925.
\]
} holds true:
\begin{equation}\label{eq:FM-convergence-condition}
\int_{0}^{T} \opnorm{H^{(T)}(\tau)} \dd{\tau} 
        < \pi,
\end{equation}
with $\opnorm{\cdot}$ denoting the operator norm, then $\HFM{}{T}$ admits a series expansion, known as the Floquet--Magnus expansion. We refer to \cite{Blanes_Casas_Oteo_Ros_2009} for an overview on the Floquet--Magnus expansion and some results; this reference mainly works in the finite-dimensional setting, but in light of \cref{rem:Blanes-infinite-dim} these results are valid in the infinite-dimensional bounded case as well.
In this case, we denote by 
\begin{equation}\label{eq:HFM-non-poly-expansion}
    \HFM{L}{T}
    = \sum_{l=0}^{L} \HFMl{l}{T}
\end{equation}
its truncation at order $L$.
The first two orders take the form
\begin{align}
    \HFMl{0}{T}
    &= \frac{1}{T} \int_{0}^{T} H^{(T)}(\tau) \dd{\tau}
    \\
    \HFMl{1}{T}
    &= \frac{1}{T} \frac{\iu}{2} \int_{0}^{T} \of*{ \int_{0}^{\tau_1} \comm{H^{(T)}(\tau_1)}{H^{(T)}(\tau_2)} \dd{\tau_2} }\dd{\tau_1},
\end{align}
and higher orders are given in a similar form by nested commutators, cf. \cref{def:floquet-magnus} and \cite[Eqs.~(136) and (51)]{Blanes_Casas_Oteo_Ros_2009}.
This construction is tailored to reproduce the dynamics exactly at integer multiples of the period \(T\): if the Floquet--Magnus expansion converges, one has
\begin{equation}
    U^{(T)}(qT) 
    = \e^{- \iu qT \HFM{}{T}}
    ,\quad \forall q \in \mathbb{Z}.
\end{equation}
Moreover, the approximation error at order \(L\) satisfies
\begin{equation}
    \opnorm*{U^{(T)}(qT) 
    - \e^{- \iu qT \HFM{L}{T}}} 
    = \bigo\of*{T^{L+2}}
    ,\quad \text{as } T \to 0.
\end{equation}
In Ref.~\cite{deyErrorBoundsFloquetMagnus2025}, it was shown in the bounded setting that this asymptotic behaviour remains valid even in the absence of convergence of the Floquet--Magnus expansion, and that \(\HFM{L}{T}\) coincides with the operator \(\Heff{L}{T}\) given by \cref{eq:Heff-definition-0-summary,eq:Heff-definition-l-summary} at every finite order \(L \in \N_0\).

Returning now to the case where the operators \(H(t)\) are \emph{unbounded}, little is known in the literature about the Floquet--Magnus expansion. In particular, the convergence condition~\eqref{eq:FM-convergence-condition} is no longer applicable. Moreover, the operators defined by the formal Floquet--Magnus expansion in terms of nested commutators, cf. \cref{eq:FM-expansion-full-expression00,eq:FM-expansion-full-expression0,eq:FM-expansion-full-expression}, are in general not even densely defined, let alone self-adjoint. Our third main result shows that, under assumption of \assumpA{}, each finite-order truncation of the Floquet--Magnus expansion coincides with the corresponding order in the construction of \cref{th-S:U-diff-periods}. In particular, these truncated operators are densely defined, admit self-adjoint extensions, and satisfy the same asymptotic approximation properties as \(\Heff{L}{T}\).

\begin{maintheorem}[{see \Cref{thm:FM-Eff-equal-coefficients-unbounded}}]\label{th-S:FM-comparison}
   Let \((H(t))_{t\in\R}\) satisfy \cref{assump:A-H-cinf,assump:A-U-cinf,assump:A-H-cont,assump:A-U-diff,assump:A-H-bound,assump:A-U-bound,assump:A-H-period,assump:A-H-spec,assump:A-H-conj}, and let \(\HFM{L}{T}\) be
   the \(L\)th truncation \(\HFM{L}{T}\) of the corresponding Floquet--Magnus expansion. Then \(\HFM{L}{T}\) is a well-defined operator on \(\cinftyofh\) and coincides with \(\Heff{L}{T}\).
In particular, all conclusions of \cref{th-S:U-diff-periods} extend to \(\HFM{L}{T}\): it is symmetric on \(\cinftyofh\), admits at least one self-adjoint extension \(\HFMext{L}{T}\), and satisfies
    \begin{align}
        \label{eq:FM-general-error-ip-2}
        \of*{\e^{-\iu t \HFMext{L}{T} } - U^{(T)}(t)}\psi
        &= \bigo\of*{T} 
        \qquad\text{for}\ t = \bigo\of*{T^{-L}}
    \shortintertext{and}  
        \label{eq:FM-general-error-ip-3}
        \of*{\e^{-\iu qT \HFMext{L}{T} } - U^{(T)}(qT)}\psi
        &= \bigo\of*{T^{L+2}},
    \end{align}
    for every \(q\in \Z\) and \(\psi \in \cinftyofh\).
\end{maintheorem}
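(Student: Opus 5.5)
The plan is to reduce the unbounded statement to an algebraic identity between noncommutative polynomials. Both sides of $\HFM{L}{T}=\Heff{L}{T}$ are finite sums of iterated time integrals of products of $H(\tau_1),\dots,H(\tau_n)$, applied on $\cinftyofh$. Once this reduction is in place, the identity established in the bounded case in \cite{deyErrorBoundsFloquetMagnus2025} can be reused, and the remaining conclusions then follow from \cref{th-S:U-diff-periods}.

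\emph{Step 1: well-definedness.} The truncation $\HFM{L}{T}$ is a linear combination of terms of the form
\[
T^{-1}\int_{\Delta}\comm{H^{(T)}(\tau_1)}{\comm{\cdots}{H^{(T)}(\tau_n)}}\,\dd\tau ,
\]
with $n\le L+1$ and $\Delta$ a simplex in $[0,T]^n$. By \cref{assump:A-H-cinf}, every product $H(\tau_1)\cdots H(\tau_n)$ maps $\cinftyofh$ into itself. Iterating \cref{assump:A-H-bound} gives
\[
\norm{H_0^m H(\tau_1)\cdots H(\tau_n)\psi}\le C\,\norm{H_0^{m+k}\psi}+C\norm{\psi},
\]
uniformly for $\tau_i\in[0,1]$. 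By \cref{assump:A-H-cont}, the integrand is jointly strongly continuous; this uses continuity in each variable together with the uniform bounds, via a telescoping argument. The Riemann integrals therefore exist, map $\cinftyofh$ to $\cinftyofh$, and satisfy bounds of the same type. After the substitution $\tau=Ts$, the $l$th term equals $T^l$ times a $T$-independent operator, so $\HFM{L}{T}$ has the polynomial form~\eqref{eq:Heff-ansatz}.

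\emph{Step 2: coefficient identity.} Fix $\psi\in\cinftyofh$ and pair with $\phi\in\cinftyofh$. The defining recursion~\eqref{eq:Sop-summary} for $\Sopeff{j,L}{T}$ uses only products of $\Heff{L}{T}$ and $H^{(T)}(s)$ together with time integrals. Hence each coefficient $\Heffl{l}$ is obtained from $H$ by the same universal formula as in the bounded case, namely a fixed noncommutative polynomial integrated over simplices. The same holds for $\HFMl{l}{T}$, with its universal formula given by nested commutators.

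In \cite{deyErrorBoundsFloquetMagnus2025} it was shown that the two formulas agree for every bounded periodic family. I would argue that such an identity must be a formal identity of iterated integrals of words in the free algebra generated by the symbols $H(\tau)$. The reason is that it holds for all bounded families, for instance for matrix-valued step or trigonometric functions, which separate the relevant words. A formal identity transfers verbatim to $\cinftyofh$: every word is a well-defined operator there by Step 1, and Fubini and linearity of Riemann integrals hold in the strong topology.

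Alternatively, I could avoid the free-algebra argument. The uniqueness clause of \cref{th-S:U-diff-periods}(i) states that the coefficients are determined by~\eqref{eq:Heff-condition-summary}. So it would suffice to show that $\HFM{L}{T}$ satisfies~\eqref{eq:Heff-condition-summary}. That in turn is again an algebraic identity of polynomial coefficients in $T$, which can be checked from the bounded computation by the same transfer argument.

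\emph{Step 3: conclusions.} Since $\HFM{L}{T}=\Heff{L}{T}$ on $\cinftyofh$, symmetry and the existence of a self-adjoint extension follow from \cref{th-S:U-diff-periods}(iii) under \cref{assump:A-H-spec,assump:A-H-conj}. The estimates \eqref{eq:FM-general-error-ip-2} and \eqref{eq:FM-general-error-ip-3} are then exactly those in \cref{th-S:U-diff-periods}(ii).

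The main obstacle is Step 2: making rigorous that the bounded-case equality is a purely formal, word-level identity rather than something that used boundedness. Bounded-case tools such as norm-convergent series or functional calculus must not enter. I would first try to verify directly that the bounded proof only manipulates finite sums of iterated integrals, using integration by parts and Fubini. If that fails, I would use the density argument with finite-dimensional test families.
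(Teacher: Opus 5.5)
Your proposal is correct. Steps~1 and~3 agree with the paper, but Step~2 transfers the bounded identity by a genuinely different mechanism.

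The paper never extracts a formal identity. In \cref{thm:integral-polynomials-and-spectral-approximations} it supposes the difference $\Heff{L}{T}-\HFM{L}{T}$ is nonzero on some band-limited vector $\xi\in\cinftyzeroofh$. It then uses \cref{assump:A-H-spec} to replace $H(t)$ by the bounded self-adjoint compression $P_{dK}H(t)P_{dK}$, which agrees with $H(t)$ on every word of length at most $d$ applied to $\xi$. Applying \cref{thm:equivalence-FM-bounded} to the compressed family gives the contradiction.

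Your universality argument needs only finite-dimensional test families and no spectral hypothesis. To make it rigorous you need three things:
\begin{enumerate}
\item Write each coefficient in the canonical form $\sum_n\int_{[0,1]^n}\kappa_n(\tau)\,H(\tau_1)\cdots H(\tau_n)\,\dd\tau$. This is possible because in both recursions every factor carries its own integration variable. Moving $H(r)$ inside the Bochner integrals is legitimate because those integrals converge in every $H_0^m$-graph norm.
\item Separate the homogeneous degrees by the scaling $H\mapsto\lambda H$, $\lambda\in\R$.
\item Test with the Hermitian family $H(\tau)=\sum_{k=1}^n f_k(\tau)\,(e_{k,k+1}+e_{k+1,k})$ on $\C^{n+1}$, with $f_k$ real, continuous and $1$-periodic. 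The $(1,n+1)$ entry of the degree-$n$ part is $\int\kappa_n(\tau)\prod_k f_k(\tau_k)\,\dd\tau$, so $\kappa_n^{\mathrm{eff}}=\kappa_n^{\mathrm{FM}}$ almost everywhere.
\end{enumerate}
Drop the step functions you mention: \cref{thm:equivalence-FM-bounded} requires strongly continuous periodic families. Continuous test functions suffice by density.

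Your route buys generality. It yields $\HFM{L}{T}=\Heff{L}{T}$ on $\cinftyofh$ without \cref{assump:A-H-spec}. Moreover, the Floquet--Magnus terms are nested commutators of $\iu H$, so for symmetric $H(t)$ they are symmetric on $\cinftyofh$ by direct computation. Thus the hypothesis the paper relies on both for this identity and for symmetry becomes unnecessary, while \cref{assump:A-H-conj} still supplies the self-adjoint extension via von Neumann's criterion.

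The paper's route, in exchange, never has to normalise the recursive definition of $\Heff{L}{T}$ into an explicit kernel. It also provides a reusable operator-level transfer principle.

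Your fallback via the uniqueness clause is not independent of the main argument. Verifying \eqref{eq:Heff-condition-summary} for $\HFM{L}{T}$ requires exactly the same transfer.
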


Thus, our construction provides a natural extension of the Floquet--Magnus expansion to the unbounded setting. 
We emphasise again that our approach allows one to define and analyse finite-order truncations of the Floquet--Magnus expansion without requiring convergence of the series. More importantly, in contrast to standard results, where quantitative error bounds typically rely on convergence, our framework yields explicit error estimates for \cref{eq:FM-general-error-ip-2,eq:FM-general-error-ip-3}, independently of any convergence properties of the expansion. At the same time, we note that our method does not directly address the question of convergence of the Floquet--Magnus series itself in the unbounded setting, which remains a highly nontrivial problem.

\subsection{Time-independent Hamiltonians in the interaction picture}\label{sec:assumption-b}
We now specialise our discussion to time-periodic Hamiltonians obtained from a given time-independent operator via an interaction-picture transformation. 
This framework is widely used in physics to facilitate the analysis of quantum dynamics by isolating a reference evolution and treating the remaining interaction in a more tractable form.

To this end, let $H_0$ be a fixed (possibly unbounded) self-adjoint operator on $\HH$, let $V$ be another operator, and consider the Hamiltonian $H_0 + V$.
Correspondingly, we define the interaction-picture Hamiltonian as
\begin{equation}
\label{eq:def-interaction-picture-summary}
    H(t)
    \coloneqq  \e^{\iu t H_0}V \e^{-\iu t H_0}
    , \quad t\in\R.
\end{equation}
The associated propagator is given by $U(t) = \e^{\iu t H_0} \e^{-\iu t (H_0 + V)}$, which, at a formal level, solves the Schrödinger equation generated by $H(t)$.

In this case, we consider the following set of properties:
\begin{assumption}\label{assump:B}
Let $\HH$ be a separable Hilbert space, $H_0$ a fixed (possibly unbounded) self-adjoint operator on $\HH$ with domain $\domof{H_0}$, $V$ another operator with domain $\domof{V}\supseteq\domof{H_0}$, and \(H(t)\) be given as in \cref{eq:def-interaction-picture-summary}.
\begin{subassumption}\label{assump:B-V-sym}
   $V$ is symmetric and $\frac{1}{T}H_0 + V$ is self-adjoint on $\domof{H_0}$ for any \(T>0\).
\end{subassumption}
\begin{subassumption}\label{assump:B-V-bound}
       For every $m\in\N$, \(H_0^m V\) is \(H_0^{m+1}\)-bounded, i.e.
        \begin{gather}
            \domof{H_0^{m+1}} \subseteq \domof{H_0^m V}
            \intertext{and there exist $a_m,\, b_m \geq 0$ such that}
            \norm{H_0^m V \psi}
            \leq a_m \norm{H_0^{m+1}\psi} + b_m \norm{\psi},
        \end{gather}
        for every \(\psi \in \domof{H_0^{m+1}}\).
\end{subassumption}
\begin{subassumption}[{cf.~\ref{assump:A-H-period}}]\label{assump:B-V-period}
    $H(t+1)= H(t)$, for all $t\in\R$, i.e., $t\mapsto H(t)$ is 1-periodic.
\end{subassumption}
\begin{subassumption}\label{assump:B-V-spec}
    There exist $\alpha_j,\beta_j\in\R$, for $j\in\Z$, with $\alpha_j\leq\beta_j\leq\alpha_{j+1}$, and $K>0$ such that
    \begin{equation}
       \sigma(H_0) \subseteq \bigcup_{j\in\Z} [\alpha_j,\beta_j]
    \end{equation}
    and, if $|i-j|>K$, then
    \begin{equation}
      E_{0}([\alpha_i,\beta_i]) V E_{0}([\alpha_j,\beta_j]) = 0,
    \end{equation}
    where $E_{0}$ denotes the projection-valued spectral measure of the self-adjoint operator $H_0$.
\end{subassumption}
\begin{subassumption}\label{assump:B-V-conj}
    There exists a conjugation $J$ on $\HH$ such that $J V \subseteq V J$ and $J H_0 \subseteq H_0 J$.
\end{subassumption}
\end{assumption}
We remark that, by the Kato--Rellich theorem, \cref{assump:B-V-sym} holds, in particular, if \(V\) is symmetric on \(\domof{H_0}\) and for any \(\epsilon >0\) there exists \(b \in \R\) such that \cref{assump:B-V-bound} holds with \(a_0 = \epsilon\) and \(b_0 = b\).

In \cref{sec:interaction-picture}, we show how \assumpB{} imply \assumpA{}, one by one.
We summarise this result as follows:
\begin{maintheorem}\label{th:summary-interaction-pic}
    Let $V$ and \(H_0\) be operators on $\HH$ as above satisfying \cref{assump:B-V-bound,assump:B-V-sym,assump:B-V-conj,assump:B-V-period,assump:B-V-spec,assump:B-V-conj}.
    Then the time evolution unitary families \(\of*{U^{(T)}(t)}_{t \in \R}\) given by \(U^{(T)}(t) = \e^{+\iu \frac{t}{T} H_0}\e^{-\iu t \of*{\frac{1}{T}H_0+V}}\), for all $T>0$, and the time-dependent Hamiltonian $(H(t))_{t\in\R}$ given by \(H(t) = \e^{\iu t H_0} V \e^{-\iu t H_0}\) satisfy \cref{assump:A-H-cinf,assump:A-U-cinf,assump:A-H-cont,assump:A-U-diff,assump:A-H-bound,assump:A-U-bound,assump:A-H-period,assump:A-H-spec,assump:A-H-conj}.
    In particular, the results of \cref{th-S:U-diff-periods,th-S:FM-comparison} in this setting.
\end{maintheorem}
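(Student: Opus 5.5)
The plan is to verify the properties of List of properties~A one at a time, using two standing facts. The first is that $\e^{\iu s H_0}$ commutes with every power $H_0^m$ and preserves $\cinftyofh$ isometrically in each graph norm. The second is a regularity statement for the full dynamics: the group $W_T(t)=\e^{-\iu t(\frac1T H_0+V)}$ preserves $\domof{H_0^m}$ for every $m$, with bounds of the type required in \cref{eq:assump:A-U-bound}.

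The purely algebraic properties come first.
\begin{itemize}
\item For \cref{assump:A-H-cinf}, take $\psi\in\cinftyofh$. Then $\e^{-\iu tH_0}\psi\in\cinftyofh$. By \cref{assump:B-V-bound}, $V$ maps $\domof{H_0^{m+1}}$ into $\domof{H_0^m}$ for all $m$, so $V\cinftyofh\subseteq\cinftyofh$. Hence $H(t)\cinftyofh\subseteq\cinftyofh$.
\item \cref{assump:A-H-bound} holds with $k_m=1$ and constants independent of $t$, because the unitaries commute with $H_0^m$: $\norm{H_0^mH(t)\psi}=\norm{H_0^mV\e^{-\iu tH_0}\psi}\le a_m\norm{H_0^{m+1}\psi}+b_m\norm{\psi}$. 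The case $m=0$ follows from the relative boundedness implicit in \cref{assump:B-V-sym}, or equally from the $m=1$ bound together with $\norm{\psi}\le\norm{(H_0+\iu)\psi}$.
\item For \cref{assump:A-H-cont}, write $H_0^mH(t)\psi=\e^{\iu tH_0}H_0^mV\e^{-\iu tH_0}\psi$. Strong continuity of $\e^{-\iu tH_0}$ in the graph norm of $H_0^{m+1}$, combined with the bound just proved and strong continuity of the outer unitary, gives continuity.
\item \cref{assump:A-H-period} is \cref{assump:B-V-period}.
\item For \cref{assump:A-H-spec}, note that the spectral projections commute with $\e^{\pm\iu tH_0}$, so $E_0(I_i)H(t)E_0(I_j)=\e^{\iu tH_0}E_0(I_i)VE_0(I_j)\e^{-\iu tH_0}=0$.
\item For \cref{assump:A-H-conj}, $JH_0\subseteq H_0J$ gives $J\e^{\iu tH_0}=\e^{-\iu tH_0}J$ by functional calculus, since $J$ is antilinear. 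Hence $JH(t)=\e^{-\iu tH_0}JV\e^{-\iu tH_0}\subseteq \e^{-\iu tH_0}VJ\e^{-\iu tH_0}=H(-t)J$.
\end{itemize}

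The analytic core concerns the propagator $U^{(T)}(t)=\e^{\iu \frac tT H_0}W_T(t)$, and the key step is to show that $W_T(t)$ preserves $\domof{H_0^m}$ with $T$-uniform bounds. I would argue by induction on $m$ via a commutator/Duhamel estimate.

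For $m=1$: $\domof{\frac1TH_0+V}=\domof{H_0}$ is invariant under the group. By closed-graph equivalence of norms, $\norm{H_0\psi}\lesssim T\norm{(\frac1TH_0+V)\psi}+\norm\psi$, uniformly in $T\le1$, using $a_0<1$-type control. This gives $\norm{H_0W_T(t)\psi}\le\norm{H_0\psi}+C T\norm{V\psi}+C\norm\psi$, which is bounded as $T\to0$.

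For higher $m$: I would write $H_0^m$ in terms of powers $T^m(\frac1TH_0+V)^m$ plus lower-order corrections. These corrections are controlled by \cref{assump:B-V-bound}, because each commutator with $V$ loses exactly one power of $H_0$ and gains a factor of $T$. The powers $(\frac1TH_0+V)^m$ commute with $W_T(t)$, so $\domof{H_0^m}=\domof{(\frac1TH_0+V)^m}$ is invariant, with constants that stay $\bigo(1)$ as $T\to 0$ because the extra $T^{-1}$ factors are always paired with $T$. Since $\e^{\iu\frac tTH_0}$ commutes with $H_0^m$, this yields \cref{assump:A-U-cinf} and \cref{assump:A-U-bound} with $k'_m=m$, or fewer.

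\cref{assump:A-U-diff} then follows by differentiating the product on $\cinftyofh$. On $\domof{H_0^{m+1}}$, each factor is strongly differentiable in the $H_0^m$-graph norm. The derivative is $\iu\frac1TH_0U^{(T)}-\iu\e^{\iu\frac tTH_0}(\frac1TH_0+V)W_T=-\iu H^{(T)}(t)U^{(T)}(t)$ after cancelling the $H_0$ terms. Graph-norm differentiability is justified by the invariance established above together with the product rule.

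I expect the main obstacle to be the $T$-uniformity in the domain-invariance step. The induction has to track that $\domof{(\frac1TH_0+V)^m}=\domof{H_0^m}$ holds with equivalence constants bounded as $T\to0$. This needs care because $V$ is only $H_0^{m+1}$-bounded at level $m$, not infinitesimally so. I would first try expanding $(H_0+TV)^m$ and bounding each mixed word by iterating \cref{assump:B-V-bound}. For small $T$ the $TV$ perturbation is then relatively small at every level, giving a Kato–Rellich–type equivalence with constants $1+\bigo(T)$.
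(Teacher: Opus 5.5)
\paragraph{Where the proposal matches the paper.} Your checks of \cref{assump:A-H-cinf,assump:A-H-cont,assump:A-H-bound,assump:A-H-period,assump:A-H-spec,assump:A-H-conj} agree with the paper. So does the product-rule derivation of \cref{assump:A-U-diff}, given invariance. Your expansion of $(H_0+TV)^m$ followed by the rearrangement for $T<1/((2^m-1)a)$ is exactly the proof of \cref{thm:V-sym-V-bound-implies-U-bound}. Your $m=1$ step is fine for every $T$ by \cref{assump:B-V-sym}.

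\paragraph{The gap.} It is the assertion, for $m\ge 2$, that $\domof{H_0^m}=\domof{(\frac{1}{T}H_0+V)^m}$ is invariant. Your Kato--Rellich argument at level $m$ needs $(2^m-1)a_mT<1$, and nothing in \cref{assump:B-V-bound} prevents $a_m$ from growing with $m$. So for a fixed $T>0$ you control only finitely many levels. That does not give $U^{(T)}(t)\cinftyofh\subseteq\cinftyofh$, which \cref{assump:A-U-cinf} requires for every $T>0$. Nor does it give \cref{assump:A-U-diff} or the existence part of \cref{assump:A-U-bound} at that $T$.

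The Duhamel/commutator route you mention first does not help. Under \cref{assump:B-V-bound}, $[H_0^m,V]$ is only $H_0^{m+1}$-bounded, so a Gr\"onwall argument loses one power of $H_0$ per step. This is why \cref{sec:hyperbolic-setting} needs \cref{assump:C-commutator-bound}.

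Even for small $T$, the two-sided estimate on $\domof{H_0^m}$ only shows that $(\frac{1}{T}H_0+V)^m$ restricted to $\domof{H_0^m}$ is closed. To get equality of domains, add three points:
\begin{itemize}
\item $(H_0+TV)^m$ is symmetric on $\domof{H_0^m}$;
\item Kato--Rellich therefore makes it self-adjoint there;
\item a self-adjoint operator has no proper symmetric extension.
\end{itemize}

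\paragraph{The paper's route does not close it either.} The paper obtains the all-$T$ equality in \cref{thm:dom-H-m} via
\[
\domof{(\tfrac{1}{T}H_0+V)^{m+1}}=\domof{H_0^m(\tfrac{1}{T}H_0+V)}=\domof{H_0^{m+1}}\cap\domof{H_0^mV}.
\]
Do not import this. The last equality holds in general only as ``$\supseteq$'', and ``$\subseteq$'' is exactly the regularity statement you are missing. It can fail.

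On $\ell^2(\N_0)$ let $H_0=\operatorname{diag}(2\pi 2^n)$, let $S$ be the unilateral shift, and set $V=H_0^{1/2}\bigl(d+c(S+S\adj)\bigr)H_0^{1/2}$ with $(2^{1/2}+2^{-1/2})c<d<(2^{3/2}+2^{-3/2})c$. Write $\mu=\frac{1}{T}+d$.
\begin{itemize}
\item $H_0^mVH_0^{-m-1}=d+c(2^{m+1/2}S+2^{-m-1/2}S\adj)$ is bounded, so \cref{assump:B-V-bound} holds, with $a_m\sim 2^m$.
\item $V$ is real and tridiagonal, and it couples eigenvalues differing by $2\pi 2^n$. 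Hence periodicity, the band condition with $K=1$, and the conjugation property all hold.
\item $\frac{1}{T}H_0+V=\bigl(\mu+c(2^{1/2}S+2^{-1/2}S\adj)\bigr)H_0$. The Toeplitz factor has symbol with positive real part for every $T$, so it is invertible, and $\frac{1}{T}H_0+V$ is self-adjoint on $\domof{H_0}$ for all $T>0$.
\item For $\mu<(2^{3/2}+2^{-3/2})c$, i.e.\ for large $T$, the conjugated factor $\mu+c(2^{3/2}S+2^{-3/2}S\adj)$ has index $-1$.
\item Take any $\chi$ outside its range. Then $\psi=H_0^{-1}\bigl(\mu+c(2^{1/2}S+2^{-1/2}S\adj)\bigr)^{-1}H_0^{-1}\chi$ lies in $\domof{(\frac{1}{T}H_0+V)^2}\setminus\domof{H_0^2}$.
\end{itemize}

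So neither argument establishes the properties for every $T>0$ as stated. An extra hypothesis suffices: for instance, assume each $H_0^mV$, $m\in\N_0$, is infinitesimally $H_0^{m+1}$-bounded, as in the Rabi example. Then every mixed word in $(H_0+TV)^m$ is infinitesimally $H_0^m$-bounded, your Kato--Rellich step works for all $T>0$, and the rest of your proposal goes through.
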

We illustrate this in \cref{sec:quantum-Rabi} with the quantum Rabi model.

\subsection{Time-dependent self-adjoint Hamiltonians with constant domain}\label{sec:assumption-c}

While the previous discussion addresses situations in which the time dependence of the Hamiltonian arises from an interaction-picture transformation, in many cases it is instead intrinsic to the Schrödinger picture.
For general unbounded time-dependent Hamiltonians, the well-posedness of the Schrödinger equation is a delicate issue: in particular, the existence and uniqueness of a unitary propagator $(U(t))_{t\in\R}$ associated with the Hamiltonian---assumed \emph{a priori} in the abstract framework of \assumpA{}, and automatically granted in the interaction picture case of \assumpB{}---are not guaranteed in general.

Two important settings in which this problem can be addressed are the so-called hyperbolic and parabolic cases, originating in the work of Kato and further developed in subsequent contributions (see, e.g., \cite[Ch.~5]{pazySemigroupsLinearOperators1983}). In the present work, we focus on the hyperbolic setting.
Under additional regularity assumptions, we will ensure existence and uniqueness of the dynamics $(U(t))_{t\in\R}$ and, at the same time, recover \assumpA{} within our framework.

A key advantage of these assumptions is that they are formulated directly in terms of the Hamiltonian $H(t)$, without requiring prior knowledge of the associated propagator. 
As such, they are not only more readily verifiable in concrete situations, but also naturally satisfied by a broad class of physically relevant models.
\begin{assumption}\label{assump:C}
    Let \((H(t))_{t\in \R}\) be a family of operators \(H(t)\colon \domof{H(t)}\subseteq\HH \to \HH\), and let \(H_0\colon\domof{H_0}\subseteq\HH\to \HH\) be a positive self-adjoint operator.
	\begin{subassumption}\label{assump:C-H-self-adjoint}
	    For every \(t\in \R\), \(H(t)\) is self-adjoint.
	    Furthermore, the domains of all \(H(t)\) coincide with the domain of \(H_0\):
		\begin{equation}
			\domof{H(t)}
			= \domof{H(0)}
			= \domof{H_0}
            ,\quad \forall t \in \R
			.
		\end{equation}
	\end{subassumption}
	\begin{subassumption}\label{assump:C-H-bound}
	   For every \(m\in\N_0\) and \(t \in\R\), $H_0^mH(t)$ is $H_0^{m+1}$-bounded, i.e.,
		\begin{equation}
			\domof{H_0^{m + 1} }
			\subseteq \domof{H_0^m H(t)};
		\end{equation}
		moreover, for every compact interval $I$, there exist \(a_m ,\, b_m \geq 0\) such that
		\begin{equation}\label{eq:assump:C-H-bound}
			\norm{H_0^m H(t)\psi}
			\leq a_m\norm{H_0^{m + 1}\psi} + b_m\norm{\psi}			
		\end{equation}
        for all $\psi \in \domof{H_0^{m + 1}}$ and $t\in I$.
	\end{subassumption}
	  \begin{subassumption}\label{assump:C-H-diff}
        For every \(m \in \N_0\), \(\psi \in \domof{H_0^{m+1}}\) and $t\in\R$, the map \(t\mapsto H_0^m H(t)\psi\) is continuously differentiable, with \(\odv{}{t} H_0^m H(t)\psi=H_0^m H'(t)\psi\) and \(H'(t)\) satisfying \cref{assump:C-H-bound} with $H'(t)$ in place of $H(t)$.
    \end{subassumption}
    \begin{subassumption}[{cf.~\ref{assump:A-H-period}}]\label{assump:C-H-periodic}
        The map \(t \mapsto H(t)\) is \(1\)-periodic, i.e.,
        \begin{equation}
            H(t+1) = H(t)            ,\quad \forall t \in \R
            .
        \end{equation}
    \end{subassumption}
	\begin{subassumption}\label{assump:C-commutator-bound}
		For every \(m \in \N_0\) and $t\in\R$, the commutator \(\comm{H_0^m}{H(t)}\) is \(H_0^m\)-bounded, i.e.,
        \begin{equation}
			\domof{H_0^{m } }
			\subseteq \domof{\comm{H_0^m}{H(t)}};
		\end{equation}
        moreover,  for every compact interval $I$, there exist \(\tilde{a}_m ,\, \tilde{b}_m \geq 0\) such that
		\begin{equation}\label{eq:assump:C-commutator-bound}
			\norm*{\comm{H_0^m}{H(t)}\psi}
			\leq \tilde{a}_m\norm*{H_0^{m}\psi} + \tilde{b}_m\norm*{\psi}
			,
		\end{equation}
        for all $\psi \in \domof{H_0^{m}}$ and $t\in I$.
        \end{subassumption}
	    \begin{subassumption}[cf.~\ref{assump:A-H-spec}]\label{assump:C-H-spec}
        There exist $\alpha_j,\beta_j\in\R$, for $j\in\Z$, with $\alpha_j\leq\beta_j\leq\alpha_{j+1}$ , and $K\geq0$ such that
        \begin{equation}
           \sigma(H_0) \subseteq \bigcup_{j\in\Z} [\alpha_j,\beta_j],
        \end{equation}
        and, if $|i-j| > K$, then
        \begin{equation}
          E_{0}([\alpha_i,\beta_i]) H(t) E_{0}([\alpha_j,\beta_j])
          = 0,
        \end{equation}
        for all $t\in\R$, where $E_{0}$ denotes the projection-valued spectral measure of the self-adjoint operator $H_0$.
    \end{subassumption}
    \begin{subassumption}[cf.~\ref{assump:A-H-conj}]\label{assump:C-H-conj}
        There exists a conjugation $J$ on $\HH$ such that $J H(t) \subseteq H(-t) J$, for all $t\in\R$.
    \end{subassumption}
\end{assumption}
Note that, if $H_0$ is not positive but only bounded from below, the situation can be reduced to the positive case by a constant energy shift, which leaves the dynamics unchanged.

In \cref{sec:hyperbolic-setting}, we show that \assumpC{} implies \assumpA{}.
More precisely, the assumptions of Kato’s theorem allow us to establish \cref{assump:A-H-cinf,assump:A-U-cinf,assump:A-U-diff,assump:A-U-bound,assump:A-H-cont}, while \cref{assump:A-H-conj,assump:A-H-bound,assump:A-H-spec,assump:A-H-period} follow directly from the corresponding assumptions in \assumpC{}.
We summarise the outcome of \cref{sec:hyperbolic-setting} in the following result:
\begin{maintheorem}\label{th:summary-hyperbolic-setting}
    Let \((H(t))_{t\in \R}\) be a family of operators \(H(t)\colon \domof{H(t)} \to \HH\) and let \(H_0\colon\domof{H_0}\to \HH\) be a self-adjoint, positive operator such that \assumpC{} is fulfilled.
    Then, for every $T>0$, there is a unique strongly continuous unitary family $(U^{(T)}(t,s))_{s,t\in\R}$ such that, for every \(\psi_0 \in \domof{H_0}\) and fixed $s\in\R$, \(\psi(t) = U^{(T)}(t,s)\psi_0\) is the unique solution of the initial value problem
    \begin{equation}\label{eq:non-autonomous-schrödinger-equaiton-summary}
        \begin{cases}
            \odv{}{t} \psi(t) 
            = -\iu H^{(T)}(t)\psi(t),
            & \forall t\in \R,
            \\
            \psi(s) 
            = \psi_0,
        \end{cases}
    \end{equation}
    with \(t \mapsto \psi(t) \in \domof{H_0}\) being continuous in \(\domof{H_0}\) with respect to the graph norm of $H_0$, and continuously differentiable in \(\HH\) for every \(t\in\R\).
    Moreover, the operators $H_0$, and the families $(H(t))_{t\in\R}$ and $(U^{(T)}(t)\coloneqq U^{(T)}(t,0))_{t\in\R}$, for every $T>0$, satisfy \assumpA{}.
    In particular, the results of \cref{th-S:U-diff-periods} and \cref{th-S:FM-comparison} are valid.
\end{maintheorem}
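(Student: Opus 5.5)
We plan to prove this in two stages: construct the propagator via Kato's hyperbolic theory, then verify \assumpA{} item by item. For the first stage, fix $T>0$ and note that $H^{(T)}(t)=H(t/T)$ inherits \assumpC{} with rescaled constants. We will apply Kato's theorem for hyperbolic evolution equations \cite{kato1970linear} (in the form of \cite[Ch.~5]{pazySemigroupsLinearOperators1983}) with $X=\HH$ and $Y=\domof{H_0}$ carrying the graph norm. The required hypotheses are checked as follows. Each $-\iu H^{(T)}(t)$ generates a unitary group by \cref{assump:C-H-self-adjoint}, so the family is stable with constants $(1,0)$. Next, $Y$ is admissible: this follows by taking $S=H_0+\one$ (recall that $H_0$ is positive), since \cref{assump:C-commutator-bound} with $m=1$ shows that $S H(t) S^{-1}=H(t)+B(t)$ with $B(t)=\comm{H_0}{H(t)}S^{-1}$ bounded and strongly continuous. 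Finally, $t\mapsto H(t)\in\mathcal L(Y,X)$ is norm-$C^1$ by \cref{assump:C-H-diff}, using the bound on $H'$ with $m=0$. Kato's theorem then yields existence and uniqueness of $U^{(T)}(t,s)$, the $Y$-regularity, and unitarity. Unitarity follows because solutions conserve norm, by self-adjointness, and are invertible by solving backwards.

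For the smooth-vector properties we bootstrap in $m$. For each $m$ we apply the same theorem to the scale $Y_m=\domof{H_0^{m+1}}$ inside $X_m=\domof{H_0^m}$. Here \cref{assump:C-commutator-bound} for $m$ and $m+1$ gives that $H_0^m H(t) H_0^{-m}$ differs from $H(t)$ by an $H_0$-bounded error. The conjugated generator remains a bounded perturbation of a generator that is stable in a suitably renormed space, with stability constants growing like $\e^{c|t|}$ by bounded perturbation theory. Uniqueness in $X$ then identifies the $X_m$-evolution with $U^{(T)}$. This gives \cref{assump:A-U-cinf}, the differentiability \cref{assump:A-U-diff} (strong differentiability in each $X_m$), and \cref{assump:A-U-bound} with $k'_m=0$. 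The required $T$-uniformity of $a'^{(T)}_m, b'^{(T)}_m$ on compact $t$-intervals follows because the perturbation sizes are uniform in $T$: $\sup_{t}\norm{\comm{H_0^m}{H(t/T)}H_0^{-m}}$ is controlled by the constants on $[0,1]$, by periodicity. The remaining items are direct. \Cref{assump:A-H-cinf,assump:A-H-bound} follow from \cref{assump:C-H-bound}, which gives $H_0^m H(t)\psi$ for $\psi\in\domof{H_0^{m+1}}$, hence $H(t)\cinftyofh\subseteq\cinftyofh$, with $k_m=1$. \Cref{assump:A-H-cont} follows from \cref{assump:C-H-diff}, and \cref{assump:A-H-period,assump:A-H-spec,assump:A-H-conj} are restatements.

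The main obstacle is the $T$-uniform stability in the higher Sobolev-type spaces. The generator is $H(t/T)$, so its time derivative scales like $1/T$, and a naive application of Kato's estimates would give constants blowing up as $T\to0$. To avoid this we will not rely on the $C^1$ norm of $t\mapsto H(t/T)$ for the quantitative bound. Instead we estimate directly: for $\psi\in\cinftyofh$ we compute $\odv{}{t}\norm{H_0^m U^{(T)}(t)\psi}^2=2\,\mathrm{Im}\innerp{H_0^mU\psi}{\comm{H_0^m}{H^{(T)}(t)}U\psi}$, which by \cref{assump:C-commutator-bound} is bounded by $2(\tilde a_m+\tilde b_m)\norm{(H_0^m+\one)U\psi}^2$. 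Grönwall's inequality then gives growth $\e^{C_m|t|}$ with $C_m$ independent of $T$, since $\tilde a_m,\tilde b_m$ can be taken over one period. The symmetric-part cancellation $\mathrm{Im}\innerp{H_0^m\phi}{H(t)H_0^m\phi}=0$ must be justified on the correct domain using the regularity already obtained, and this is where care is needed.
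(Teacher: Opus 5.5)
Your proof is correct. Its architecture matches the paper's: Kato's hyperbolic theorem, then item-by-item verification, with the commutator bound and periodicity giving $T$-uniformity. The difference is how you realise \cref{assump:pazy:H2-plus}, and this moves where the work is done.

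The paper applies \cref{thm:pazy-regular-solutions} to the pairs $\domof{H_0^m}\subseteq\HH$ with the time-dependent isomorphism $Q_t=H^{(T)}(t)^m-\iu$. This commutes with $A(t)$, so $B(t)=0$.
\begin{itemize}
\item As a result, existence, $\cinftyofh$-invariance and \cref{assump:A-U-diff} do not use \cref{assump:C-commutator-bound} at all.
\item The price is \cref{thm:C-equal-domains-powers,thm:differentiability-powers-H}. Also, since Kato's theorem only gives differentiability in $\HH$, a separate upgrade to differentiability in $\domof{H_0^m}$ is needed, via $U=Q_t^{-1}WQ_s$ in \cref{thm:pazy-Qt-properties,thm:pazy-U-differentiable-in-Y}.
\item The commutator bound enters only in \cref{thm:C-U-bound}, through variation of constants and Gr\"onwall.
\end{itemize}

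You instead use the constant conjugator $H_0+1$ on the shifted pairs $\domof{H_0^{m+1}}\subseteq\domof{H_0^m}$. This needs \cref{assump:C-commutator-bound} from the outset, but then one theorem delivers everything:
\begin{itemize}
\item invariance of $\domof{H_0^{m+1}}$;
\item $C^1$-dependence in $\domof{H_0^m}$, which is literally \cref{assump:A-U-diff}, so no analogue of \cref{thm:pazy-U-differentiable-in-Y} is needed;
\item via \cref{thm:pazy:E1} in $\domof{H_0^m}$ normed by $\psi\mapsto\norm{(H_0+1)^m\psi}$, the bound $\norm{(H_0+1)^mU^{(T)}(t)\psi}\leq \e^{c_m\abs{t}}\norm{(H_0+1)^m\psi}$. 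Here $c_m=\sup_{t}\norm{\comm{(H_0+1)^m}{H(t)}(H_0+1)^{-m}}$, which is independent of $T$ by periodicity.
\end{itemize}
So the obstacle in your last paragraph is not real. \cref{thm:pazy:E1} depends only on the stability constants, not on the modulus of continuity of $t\mapsto H(t/T)$ in $B(Y,X)$. Your energy estimate is still valid and is essentially the paper's Gr\"onwall argument.

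A few repairs are needed.
\begin{itemize}
\item The identity $SH(t)S^{-1}=H(t)+B(t)$ in \cref{assump:pazy:H2-plus} is an equality of operators, and you only check it on $\domof{H_0}$. The reverse domain inclusion holds because $\lambda-SA(t)S^{-1}$ is injective for real $\lambda\neq0$ (by self-adjointness of $H(t)$), while $\lambda-A(t)-B(t)$ is surjective for large $\lambda$. The same argument shows that the part of $A(t)$ in $\domof{H_0^m}$ has domain exactly $\domof{H_0^{m+1}}$, which you need for stability there.
\item \Cref{assump:C-H-diff} gives Lipschitz continuity of $t\mapsto H(t)$ in $B(Y,X)$, not norm-$C^1$. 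Continuity is all \cref{assump:pazy:H3} requires, so this is harmless.
\item Since $H_0$ is only positive, conjugate with $(H_0+1)^{\pm m}$ rather than $H_0^{\pm m}$. The resulting error is then bounded on $\HH$, not merely $H_0$-bounded. Boundedness is what the bounded-perturbation stability argument requires.
\end{itemize}
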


We illustrate this in \cref{sec:periodically-driven-oscillator} with the periodically driven quantum harmonic oscillator.

\section{The iterated integration-by-parts formula}\label{sec:iterated-integration-by-parts}

In this section, we develop the technical framework required for the proof of \cref{th-S:U-diff-periods}, which consists of an iterated integration-by-parts formula yielding an expression for the difference of two unitary propagators, generalising \cite[Eq. (11)]{deyErrorBoundsFloquetMagnus2025} to unbounded generators.
To this end, we first introduce the necessary notation and establish a series of auxiliary results.

We begin by recalling the definition of Bochner integral, which will be used throughout the paper.

\begin{definition}[{\cite{bochnerIntegrationFunktionenDeren1933}, \cite[Ch.~V.5]{yosidaFunctionalAnalysis1995}}]
    A map \(\psi \colon\: \R \to \HH\) is said to be \emph{Bochner measurable} if there exists a sequence \((\psi_n)_{n\in\N}\) of simple functions \(\psi_n\colon\:\R \to \HH\) such that
    \begin{equation}
        \psi_n(t) \to \psi(t)
    \end{equation}
    for almost all \(t\in \R\).
    A Bochner measurable map \(\psi\) is \emph{Bochner integrable} if it is Bochner measurable and
    \begin{equation}
        \lim_{n\to \infty} \int_\R \norm*{ \psi_n(t) - \psi(t)} \dd t = 0.
    \end{equation}
    In this case the \emph{Bochner integral} of \(\psi\) is defined by
    \begin{equation}
        \int_\R \psi(t) \dd t = \lim_{n\to \infty} \int_\R \psi_n(t) \dd t
        ,
    \end{equation}
    where the integral of simple functions is defined in the standard way by linearity.
\end{definition}

For convenience, we recall here some useful properties of the Bochner integral.
\begin{theorem}[{\cite{bochnerIntegrationFunktionenDeren1933}}]
\label{thm:bochner-integrability-test}
    A Bochner
    measurable map \(\psi\colon\:
    \R
    \to \HH\) is Bochner
    integrable if and only if \(
    t
    \mapsto \norm{\psi(t)}\) is
    integrable.
\end{theorem}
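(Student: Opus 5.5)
The plan is to prove both directions straight from the definition of the Bochner integral, using only scalar measure theory: measurability of norms, the triangle inequality, and Lebesgue's dominated convergence theorem. Throughout, fix a sequence of simple functions $(\psi_n)_{n\in\N}$ with $\psi_n(t)\to\psi(t)$ for almost all $t$, as Bochner measurability provides. First I note that $t\mapsto\norm{\psi(t)}$ is Lebesgue measurable. Indeed, each $\norm{\psi_n(\cdot)}$ is a real simple function, and continuity of the norm gives $\norm{\psi_n(t)}\to\norm{\psi(t)}$ almost everywhere. So the statement \enquote{$\norm{\psi}$ is integrable} only asks that $\int_\R\norm{\psi(t)}\dd t<\infty$.

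\emph{Necessity.} Assume $\psi$ is Bochner integrable, with $(\psi_n)$ chosen so that $\int_\R\norm{\psi_n(t)-\psi(t)}\dd t\to0$. Pick $n$ with this integral finite. A simple function appearing in the definition has an integral defined by linearity, so its nonzero values sit on sets of finite measure. Hence $\int_\R\norm{\psi_n(t)}\dd t<\infty$. The pointwise bound $\norm{\psi(t)}\le\norm{\psi_n(t)}+\norm{\psi_n(t)-\psi(t)}$ then shows that $\norm{\psi}$ is integrable.

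\emph{Sufficiency.} This is the only step needing a construction. An arbitrary approximating sequence need not converge in $L^1$, so I would modify it by truncation. Set
\[
  \varphi_n(t)\coloneqq\psi_n(t)\,\one_{\Set{s\given \norm{\psi_n(s)}\le 2\norm{\psi(s)}}}(t).
\]
I then check four properties.
\begin{statements}
\item Each $\varphi_n$ is simple. The set in the indicator is measurable, because both norms are measurable. Moreover, $\varphi_n$ takes finitely many values.
\item Each nonzero value of $\varphi_n$ is attained on a set of finite measure. This holds because $\norm{\varphi_n}\le2\norm{\psi}$ and $\norm{\psi}$ is integrable.
\item $\varphi_n\to\psi$ almost everywhere. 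Where $\psi(t)\neq0$ and $\psi_n(t)\to\psi(t)$, eventually $\norm{\psi_n(t)}\le2\norm{\psi(t)}$, so $\varphi_n(t)=\psi_n(t)$ for large $n$. Where $\psi(t)=0$, either $\varphi_n(t)=0$, or $\varphi_n(t)=\psi_n(t)$ and then $\norm{\psi_n(t)}\le 0$; in both cases $\varphi_n(t)=0$.
\item The domination $\norm{\varphi_n(t)-\psi(t)}\le3\norm{\psi(t)}$ holds, with an integrable majorant.
\end{statements}
Dominated convergence then yields $\int_\R\norm{\varphi_n(t)-\psi(t)}\dd t\to0$, so $\psi$ is Bochner integrable.

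I expect the main obstacle to be the sufficiency direction, specifically choosing the truncation so that the approximants remain simple functions with finite-measure support while still converging almost everywhere. The factor $2$ in the truncation set is chosen precisely so that the case $\psi(t)\neq0$ passes to the limit without any loss.
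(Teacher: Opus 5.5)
Your proof is correct. The paper gives no proof of this statement and simply cites Bochner, with the definition taken from Yosida, Ch.~V.5. Your argument is the classical proof: truncate the approximants where $\norm{\psi_n(t)} > 2\norm{\psi(t)}$, then conclude by dominated convergence with the majorant $3\norm{\psi}$. The fixed factor $2$ guarantees $\varphi_n(t)=\psi_n(t)$ eventually wherever $\psi(t)\neq 0$, and your check that $\varphi_n(t)=0$ whenever $\psi(t)=0$ handles the remaining case.
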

\begin{theorem}[Hille's theorem, cf Thm. 3.7.12 in \cite{hilleFunctionalAnalysisSemigroups1957}]\label{thm:Hille}
    Let $A$ be a closed operator on $\HH$, and let  \(\psi\colon\: \R\to\domof{A}\) be a Bochner integrable map such that \(A\psi\colon\: \R\to\HH\) is also Bochner integrable. Then
    \begin{equation}
        A \int_a^b \psi(s)\dd s = \int_a^b A\psi(s) \dd s
    \end{equation}
    for every $a,b\in\R$.
\end{theorem}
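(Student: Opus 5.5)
The statement to prove is Hille's theorem: a closed operator $A$ commutes with the Bochner integral whenever both $\psi$ and $A\psi$ are Bochner integrable. The idea is to treat the graph of $A$ as a Hilbert space in its own right. Let $\Gamma(A)=\Set{(\phi,A\phi)\given \phi\in\domof{A}}\subseteq\HH\oplus\HH$. Since $A$ is closed, $\Gamma(A)$ is a closed subspace of $\HH\oplus\HH$, and hence a Hilbert space. Consider the map $\Phi\colon t\mapsto(\psi(t),A\psi(t))\in\Gamma(A)$ on $[a,b]$, or with the orientation reversed if $b<a$.

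The first step is to show that $\Phi$ is Bochner integrable as a $\Gamma(A)$-valued map. Measurability is the one point that needs care. Both components are Bochner measurable, so $\Phi$ is Bochner measurable into $\HH\oplus\HH$: take pairs of the approximating simple functions. Since $\HH$ is separable, so is $\HH\oplus\HH$, and by Pettis' theorem measurability reduces to weak measurability together with separable range. Both properties pass to the closed subspace $\Gamma(A)$, because the values of $\Phi$ lie in $\Gamma(A)$. Alternatively, one can compose the approximating simple functions with the orthogonal projection onto $\Gamma(A)$. The norm satisfies $\norm{\Phi(t)}\le\norm{\psi(t)}+\norm{A\psi(t)}$, which is integrable. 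By \cref{thm:bochner-integrability-test}, $\Phi$ is therefore Bochner integrable in $\Gamma(A)$, and its integral $\int_a^b\Phi$ is an element of $\Gamma(A)$.

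The second step is to identify this integral. The coordinate projections $\pi_1,\pi_2\colon\HH\oplus\HH\to\HH$ are bounded linear maps. Bounded operators commute with Bochner integrals: this is immediate for simple functions, and it passes to the limit by the definition of the integral. Hence
\[
\int_a^b\Phi=\of*{\int_a^b\psi,\ \int_a^b A\psi}.
\]
Since this pair lies in $\Gamma(A)$, its first component $\int_a^b\psi$ belongs to $\domof{A}$, and applying $A$ to it returns the second component:
\[
A\int_a^b\psi=\int_a^b A\psi.
\]

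I expect the only delicate step to be the measurability of $\Phi$ as a $\Gamma(A)$-valued map. The projection trick, or Pettis' theorem, together with the separability assumption on $\HH$ should handle it.
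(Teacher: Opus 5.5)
The paper gives no proof of this statement; it quotes it from Hille--Phillips and uses it as a black box, only for $A=H_0^m$ and continuous integrands. Your graph-space argument is correct, and it is the standard proof of Hille's theorem. Closedness of $A$ makes $\Gamma(A)$ a closed subspace of $\HH\oplus\HH$. The map $t\mapsto(\psi(t),A\psi(t))$ is then Bochner integrable in $\Gamma(A)$. Applying the bounded coordinate projections to its integral gives both $\int_a^b\psi\in\domof{A}$ and the identity.

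Compared with the citation, your route is self-contained: it uses only \cref{thm:bochner-integrability-test} and the fact that bounded operators commute with Bochner integrals. The cited result, in exchange, covers Banach-space-valued integrands. There an orthogonal projection onto the graph is unavailable, because closed subspaces need not be complemented, so one argues via Pettis' theorem and Hahn--Banach.

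The measurability step you flag as delicate is routine in the Hilbert setting. Let $(\psi_n)$ and $(\chi_n)$ be the simple functions witnessing Bochner integrability of $\psi$ and $A\psi$, and let $P$ be the orthogonal projection onto $\Gamma(A)$. Since $P\Phi=\Phi$, the functions $P(\psi_n,\chi_n)$ are $\Gamma(A)$-valued simple functions converging to $\Phi$ both almost everywhere and in $L^1$. So you need neither Pettis' theorem nor separability of $\HH$. The same argument also works if $\psi(t)\in\domof{A}$ holds only for almost every $t$.
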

Hereafter, $\mathcal{L}\of{\HH}$ will denote the set of all (possibly unbounded) linear operators on $\HH$.
\begin{definition}\label{def:integral-of-operator}
    Given \(A\colon\: \R \to \mathcal{L}\of{\HH}\) and $t\in\R$, we define the operator $\int_0^tA(s)\dd s$ via
    \begin{gather}
        \domof*{\int_0^t A(s)\dd s}
        = \Set*{
            \psi \in \HH \given
            \begin{aligned}
             &\forall s \in [0,t]\colon \psi \in \domof{A(s)},
             \\
             &s \mapsto A(s)\psi\ \text{Bochner integrable on } [0,t]
             \end{aligned}
             }
        \\
        \left(\int_0^t A(s)\dd s\: \right)\psi = \int_0^t A(s)\psi \dd s
        .
    \end{gather}
    Moreover, for every $T>0$, the $T$-\emph{average} $\avg{T}{A}$ of \(A\colon\: \R \to \mathcal{L}\of{\HH}\) is the operator defined by
    \begin{equation}
        \avg*{T}{A} = \frac{1}{T} \int_0^T A(s) \dd s.
    \end{equation}
    At this stage, the parameter $T$ is purely formal and should not be interpreted as a period. This interpretation will only become relevant later, beginning with \cref{sec:Heff}.
\end{definition}
With the framework of Bochner and operator-valued integrals in place, we now introduce the objects appearing in the integration-by-parts formula.

\begin{definition}\label{def:Kop}
    Consider two families of operators, \(\of*{H_1(t)}_{t\in \R}\) and \(\of*{H_2(t)}_{t\in \R}\).
    For every \(t\in \R\) and every linear operator \(A\), we define the operator
    \begin{equation}\label{eq:def:Kop}
        \Kop{t}\of{A} = H_1(t) A - A H_2(t)
    \end{equation}
    on its standard domain.
\end{definition}
\begin{definition}
\label{def:iterated-actions}
    Let \(T > 0 \), and let \(\of*{H_1(t)}_{t\in \R}\), \(\of*{H_2(t)}_{t\in \R}\) be two families of operators.
    For every $j\in\N_0$, we define the operator \(\Sop{j}\of{t}\) iteratively by setting
    \begin{gather}
        \Sop{0}\of{t} = \one,
        \\
        \label{eq:def:iterated-actions}
        \Sop{j}\of{t} 
        = \int_0^t \Kop{s}\of{\Sop{j-1}(s)} \dd s - t \avg{T}{\Kop{\cdot}\of{\Sop{j-1}(\cdot)}}
        ,
        \quad \forall j \in \N,
    \end{gather}
    for all \(t\in \R\), on its standard domain.
\end{definition}

In general, the domains of the operators \(\Kop{t}\of*{A}\) and \(\Sop{j}(t)\) as specified in \cref{def:integral-of-operator} need not be dense, or even nontrivial. \cref{thm:properties-iterated-actions}, shows that this pathology does not occur in our setting and establishes several additional properties that will be used throughout the remainder of the paper. In particular, for every \(t \in \R\) and \(j \in \N\), one has \(\cinftyofh \subseteq \domof{\Sop{j}(t)}\). We begin with a preliminary lemma:

\begin{lemma}\label{thm:strong-continuity-monomials}
    Let \(H_0\) be self-adjoint and $(H(t))_{t\in\R}$ a family of operators satisfying \cref{assump:A-H-cinf,assump:A-H-cont,assump:A-H-bound}.
    For every \(j \in \N\), \(m \in \N_0\) and \(\psi \in \cinftyofh\), the map
    \begin{equation}
        \underline{s} = (s_1,\dots,s_j)\in \R^j \mapsto H_0^m H(s_1) \dots H(s_j)\psi
    \end{equation}
    is continuous, and hence integrable over any compact subset \(\mathcal{M}\subseteq \R^j\).
\end{lemma}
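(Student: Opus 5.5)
We argue by induction on \(j\), proving the stronger statement that for every \(m\in\N_0\) and \(\psi\in\cinftyofh\) the map \(\underline{s}\mapsto H_0^m H(s_1)\cdots H(s_j)\psi\) is continuous on \(\R^j\). Integrability over compact \(\mathcal{M}\) then follows at once. A continuous map into the separable Hilbert space \(\HH\) is Bochner measurable, and its norm is continuous, hence bounded on \(\mathcal{M}\). So \cref{thm:bochner-integrability-test} applies. All products are well defined on \(\cinftyofh\) by \cref{assump:A-H-cinf}, since each \(H(s_k)\) maps \(\cinftyofh\) into itself.

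The base case \(j=1\) is exactly \cref{assump:A-H-cont}. For the inductive step, write \(\phi(\underline{s}')=H(s_2)\cdots H(s_j)\psi\in\cinftyofh\) with \(\underline{s}'=(s_2,\dots,s_j)\). Fix a point \(\underline{s}\) and let \(\underline{r}\to\underline{s}\) with all coordinates in a compact interval \(I\). We split
\begin{equation*}
    H_0^m H(r_1)\phi(\underline{r}') - H_0^m H(s_1)\phi(\underline{s}')
    = H_0^m H(r_1)\bigl(\phi(\underline{r}')-\phi(\underline{s}')\bigr)
    + \bigl(H_0^m H(r_1) - H_0^m H(s_1)\bigr)\phi(\underline{s}').
\end{equation*}
The second term tends to zero by \cref{assump:A-H-cont}, applied to the fixed vector \(\phi(\underline{s}')\in\cinftyofh\).

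For the first term we use \cref{assump:A-H-bound}, which is uniform in \(r_1\in I\):
\begin{equation*}
    \norm*{H_0^m H(r_1)\bigl(\phi(\underline{r}')-\phi(\underline{s}')\bigr)}
    \le a_m \norm*{H_0^{m+k_m}\bigl(\phi(\underline{r}')-\phi(\underline{s}')\bigr)}
    + b_m\norm*{\phi(\underline{r}')-\phi(\underline{s}')}.
\end{equation*}
Both terms on the right vanish by the induction hypothesis for \(j-1\), applied with the exponents \(m+k_m\) and \(0\). This is why the induction must carry all powers \(m\) simultaneously. The hypothesis is used on the tail product, at an arbitrary power of \(H_0\).

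The only real point, which I take to be the main obstacle, is obtaining joint continuity rather than mere separate continuity. Here it is resolved by the uniformity of the constants \(a_m,b_m,k_m\) on compact intervals in \cref{assump:A-H-bound}, which controls the varying operator \(H(r_1)\) acting on a varying vector. Everything else is routine.
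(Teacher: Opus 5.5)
Your proof is correct. It shares the paper's skeleton: induction on $j$, a two-term triangle-inequality split, \cref{assump:A-H-cont} applied to a frozen vector, and the uniform bound of \cref{assump:A-H-bound} applied to a varying difference. The one real difference is that you peel off the \emph{leftmost} factor $H(r_1)$, whereas the paper peels off the \emph{rightmost} factor $H(s_{j+1})$.

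In the paper, the frozen-vector term $\bigl(H_0^mH(s_1)\cdots H(s_j)-H_0^mH(\hat s_1)\cdots H(\hat s_j)\bigr)H(\hat s_{j+1})\psi$ is handled by the induction hypothesis at the same power $m$. The varying term then needs a relative bound, uniform on compact sets, for the whole $j$-fold product $H_0^mH(s_1)\cdots H(s_j)$ with respect to $H_0^{k_m^{(j)}}$. This comes from iterating \cref{assump:A-H-bound} with the recursively defined exponents $k_m^{(j)}$, and then the base case is applied at power $k_m^{(j)}$.

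Your version uses \cref{assump:A-H-bound} only once. In exchange, you invoke the induction hypothesis at the shifted powers $m+k_m$ and $0$, which costs nothing because the statement is quantified over all $m$. So your route is slightly more economical and avoids the bookkeeping of iterated constants. The paper's route only ever calls the induction hypothesis at a fixed power. Your explicit remark on Bochner measurability of a continuous map is also a small gain in care over the paper's direct appeal to the integrability test.
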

\begin{proof}
    We prove the claim by induction over \(j \in \N\). 
    For $j=1$ this is simply \cref{assump:A-H-cont}. 
    Now assume the claim holds for some \(j \in \N\).
    Then, for $\psi\in\cinftyofh$, we have
    \begin{align}
        \MoveEqLeft
        \norm{\of{H_0^m H(s_1) \dots H(s_j)H(s_{j+1}) - H_0^m H(\hat{s}_1) \dots H(\hat{s}_j)H(\hat{s}_{j+1})}\psi}
        \nonumber\\
        &\leq\begin{multlined}[t]
            \norm{\of{H_0^m H(s_1) \dots H(s_{j}) -H_0^m H(\hat{s}_1) \dots H(\hat{s}_j)} H(\hat{s}_{j+1})\psi}
            \nonumber\\
            +\norm{H_0^m H(s_1) \dots H(s_j) \of{H(s_{j+1})-H(\hat{s}_{j+1})}\psi}
        \end{multlined}
        \nonumber\\
        &\leq \begin{multlined}[t]
            \underbrace{\norm{\of{H_0^m H(s_1) \dots H(s_j) -H_0^m H(\hat{s}_1) \dots H(\hat{s}_j)} H(\hat{s}_{j+1})\psi}}_{\to 0,\ \text{by induction assumption for}\ j\ \text{and \cref{assump:A-H-cinf}}}
            \nonumber\\
            +a_{m,j}\underbrace{\norm{H_0^{k_m^{(j)}}\of{H(s_{j+1})-H(\hat{s}_{j+1})}\psi}}_{\to 0\ \text{by base case}\ j=1}
            +b_{m,j}\underbrace{\norm{\of{H(s_{j+1})-H(\hat{s}_{j+1})}\psi}}_{\to 0\ \text{by base case}\ j=1}
        \end{multlined}
        \nonumber\\
        &\to 0
    \end{align}
    as \(\underline{s} \to \hat{\underline{s}}\), where \(k_m^{(1)} \coloneqq m + k_m\) and \(k_m^{(j+1)} \coloneqq k_m^{(j)} + k_{k_m^{(j)}}\) and with certain constants $a_{m,j}, b_{m,j}\geq 0$, using \cref{assump:A-H-bound}.
    So the claim holds for $j+1$ as well.
    From that, integrability follows immediately due to \cref{thm:bochner-integrability-test}.
\end{proof}

\begin{proposition}\label{thm:properties-iterated-actions}
    Let \(T >0\), \(H_0\) be a self-adjoint operator, and $(H_1(t))_{t\in\R}, (H_2(t))_{t\in\R}$ satisfy \cref{assump:A-H-cinf,assump:A-H-cont,assump:A-H-bound}. 
    For every \(j \in \N_0\), the following properties hold:
    \begin{statements}
        \item\label{itm:properties-iterated-actions-domain} \(\displaystyle \cinftyofh \subseteq \domof{\Kop{t}\of{\Sop{j}(t)}}\) for all $t\in\R$;
        \item\label{itm:properties-iterated-actions-integrability} For all $\psi \in \cinfty(H_0)$ and $m\in \N_0$ and $t\in\R$, \(\displaystyle s\mapsto H_0^m \Kop{s}\of{\Sop{j}(s)}\psi\) is continuous and Bochner integrable on \([0,t]\);
        \item\label{itm:properties-iterated-actions-rel-boundedness} For all $m\geq 0$ and every compact interval $I\subseteq\R$ there exist $\tilde{k}_m \in \N_0$ and $\tilde{a}_m,\,\tilde{b}_m\geq 0$ such that
        \begin{equation}\label{eq:properties-iterated-actions-rel-boundedness}
            \norm{H_0^m \Sop{j}(t)\psi}
            \leq \tilde{a}_m \norm{H_0^{m+\tilde{k}_m}\psi} + \tilde{b}_m\norm{\psi}
        \end{equation}
        for all $\psi\in\cinftyofh$ and all $t\in I$;
        \item\label{itm:properties-iterated-actions-invariance} \(\displaystyle\Sop{j}(t) \cinfty\of{H_0} \subseteq \cinfty\of{H_0}\) for all $t\in\R$.
    \end{statements}
    In particular, \(\cinftyofh \subseteq \domof{\Sop{j}(t)}\) for every \(j \in \N_0\) and \(t \in \R\).
\end{proposition}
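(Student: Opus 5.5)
We argue by induction on \(j\in\N_0\), proving all four statements simultaneously. The inductive step will in fact establish the following stronger, more structured claim. For each \(j\), the map \(t\mapsto H_0^m\Sop{j}(t)\psi\) is continuous for all \(m\in\N_0\) and \(\psi\in\cinftyofh\). Moreover, \(\Sop{j}(t)\psi\) is a finite sum of terms of the form \(c(t)\int_{\mathcal M(t)} H_{i_1}(s_1)\cdots H_{i_k}(s_k)\psi\,\dd\underline s\). Here each \(i_\ell\in\{1,2\}\), the function \(c\) is polynomial in \(t\), and \(\mathcal M(t)\) is a compact simplex-like region depending continuously on \(t\).

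For \(j=0\) we have \(\Sop{0}=\one\), so statements (iii) and (iv) are trivial. Statement (i) follows since \(\Kop{t}(\one)=H_1(t)-H_2(t)\) is defined on \(\cinftyofh\) by \cref{assump:A-H-cinf}. Statement (ii) is \cref{assump:A-H-cont} applied to \(H_1\) and \(H_2\).

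For the inductive step, assume the claims hold for \(j\). By invariance (iv) for \(j\) and \cref{assump:A-H-cinf}, \(\Kop{s}(\Sop{j}(s))\psi = H_1(s)\Sop{j}(s)\psi - \Sop{j}(s)H_2(s)\psi\) is well defined and lies in \(\cinftyofh\), which gives (i) for \(j\).

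Continuity in (ii) is the main technical point. For the second term we use the structure: \(H_0^m\Sop{j}(s)H_2(s)\psi\) is an integral of products \(H_0^m H_{i_1}(s_1)\cdots H_{i_k}(s_k)H_2(s)\psi\). By \cref{thm:strong-continuity-monomials} this integrand is jointly continuous in \((\underline s,s)\), since the lemma's proof works verbatim for mixed products of \(H_1\) and \(H_2\). Integrating a jointly continuous function over continuously varying compact regions gives continuity in \(s\). The first term \(H_0^mH_1(s)\Sop{j}(s)\psi\) is handled the same way, by moving \(H_0^mH_1(s)\) inside the integral via Hille's theorem (\cref{thm:Hille}) applied to the closed operator \(H_0^m\) (and to \(H_0^mH_1(s)\) through the relative bound together with the closedness of powers of \(H_0\)). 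Alternatively, one can write \(H_0^mH_1(s)\Sop{j}(s)\psi - H_0^mH_1(\hat s)\Sop{j}(\hat s)\psi\) as a sum of two differences. The first is controlled by \cref{assump:A-H-bound} together with continuity of \(H_0^{m'}\Sop{j}(s)\psi\), and the second by continuity of \(s\mapsto H_0^mH_1(s)\phi\) for the fixed vector \(\phi=\Sop{j}(\hat s)\psi\). Bochner integrability then follows from \cref{thm:bochner-integrability-test}.

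Next we obtain (iv) and (iii) for \(j+1\). By (ii), \cref{thm:Hille} applied to the closed operator \(H_0^m\) gives \(\Sop{j+1}(t)\psi\in\domof{H_0^m}\) for all \(m\), together with
\[
H_0^m\Sop{j+1}(t)\psi=\int_0^t H_0^m\Kop{s}(\Sop{j}(s))\psi\,\dd s - \frac{t}{T}\int_0^T H_0^m\Kop{s}(\Sop{j}(s))\psi\,\dd s .
\]
This is (iv), and continuity in \(t\) also follows. For (iii), we bound \(\norm{H_0^m\Kop{s}(\Sop{j}(s))\psi}\) uniformly in \(s\in I\cup[0,T]\). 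To do so we chain \cref{assump:A-H-bound} for \(H_1\) with (iii) for \(\Sop{j}\), and (iii) for \(\Sop{j}\) with \cref{assump:A-H-bound} for \(H_2\). This uses the interpolation-type inequality \(\norm{H_0^{n}H_2(s)\psi}\le a\norm{H_0^{n+k}\psi}+b\norm{\psi}\) and the elementary estimate \(\norm{H_0^{p}\psi}\le\norm{H_0^{q}\psi}+\norm{\psi}\) for \(p\le q\), obtained via the spectral theorem, to absorb intermediate powers. Integrating over \(s\) multiplies the constants by \(|t|+|t|\) with \(t\in I\) bounded, which yields \eqref{eq:properties-iterated-actions-rel-boundedness} with \(\tilde k_m\) equal to the sum of the individual shifts.

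The final assertion, \(\cinftyofh\subseteq\domof{\Sop{j}(t)}\), is then immediate from the construction. The main obstacle is the continuity in (ii), specifically handling \(H_1(s)\) acting on the \(s\)-dependent vector \(\Sop{j}(s)\psi\). We resolve this with the two-difference splitting described above, which requires carrying continuity of \(H_0^{m}\Sop{j}(\cdot)\psi\) for all \(m\) as part of the induction hypothesis.
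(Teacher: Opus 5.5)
Your proof is correct and is essentially the paper's argument: induction on $j$, Hille's theorem for the closed operators $H_0^m$ to obtain invariance and continuity of $t\mapsto H_0^m\Sop{j+1}(t)\psi$, chaining \cref{assump:A-H-bound} with the bound for $\Sop{j}$ to get (iii), and a two-difference splitting to get continuity in (ii). Your \emph{integral of monomials} representation is not needed, and as stated it is not closed under the recursion, because the $s$-dependent prefactors end up as polynomial weights inside the integrals; the term $H_0^m\Sop{j}(s)H_2(s)\psi$ is handled by the same splitting, using (iii) for $\Sop{j}$ uniformly on compact intervals together with \cref{assump:A-H-cont}, which is exactly what the paper does.
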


\begin{proof}
    The proof is by induction over \(j \in \N_0\).
    For \(j = 0\) \cref{itm:properties-iterated-actions-rel-boundedness,itm:properties-iterated-actions-invariance} hold trivially, because \(\Sop{0}(t) = \one\) is bounded for every \(t\in \R\).
    \Cref{itm:properties-iterated-actions-domain} follows directly from \cref{assump:A-H-cinf}, and \cref{itm:properties-iterated-actions-integrability} from \cref{assump:A-H-cont}.

    Now, assume \cref{itm:properties-iterated-actions-domain,itm:properties-iterated-actions-integrability,itm:properties-iterated-actions-invariance,itm:properties-iterated-actions-rel-boundedness} hold for some \(j \in \N_0\).
    We show that this implies \cref{itm:properties-iterated-actions-domain,itm:properties-iterated-actions-integrability,itm:properties-iterated-actions-invariance,itm:properties-iterated-actions-rel-boundedness} also hold for \(j+1\).
    Due to \cref{itm:properties-iterated-actions-integrability,itm:properties-iterated-actions-domain} for \(j\),
    \begin{equation}
        \Sop{j+1}(t)\psi
        = \int_0^t \Kop{s}\of{\Sop{j}(s)}\psi \dd s - \frac{t}{T} \int_0^T \Kop{s}\of{\Sop{j}(s)}\psi \dd s
    \end{equation}
    is well-defined for \(\psi \in \cinftyofh\).
    Moreover, by
\cref{itm:properties-iterated-actions-integrability},
for every \(m\in\N_0\) the map
$s\mapsto H_0^m\Kop{s}\of{\Sop{j}(s)}\psi$ is Bochner integrable. Hence, by repeated application of \cref{thm:Hille}, the operator \(H_0^m\) can be interchanged with the integrals above, which shows $\Sop{j+1}(t)\psi \in \dom(H_0^m)$ for all \(m\in\N_0\), thereby proving \cref{itm:properties-iterated-actions-invariance} for $j+1$.
    The invariance \(\Sop{j+1}(t) \cinftyofh \subseteq \cinftyofh\) in \cref{itm:properties-iterated-actions-invariance} then implies \cref{itm:properties-iterated-actions-domain} for \(j+1\) by definition of $\Kop{}$.

    Regarding \cref{itm:properties-iterated-actions-rel-boundedness}, we begin with showing that \(H_0^{m} \Kop{s}\of*{\Sop{j}(s)}\) is relatively \(H_0^{m+\hat{k}_m}\)-bounded for some \(\hat{k}_m\):
        \begin{align}
            \norm{H_0^m \Kop{s}\of*{\Sop{j}(s)} \psi}
            &\leq \norm{H_0^m H_1(s) \Sop{j}(s)\psi}
            + \norm{H_0^m \Sop{j}(s) H_2(s)\psi}
            \nonumber\\
            &\leq
            \begin{multlined}[t]
                a_m \norm*{H_0^{m+ k_m}\Sop{j}(s)\psi}
                + b_m \norm*{\Sop{j}(s)\psi}
            \nonumber\\
                + \tilde{a}_m \norm*{H_0^{m+\tilde{k}_m}H_2(s)\psi}
                + \tilde{b}_m \norm*{H_2(s)\psi}
            \end{multlined}
            \nonumber\\
            &\leq
            \begin{aligned}[t]
                &a_m \of*{
                    \tilde{a}_{m+ k_m}\norm*{H_0^{m+ k_m + \tilde{k}_{m+ k_m}}\psi}
                    + \tilde{b}_{m+ k_m}\norm*{\psi}
                }
            \nonumber\\
                &+ b_m \of*{
                    \tilde{a}_0\norm*{H_0^{\tilde{k}_0}\psi}
                    + \tilde{b}_0 \norm*{\psi}
                }
            \nonumber\\
                &+ \tilde{a}_m \of*{
                    a_{m+\tilde{k}_m} \norm*{H_0^{m+\tilde{k}_m + k_{m+\tilde{k}_m}} \psi}
                    + b_{m+\tilde{k}_m} \norm*{\psi}
                }
            \nonumber\\
                &+ \tilde{b}_m\of*{
                    a_0 \norm*{H_0^{k_0}\psi}
                    + b_0 \norm*{\psi}
                }
            \end{aligned}
            \nonumber\\
            &\leq \hat{a}_m \norm*{H_0^{m+\hat{k}_m}\psi} + \hat{b}_m \norm*{\psi}
            \label{eq:properties-iterated-actions-induction-step-proof-9}
            ,
        \end{align}
    with suitable $\hat{a}_m,\,\hat{b}_m\geq0$, and $\hat{k}_m \in \N_0$. Here we used \cref{assump:A-H-bound} and \cref{itm:properties-iterated-actions-rel-boundedness} with $j$.
    We can use this to show \cref{itm:properties-iterated-actions-rel-boundedness} for $j+1$:
    \begin{align}
        \norm*{H_0^m \Sop{j+1}(t)\psi}
        &\leq \norm*{H_0^m \int_0^t \Kop{s}\of*{\Sop{j}(s)}\psi \,\dd s\,} + \frac{|t|}{T} \norm*{H_0^m \int_0^T \Kop{s}\of*{\Sop{j}(s)}\psi \,\dd s}
        \nonumber\\
        &\leq  \int_0^t \norm*{H_0^m\Kop{s}\of*{\Sop{j}(s)}\psi} \dd s + \frac{|t|}{T} \int_0^T \norm*{H_0^m \Kop{s}\of*{\Sop{j}(s)}\psi} \,\dd s\,
        \nonumber\\
        &\leq 2\max_{s\in I} (|s|)\big(\hat{a}_m\norm{H_0^{m+\hat{k}_m}\psi} + \hat{b}_m\norm{\psi}\big)
        ,
    \end{align}
    for every \(t\in I\). This is of the form in \cref{eq:properties-iterated-actions-rel-boundedness} with $j+1$, hence completing our induction step.

    We now show \cref{itm:properties-iterated-actions-integrability} for $j+1$. We first claim that \(s\mapsto H_0^m \Sop{j+1}(s)\) is strongly continuous for every $m\in\N_0$. Indeed, we can write
    \begin{equation}
        \of*{\Sop{j+1}(s) - \Sop{j+1}(s_0)}\psi = \int_{s}^{s_0} \Kop{r}\of*{\Sop{j}(r)} \psi \dd r - \frac{s_0 - s}{T} \int_{0}^T \Kop{r}\of*{\Sop{j}(r)} \psi \dd r,
    \end{equation}
    so that we have
    \begin{align}
        \MoveEqLeft
        \norm*{ H_0^m \of*{\Sop{j+1}(s) - \Sop{j+1}(s_0)}\psi }
        \nonumber\\
        &\leq
        \begin{aligned}[t]
            \norm*{ H_0^m\int_{s}^{s_0} \Kop{r}\of*{\Sop{j}(r)} \psi\dd r }
            + \frac{\abs{s_0 - s}}{T} \int_{0}^T \norm*{H_0^m\Kop{r}\of*{\Sop{j}(r)} \psi} \dd r
        \end{aligned}
        \nonumber\\
        &\leq
        \begin{aligned}[t]
             \int_{s}^{s_0} \norm*{H_0^m\Kop{r}\of*{\Sop{j}(r)} \psi}\dd r
            + \frac{\abs{s_0 - s}}{T} \int_{0}^T \norm*{H_0^m\Kop{r}\of*{\Sop{j}(r)} \psi} \dd r
        \end{aligned}
        \nonumber\\
        &
        \begin{aligned}[t]
             \to0
        \end{aligned}
        \label{eq:properties-iterated-actions-induction-step-proof-3}
    \end{align}
as $s\to s_0$ 
    as a consequence of \cref{itm:properties-iterated-actions-integrability} for $j$. This proves our claim. In order to show \cref{itm:properties-iterated-actions-integrability}, it suffices to show continuity of
    \begin{align}
        \phi^{(m)}(s) = H_0^m \Sop{j+1}(s) H_2(s) \psi
    \shortintertext{and}
        \psi^{(m)}(s) = H_0^m H_1(s) \Sop{j+1}(s) \psi
        \label{eq:properties-iterated-actions-induction-step-phi}
    \end{align}
    separately.
    We have
    \begin{align}
        \norm*{\phi^{(m)}(s) - \phi^{(m)}(s_0)}
        &=\begin{multlined}[t]
            \norm*{
                H_0^m \Sop{j+1}(s) H_2(s) \psi
                - H_0^m \Sop{j+1}(s_0) H_2(s_0) \psi
            }
        \end{multlined}
        \nonumber\\
        &\leq
        \begin{aligned}[t]
            &\norm*{
                \of*{H_0^m \Sop{j+1}(s) - H_0^m \Sop{j+1}(s_0)} H_2(s_0)\psi
            }
            \\
            &+\norm*{
                H_0^m \Sop{j+1}(s) \of*{H_2(s)-H_2(s_0)}\psi
            }
        \end{aligned}
        \nonumber\\
        &\leq
        \begin{aligned}[t]
            &\norm*{
                \of*{H_0^m \Sop{j+1}(s) - H_0^m \Sop{j+1}(s_0)} H_2(s_0)\psi
            }
            \\
            &+ \tilde{a}_m\norm*{
                H_0^{m+\tilde{k}_m} \of*{H_2(s)-H_2(s_0)}\psi
            }
            \\
            &+ \tilde{b}_m\norm*{
                \of*{H_2(s)- H_2(s_0)}\psi
            }
        \end{aligned}
        \nonumber\\
        & \to 0
    \end{align}
    as $s\to s_0$, by strong continuity of \(s\mapsto H_0^m \Sop{j+1}(s)\) proven in the previous step and strong continuity of \(s\mapsto H_0^m H_2(s)\) which holds by \Cref{assump:A-H-cont}. The proof for $\psi^{(m)}$ goes analogously. Hence, \(s\mapsto \psi^{(m)}(s) - \phi^{(m)}(s) = H_0^m \Kop{s}\of*{\Sop{j+1}(s)}\psi\) is continuous and, by \Cref{thm:bochner-integrability-test}, it is integrable over compact intervals.
\end{proof}

We now come to the main result of this section.

\begin{theorem}\label{thm:iterated-integration-by-parts}
    Let $H_0$ be a self-adjoint operator on $\HH$; let \((U_1(t))_{t\in \R}\) be a family of unitary operators with corresponding family \((H_1(t))_{t\in \R}\) such that \cref{assump:A-H-cinf,assump:A-H-cont,assump:A-H-bound} hold for $T=1$, and $t\mapsto U_1(t)^*$ is strongly differentiable on $\cinftyofh$ with
    \begin{equation}\label{eq:assump-diff-U1}
        \frac{\dd}{\dd t}U_1(t)^*\psi = \iu U_1(t) H_1(t) \psi, \quad \forall\psi\in\cinftyofh,
    \end{equation}
    and $U_1(0)=\one$; furthermore, let \((U_2(t))_{t\in \R}\)  be a family of unitary operators with corresponding family \((H_2(t))_{t\in \R}\) such that \cref{assump:A-H-cinf,assump:A-H-cont,assump:A-U-diff,assump:A-H-bound,assump:A-U-cinf} hold for $T=1$, and $U_2(0)=\one$.
    Then, for every \(\psi \in \cinftyofh\), \(L \in \N\), and \(t\in\R\), we have
    \begin{equation}
    \label{eq:iterated-integration-by-parts}
        \of[\big]{U_1(t) - U_2(t)}\psi
        =
        \begin{multlined}[t]
        \sum_{j=1}^{L+1} (-\iu)^j \Sop{j}(t) U_2(t) \psi
        \\
        - \iu \int_0^t U_1(t) U_1(s)\adj \Bigg(
            \sum_{j=0}^L (-\iu)^j \avg*{T}{\Kop{}\of*{\Sop{j}}}
            + (-\iu)^{L+1} \Kop{s}\of*{\Sop{L+1}(s)}
        \Bigg) U_2(s) \psi \dd s
        ,
        \end{multlined}
    \end{equation}
    with $\Kop{s}(\cdot)$ and $\Sop{j}(t)$ as per \cref{def:Kop,def:iterated-actions}.
\end{theorem}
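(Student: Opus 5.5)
The plan is to establish a one-step integration-by-parts identity and then iterate it by induction on $L$. For fixed $\psi\in\cinftyofh$, consider $F(s)=U_1(t)U_1(s)\adj\, A(s)\,U_2(s)\psi$ for a suitable operator family $A(s)$ that leaves $\cinftyofh$ invariant. By \cref{thm:properties-iterated-actions}, this applies in particular to $A=\Sop{j}$.

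First I will show that $s\mapsto F(s)$ is differentiable with
\begin{equation*}
F'(s)=\iu\, U_1(t)U_1(s)\adj\Bigl(\Kop{s}(A(s)) - \iu\,\tfrac{\dd}{\dd s}A(s)\Bigr)U_2(s)\psi ,
\end{equation*}
using a product rule. The derivative of $U_1(s)\adj$ on $\cinftyofh$ is given by \eqref{eq:assump-diff-U1}. The derivative of $U_2(s)\psi$, including with $H_0^m$ applied, is given by \cref{assump:A-U-diff}. Since $A=\Sop{j}$ is defined by an integral, its derivative is $\frac{\dd}{\dd s}\Sop{j}(s)\phi=\Kop{s}(\Sop{j-1}(s))\phi-\avg{T}{\Kop{}(\Sop{j-1})}\phi$. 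This holds on $\cinftyofh$ by continuity (\cref{thm:properties-iterated-actions}\,\cref{itm:properties-iterated-actions-integrability}) and the fundamental theorem of calculus for continuous Bochner integrands.

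The product rule is the main obstacle, since the operators involved are unbounded. The middle factor $A(s)$ acts on the moving vector $U_2(s)\psi$. I will split the difference quotient as
\begin{equation*}
\frac{A(s+h)U_2(s+h)\psi-A(s)U_2(s)\psi}{h}=A(s+h)\frac{U_2(s+h)\psi-U_2(s)\psi}{h}+\frac{A(s+h)-A(s)}{h}U_2(s)\psi .
\end{equation*}
I control the first term with the uniform relative bound \eqref{eq:properties-iterated-actions-rel-boundedness}. This reduces it to convergence of $H_0^{k}$ applied to the difference quotient of $U_2$, which \cref{assump:A-U-diff} provides, together with strong continuity of $A$ on $\cinftyofh$. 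The second term converges by the derivative formula above. For the outer factor $U_1(s+h)\adj$ I use that it is uniformly bounded, so the remaining terms are handled in the same way. Continuity of $F'$ follows from \cref{thm:properties-iterated-actions} and \cref{assump:A-H-cont}, so I can integrate $F'$ over $[0,t]$.

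Next I apply the identity. With $A=\one$ it gives $(U_1(t)-U_2(t))\psi=-\iu\int_0^tU_1(t)U_1(s)\adj\Kop{s}(\Sop0(s))U_2(s)\psi\,\dd s$. I then write $\Kop{s}(\Sop0)=\avg{T}{\Kop{}(\Sop0)}+\frac{\dd}{\dd s}\Sop1(s)$. For general $j$, applying the identity with $A=\Sop{j}$ and using $\Sop{j}(0)=0$ gives
\begin{equation*}
\int_0^t U_1(t)U_1(s)\adj\,\tfrac{\dd}{\dd s}\Sop{j}(s)\,U_2(s)\psi\,\dd s=-\iu\,\Sop{j}(t)U_2(t)\psi-\int_0^tU_1(t)U_1(s)\adj\Kop{s}(\Sop{j}(s))U_2(s)\psi\,\dd s .
\end{equation*}
Substituting this relation repeatedly, and each time splitting $\Kop{s}(\Sop{j})=\avg{T}{\Kop{}(\Sop{j})}+\frac{\dd}{\dd s}\Sop{j+1}$, produces the boundary terms $(-\iu)^j\Sop{j}(t)U_2(t)\psi$ and the averaged terms. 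An induction on $L$ then yields \eqref{eq:iterated-integration-by-parts}. In that induction I only need to track the powers of $-\iu$ and check that the last step leaves $\Kop{s}(\Sop{L+1}(s))$ unsplit in the remainder.
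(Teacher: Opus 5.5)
Your approach is the paper's own. Your difference-quotient splitting is exactly the proof of \cref{thm:iterated-differentiability}: you use the relative bound \eqref{eq:properties-iterated-actions-rel-boundedness} together with \cref{assump:A-U-diff} for the factor acting on $U_2(s)\psi$, and the uniform boundedness of $U_1(\cdot)\adj$ for the outer factor. Your iteration via $\Kop{s}(\Sop{j}(s))=\avg{T}{\Kop{}(\Sop{j})}+\odv{}{s}\Sop{j+1}(s)$, followed by induction on $L$, is the paper's proof of \cref{thm:iterated-integration-by-parts}.

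There is, however, an algebra slip in your displayed one-step relation. Integrate your own (correct) formula $F'(s)=\iu U_1(t)U_1(s)\adj\bigl(\Kop{s}(A(s))-\iu\odv{}{s}A(s)\bigr)U_2(s)\psi$ over $[0,t]$, with $A=\Sop{j}$ and $\Sop{j}(0)=0$. Since $F(t)=\Sop{j}(t)U_2(t)\psi$ and $F(0)=0$, this gives
\[
\int_0^t U_1(t)U_1(s)\adj\Bigl(\odv{}{s}\Sop{j}(s)\Bigr)U_2(s)\psi\,\dd s
=\Sop{j}(t)U_2(t)\psi-\iu\int_0^t U_1(t)U_1(s)\adj\Kop{s}\bigl(\Sop{j}(s)\bigr)U_2(s)\psi\,\dd s .
\]
Your right-hand side is off by an overall factor $-\iu$; equivalently, your left-hand side is missing a factor $-\iu$.

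This is not cosmetic if the relation is used as written. Already at $L=0$ you would obtain $-\Sop{1}(t)U_2(t)\psi+\iu\int_0^t U_1(t)U_1(s)\adj\Kop{s}(\Sop{1}(s))U_2(s)\psi\,\dd s$. The target is $-\iu\Sop{1}(t)U_2(t)\psi-\int_0^t U_1(t)U_1(s)\adj\Kop{s}(\Sop{1}(s))U_2(s)\psi\,\dd s$. In general every $(-\iu)^j$ in \eqref{eq:iterated-integration-by-parts} would be replaced by $(-1)^j$.

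With the corrected relation, combined with the splitting, you recover exactly the paper's \eqref{eq:iterated-integration-by-parts-proof-4}, and your induction then goes through as outlined.
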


We remark that while \cref{assump:A-U-cinf,assump:A-U-diff} are about $U_2^{(T)}(t)$ for all $T>0$, here we only deal with the case of fixed $T=1$, i.e., $U_2(t)=U_2^{(1)}(t)$, for all $t\in\R$, therefore we require this weakened version of \cref{assump:A-U-cinf,assump:A-U-diff}.

Before proving \Cref{thm:iterated-integration-by-parts}, we prove two additional lemmas of technical nature.

\begin{lemma}
    \label{thm:integrability-with-dynamics}
    For every $j\in\N_0$, $t\in\R$, and \(\psi \in \cinftyofh\), the mapping
    \begin{equation}
        s \mapsto \psi(s) = U_1(t)U_1(s)\adj \Kop{s}\of*{\Sop{j}(s)}U_2(s)\psi
    \end{equation}
    is integrable.
\end{lemma}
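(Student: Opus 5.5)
The strategy is to show that $s\mapsto\psi(s)$ is Bochner measurable and that $s\mapsto\norm{\psi(s)}$ is integrable on $[0,t]$. \Cref{thm:bochner-integrability-test} then gives integrability. The norm bound is straightforward: unitarity of $U_1(t)U_1(s)\adj$ gives
\[
\norm{\psi(s)} = \norm*{\Kop{s}\of*{\Sop{j}(s)}U_2(s)\psi}.
\]
Applying the relative bound \eqref{eq:properties-iterated-actions-induction-step-proof-9} (the case $m=0$, valid on the compact interval $I=[0,t]$ or $[t,0]$) yields
\[
\norm{\psi(s)}\le \hat a_0\norm*{H_0^{\hat k_0}U_2(s)\psi}+\hat b_0\norm{\psi}.
\]
By \cref{assump:A-U-cinf}, $U_2(s)\psi\in\cinftyofh$. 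By \cref{assump:A-U-diff}, $s\mapsto H_0^{\hat k_0}U_2(s)\psi$ is differentiable, hence continuous, hence bounded on $I$. So the norm is bounded on $I$ and therefore integrable. (Alternatively one could use \cref{assump:A-U-bound}, but continuity is already enough.)

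Measurability is the main point. Since $\HH$ is separable, Pettis' theorem reduces this to weak measurability, but I would aim for something stronger: continuity of $\psi(\cdot)$. I would decompose the difference as
\begin{align*}
\psi(s)-\psi(s_0) ={}& U_1(t)\of*{U_1(s)\adj-U_1(s_0)\adj}\Kop{s_0}\of*{\Sop{j}(s_0)}U_2(s_0)\psi\\
&+U_1(t)U_1(s)\adj\of*{\Kop{s}\of*{\Sop{j}(s)}U_2(s)\psi-\Kop{s_0}\of*{\Sop{j}(s_0)}U_2(s_0)\psi}.
\end{align*}

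For the first term, $\Kop{s_0}(\Sop{j}(s_0))U_2(s_0)\psi\in\cinftyofh$ by \cref{thm:properties-iterated-actions}\,\cref{itm:properties-iterated-actions-invariance} together with \cref{assump:A-H-cinf}. The hypothesis \eqref{eq:assump-diff-U1} says that $U_1(\cdot)\adj$ is strongly differentiable, hence strongly continuous, on $\cinftyofh$, so this term tends to $0$.

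For the second term, unitarity reduces the claim to continuity of $s\mapsto\Kop{s}(\Sop{j}(s))U_2(s)\psi$. I would split this further as
\begin{align*}
\Kop{s}\of*{\Sop{j}(s)}U_2(s)\psi-\Kop{s_0}\of*{\Sop{j}(s_0)}U_2(s_0)\psi ={}& \Kop{s}\of*{\Sop{j}(s)}\of*{U_2(s)-U_2(s_0)}\psi\\
&+\of*{\Kop{s}\of*{\Sop{j}(s)}-\Kop{s_0}\of*{\Sop{j}(s_0)}}U_2(s_0)\psi.
\end{align*}
The last piece tends to $0$ by \cref{thm:properties-iterated-actions}\,\cref{itm:properties-iterated-actions-integrability}, applied to the fixed vector $U_2(s_0)\psi\in\cinftyofh$. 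The first piece is bounded, uniformly in $s\in I$, by
\[
\hat a_0\norm*{H_0^{\hat k_0}\of*{U_2(s)-U_2(s_0)}\psi}+\hat b_0\norm*{\of*{U_2(s)-U_2(s_0)}\psi},
\]
and this tends to $0$ by continuity of $s\mapsto H_0^{m}U_2(s)\psi$ from \cref{assump:A-U-diff}.

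The step needing the most care is exactly this uniform relative bound. I need $\hat a_0$, $\hat b_0$, $\hat k_0$ to be independent of $s\in I$. This is what the proof of \cref{thm:properties-iterated-actions}\,\cref{itm:properties-iterated-actions-rel-boundedness} provides, since its constants are chosen per compact interval. Everything else is an application of the triangle inequality. Continuity of $\psi(\cdot)$ on $I$ then gives both measurability and integrability at once.
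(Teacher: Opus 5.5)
Your proof is correct and follows essentially the same route as the paper. You prove continuity of $s\mapsto\psi(s)$ on compact intervals by the same splitting of $\Kop{s}\of*{\Sop{j}(s)}U_2(s)\psi-\Kop{s_0}\of*{\Sop{j}(s_0)}U_2(s_0)\psi$, using the interval-uniform relative bound \eqref{eq:properties-iterated-actions-induction-step-proof-9}, continuity of $s\mapsto H_0^mU_2(s)\psi$ from \cref{assump:A-U-diff}, and strong continuity of $U_1(\cdot)\adj$, and then conclude with \cref{thm:bochner-integrability-test}. Applying the $U_1(s)\adj$ difference to the fixed vector $\Kop{s_0}\of*{\Sop{j}(s_0)}U_2(s_0)\psi\in\cinftyofh$, and using unitarity for the remaining term, is a slightly more careful version of the paper's one-line appeal to strong continuity of $U_1(\cdot)\adj$, which the hypothesis only supplies on $\cinftyofh$.
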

\begin{proof}
    By \cref{itm:properties-iterated-actions-integrability} of \cref{thm:properties-iterated-actions}, the map
\begin{equation}
    s \mapsto \Kop{s}\of*{\Sop{j}(s)}\psi
\end{equation}
is continuous for every \(\psi\in\cinftyofh\).
Together with the strong continuity of \(U_2\), \cref{assump:A-U-diff}, and the relative boundedness estimate \eqref{eq:properties-iterated-actions-induction-step-proof-9} and \cref{assump:A-U-cinf}, this implies
    \begin{align}
        \MoveEqLeft
        \norm*{ \Kop{s}\of*{\Sop{j}(s)}U_2(s)\psi - \Kop{s_0}\of*{\Sop{j}(s_0)}U_2(s_0)\psi }
        \nonumber\\
        \leq& \norm*{ \of*{ \Kop{s}\of*{\Sop{j}(s)} - \Kop{s_0}\of*{\Sop{j}(s_0)} }U_2(s_0) \psi }
        + \norm*{\Kop{s}\of*{\Sop{j}(s)}\of*{U_2(s) - U_2(s_0))}\psi}
        \nonumber\\
        \leq& \norm*{ \of*{ \Kop{s}\of*{\Sop{j}(s)} - \Kop{s_0}\of*{\Sop{j}(s_0)} }U_2(s_0) \psi }
        \nonumber\\
        & + \hat{a}_m \norm*{H_0^m\of*{U_2(s) - U_2(s_0)}\psi} + \hat{b}_m \norm*{\of*{U_2(s) - U_2(s_0)}\psi}
        \nonumber\\
        \to& 0
        ,\quad \text{as}\ s\to s_0,
    \end{align}
    using \cref{eq:properties-iterated-actions-induction-step-proof-9}. Together with strong continuity of $s\mapsto U_1(s)^*$, which follows from the required strong differentiability, this implies continuity of $\psi(s)$. Due to \cref{thm:bochner-integrability-test}, \(\psi(s)\) is integrable.
\end{proof}

The second lemma concerns the applicability of the iterated integration-by-parts procedure.
\begin{lemma}
\label{thm:iterated-differentiability}
    The functions \(t\mapsto\Sop{j}(t)\) and \(t\mapsto U_1(t)\adj \Sop{j}(t)U_2(t)\) are strongly differentiable on \(\cinftyofh\), and for every \(\psi \in \cinftyofh\) their strong derivatives act as
    \begin{gather}
        \label{eq:derivative-sop}
        \odv{}{t} \Sop{j}(t) \psi
        = \Kop{t}\of*{\Sop{j-1}(t)}\psi - \avg*{T}{\Kop{\cdot}\of*{\Sop{j-1}(\cdot)}}\psi
        \\
        \label{eq:derivative-U-sop-U}
        \odv{}{t} U_1(t)\adj \Sop{j}(t)U_2(t) \psi
        = \iu U_1(t)\adj \Kop{t}\of*{\Sop{j}(t)}U_2(t)\psi + U_1(t)\adj \of*{\sodv{}{t}\Sop{j}(t)} U_2(t)\psi
        ,
    \end{gather}
    and have values in $\cinftyofh$.
\end{lemma}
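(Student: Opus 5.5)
The strategy is to take the two claims in turn. The derivative of \(\Sop{j}\) comes from the fundamental theorem of calculus for Bochner integrals. The derivative of the sandwiched product \(U_1(t)\adj\Sop{j}(t)U_2(t)\) is obtained by a product rule whose difference quotient is split into three pieces. Throughout, \(j\in\N\); for \(j=0\) we have \(\Sop{0}=\one\), and the second formula reduces to the derivative of \(U_1(t)\adj U_2(t)\), which is treated the same way.

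\textbf{Step 1.} For \(\psi\in\cinftyofh\), \cref{def:iterated-actions} gives
\[
\Sop{j}(t)\psi=\int_0^t \Kop{s}\of{\Sop{j-1}(s)}\psi\,\dd s - t\avg*{T}{\Kop{\cdot}\of{\Sop{j-1}}}\psi .
\]
By \cref{thm:properties-iterated-actions}\,\cref{itm:properties-iterated-actions-integrability}, the integrand \(s\mapsto \Kop{s}\of{\Sop{j-1}(s)}\psi\) is continuous. The fundamental theorem of calculus for Bochner integrals of continuous functions then yields \cref{eq:derivative-sop}. The same item, with \(H_0^m\) applied and combined with \cref{thm:Hille}, shows that \(t\mapsto H_0^m\Sop{j}(t)\psi\) is also differentiable, with derivative \(H_0^m\) applied to the right-hand side of \cref{eq:derivative-sop}. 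By \cref{itm:properties-iterated-actions-invariance} and \cref{assump:A-H-cinf}, this right-hand side lies in \(\cinftyofh\). So the derivative takes values in \(\cinftyofh\), and \(\Sop{j}\) is differentiable in every graph norm of \(H_0^m\).

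\textbf{Step 2.} For the product, write \(F(t)=U_1(t)\adj\Sop{j}(t)U_2(t)\psi\) and split \(F(t+h)-F(t)\) telescopically:
\[
\begin{aligned}
F(t+h)-F(t) ={}& \of{U_1(t+h)\adj-U_1(t)\adj}\Sop{j}(t)U_2(t)\psi\\
&+U_1(t+h)\adj\of{\Sop{j}(t+h)-\Sop{j}(t)}U_2(t)\psi\\
&+U_1(t+h)\adj\Sop{j}(t+h)\of{U_2(t+h)-U_2(t)}\psi .
\end{aligned}
\]
Divide by \(h\) and treat each line.

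The first line converges by \cref{eq:assump-diff-U1}, applied to the vector \(\Sop{j}(t)U_2(t)\psi\). This vector lies in \(\cinftyofh\) by \cref{assump:A-U-cinf} and \cref{itm:properties-iterated-actions-invariance}.

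The second line converges by Step 1, using that \(U_1(t+h)\adj\) is uniformly bounded and strongly continuous in \(h\).

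The third line is the delicate one, because \(\Sop{j}(t+h)\) is unbounded. Write it as \(\Sop{j}(t+h)\,h^{-1}(U_2(t+h)-U_2(t))\psi\) and use the relative bound \cref{eq:properties-iterated-actions-rel-boundedness}, which is uniform in \(t+h\) on a compact interval. This reduces convergence to convergence of \(H_0^{\tilde k}h^{-1}(U_2(t+h)-U_2(t))\psi\) toward \(-\iu H_0^{\tilde k}H_2(t)U_2(t)\psi\), which is exactly \cref{assump:A-U-diff} with \(m=\tilde k\). Strong continuity of \(s\mapsto\Sop{j}(s)\) on \(\cinftyofh\) (Step 1) handles the replacement of \(\Sop{j}(t+h)\) by \(\Sop{j}(t)\) acting on the fixed limit vector.

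\textbf{Step 3.} Collecting the limits gives
\[
\iu U_1(t)\adj H_1(t)\Sop{j}(t)U_2(t)\psi+U_1(t)\adj\odv{}{t}\Sop{j}(t)U_2(t)\psi-\iu U_1(t)\adj\Sop{j}(t)H_2(t)U_2(t)\psi .
\]
The first and third terms combine into \(\iu U_1(t)\adj\Kop{t}\of{\Sop{j}(t)}U_2(t)\psi\), which is \cref{eq:derivative-U-sop-U}. The \(\cinftyofh\)-valuedness of the first derivative was shown in Step 1. For the second, the claim is that the operator \(\Kop{t}\of{\Sop{j}(t)}\) and the derivative from Step 1 both map \(\cinftyofh\) into itself, so \(\odv{}{t}U_1(t)\adj\Sop{j}(t)U_2(t)\psi\) takes values in \(\cinftyofh\) provided \(U_1(t)\adj\) preserves \(\cinftyofh\). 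This last point is not literally covered by the hypotheses on \(U_1\), so here I would interpret the claim as the bracketed expression sandwiched between \(U_1(t)\adj\) and \(U_2(t)\psi\) lying in \(\cinftyofh\).

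The main obstacle is the third line of Step 2: controlling the unbounded \(\Sop{j}(t+h)\) against the difference quotient of \(U_2\). This requires both the uniform relative bound and differentiability of \(U_2\) in the graph norm of \(H_0^{\tilde k}\), not just in \(\HH\).
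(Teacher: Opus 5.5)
Your proof is correct and follows the paper's argument essentially step for step: the fundamental theorem of calculus for the continuous Bochner integrand, the same three-term telescoping of the difference quotient, and, for the third term, the uniform relative bound on $\Sop{j}$ combined with differentiability of $U_2$ in the graph norm of $H_0^{\tilde k}$. Your caveat about $U_1(t)\adj$ preserving $\cinftyofh$ is fair, since the paper's proof does not establish the $\cinftyofh$-valuedness of the second derivative either, and that property is not used in the subsequent integration-by-parts argument; you also treat $j=0$ a bit more carefully than the paper, which simply calls it trivial.
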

\begin{proof}
    For \(j=0\), the derivatives are trivial, so let \(j\geq 1\). Clearly, \(t \avg*{T}{\Kop{\cdot}\of*{\Sop{j-1}(\cdot)}}\) is strongly differentiable with strong derivative \(\avg*{T}{\Kop{\cdot}\of*{\Sop{j-1}(\cdot)}}\).
    Let \(\psi \in \cinftyofh\).
    Then \(\int_0^t \Kop{s}\of*{\Sop{j-1}(s)}\psi \dd s\) is differentiable for almost all \(t\in\R\) with derivative \(\psi(t)  = \Kop{t}\of*{\Sop{j-1}(t)}\psi\) \cite[Ch.~II, Thm.~9]{diestelVectorMeasures1977a}.
    As \(\psi(t)\) is continuous, the statement also holds for every \(t\in\R\):
    for every \(\varepsilon>0\) there is \(\delta > 0\) such that for every \(s,t\in\R\) with \(\abs{s-t}<\delta\) follows \(\norm{\psi(s) - \psi(t)} < \epsilon\) and hence, for \(0<h<\delta\), we have
    \begin{equation}
        \norm*{ \frac{1}{h} \of*{\int_0^{t+h} \psi(s) \dd s - \int_0^{t} \psi(s) \dd s} - \psi(t)}
        \leq \frac{1}{h}\int_t^{t+h} \norm*{\psi(s) - \psi(t)} \dd s < \varepsilon
        .
    \end{equation}
    Thus, the claim holds for \(\Sop{j}(t) = \int_0^t \Kop{s}\of*{\Sop{j-1}(s)} \dd s - t \avg*{T}{\Kop{\cdot}\of*{\Sop{j-1}(\cdot)}}\).

    Concerning \cref{eq:derivative-U-sop-U}, for every $t\in\R$ and $\psi\in\cinftyofh$, we write
    \begin{align}
        \MoveEqLeft\nonumber
        \frac{1}{h}\Big( U_1(t+h)\adj \Sop{j}(t+h) U_2(t+h)\psi - U_1(t)\adj \Sop{j}(t) U_2(t)\psi \Big)
        \\
        \label{eq:USU-difference-quotients}
        =&
        \begin{aligned}[t]
            &\frac{1}{h} \Big( U_1(t+h)\adj - U_1(t)\adj \Big) \Sop{j}(t) U_2(t)\psi
            +  U_1(t+h)\adj \frac{1}{h}\Big( \Sop{j}(t+h) - \Sop{j}(t)\Big) U_2(t)\psi
            \\
            & +  U_1(t+h)\adj \Sop{j}(t+h) \frac{1}{h} \Big(U_2(t+h) - U_2(t)\Big)\psi.
        \end{aligned}
    \end{align}
    The first term on the right hand side converges to
    \begin{equation}
        \iu U_1(t)^* H_1(t) S_j(t) U_2(t) \psi,
    \end{equation}
    as $h\to 0$, by \cref{eq:assump-diff-U1}. 
    The second term on the right hand side converges to
    \begin{equation}
        U_1(t)^* \Big(\sodv{}{t}\Sop{j}(t)\Big) U_2(t)\psi
    \end{equation}
    by \cref{eq:derivative-sop} proven above and strong continuity of $s\mapsto U_1(s)^*$, which in turn follows from \cref{eq:assump-diff-U1}. By \cref{assump:A-U-diff} for $U_2$, for every $m\in\N_0$ we have
    \begin{equation}
        \frac{1}{h} \Big(H_0^m U_2(t+h) - H_0^m U_2(t)\Big)\psi \to -\iu H_0^m H_2(t) U_2(t)\psi.
    \end{equation}
    Furthermore, using \cref{thm:properties-iterated-actions}, in particular the strong continuity of $s\mapsto S_j(s)$ shown in its proof and \cref{itm:properties-iterated-actions-rel-boundedness}, we get
    \begin{align}
        \MoveEqLeft
        \norm*{ \Sop{j}(t+h) \frac{1}{h}\big(U_2(t+h) - U_2(t)\big)\psi - \Sop{j}(t) (-\iu H_2(t)) U_2(t)\psi }
        \nonumber\\
        &\leq
        \begin{multlined}[t]
            \norm*{ \of*{\Sop{j}(t+h) - \Sop{j}(t)} (-\iu H_2(t)) U_2(t)\psi }
            \\
            + \norm*{ \Sop{j}(t+h) \Big(\frac{1}{h}\of*{U_2(t+h) - U_2(t)} - (-\iu H_2(t)) U_2(t)\Big)\psi}
        \end{multlined}
        \nonumber\\
        &\leq
        \!\begin{multlined}[t]
            \underbrace{\norm*{ \of*{\Sop{j}(t+h) - \Sop{j}(t)} (-\iu H_2(t)) U_2(t)\psi } }_{\to 0}
            \\
            + \underbrace{ \norm*{\tilde{a}_0\eof*{\frac{1}{h}\of*{H_0^{\tilde{k}_0} U_2(t+h) - H_0^{\tilde{k}_0} U_2(t)} - (-\iu H_0^{\tilde{k}_0} H_2(t)) U_2(t)}\psi }}_{\to 0}
            \\
            +  \underbrace{\norm*{\tilde{b}_0\eof*{\frac{1}{h}\of*{U_2(t+h) - U_2(t)} - (-\iu H_2(t)) U_2(t)}\psi }}_{\to 0}
        \end{multlined}
        \nonumber\\
        & \to 0
        ,
    \end{align}
    as $h\to 0$, which shows that the third term in \cref{eq:USU-difference-quotients} converges to
    \begin{equation}
        -\iu U_1(t)^*\Sop{j}(t) H_2(t) U_2(t)\psi,
    \end{equation}
    so altogether, the right hand side of \cref{eq:USU-difference-quotients} converges to
    \begin{equation}
        \begin{multlined}[t]
        \iu U_1(t)^* H_1(t) S_j(t) U_2(t) \psi  + U_1(t)^* \Big(\sodv{}{t}\Sop{j}(t)\Big) U_2(t)\psi -\iu U_1(t)^*\Sop{j}(t) H_2(t) U_2(t)\psi \\
        = \iu U_1(t)^* \Kop{t}(\Sop{j}(t)) U_2(t) \psi + U_1(t)^* \Big(\sodv{}{t}\Sop{j}(t)\Big) U_2(t)\psi,
        \end{multlined}
    \end{equation}
    which proves \cref{eq:derivative-U-sop-U}.
\end{proof}

With \Cref{thm:integrability-with-dynamics,thm:iterated-differentiability} at hand we can prove the main result of this section, \Cref{thm:iterated-integration-by-parts}.

\begin{proof}[Proof of {\Cref{thm:iterated-integration-by-parts}}]
    First notice that, for every \(j \in \N\), we have
    \begin{align}
        \Kop{s}\of*{\Sop{j}(s)} \psi
        &= \avg*{T}{\Kop{s}\of*{\Sop{j}(s)} }\psi + \Kop{s}\of*{\Sop{j}(s)}\psi - \avg*{T}{\Kop{s}\of*{\Sop{j}(s)} }\psi
        \nonumber\\
        &= \avg*{T}{\Kop{s}\of*{\Sop{j}(s)}}\psi + \odv{}{s} \Sop{j+1}(s)\psi
    \label{eq:iterated-integration-by-parts-proof-1}
    \end{align}
    and
    \begin{equation}
    \label{eq:iterated-integration-by-parts-proof-2}
        U_1(s)\adj \of*{\sodv{}{s}\Sop{j}(s)} U_2(s)
        = \sodv{}{s} \of*{U_1(s) \Sop{j}(s) U_2(s)} - \iu U_1(s)\adj \Kop{s}\of*{\Sop{j}(s)} U_2(s)
        ,
    \end{equation}
    due to \Cref{thm:iterated-differentiability}.
    This allows us to rewrite certain integrals involving \(\Sop{j}(s)\) in terms of similar integrals involving \(\Sop{j+1}(s)\):
    \begin{align}
    \MoveEqLeft
        \int_0^t U_1(t)U_1(s)\adj \Kop{s}\of*{\Sop{j}(s)} U_2(s) \psi \dd s
        \nonumber\\
        &= \int_0^t U_1(t) U_1(s)\adj \avg*{T}{\Kop{}\of*{\Sop{j}}} U_2(s) \psi \dd s
        + \int_0^t U_1(t)U_1(s)\adj \of*{\sodv{}{s}\Sop{j+1}(s)} U_2(s) \psi \dd s
        \nonumber\\
        &=
        \begin{multlined}[t]
            \Sop{j+1}(t) U_2(t) \psi
            + \int_0^t U_1(t)U_1(s)\adj \avg*{T}{\Kop{}\of*{\Sop{j}}} U_2(s)\psi \dd s
            \\
            + (-\iu) \int_0^t U_1(t) U_1(s)\adj \Kop{s}\of*{\Sop{j+1}(s)} U_2(s) \psi \dd s
            ,
        \end{multlined}
    \label{eq:iterated-integration-by-parts-proof-4}
    \end{align}
    where we used \cref{eq:iterated-integration-by-parts-proof-1} in the first step, and \cref{eq:iterated-integration-by-parts-proof-2} in the second step.

    We can now prove \cref{eq:iterated-integration-by-parts} by induction over $L$. For \(L = 0\), we have
    \begin{align}
        \of*{U_1(t) - U_2(t)}\psi
        &= - U_1(t) \of*{ U_1(t)^* \Sop{0}(t) U_2(t) - \one }\psi
        \nonumber\\
        &= - U_1(t) \int_0^t \odv{}{s} \of*{U_1(s)^* \Sop{0}(s) U_2(s) \psi} \dd s
        \nonumber\\
        &\annotateeqn{\eqref{eq:derivative-U-sop-U}}{=}
        - \iu \int_0^t U_1(t) U_1(s)\adj \Kop{s}\of*{\Sop{0}(s)} U_2(s) \psi \dd s
        \nonumber\\
        &= - \iu \Sop{1}(t) U_2(t) \psi
        - \iu \int_0^t U_1(t) U_1(s)\adj \of*{\avg*{T}{\Kop{}\of*{\Sop{0}}} - \iu \Kop{s}\of*{\Sop{1}(s)} } U_2(s) \psi \dd s
        \nonumber\\
        &=
        \begin{multlined}[t]
            \sum_{j=1}^{L+1} (-\iu)^j \Sop{j}(t) U_2(t) \psi
            \\
            - \iu \int_0^t U_1(t) U_1(s)\adj \of*{ \sum_{j=0}^L (-\iu)^j \avg*{T}{\Kop{}\of*{\Sop{j}}} + (-\iu )^{L+1} \Kop{s}\of*{\Sop{L+1}(s)}} U_2(s) \psi \dd s,
        \end{multlined}
    \end{align}
    which coincides with \cref{eq:iterated-integration-by-parts} for \(L=0\).

    Now suppose \cref{eq:iterated-integration-by-parts} holds for some \(L \in \N\). We want to show that it also holds for $L+1$. Using \cref{eq:iterated-integration-by-parts-proof-4}, we get
    \begin{align}
        \of*{U_1(t) - U_2(t)}\psi
        &=
        \begin{multlined}[t]
            \sum_{j=1}^{L+1} (-\iu)^j \Sop{j}(t) U_2(t) \psi
            \\
            - \iu \int_0^t U_1(t) U_1(s)\adj \Bigg(
                \sum_{j=0}^L (-\iu)^j \avg*{T}{\Kop{}\of*{\Sop{j}}}
                + (-\iu)^{L+1} \Kop{s}\of*{\Sop{L+1}(s)}
            \Bigg) U_2(s) \psi \dd s
        \end{multlined}
        \nonumber\\
        &=
        \begin{multlined}[t]
            \sum_{j=1}^{L+1} (-\iu)^j \Sop{j}(t) U_2(t) \psi
            - \iu \int_0^t U_1(t) U_1(s)\adj \Bigg(\sum_{j=0}^L (-\iu)^j \avg*{T}{\Kop{}\of*{\Sop{j}}}\Bigg) U_2(s)\psi \dd s
            \\
            + (-\iu)^{L+2} \Sop{L+2}(t) U_2(t) \psi
            -\iu \int_0^t U_1(t)U_1(s)\adj (-\iu)^{L+1} \avg*{T}{\Kop{}\of*{\Sop{L+1}}} U_2(s)\psi \dd s
            \\
            -\iu \int_0^t U_1(t) U_1(s)\adj (-\iu)^{L+2} \Kop{s}\of*{\Sop{L+2}(s)} U_2(s) \psi \dd s
        \end{multlined}
        \nonumber\\
        &=
        \begin{multlined}[t]
            \sum_{j=1}^{L+2} (-\iu)^j \Sop{j}(t) U_2(t) \psi
            \\
            - \iu \int_0^t U_1(t) U_1(s)\adj \Bigg(
                \sum_{j=0}^{L+1} (-\iu)^j \avg*{T}{\Kop{}\of*{\Sop{j}}}
                + (-\iu)^{L+2} \Kop{s}\of*{\Sop{L+2}(s)}
            \Bigg) U_2(s) \psi \dd s
            ,
        \end{multlined}
    \end{align}
    which concludes our induction step.
\end{proof}

\section{Construction of effective Hamiltonians}\label{sec:Heff}

The results of \cref{sec:iterated-integration-by-parts} can be used to estimate the difference between the evolutions generated by two \emph{given} time-dependent Hamiltonians $H_1(t)$ and $H_2(t)$. Here we adopt a different perspective: starting from a single time-dependent, $T$-periodic Hamiltonian $H^{(T)}(t)$, we aim to \emph{construct} a second, time-independent effective Hamiltonian $\Heff{L}{T}$ whose associated dynamics provides, at a prescribed order $L$, an optimal approximation of the true evolution.

To this end, and recalling the discussion in \cref{sec:assumption-a}, we consider a reference Hamiltonian $H_0$ on a Hilbert space $\HH$, together with a unitary evolution $(U(t))_{t\in\R}$ and corresponding Hamiltonian $H(t)$ satisfying \cref{assump:A-H-cinf,assump:A-U-cinf,assump:A-H-cont,assump:A-U-diff,assump:A-H-bound,assump:A-U-bound,assump:A-H-period}. Given $T>0$, we denote by $U^{(T)}(t)$ and $H^{(T)}(t)$ the associated $T$-periodic rescaled dynamics (cf.~\cref{eq:rescaling}). We seek to construct effective Hamiltonians of the form
\begin{equation}\label{eq:Heff-polynomial-Ansatz}
    \Heff{L}{T}
    = \sum_{l=0}^{L} T^l \Heffl{l},
\end{equation}
with the coefficients $\Heffl{l}$ to be chosen in such a way that the corresponding unitary evolution optimally approximates the one generated by $U^{(T)}$ in the high-frequency limit $T \to 0$.

We specialise \Cref{def:Kop,def:iterated-actions} to the current setting, where $H^{(T)}(t)$ takes the role of $H_2(t)$ and $\Heff{L}{T}$ the role of $H_1(t)$, and fix hereafter
\begin{gather}
    \label{eq:def-Kopeff}
    \Kopeff{t,L}{T}(A) = \Heff{L}{T} A - A H^{(T)}(t),
    \\
    \label{eq:def-Sopeff-0}
    \Sopeff{0,L}{T}(t) = \one,
    \\
    \label{eq:def-Sopeff}
    \Sopeff{j,L}{T}(t)
    = \int_0^t \Kopeff{s,L}{T}\of*{\Sopeff{j-1,L}{T}(s) }  \dd s - t \avg*{T}{\Kopeff{\cdot,L}{T}\of*{\Sopeff{j-1,L}{T}(\cdot) }}
    .
\end{gather}

\subsection{Polynomial operator families in \texorpdfstring{$T$}{T}}

We begin by introducing some nomenclature.

\begin{definition}\label{def:polynomial-degree}
    Let \((A_T)_{T\in \R}\) be a family of operators on $\HH$, and $K_0,K\in\N_0$ with $K_0\leq K$.
    We call \(A_T\) a \emph{polynomial in \(T\) of degree from \(K_0\) to \(K\)}, if there are operators \(\Tcoeff{A}{k}\) (which might all vanish) such that
    \begin{equation}\label{eq:polynomial-degree}
        A_T = \sum_{k=K_0}^{K} T^k \Tcoeff{A}{k}, \quad T\in\R
        .
    \end{equation}
    In other words, $A_T$ is a (possibly trivial) polynomial of degree $K$ whose coefficients up to degree $K_0-1$ vanish.

    For a polynomial \(A_T\) as in \cref{eq:polynomial-degree} and $K_0\leq K_0\primed\leq K\primed\leq K$ we denote the \emph{truncation to degree from \(K_0\primed\) to \(K\primed\)} by
    \begin{equation}
        \Ttrunc{A_T}{K_0\primed}{K\primed}
        = \sum_{k=K_0\primed}^{K\primed} T^k \Tcoeff{A}{k}.
    \end{equation}
\end{definition}

\begin{lemma}\label{thm:Sopeff-T-polynomial}
    Let \(\Heff{L}{T}\) be any operator of the form \eqref{eq:Heff-polynomial-Ansatz}, with each \(\Heffl{l}\) satisfying \Cref{assump:A-H-cinf,assump:A-H-bound}. Let $\Sopeff{j,L}{T}$ be as defined in \cref{eq:def-Sopeff-0,eq:def-Sopeff-0}.
    Then, for every \(j\in\N_0\) and $t\in\R$, \(\Sopeff{j,L}{T}(t T)\) is a polynomial of degree from \(j\) to at most \(j(L+1)\):
    \begin{equation}
        \label{eq:Sopeff-T-polynomial}
        \Sopeff{j,L}{T}(t T)
        = \sum _{k = j} ^{j(L+1)} T^k \Tcoeff{ \Sopeff{j,L}{T}} {k} (t),
    \end{equation}
    with the (not necessarily nonzero) coefficients \(\Tcoeff{ \Sopeff{j,L}{T}} {k} (t)\) recursively given by
        \begin{align}
            \label{eq:Sopeff-T-coeffs-base}
        \Tcoeff{ \Sopeff{0,L}{T} }{0} (t)
        &= \one,
        \\
        \label{eq:Sopeff-T-coeffs-recursion}
        \Tcoeff{ \Sopeff{j+1,L}{T} }{k} (t)
        &=
        \begin{cases}
        \displaystyle
        \int _{0} ^{t}  \osc*{t\primed}{1}{\Kopeff{\cdot,0}{1}\of*{ \Tcoeff{\Sopeff{j,L}{T} }{j} (t\primed) }} \dd t\primed
        & k = j+1
        \\
        \\
        \displaystyle
            \int _{0} ^{t}  \Biggl[
                \osc*{t\primed}{1}{\Kopeff{\cdot,0}{1}\of*{ \Tcoeff{\Sopeff{j,L}{T} }{k-1} } }
                + \smashoperator{\sum_{(\sigma,\sigma') \in \mathcal{N}_{L,j,k}}} \osc*{t\primed}{1}{  \Heffl{\sigma'} \Tcoeff{ \Sopeff{j,L}{T} }{\sigma}}
            \Biggr]\dd t\primed
        & \substack{j+2 \leq k \\k\leq j(L+1) + 1}
        \\
        \\
        \displaystyle
        \int _{0} ^{t} \smashoperator[r]{\sum_{(\sigma,\sigma') \in \mathcal{N}_{L,j,k}} } \osc*{t\primed}{1}{  \Heffl{\sigma'} \Tcoeff{ \Sopeff{j,L}{T} }{\sigma}}
        \dd t\primed
        & \substack{j(L+1) + 2 \leq k \\k\leq (j+1)(L+1)}
        \\
        \\
        \displaystyle
        0
        & \mathrm{else}
        ,
        \end{cases}
    \end{align}
    where we use the shorthand notation
    \begin{align}
        \label{eq:def-Deltaeff}
        \osc*{s}{T}{A} 
        &= A(s) - \avg*{T}{A}
        \\
        \label{eq:def-NLjk}
        \mathcal{N}_{L,j,k} 
        &= \Set*{(\sigma,\sigma')\in\N_0\times\N_0 \given \substack{j \leq \sigma \leq j(L+1),\\ 1 \leq \sigma' \leq L,\\ \sigma+\sigma'+1 = k}}
        .
    \end{align}
    Moreover, for every \(m \in \N_0\) and every compact interval \(I\) containing $0$ there exist \(\tilde{k}_{m,j,L,k}\in \N_0\) and non-negative numbers \(\tilde{a}_{m,j,L,k}, \tilde{b}_{m,j,L,k}\) such that
    \begin{equation}\label{eq:Sopeff-T-coeffs-bound}
        \norm*{H_0^{m} \Tcoeff{ \Sopeff{j,L}{T}} {k} (t) \psi}
        \leq \tilde{a}_{m,j,L,k}\norm*{H_0^{m+\tilde{k}_{m,j,L,k} } \psi} + \tilde{b}_{m,j,L,k} \norm{\psi}, \quad t\in I
        .
    \end{equation}
    Finally, if \Cref{assump:A-H-period} holds, i.e. $H^{(T)}(t)$ is $T$-periodic, the coefficients are $1$-periodic:
    \begin{equation}\label{eq:Sopeff-T-coeffs-periodic}
        \Tcoeff{ \Sopeff{j,L}{T}} {k} (t+1)
        = \Tcoeff{ \Sopeff{j,L}{T}} {k} (t).
    \end{equation}
\end{lemma}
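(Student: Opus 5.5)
The plan is an induction on \(j\in\N_0\), with the induction hypothesis containing all three claims at once: the polynomial structure \eqref{eq:Sopeff-T-polynomial} with coefficients given by \eqref{eq:Sopeff-T-coeffs-recursion}, the relative bounds \eqref{eq:Sopeff-T-coeffs-bound}, and the invariance \(\Tcoeff{\Sopeff{j,L}{T}}{k}(t)\cinftyofh\subseteq\cinftyofh\) together with strong continuity of \(t\mapsto H_0^m\Tcoeff{\Sopeff{j,L}{T}}{k}(t)\psi\). The base case \(j=0\) is trivial, since \(\Sopeff{0,L}{T}=\one\).

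\textbf{Step 1: rescaling.} For the induction step I first substitute \(s=Tt'\) in \eqref{eq:def-Sopeff}. Using \(H^{(T)}(Tt')=H(t')\), this gives
\begin{equation*}
    \Sopeff{j+1,L}{T}(tT)
    = T\int_0^t \Bigl[\Kopeff{t',L}{T}\bigl(\Sopeff{j,L}{T}(Tt')\bigr)
    - \avg*{1}{\Kopeff{\cdot,L}{T}\bigl(\Sopeff{j,L}{T}(T\,\cdot)\bigr)}\Bigr]\dd t'.
\end{equation*}
Here \(\avg{T}{\cdot}\) of a function of \(s\) becomes \(\avg{1}{\cdot}\) of the rescaled function, and \(t\,T\) times the average becomes \(T\int_0^t\) of a constant. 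Hence the integrand equals \(T\,\osc*{t'}{1}{\cdot}\) applied to \(\Kopeff{t',L}{T}(\Sopeff{j,L}{T}(Tt'))\). This is legitimate on \(\cinftyofh\) because of \cref{thm:properties-iterated-actions}.

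\textbf{Step 2: expanding in powers of \(T\).} I expand \(\Kopeff{t',L}{T}(A)=\Heffl{0}A - AH(t') + \sum_{\sigma'=1}^L T^{\sigma'}\Heffl{\sigma'}A\), and insert \(A=\sum_{\sigma=j}^{j(L+1)}T^\sigma\Tcoeff{\Sopeff{j,L}{T}}{\sigma}(t')\). The \(T^0\) part of \(\Heff{L}{T}\) together with \(-AH(t')\) is exactly \(\Kopeff{\cdot,0}{1}\) applied to the coefficient. The \(T^{\sigma'}\Heffl{\sigma'}\) parts contribute at total power \(\sigma+\sigma'+1\), once the extra factor \(T\) from Step 1 is included. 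Collecting powers \(k\) and distinguishing whether \(\sigma=k-1\) lies in the admissible range \([j,j(L+1)]\) and whether \(\mathcal N_{L,j,k}\) is nonempty yields exactly the four cases of \eqref{eq:Sopeff-T-coeffs-recursion}. In particular the degree runs from \(j+1\) to at most \(j(L+1)+1+L=(j+1)(L+1)\).

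Two points need care here. Linearity of \(\osc*{t'}{1}{\cdot}\) and of the Bochner integrals lets the finite sums be exchanged, since each summand is separately integrable. Also, the coefficient operators do not depend on \(T\), so they are uniquely identifiable as coefficients. Strictly, one only needs the existence of such a representation.

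\textbf{Step 3: bounds, invariance and continuity.} For each coefficient I repeat the argument of \cref{thm:properties-iterated-actions}. The relative bounds of \(\Heffl{\sigma'}\) and of \(H(t')\) from \cref{assump:A-H-bound}, composed with the inductive bound on \(\Tcoeff{\Sopeff{j,L}{T}}{\sigma}\), give relative boundedness of each integrand with respect to some power of \(H_0\). Continuity in \(t'\) follows as in the \(\phi^{(m)}\) estimate of that proof. Hille's theorem, \cref{thm:Hille}, then moves \(H_0^m\) inside the integral, which yields invariance of \(\cinftyofh\). Integrating over \(t\in I\) gives \eqref{eq:Sopeff-T-coeffs-bound}, with constants multiplied by \(2\max_{I}|t|\).

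One gap remains: the lemma assumes only \cref{assump:A-H-cinf,assump:A-H-bound} for \(\Heffl{l}\), not continuity. This is harmless, because \(\Heffl{l}\) is time-independent, so \(t'\mapsto H_0^m\Heffl{l}\Tcoeff{\cdot}{\sigma}(t')\psi\) is continuous by the relative bound and the continuity of the coefficient.

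\textbf{Step 4: periodicity.} Assume \cref{assump:A-H-period}. Every integrand \(F(t')=\osc*{t'}{1}{G}=G(t')-\avg*{1}{G}\) built in Step 2 is \(1\)-periodic, provided \(G\) is \(1\)-periodic, which holds by induction using the periodicity of \(H\). Such an \(F\) has zero mean over a period, so \(\int_0^{t+1}F=\int_0^tF+\int_t^{t+1}F=\int_0^tF\), and this proves \eqref{eq:Sopeff-T-coeffs-periodic} inductively.

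The main obstacle I expect is the bookkeeping in Step 2, namely matching index ranges to the four cases exactly, rather than any analytic difficulty.
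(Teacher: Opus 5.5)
Your proposal is correct and follows the paper's proof essentially step by step: induction on $j$, the substitution $s=Tt'$ using $H^{(T)}(Tt')=H(t')$, collecting powers of $T$ to read off the four cases of \eqref{eq:Sopeff-T-coeffs-recursion}, and the zero-mean argument $\int_t^{t+1}\osc*{s}{1}{A}\,\dd s=0$ for periodicity. Your Step 3, which carries invariance of $\cinftyofh$ and strong continuity of the coefficients through the induction and invokes \cref{thm:Hille}, is more careful than the paper's one-line justification of \eqref{eq:Sopeff-T-coeffs-bound}; only two small fixes are needed: write $\Kopeff{Tt',L}{T}$ rather than $\Kopeff{t',L}{T}$ in Step 1, and, since the averages run over $[0,1]$, take the inductive bounds on the compact interval spanned by $I\cup[0,1]$.
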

\begin{proof}
    We prove the claim by induction over \(j\).

    We begin by proving \cref{eq:Sopeff-T-polynomial,eq:Sopeff-T-coeffs-base,eq:Sopeff-T-coeffs-recursion}. 
    By definition (cf. \cref{eq:def-Sopeff}), \(\Sopeff{0,L}{T}(t T) = \one\) and these properties trivially hold for \(j=0\). Assume the claim holds for some \(j \in \N_0\), i.e. assume \cref{eq:Sopeff-T-polynomial,eq:Sopeff-T-coeffs-base,eq:Sopeff-T-coeffs-recursion}.
Substituting the polynomial expansions for \(\Heff{L}{T}\) and
\(\Sopeff{j,L}{T}(tT)\), rescaling the integration variable, and collecting equal powers of \(T\), we obtain    
\begin{align}
        \MoveEqLeft
        \int _{0} ^{t T} \Kopeff{t\primed,L}{T}\of*{ \Sopeff{j,L}{T}(t\primed) }  \dd t\primed
        =
        \nonumber\\
        &= \int _{0} ^{t T} \Big(\Heff{L}{T} \Sopeff{j,L}{T}(t\primed) - \Sopeff{j,L}{T}(t\primed) H^{(T)}(t\primed)\Big) \dd t\primed
        \nonumber\\
        &= T \int _{0} ^{t} \Big(\Heff{L}{T} \Sopeff{j,L}{T}(s\primed T) - \Sopeff{j,L}{T}(s\primed T) H^{(T)}(s\primed T)\Big) \dd s\primed
        \nonumber\\
        &= \sum _{\tilde k = j} ^{j(L+1)} T^{\tilde k+1} \int _{0} ^{t} \Big(\Heff{L}{T} \Tcoeff{ \Sopeff{j,L}{T}} {\tilde k} (s\primed) - \Tcoeff{ \Sopeff{j,L}{T}} {\tilde k} (s\primed) H(s\primed)\Big) \dd s\primed
        \nonumber\\
        &
        = \sum _{\tilde k = j} ^{j(L+1)} \int _{0} ^{t} \Biggl[
            \sum_{l=1}^{L} T^{\tilde k+1+l} \Heffl{l} \Tcoeff{ \Sopeff{j,L}{T}} {\tilde k} (s\primed)
            + T^{\tilde k+1} \underbrace{\of*{\Heffl{0}\Tcoeff{ \Sopeff{j,L}{T}} {\tilde k} (s\primed) - \Tcoeff{ \Sopeff{j,L}{T}} {\tilde k} (s\primed) H(s\primed)}}_{ =\Kopeff{s\primed,0}{1}\of*{ \Tcoeff{ \Sopeff{j,L}{T}} {\tilde k} (s\primed) } }
        \Biggr] \dd s\primed
        \nonumber\\
        &= \int _{0} ^{t} \eof*{
                \sum_{k = j+2}^{(j+1)(L+1)} \sum _{(\sigma,\sigma')\in \mathcal{N}_{L,j,k}} T^{k} \Heffl{\sigma'} \Tcoeff{ \Sopeff{j,L}{T}} {\sigma} (s\primed)
                + \sum _{k = j+1} ^{j(L+1)+1} T^{k} \Kopeff{s\primed,0}{1}\of*{ \Tcoeff{ \Sopeff{j,L}{T}} {k-1} (s\primed) }
            }\dd s\primed
        \nonumber\\
        \label{eq:Sopeff-T-coeffs-recursion-proof-1}
        &=
        \begin{aligned}[t]
            \int_{0}^{t}
            \Bigg[
            &T^{j+1} \Kopeff{s\primed,0}{1}\of*{ \Tcoeff{ \Sopeff{j,L}{T}} {j} (s\primed) }
            \\
            &+ \sum_{k = j+2}^{j(L+1)+1} T^{k}\eof*{\Kopeff{s\primed,0}{1}\of*{ \Tcoeff{ \Sopeff{j,L}{T}} {k-1} (s\primed) } + \sum _{(\sigma,\sigma')\in \mathcal{N}_{L,j,k}} \Heffl{\sigma'} \Tcoeff{ \Sopeff{j,L}{T}} {\sigma} (s\primed)}
            \\
            &+ \sum_{k=j(L+1)+2}^{(j+1)(L+1)} T^{k} \sum _{(\sigma,\sigma')\in \mathcal{N}_{L,j,k}} \Heffl{\sigma'} \Tcoeff{ \Sopeff{j,L}{T}} {\sigma} (s\primed)
            \Bigg]\dd s\primed
            ,
        \end{aligned}
    \end{align}
    where in the 6th line we made the substitution $\sigma=\tilde k$, $\sigma'=l$ and $k=\sigma+\sigma'+1$ and used the notation $\mathcal{N}_{L,j,k}$ introduced in \cref{eq:def-NLjk}. We see that this is a polynomial of degree from $j+1$ to at most $(j+1)(L+1)$. Combining this with 
    \begin{align}
        \Sopeff{j+1,L}{T}(t T)
        &= \int _{0} ^{t T} \osc*{t\primed}{T}{\Kopeff{\cdot,L}{T}\of*{ \Sopeff{j,L}{T}(t\primed) }}  \dd t\primed
        \nonumber\\
        &= \int _{0} ^{t T} \Kopeff{t\primed,L}{T}\of*{ \Sopeff{j,L}{T}(t\primed) }  \dd t\primed
        - t\int _{0} ^{T} \Kopeff{r\primed,L}{T}\of*{ \Sopeff{j,L}{T}(r\primed) }  \dd r\primed
    \end{align}
    and reading off the coefficients proves \cref{eq:Sopeff-T-polynomial,eq:Sopeff-T-coeffs-base,eq:Sopeff-T-coeffs-recursion} for $j+1$.

Next we show \cref{eq:Sopeff-T-coeffs-bound}.
    Let \(m \in \N_0\) and \(I\) be an arbitrary compact interval containing $0$.
    For $j=0$, since \(\Sopeff{0,L}{T}(t T)\) is bounded and \(t\)-independent, \(H_0^m\Sopeff{0,L}{T}(t T)\) is \(H_0^m\)-bounded uniformly in \(t\).
    Assume all \(H_0^m \Tcoeff{\Sopeff{j,L}{T}}{k}(t)\) to be relatively bounded with respect to some \(H_0^{m+\tilde{k}_{m,j,L,k}}\) uniformly for \(t \in I\) for some $j\in\N_0$. 
    By \Cref{assump:A-H-bound} we know that \(H_0^m\Heffl{k}\) is \(H_0^{m+k_m}\)-bounded, uniformly in \(t \in I\).
    Since the recursion involves only finite sums, compositions with the
\(H_0^{m+k_m}\)-bounded operators \(\Heffl{l}\), averaging operations,
and integration over compact intervals, the same type of relative bounds
is preserved at each induction step. Hence it follows from \cref{eq:Sopeff-T-coeffs-recursion} that the coefficients of \(\Sopeff{j+1,L}{T}(t T)\) also satisfy such bounds with certain $\tilde{k}_{m,j+1,L,k}$, uniformly in \(t\in I\).

Finally we show \cref{eq:Sopeff-T-coeffs-periodic}, i.e., the coefficients \(\Tcoeff{\Sopeff{j,L}{T}}{k}(t)\) are \(1\)-periodic if \(H^{(T)}(t)\) is \(T\)-periodic.
    For \(j=0\), \(\Sopeff{0,L}{T}\) clearly satisfies this.
    Assume that the coefficients of \(\Sopeff{j,L}{T}\) are 1-periodic for some \(j\in \N_0\).
    Then \cref{eq:Sopeff-T-coeffs-recursion} shows that we can write
    \begin{equation}
        \Tcoeff{\Sopeff{j+1,L}{T}}{k}(t) =\int_0^{t} \osc*{s}{1}{A_j}\dd s
        = \Tcoeff{\Sopeff{j+1,L}{T}}{k}(t), \quad \forall t\in\R
    \end{equation}
    where $A_j(s)$ is a 1-periodic function because \(H(t)\) and, by the induction hypothesis, the coefficients \(\Tcoeff{\Sopeff{j,L}{T}}{k}\) are \(1\)-periodic. Thus
    \begin{equation}
        \Tcoeff{\Sopeff{j+1,L}{T}}{k}(t+1) 
        = \Tcoeff{\Sopeff{j+1,L}{T}}{k}(t) + \int_t^{t+1} \osc*{s}{1}{A_j}\dd s
        = \Tcoeff{\Sopeff{j+1,L}{T}}{k}(t), \quad \forall t\in\R.
    \end{equation}
This proves the claimed periodicity for $j+1$ and completes the proof.
\end{proof}
The previous lemma shows that the recursive construction of the operators
\(\Sopeff{j,L}{T}\) preserves polynomiality in \(T\), and that the order in \(T\) increases by at least one at each iteration.
\begin{corollary}\label{thm:AKS-T-polynomial}
    Under the same assumptions and notation as in \cref{thm:Sopeff-T-polynomial}, the operator $\avg*{T}{\Kopeff{\cdot,L}{T}\of*{\Sopeff{j,L}{T}} }$ 
    is a polynomial of degree from \(j\) to \(j(L+1)+L\) in \(T\),
    \begin{equation}\label{eq:AKS-T-polynomial-1}
        \avg*{T}{\Kopeff{\cdot,L}{T}\of*{\Sopeff{j,L}{T}} }
        = \sum_{k = j}^{j(L+1)+L} T^k \Tcoeff{ \avg*{T}{\Kopeff{\cdot,L}{T}\of*{\Sopeff{j,L}{T}} } }{k},
    \end{equation}
     with coefficients
    \begin{equation}\label{eq:AKS-T-polynomial-2}
        \Tcoeff{ \avg*{T}{\Kopeff{\cdot,L}{T}\of*{\Sopeff{j,L}{T}} } }{k}
        = \begin{cases}
            \displaystyle
            \avg*{1}{ \Kopeff{\cdot,0}{1}\of*{ \Tcoeff{\Sopeff{j,L}{T} }{j} } }
            & k = j
            \\
            \displaystyle
            \avg*{1}{ \Kopeff{\cdot,0}{1}\of*{ \Tcoeff{\Sopeff{j,L}{T} }{k} } + \sum_{(\sigma,\sigma') \in \mathcal{N}_{L,j,k+1}} \Heffl{\sigma'}\Tcoeff{ \Sopeff{j,L}{T} }{\sigma} }
            & \substack{j+1 \leq k \\ k\leq j(L+1)}
            \\
            \displaystyle
            \avg*{1}{\sum_{(\sigma,\sigma') \in \mathcal{N}_{L,j,k+1}} \Heffl{\sigma'}\Tcoeff{ \Sopeff{j,L}{T} }{\sigma} }
            & \substack{ j(L+1) + 1 \leq k \\ k \leq j(L+1) + L }
            \\
            \displaystyle
            0
            & \mathrm{else}
            .
        \end{cases}
    \end{equation}
\end{corollary}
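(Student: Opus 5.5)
The plan is to reduce the statement to the computation already carried out in the proof of \cref{thm:Sopeff-T-polynomial}, namely \cref{eq:Sopeff-T-coeffs-recursion-proof-1}. We evaluate that identity at $t=1$ and divide by $T$.

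Concretely, by \cref{def:integral-of-operator} and the substitution $t\primed = s\primed T$, we have
\begin{equation*}
    \avg*{T}{\Kopeff{\cdot,L}{T}\of*{\Sopeff{j,L}{T}}}\psi
    = \frac{1}{T}\int_0^{T}\Kopeff{t\primed,L}{T}\of*{\Sopeff{j,L}{T}(t\primed)}\psi\,\dd t\primed
    = \int_0^{1}\Kopeff{s\primed T,L}{T}\of*{\Sopeff{j,L}{T}(s\primed T)}\psi\,\dd s\primed
\end{equation*}
for $\psi\in\cinftyofh$. Here $H^{(T)}(s\primed T)=H(s\primed)$, and $\Sopeff{j,L}{T}(s\primed T)$ is given by the polynomial \cref{eq:Sopeff-T-polynomial}.

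Expanding $\Heff{L}{T}=\sum_{l}T^l\Heffl{l}$, the integrand becomes a finite sum of two kinds of terms:
\begin{itemize}
    \item terms $T^{\tilde k}\,\Kopeff{s\primed,0}{1}\of*{\Tcoeff{\Sopeff{j,L}{T}}{\tilde k}(s\primed)}$ with $j\le \tilde k\le j(L+1)$;
    \item terms $T^{\sigma+\sigma'}\,\Heffl{\sigma'}\Tcoeff{\Sopeff{j,L}{T}}{\sigma}(s\primed)$ with $1\le\sigma'\le L$.
\end{itemize}
This is exactly the expression in \cref{eq:Sopeff-T-coeffs-recursion-proof-1} with $t=1$, except that the overall factor $T$ coming from the rescaling is absent. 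Hence every exponent is shifted down by one.

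Collecting powers $T^k$ then gives the coefficients. The first family yields the term $\avg*{1}{\Kopeff{\cdot,0}{1}(\Tcoeff{\Sopeff{j,L}{T}}{k})}$ for $j\le k\le j(L+1)$. The second family contributes at $k=\sigma+\sigma'$, i.e.\ $(\sigma,\sigma')\in\mathcal N_{L,j,k+1}$, and this set is nonempty only for $j+1\le k\le j(L+1)+L$. Splitting the range of $k$ into $k=j$, $j+1\le k\le j(L+1)$, and $j(L+1)+1\le k\le j(L+1)+L$ reproduces the four cases of \cref{eq:AKS-T-polynomial-2}, including $0$ outside the range. It also gives the degree range from $j$ to $j(L+1)+L$ claimed in \cref{eq:AKS-T-polynomial-1}.

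The only point needing care is justifying the interchange of the finite sums with the Bochner integral, together with the well-definedness of each term on $\cinftyofh$. Each summand must be continuous, hence integrable on $[0,1]$ by \cref{thm:bochner-integrability-test}. For this we use the invariance of $\cinftyofh$ under the coefficients, the relative bounds \cref{eq:Sopeff-T-coeffs-bound}, \cref{assump:A-H-cinf,assump:A-H-bound} for the $\Heffl{l}$, and the continuity argument of \cref{thm:properties-iterated-actions}. Strong continuity of the coefficients in $s\primed$ follows from their definition as integrals in \cref{eq:Sopeff-T-coeffs-recursion}. Once this is in place, linearity of the Bochner integral finishes the argument, and the identity holds for all $T$ as an operator identity on $\cinftyofh$.
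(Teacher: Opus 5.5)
Your proposal is correct and is essentially the paper's proof: the paper likewise evaluates \cref{eq:Sopeff-T-coeffs-recursion-proof-1} at $t=1$, divides by $T$, and shifts the summation index by one to read off the four cases of \cref{eq:AKS-T-polynomial-2}. Your extra remarks on the continuity and Bochner integrability of each summand spell out a point the paper leaves implicit, without changing the argument.
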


\begin{proof}
    It follows by applying \cref{eq:Sopeff-T-coeffs-recursion-proof-1} with \(t = 1\):
    \begin{align}
        \avg*{T}{\Kopeff{\cdot,L}{T}\of*{\Sopeff{j,L}{T}} }
        &= \frac{1}{T} \int _{0} ^{T} \Kopeff{t\primed,L}{T}\of*{ \Sopeff{j,L}{T}(t\primed) }  \dd t\primed
        \nonumber\\
        &=
        \begin{aligned}[t]
           \frac{1}{T} \int_{0}^{1}
            &\Bigg[
            T^{j+1} \Kopeff{s\primed,0}{1}\of*{ \Tcoeff{ \Sopeff{j,L}{T}} {j} (s\primed) }
            \\
            &+ \sum_{k = j+2}^{j(L+1)+1} T^{k}\eof*{\Kopeff{s\primed,0}{1}\of*{ \Tcoeff{ \Sopeff{j,L}{T}} {k-1} (s\primed) } + \sum _{(\sigma,\sigma') \in \mathcal{N}_{L,j,k}} \Heffl{\sigma'} \Tcoeff{ \Sopeff{j,L}{T}} {\sigma} (s\primed)}
            \\
            &+ \sum_{k=j(L+1)+2}^{(j+1)(L+1)} T^{k} \sum _{(\sigma,\sigma') \in \mathcal{N}_{L,j,k}} \Heffl{\sigma'} \Tcoeff{ \Sopeff{j,L}{T}} {\sigma} (s\primed)
            \Bigg]\dd s\primed
            ,
        \end{aligned}
        \nonumber\\
        &=
        \begin{aligned}[t]
           \int_{0}^{1}
            &\Bigg[
            T^{j} \Kopeff{s\primed,0}{1}\of*{ \Tcoeff{ \Sopeff{j,L}{T}} {j} (s\primed) }
            \\
            &+ \sum_{k = j+2}^{j(L+1)+1} T^{k - 1}\eof*{\Kopeff{s\primed,0}{1}\of*{ \Tcoeff{ \Sopeff{j,L}{T}} {k-1} (s\primed) } + \sum _{(\sigma,\sigma') \in \mathcal{N}_{L,j,k}} \Heffl{\sigma'} \Tcoeff{ \Sopeff{j,L}{T}} {\sigma} (s\primed)}
            \\
            &+ \sum_{k=j(L+1)+2}^{(j+1)(L+1)} T^{k - 1} \sum _{(\sigma,\sigma') \in \mathcal{N}_{L,j,k}} \Heffl{\sigma'} \Tcoeff{ \Sopeff{j,L}{T}} {\sigma} (s\primed)
            \Bigg]\dd s\primed
            ,
        \end{aligned}
        \nonumber\\
        &=
        \begin{aligned}[t]
           \int_{0}^{1}
            &\Bigg[
            T^{j} \Kopeff{s\primed,0}{1}\of*{ \Tcoeff{ \Sopeff{j,L}{T}} {j} (s\primed) }
            \\
            &+ \sum_{k = j+1}^{j(L+1)} T^{k}\eof*{\Kopeff{s\primed,0}{1}\of*{ \Tcoeff{ \Sopeff{j,L}{T} } {k} (s\primed) } + \sum _{(\sigma,\sigma') \in \mathcal{N}_{L,j,k+1}} \Heffl{\sigma'} \Tcoeff{ \Sopeff{j,L}{T}} {\sigma} (s\primed)}
            \\
            &+ \sum_{k=j(L+1)+1}^{j(L+1) + L} T^{k} \sum _{(\sigma,\sigma') \in \mathcal{N}_{L,j,k+1}} \Heffl{\sigma'} \Tcoeff{ \Sopeff{j,L}{T}} {\sigma} (s\primed)
            \Bigg]\dd s\primed.
        \end{aligned}            
    \end{align}
\end{proof}

\subsection{Definition of the effective Hamiltonians\texorpdfstring{ \(\Heff{L}{T}\)}{ }}\label{sec:heff_construction}
Up to this point, the coefficients $\Heffl{l}$ in the ansatz \eqref{eq:Heff-polynomial-Ansatz} for the effective Hamiltonian have been left unspecified. 
We now make a concrete choice based on the preceding results:
\begin{definition}\label{def:Heff-definition}
	For every \(L \in \N_0\), the operator $\Heff{L}{T}$ is given by $\domof{\Heff{L}{T}}=\cinftyofh$ and
	\begin{equation}\label{eq:Heff-definition-LT}
		\Heff{L}{T}
		= \Tcoeff{ \avg*{1}{H} - \sum_{j= 1}^{L} (-\iu )^j \avg*{T}{ \Kopeff{\cdot,L-j}{T}\of*{ \Sopeff{j,L-j}{T} } } }{0;L}
		= \sum_{l=0}^{L} T^l \Heffl{l}
		,
	\end{equation}
	where 
	\begin{align}
	    \label{eq:Heff-definition-0}
		\Heffl{0}
		&= \avg*{1}{H}
		\\
        \label{eq:Heff-definition-l}
		\Heffl{l}
		&= - \sum_{j=1}^l (-\iu)^j \Tcoeff{ \avg*{T}{\Kopeff{\cdot,L-j}{T}\of*{\Sopeff{j,L-j}{T} } } }{l}, \quad l\geq 1
		.
	\end{align}
\end{definition}

This means each \(\Heff{L}{T}\) is defined recursively by \(\Heff{0}{T}, \dots, \Heff{L-1}{T}\). Note that, at this stage, it is not a priori clear that $\Heff{L}{T}$ is a symmetric operator; this property will be addressed later on.

In order for this definition to be well-posed, we must verify that the operator appearing in square brackets in \cref{eq:Heff-definition-LT} is indeed a polynomial in \(T\). Owing to the recursive nature of the construction, this verification must itself proceed recursively. For this reason, we first state the definition and only afterwards establish its well-posedness.

\begin{lemma}\label{thm:Heff-definition-well-posed}
    For every \(L \in \N_0\), the operator
    \begin{equation}
        \avg*{1}{H} - \sum_{j= 1}^{L} (-\iu )^j \avg*{T}{ \Kopeff{\cdot,L-j}{T}\of*{ \Sopeff{j,L-j}{T} } }
    \end{equation}
    is a polynomial in \(T\).
    Furthermore, each coefficient \(\Heffl{l}\), where $l=0,\ldots,L$, as in \Cref{def:Heff-definition} (cf. \cref{eq:Heff-definition-LT}) satisfies \Cref{assump:A-H-cinf,assump:A-H-bound}.
\end{lemma}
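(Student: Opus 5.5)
The plan is a strong induction on \(L\in\N_0\). For \(L=0\) the sum over \(j\) is empty, so the operator equals \(\avg*{1}{H}\), a \(T\)-independent polynomial of degree \(0\), and \(\Heffl{0}=\avg*{1}{H}\). I first check that \(\avg*{1}{H}\) satisfies \cref{assump:A-H-cinf,assump:A-H-bound}. For \(\psi\in\cinftyofh\), \cref{assump:A-H-cont} shows that \(s\mapsto H_0^mH(s)\psi\) is continuous, hence Bochner integrable on \([0,1]\). By \cref{thm:Hille} (with \(H_0^m\) closed), \(H_0^m\) commutes with the integral, so \(\avg*{1}{H}\psi\in\domof{H_0^m}\) for all \(m\), i.e. \(\avg*{1}{H}\cinftyofh\subseteq\cinftyofh\). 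The bound \(\norm{H_0^m\avg*{1}{H}\psi}\le \int_0^1\norm{H_0^mH(s)\psi}\dd s\le a_m\norm{H_0^{m+k_m}\psi}+b_m\norm{\psi}\) then follows from \cref{assump:A-H-bound} applied on \(I=[0,1]\). Since this coefficient is time-independent, uniformity in \(t\) is automatic.

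For the induction step, fix \(L\ge1\) and assume the claim for all \(L'<L\). Then for each \(j=1,\dots,L\) the operator \(\Heff{L-j}{T}\) has the form \eqref{eq:Heff-polynomial-Ansatz} with \(L-j\) in place of \(L\), and its coefficients satisfy \cref{assump:A-H-cinf,assump:A-H-bound}. These are exactly the hypotheses of \cref{thm:Sopeff-T-polynomial} and \cref{thm:AKS-T-polynomial} at level \(L-j\). Hence each \(\avg*{T}{\Kopeff{\cdot,L-j}{T}\of*{\Sopeff{j,L-j}{T}}}\) is a polynomial in \(T\) of degree from \(j\) to \(j(L-j+1)+(L-j)\), with explicit coefficients \eqref{eq:AKS-T-polynomial-2}. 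A finite sum of polynomials is a polynomial, which proves the first assertion. Because the \(j\)-th summand starts at degree \(j\), the coefficient of \(T^l\) collects only \(j\le l\). This confirms \eqref{eq:Heff-definition-l} and shows \(\Heffl{l}\) depends only on lower levels, so the recursion is well founded. A subtle point is that \(\Heffl{l}\) as computed via level \(L-j\) must agree with the one used at level \(L\); I would note that the definition simply takes the level-\(L\) truncation, so no consistency claim is needed here.

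The remaining and main step is to show each coefficient from \eqref{eq:AKS-T-polynomial-2} satisfies \cref{assump:A-H-cinf,assump:A-H-bound}. These coefficients are \(1\)-averages of three kinds of terms. The first kind is \(\Heffl{0}\Tcoeff{\Sopeff{j,L-j}{T}}{k}(s)\). The second is \(\Tcoeff{\Sopeff{j,L-j}{T}}{k}(s)H(s)\). The third is \(\Heffl{\sigma'}\Tcoeff{\Sopeff{j,L-j}{T}}{\sigma}(s)\) with \(\sigma'\le L-j<L\).

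The relative bounds follow by composing \eqref{eq:Sopeff-T-coeffs-bound} with \cref{assump:A-H-bound} for \(H(s)\) and for the lower \(\Heffl{\sigma'}\), exactly as in the chain of estimates \eqref{eq:properties-iterated-actions-induction-step-proof-9}, and then integrating over \([0,1]\).

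The obstacle is invariance of \(\cinftyofh\) together with integrability. Here I need that \(\Tcoeff{\Sopeff{j,L-j}{T}}{k}(s)\) maps \(\cinftyofh\) into itself and that \(s\mapsto H_0^m(\cdots)\psi\) is continuous, so that \cref{thm:Hille} again moves \(H_0^m\) inside the average. I would establish this by a separate induction over \(j\) along the recursion \eqref{eq:Sopeff-T-coeffs-recursion}, mirroring the proof of \cref{thm:properties-iterated-actions}. Each coefficient is an integral in \(t'\) of a continuous \(\cinftyofh\)-valued integrand, so Hille's theorem gives invariance. Continuity in \(t\) follows from the integral form, as in \eqref{eq:properties-iterated-actions-induction-step-proof-3}. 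Continuity of products with \(H(s)\) follows from the triangle-inequality argument used there, via \cref{assump:A-H-cont} and the uniform relative bounds. With these in hand, \(\Heffl{l}\cinftyofh\subseteq\cinftyofh\), which completes the induction.
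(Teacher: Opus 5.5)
Your proposal is correct and follows the same route as the paper. The paper also uses strong induction on $L$, gets polynomiality of each summand from \cref{thm:AKS-T-polynomial} applied at level $L-j$, and truncates the $j$-sum to $j\le l$ because the $j$th summand starts at degree $j$; your explicit check that the coefficients leave $\cinftyofh$ invariant (via \cref{thm:Hille} along the recursion, as in \cref{thm:properties-iterated-actions}) spells out a step the paper only asserts in one line.
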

\begin{proof}
	Proof by induction over \(L \in \N_0\).

	For $L=0$, \(\Heff{0}{T} = \avg*{1}{H} = \avg*{T}{H^{(T)}}\) is a well-defined operator on \(\cinftyofh\), as \(s\mapsto H(s)\) is strongly continuous and hence integrable on compact intervals.
	In addition, being independent of \(T\), it is a polynomial of degree \(0\).
	Lastly, we have
	\begin{equation}
	    \norm*{ H_0^m \avg*{1}{H} \psi}
		\leq \int_{0}^{1} \norm*{H_0^m H(s) \psi} \dd s
		\leq a_m \norm*{H_0^{m+k_m}\psi } + b_m \norm*{\psi}
	\end{equation}
	for every \(\psi \in \cinftyofh\) and \(m \in [0,\infty) \).
	So \(\Heffl{0}\) satisfies \Cref{assump:A-H-bound}.

	Now, assume \(L \in \N\) and the claim holds for \(0, \dots, L-1\), so we want show that it also holds for $L$. We have
	\begin{equation}
		\avg*{1}{H} - \sum_{j= 1}^{L} (-\iu )^j \avg*{T}{ \Kopeff{\cdot,L-j}{T}\of*{ \Sopeff{j,L-j}{T} } }
	\end{equation}
	is a polynomial in \(T\) since, by \Cref{thm:AKS-T-polynomial}, so is each summand. Likewise, we see that this polynomial depends on $\Heffl{l}$ for $l=0,\ldots,L-1$ only and that it satisfies \Cref{assump:A-H-cinf,assump:A-H-bound}. Furthermore, its coefficients are given by
	\begin{align}
	    \Tcoeff{\avg*{1}{H} - \sum_{j= 1}^{L} (-\iu )^j \avg*{T}{ \Kopeff{\cdot,L-j}{T}\of*{ \Sopeff{j,L-j}{T} } }}{0}
		&= \avg*{1}{H}
		\nonumber\\
        \Tcoeff{\avg*{1}{H} - \sum_{j= 1}^{L} (-\iu )^j \avg*{T}{ \Kopeff{\cdot,L-j}{T}\of*{ \Sopeff{j,L-j}{T} } }}{l}
		&= - \sum_{j= 1}^{L} (-\iu )^j \Tcoeff{\avg*{T}{ \Kopeff{\cdot,L-j}{T}\of*{ \Sopeff{j,L-j}{T} } } }{l}
		\nonumber\\
		&=- \sum_{j= 1}^{k} (-\iu )^j \Tcoeff{\avg*{T}{ \Kopeff{\cdot,L-j}{T}\of*{ \Sopeff{j,L-j}{T} } } }{l}
		,\quad 1\leq l \leq L
        ,
	\end{align}
	where we used that \(\Sopeff{j,L}{T}\) is a polynomial of degree from \(j\), meaning that, if \(j > l\), then we have  \(\Tcoeff{\avg*{T}{ \Kopeff{\cdot,L-j}{T}\of*{ \Sopeff{j,L-j}{T} } } }{l} = 0\), and hence, in particular, that we can truncate the sum over \(j\) to \(1, \dots, l\leq L\). This shows that \(\Heffl{l}\) as in \Cref{def:Heff-definition} satisfies \Cref{assump:A-H-cinf,assump:A-H-bound}, for all $l=0,\ldots,L$, which concludes the induction step and hence completes the proof.
\end{proof}

We now motivate our specific construction of \(\Heff{L}{T}\) in \Cref{def:Heff-definition} by showing that it is uniquely characterised by a natural asymptotic property in the high-frequency limit \(T \to 0\) for periodic Hamiltonians \(H(t)\). To this end, we first introduce some standard notation:

\begin{definition}\label{def:big-O}
    Consider two functions \(\psi\colon \R \to \HH\) and \(\phi\colon \R \to \R\). 
    We write \(\psi = \bigo(\phi)\) as \(T \to T_0\) if there are \(\delta, c >0\) such that
    \begin{equation}
        \norm{\psi(T)} \leq c |\phi(T)|
        ,\quad \forall T\in \R \colon 0 < \abs{T - T_0} < \delta
        .
    \end{equation}
Furthermore, given \(\psi\colon X\times Y \subseteq \R\times\R \to \HH\), we write \(\psi(\cdot, y) = \bigo(\phi)\) \emph{uniformly in \(y\)} as \(T \to T_0\) if there are \(c,\delta >0 \) such that, for every \(y\in Y\),
    \begin{equation}
        \norm{\psi(T,y)} \leq c |\phi(T)|
        ,\quad \forall T\in \R \colon 0 < \abs{T - T_0} < \delta
        .
    \end{equation}
\end{definition}

The following results connect the asymptotic expansion of the propagator with our polynomial ansatz for the effective Hamiltonian. After deriving the asymptotics of the transformed propagator coefficients in \cref{thm:Sopeff-asymptotics}, \cref{thm:asymptotic-relation-Sop} and \cref{thm:asymptotic-relation-Kop} establish recursive relations between these coefficients and the operators defining the effective Hamiltonian. This correspondence will enable us to justify our construction of effective Hamiltonians in \cref{thm:Heff-condition}.

\begin{lemma}\label{thm:Sopeff-asymptotics}
    Assume \((H(t))_{t\in\R}\) satisfies \Cref{assump:A-H-cinf,assump:A-H-bound,assump:A-H-period}.
    Then, for every  \(j,L \in \N_0\), and \(\psi \in \cinftyofh\),  the function \(t\mapsto \Sopeff{j,L}{T}(t)\psi\) is \(T\)-periodic and satisfies
    \begin{equation}\label{eq:asymptotics-iterated-actions}
        \norm*{\Sopeff{j,L}{T}(t)\psi }
        \leq
        \sum_{k=j}^{j(L+1)} T^k \of*{\tilde{a}_{0,j,L,k} \norm*{H_0^{\tilde{k}_{0,j,L,k}} \psi}
        + \tilde{b}_{0,j,L,k} \norm*{\psi}}
        = \bigo\of{T^j}
        \quad \text{as}\ T \to 0
    \end{equation}
    uniformly for $t$ in compact intervals, with $\tilde{k}_{0,j,L,k},\tilde{a}_{0,j,L,k}(t),\tilde{b}_{0,j,L,k}(t)$ as in \Cref{thm:Sopeff-T-polynomial}.
    \end{lemma}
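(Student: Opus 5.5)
The plan is to read off both claims from \cref{thm:Sopeff-T-polynomial}, applied with the rescaled time variable. Fix \(t\) and write \(t = (t/T)\,T\). Then \cref{eq:Sopeff-T-polynomial} gives
\begin{equation*}
    \Sopeff{j,L}{T}(t)\psi
    = \sum_{k=j}^{j(L+1)} T^k \Tcoeff{\Sopeff{j,L}{T}}{k}(t/T)\psi .
\end{equation*}

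\emph{Periodicity.} By \cref{eq:Sopeff-T-coeffs-periodic}, each coefficient \(\Tcoeff{\Sopeff{j,L}{T}}{k}\) is \(1\)-periodic, which uses \cref{assump:A-H-period}. Replacing \(t\) by \(t+T\) shifts the argument \(t/T\) by \(1\) and leaves every summand unchanged. Hence \(t\mapsto\Sopeff{j,L}{T}(t)\psi\) is \(T\)-periodic. Note that \(T\)-independence of the coefficients is part of \cref{thm:Sopeff-T-polynomial}: they depend only on \(H\) and the \(\Heffl{l}\).

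\emph{Norm bound.} Apply the triangle inequality to the sum and invoke \cref{eq:Sopeff-T-coeffs-bound} with \(m=0\) on each coefficient. This gives the displayed estimate with constants \(\tilde a_{0,j,L,k},\tilde b_{0,j,L,k}\), provided the argument \(t/T\) lies in an interval where that bound holds uniformly.

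\emph{Main obstacle.} The argument \(t/T\) ranges over an unbounded set as \(T\to 0\), while \cref{eq:Sopeff-T-coeffs-bound} is only uniform on compact intervals. I would handle this with periodicity: \(\Tcoeff{\Sopeff{j,L}{T}}{k}(s)=\Tcoeff{\Sopeff{j,L}{T}}{k}(s-\lfloor s\rfloor)\). So it suffices to take the relative bound on the fixed compact interval \(I=[0,1]\), and the resulting constants are independent of \(t\) and of \(T\). This in fact gives uniformity for all \(t\in\R\), not just for \(t\) in compact intervals.

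\emph{Asymptotics.} Since \(\psi\in\cinftyofh\), every \(\norm{H_0^{\tilde k}\psi}\) is finite. For \(0<T\le 1\) we have \(T^k\le T^j\) when \(k\ge j\), so the right-hand side is at most \(T^j\) times a constant depending only on \(\psi,j,L\). This is \(\bigo(T^j)\) uniformly in \(t\) as \(T\to 0\), in the sense of \cref{def:big-O}.
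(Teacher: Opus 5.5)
Your proposal is correct and is essentially the paper's proof, which simply writes $\Sopeff{j,L}{T}(t) = \sum_{k=j}^{j(L+1)} T^k \Tcoeff{\Sopeff{j,L}{T}}{k}(t/T)$ and says the claims follow from \cref{thm:Sopeff-T-polynomial}. You also explicitly reduce the argument $t/T$ to the fixed interval $[0,1]$ using the $1$-periodicity of the coefficients, which the paper leaves implicit here and only invokes later in the proof of \cref{thm:general-error-ip}; this correctly gives constants independent of $T$ and a bound that is uniform in all $t\in\R$.
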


\begin{proof}
This follows immediately from \Cref{thm:Sopeff-T-polynomial}, since
    \begin{equation}
        \Sopeff{j,L}{T}(t)
        = \Sopeff{j,L}{T}\of*{\textstyle\frac{t}{T}T}
        = \sum_{k=j}^{j(L+1)} T^k \Tcoeff{\Sopeff{j,L}{T}}{k} \of*{\textstyle\frac{t}{T} }
    .\qedhere
    \end{equation}
\end{proof}

\begin{lemma}\label{thm:asymptotic-relation-Sop}
    Assume \((H(t))_{t\in\R}\) satisfies \Cref{assump:A-H-cinf,assump:A-H-bound,assump:A-H-period}.
    Then, for every \(L\in\N\), \(j\in \N_0\) and \(\psi \in \cinftyofh\) we have
    \begin{equation}\label{eq:asymptotic-relation-Sop}
        \of*{\Sopeff{j,L}{T}(s) - \Sopeff{j,L-1}{T}(s)}\psi
        = \bigo\of*{T^{L+j}}
    \end{equation}
    uniformly for \(s\) in compact intervals as \(T \to 0\).
\end{lemma}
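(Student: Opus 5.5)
Since both \(\Heff{L}{T}\) and \(\Heff{L-1}{T}\) are built from the same coefficients \(\Heffl{0},\dots,\Heffl{L-1}\) (cf.\ \cref{def:Heff-definition}), I would prove \eqref{eq:asymptotic-relation-Sop} by induction on \(j\), using the observation that \(\Heff{L}{T}-\Heff{L-1}{T} = T^L\Heffl{L}\) on \(\cinftyofh\). The base case \(j=0\) is trivial, since \(\Sopeff{0,L}{T}=\Sopeff{0,L-1}{T}=\one\).

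For the induction step, write \(D_j(s) = \Sopeff{j,L}{T}(s)-\Sopeff{j,L-1}{T}(s)\). By \cref{eq:def-Sopeff},
\begin{equation*}
D_{j+1}(t)\psi = \int_0^t \osc*{s}{T}{\Kopeff{\cdot,L}{T}(\Sopeff{j,L}{T}) - \Kopeff{\cdot,L-1}{T}(\Sopeff{j,L-1}{T})}\dd s\,\psi ,
\end{equation*}
and the integrand splits as
\begin{equation*}
\Kopeff{s,L}{T}(\Sopeff{j,L}{T}(s)) - \Kopeff{s,L-1}{T}(\Sopeff{j,L-1}{T}(s)) = T^L \Heffl{L}\Sopeff{j,L}{T}(s) + \Kopeff{s,L-1}{T}\of*{D_j(s)} .
\end{equation*}
I would estimate the two pieces as follows.

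For the first piece, \cref{thm:Sopeff-asymptotics} combined with the relative bound on \(H_0^m\Heffl{L}\) from \cref{thm:Heff-definition-well-posed} gives \(\norm{T^L\Heffl{L}\Sopeff{j,L}{T}(s)\psi} = \bigo(T^{L+j})\). Here I would need the analogue of \eqref{eq:asymptotics-iterated-actions} with \(H_0^m\) in front, and it follows in the same way from \eqref{eq:Sopeff-T-coeffs-bound}.

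For the second piece, the naive bound on the \(H^{(T)}\)-part is only \(\bigo(T^{L+j})\), and integrating over \(s\in[0,t]\) with \(t\) in a compact interval gains no further power of \(T\). That would give \(D_{j+1}=\bigo(T^{L+j})\), one power short of the required \(\bigo(T^{L+j+1})\). The fix is to use the polynomial structure from \cref{thm:Sopeff-T-polynomial} rather than crude bounds.

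Concretely, I would rescale \(s = s'T\) as in the proof of \cref{thm:Sopeff-T-polynomial}, so that every integral over a compact \(t\)-range becomes \(T\) times an integral over \(t/T\) of 1-periodic functions of zero \(\avg{1}{\cdot}\)-mean. Each \(\osc{}{1}{\cdot}\) has zero average, so its antiderivative is 1-periodic and bounded uniformly in \(t/T\); this is exactly what \eqref{eq:Sopeff-T-coeffs-periodic} encodes. The explicit prefactor \(T\) from the substitution then supplies the missing power.

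Equivalently, and more cleanly, I would compare coefficients. By \cref{thm:Sopeff-T-polynomial}, \(\Sopeff{j,L}{T}(tT)\) and \(\Sopeff{j,L-1}{T}(tT)\) are polynomials in \(T\) whose coefficients are given by the recursion \eqref{eq:Sopeff-T-coeffs-recursion}. That recursion involves \(\Heffl{\sigma'}\) only through \(\mathcal{N}_{L,j,k}\), with \(\sigma\geq j\) and \(\sigma'\le L\). I would therefore show by induction on \(j\) that \(\Tcoeff{\Sopeff{j,L}{T}}{k}=\Tcoeff{\Sopeff{j,L-1}{T}}{k}\) for all \(k<L+j\). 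The only new terms are those with \(\sigma'=L\), which contribute at degree \(k=\sigma+L+1\geq j+L+1\) in \(\Sopeff{j+1}{}\); terms built from coefficients that already differ at degree \(\geq L+j\) only enter degrees \(\geq L+j+1\).

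Once the coefficients agree below degree \(L+j\), \eqref{eq:asymptotic-relation-Sop} follows. The difference is \(\sum_{k\ge L+j}T^k(\dots)(t/T)\), and each coefficient is bounded uniformly in \(t/T\in\R\) by \eqref{eq:Sopeff-T-coeffs-bound} together with 1-periodicity \eqref{eq:Sopeff-T-coeffs-periodic}. Periodicity is what reduces the uniform bound over \(t/T\in\R\) to a bound on the compact interval \([0,1]\).

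I expect the main obstacle to be the index bookkeeping in the coefficient comparison, in particular checking that the \(\Kopeff{\cdot,0}{1}\)-terms in \eqref{eq:Sopeff-T-coeffs-recursion} shift degree by exactly one, so that the first discrepancy propagates from degree \(L+j\) to \(L+j+1\). The analytic estimates themselves are routine given \cref{thm:Sopeff-T-polynomial}.
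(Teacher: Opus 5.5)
Your proposal is correct, and the first half of it (the splitting \(T^L\Heffl{L}\Sopeff{j,L}{T}(s) + \Kopeff{s,L-1}{T}(D_j(s))\), then recovering the missing factor of \(T\) from the zero mean of the oscillating part) is the mechanism of the paper's proof. There, \(\int_0^t\) is split into \(\lfloor t/T\rfloor\) full periods, which contribute nothing, plus a remainder of length less than \(T\) on which the integrand is \(\bigo(T^{L+j})\).

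The route you actually commit to is a genuinely different organisation of the argument. You compare the \(T\)-coefficients through \eqref{eq:Sopeff-T-coeffs-recursion} and show they agree below degree \(L+j\), and your index check is right:
\begin{itemize}
\item the \(\Kopeff{\cdot,0}{1}\)-terms raise the degree by exactly one;
\item the \(\Heffl{\sigma'}\)-terms raise it by \(\sigma'+1\geq 2\);
\item the only new terms, those with \(\sigma'=L\), start at degree \(\sigma+L+1\geq j+L+1\).
\end{itemize}

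What your version buys is a clean treatment of the unbounded compositions. In the paper, an induction hypothesis that is merely a bound on vectors does not by itself control \(\Heff{L-1}{T}D_j(s)\psi\) or \(D_j(s)H(s)\psi\). The paper therefore adopts the convention that \(\bigo(T^k)\) means ``polynomial in \(T\) of order at least \(k\)'', which is exactly your coefficient statement in disguise. Your route makes this explicit and turns the induction into pure algebra. The single analytic step is at the end: \eqref{eq:Sopeff-T-coeffs-bound} together with \eqref{eq:Sopeff-T-coeffs-periodic}, which even gives uniformity for \(s\in\R\). It also yields exact coefficient agreement, which is what is really used later when coefficients are identified in \cref{thm:Heff-condition}.

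The paper's version, in turn, shows more transparently where the extra power of \(T\) comes from, namely cancellation over full periods. In your route that mechanism is hidden inside the already established periodicity of the coefficients in \cref{thm:Sopeff-T-polynomial}.
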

\begin{proof}
    Let \(\psi \in \cinftyofh\).
    Again we prove the statement by induction over \(j\in\N_0\).
    
    In the case \(j=0\) the statement is trivially true as \(\Sopeff{0,L}{T}(s) = \one = \Sopeff{0,L-1}{T}(s)\). Now, assume \cref{eq:asymptotic-relation-Sop} holds for some \(j \in \N_0\) and let \(t\in \R\).
    Then
    \begin{align}
        \psi(sT)
        &:= \eof[\Big]{\Kopeff{sT,L}{T}\of*{\Sopeff{j,L}{T}(sT)} - \Kopeff{sT,L-1}{T}\of*{\Sopeff{j,L-1}{T}(sT)}}\psi
        \nonumber\\
        &= \eof[\Big]{ \Heff{L}{T}\Sopeff{j,L}{T}(sT) - \Sopeff{j,L}{T}(sT)H_{T}(sT) - \of*{ \Heff{L-1}{T}\Sopeff{j,L-1}{T}(sT) - \Sopeff{j,L-1}{T}(sT)H_{T}(sT) } }\psi
        \nonumber\\
        &=
        \begin{aligned}[t]
            &\underbrace{\vphantom{\of*{\Sopeff{j,L}{T}(sT) - \Sopeff{j,L-1}{T}(sT)}}\Heff{L-1}{T}}_{=\bigo\of*{1}} \underbrace{\of*{\Sopeff{j,L}{T}(sT) - \Sopeff{j,L-1}{T}(sT)}}_{=\bigo\of*{T^{L+j}}}\psi
            - \underbrace{\of*{\Sopeff{j,L}{T}(sT) - \Sopeff{j,L-1}{T}(sT)}}_{=\bigo\of*{T^{L+j}}} \underbrace{\vphantom{\of*{\Sopeff{j,L}{T}(sT) - \Sopeff{j,L-1}{T}(sT)}}H_{1}(s)}_{=\bigo\of*{1}} \psi
            \nonumber\\
            &+
            T^L \Heffl{L}
            \underbrace{\vphantom{\of*{\Sopeff{j,L}{T}(sT) - \Sopeff{j,L-1}{T}(sT)}}
                \Sopeff{j,L}{T}(sT)\psi
            }_{=\bigo\of*{T^j}}
        \end{aligned}
        \nonumber\\
        &= \bigo\of*{T^{L+j}}
        ,
    \end{align}
    uniformly for \(s\in [0,t]\) as \(T\to 0\) by \cref{thm:Sopeff-asymptotics}, where we used the fact that all operators in the expression above are polynomial in \(T\) and thus their asymptotic behaviour as \(T \to 0\) is given by the lowest order in each of the products, which is multiplicative.
    We indicate the latter by writing \(A_T = \bigo\of*{T^k}\), if \(A_T\) is a polynomial in \(T\) of order from at least \(k\).
    Due to the uniformity in \(s\), \(\avg*{T}{\psi} = \bigo\of*{T^{L+j}}\) and hence 
    \begin{equation}
        \psi_{\mathrm{osc}}(sT) \coloneqq  \psi(sT)-\avg*{T}{\psi} = \bigo\of*{T^{L+j}},
    \end{equation}
    where the latter is also uniform for \(s\) in compact intervals.
    Therefore, since by construction
    \begin{equation}
        \int_0^T \psi_{\mathrm{osc}}(s) \dd s = 0,
    \end{equation}
    we get
    \begin{align}
        \norm*{\of*{\Sopeff{j+1,L}{T}(t) - \Sopeff{j+1,L-1}{T}(t)}\psi}
        &= \norm*{\int_0^t \psi_{\mathrm{osc}}(s) \dd{s}}
        \nonumber\\
        &\leq \norm*{\floorof{\textstyle \frac{t}{T}}\displaystyle \int_0^T \psi_{\mathrm{osc}}(s) \dd{s} }
        + \norm*{ \int_{\floorof*{ \textfrac{t}{T}} T }^t \psi_{\mathrm{osc}}(s) \dd{s} }
        \nonumber\\
        &\leq  T \int_{\floorof{\textstyle \frac{t}{T}}  }^{\textstyle \frac{t}{T}} \norm*{ \psi_{\mathrm{osc}}(r T)} \dd{r}
        \nonumber\\
        &\leq T\left(\frac{t}{T}-\left \lfloor\frac{t}{T}\right\rfloor\right) c T^{L+j}
        \nonumber\\
        &= \bigo\of*{T^{L+(j+1)}}       ,
    \end{align}
    thus completing the proof.
\end{proof}

\begin{corollary}\label{thm:asymptotic-relation-Kop}
    Assume \(\of{H(t)}_{t\in\R}\) satisfies \Cref{assump:A-H-cinf,assump:A-H-bound,assump:A-H-period}.
    For every \(j\in \N_0\), \(L \in \N\), and \(\psi \in \cinftyofh\), we have
    \begin{equation}
        \label{eq:asymptotic-relation-Kop}
        \avg*{T}{\Kopeff{\cdot,L}{T}\of*{\Sopeff{j,L}{T}} - \Kopeff{\cdot,L-1}{T}\of*{\Sopeff{j,L-1}{T}} }\psi
        = \bigo\of*{T^{L+j} }
    \end{equation}
    uniformly for \(s\) in compact intervals as \(T \to 0\).
\end{corollary}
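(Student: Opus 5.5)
The corollary follows almost immediately from the intermediate estimate established in the proof of \cref{thm:asymptotic-relation-Sop}. Fix $\psi\in\cinftyofh$, $j\in\N_0$ and $L\in\N$. As in that proof, set
\begin{equation*}
    \psi(s) \coloneqq \eof[\Big]{\Kopeff{s,L}{T}\of*{\Sopeff{j,L}{T}(s)} - \Kopeff{s,L-1}{T}\of*{\Sopeff{j,L-1}{T}(s)}}\psi .
\end{equation*}
The quantity to be estimated is exactly $\avg*{T}{\psi}=\frac1T\int_0^T\psi(s)\dd s$. This is well defined, because both integrands are Bochner integrable on $[0,T]$ by \cref{thm:properties-iterated-actions}. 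To apply that proposition, we use that the $\Heffl{l}$ satisfy \cref{assump:A-H-cinf,assump:A-H-bound} by \cref{thm:Heff-definition-well-posed}, and that, being $t$-independent, they trivially satisfy \cref{assump:A-H-cont}.

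The key step is the uniform bound $\psi(sT)=\bigo\of*{T^{L+j}}$ for $s$ in compact intervals. We obtain it by decomposing
\begin{equation*}
    \psi(sT)= \Heff{L-1}{T}\of*{\Sopeff{j,L}{T}(sT)-\Sopeff{j,L-1}{T}(sT)}\psi - \of*{\Sopeff{j,L}{T}(sT)-\Sopeff{j,L-1}{T}(sT)}H(s)\psi + T^L\Heffl{L}\Sopeff{j,L}{T}(sT)\psi .
\end{equation*}
The three terms are handled as follows.
\begin{itemize}
    \item For the first term, \cref{thm:asymptotic-relation-Sop} gives the bound $\bigo(T^{L+j})$ on the difference of the $S$ operators, applied to $\psi$.
    \item For the second term, the same lemma applies to the vector $H(s)\psi\in\cinftyofh$.
    \item For the third term, \cref{thm:Sopeff-asymptotics} gives $\bigo(T^j)$, multiplied by $T^L$.
\end{itemize}

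Passing the $T$-polynomial operators $\Heff{L-1}{T}$ and $\Heffl{L}$ through these estimates requires uniform relative bounds in powers of $H_0$. We get them by replacing \cref{thm:asymptotic-relation-Sop} and \cref{thm:Sopeff-asymptotics} with their coefficient-level form. Both $\Sopeff{j,L}{T}(sT)$ and $\Sopeff{j,L-1}{T}(sT)$ are polynomials in $T$ by \cref{thm:Sopeff-T-polynomial}, and their difference has vanishing coefficients below order $L+j$. Each coefficient obeys the relative bound \eqref{eq:Sopeff-T-coeffs-bound} uniformly in $s$ on compact sets. Composing with \cref{assump:A-H-bound} for $\Heffl{l}$ and for $H(s)$ then yields, for $s\in[0,1]$ and $0<T\le1$,
\begin{equation*}
    \norm{\psi(sT)}\le c\,T^{L+j},
\end{equation*}
where $c$ depends only on finitely many norms $\norm{H_0^{k}\psi}$.

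Given this, the conclusion is a one-line estimate:
\begin{equation*}
    \norm{\avg*{T}{\psi}}\le\frac1T\int_0^T\norm{\psi(s)}\dd s=\int_0^1\norm{\psi(rT)}\dd r\le c\,T^{L+j}.
\end{equation*}
The claimed uniformity in $s$ is automatic, since the averaged quantity does not depend on $s$.

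The main obstacle is the bookkeeping behind the uniform pointwise bound. We must ensure that the order-counting argument (lowest power of $T$ in a product of $T$-polynomials) is legitimate for unbounded operators. Concretely, each product of coefficients must be relatively bounded with respect to some power of $H_0$ uniformly in $s\in[0,1]$, so that the finitely many higher-order terms are bounded for $T\le1$. I would handle this by expanding everything in $T$-coefficients, as in \cref{thm:AKS-T-polynomial}, and invoking \eqref{eq:Sopeff-T-coeffs-bound} termwise. The lowest surviving power is then read off from the vanishing of the coefficients of $\Sopeff{j,L}{T}-\Sopeff{j,L-1}{T}$ below degree $L+j$.
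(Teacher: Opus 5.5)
Your proposal is correct and follows the paper's proof essentially step for step. You rescale the $T$-average to an integral over $s\in[0,1]$ of the integrand at $sT$, split via $\Heff{L}{T}=\Heff{L-1}{T}+T^L\Heffl{L}$ into the same three terms, and conclude from \cref{thm:asymptotic-relation-Sop} and \cref{thm:Sopeff-asymptotics}. Your coefficient-level bookkeeping via \cref{thm:Sopeff-T-polynomial} and \eqref{eq:Sopeff-T-coeffs-bound} (including the uniform treatment of the $s$-dependent vector $H(s)\psi$ in the second term) just makes explicit what the paper compresses into the remark that all operators involved are polynomials in $T$.
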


\begin{proof}
    Let \(\psi \in \cinftyofh\).
    We have
    \begin{align}
        \MoveEqLeft\nonumber
        \avg*{T}{\Kopeff{\cdot,L}{T}\of*{\Sopeff{j,L}{T}} - \Kopeff{\cdot,L-1}{T}\of*{\Sopeff{j,L-1}{T}} }\psi
        \nonumber\\
        &= \frac{1}{T}\int_{0}^T \of*{\Kopeff{t\primed,L}{T}(\Sopeff{j,L}{T}(t\primed)) - \Kopeff{t\primed,L-1}{T}(\Sopeff{j,L-1}{T}(t\primed))}\psi \dd t\primed
        \nonumber\\
        &= \int_{0}^1 \of*{\Kopeff{sT,L}{T}(\Sopeff{j,L}{T}(sT)) - \Kopeff{sT,L-1}{T}(\Sopeff{j,L-1}{T}(sT))}\psi \dd s
        \nonumber\\
        &= \int_{0}^1 \biggl[\Heff{L}{T}\Sopeff{j,L}{T}(sT) - \Heff{L-1}{T}\Sopeff{j,L-1}{T}(sT) \psi
        - \of*{ \Sopeff{j,L}{T}(sT) - \Sopeff{j,L-1}{T}(sT)}H^{(T)}(sT) \psi \biggr]\dd s
        \nonumber\\
        &=
        \begin{aligned}[t]
            \int_{0}^1\biggl[
            &\underbrace{\Heff{L-1}{T}}_{=\bigo(1)}\underbrace{\of*{ \Sopeff{j,L}{T}(sT) -\Sopeff{j,L-1}{T}(sT)}}_{=\bigo\of*{T^{L+j}}}\psi
            - \underbrace{\of*{ \Sopeff{j,L}{T}(sT) - \Sopeff{j,L-1}{T}(sT)}}_{=\bigo\of{T^{L+j}}}\underbrace{H^{(T)}(sT)}_{=\bigo\of*{1}} \psi
            \nonumber\\
            &+ T^L \Heffl{L}\underbrace{\Sopeff{j,L}{T}(sT) }_{=\bigo\of*{T^j}}\psi
            \biggr]\dd s
        \end{aligned}
        \nonumber\\
        & = \bigo\of*{T^{L+j} }
        ,
    \end{align}
    where again we used the fact that all operators in the expression above are polynomials in \(T\), and applied \Cref{thm:asymptotic-relation-Sop} in the last step.
\end{proof}

With the preparatory results in place, we now turn to the characterization of $\Heff{L}{T}$:

\begin{proposition}\label{thm:Heff-condition}
	Assume \((H(t))_{t\in\R}\) satisfies \Cref{assump:A-H-cinf,assump:A-H-bound,assump:A-H-period}, and let \(\Heff{L}{T}\) be given as in \Cref{def:Heff-definition}.
	Then, for every \(\psi \in \cinftyofh\) and \(L\in \N_0\),
	\begin{equation}\label{eq:Heff-condition}
	    \sum_{j=0}^{L} \of*{-\iu}^j \avg*{T}{\Kopeff{\cdot,L}{T}\of{\Sopeff{j,L}{T}}} \psi
        = \bigo\of*{T^{L+1}}
        \quad \text{as}\ T \to 0.
	\end{equation}
Moreover, $\Heff{L}{T}$ is the \emph{unique} polynomial in $T$ of degree $L$ with coefficients satisfying \Cref{assump:A-H-cinf,assump:A-H-bound} that fulfils this property.
\end{proposition}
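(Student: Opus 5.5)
We argue by induction on \(L\in\N_0\), writing the left-hand side of \cref{eq:Heff-condition} as \(R_L(T)\psi\) with \(R_L(T) \coloneqq \sum_{j=0}^{L}(-\iu)^j\avg*{T}{\Kopeff{\cdot,L}{T}(\Sopeff{j,L}{T})}\). By \cref{thm:AKS-T-polynomial}, \(R_L(T)\) is a polynomial in \(T\) whose coefficients are relatively \(H_0\)-bounded on \(\cinftyofh\) (using \cref{thm:Sopeff-T-polynomial}). Hence \(R_L(T)\psi=\bigo(T^{L+1})\) is equivalent to the vanishing of the coefficients \(\Tcoeff{R_L}{l}\) for \(l=0,\dots,L\). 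So the plan is to compute these low coefficients and compare them with \cref{def:Heff-definition}.

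\textbf{Step 1 (base case).} For \(j=0\) we have \(\avg*{T}{\Kopeff{\cdot,L}{T}(\one)} = \Heff{L}{T}-\avg*{T}{H^{(T)}} = \Heff{L}{T}-\avg*{1}{H}\) by periodicity. For \(L=0\) this vanishes identically, since \(\Heffl{0}=\avg*{1}{H}\).

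\textbf{Step 2 (low coefficients).} Write
\[
R_L(T) = \Heff{L}{T}-\avg*{1}{H} + \sum_{j=1}^{L}(-\iu)^j\avg*{T}{\Kopeff{\cdot,L}{T}(\Sopeff{j,L}{T})}.
\]
The key step is to replace \(L\) by \(L-j\) inside the \(j\)-th summand without changing the coefficients of order \(\le L\). Iterating \cref{thm:asymptotic-relation-Kop}, we get
\[
\avg*{T}{\Kopeff{\cdot,L}{T}(\Sopeff{j,L}{T})}-\avg*{T}{\Kopeff{\cdot,L-j}{T}(\Sopeff{j,L-j}{T})}
=\sum_{i=L-j+1}^{L}\bigo(T^{i+j})=\bigo(T^{L+1}),
\]
since \(i+j\ge L+1\). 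Both sides are polynomials in \(T\) with relatively bounded coefficients, so this \(\bigo\) statement on every \(\psi\in\cinftyofh\) forces the coefficients of orders \(0,\dots,L\) to agree.

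Consequently, for \(0\le l\le L\),
\[
\Tcoeff{R_L}{l} = \Tcoeff*{\Heff{L}{T}-\avg*{1}{H} + \sum_{j=1}^{L}(-\iu)^j\avg*{T}{\Kopeff{\cdot,L-j}{T}(\Sopeff{j,L-j}{T})}}{l}.
\]
This vanishes exactly by \cref{eq:Heff-definition-LT}. Note that the operators \(\Heff{L-j}{T}\) entering here are those already constructed, and that \(\Heff{L}{T}\) is precisely the \([0;L]\)-truncation of the expression. This proves existence.

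\textbf{Step 3 (uniqueness).} Let \(\tilde H_T=\sum_{l=0}^L T^l\tilde H^{[l]}\) be another polynomial with admissible coefficients satisfying \cref{eq:Heff-condition}. We show \(\tilde H^{[l]}=\Heffl{l}\) by induction on \(l\).

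The key observation is the triangular structure of \cref{thm:AKS-T-polynomial,thm:Sopeff-T-polynomial}. Since \(\Sopeff{j}{}\) starts at order \(T^j\), the coefficient of \(T^l\) in the \(j\geq1\) summands involves only \(\tilde H^{[l']}\) with \(l'\le l-j\le l-1\); this holds via the \(\mathcal N_{L,j,k}\) index constraints together with \(\Kopeff{\cdot,0}{1}\) depending only on \(\tilde H^{[0]}\). The only occurrence of \(\tilde H^{[l]}\) at order \(T^l\) is the linear term from \(j=0\).

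Thus the order-\(l\) condition reads \(\tilde H^{[l]}=F_l(\tilde H^{[0]},\dots,\tilde H^{[l-1]})\), with the same function \(F_l\) as for \(\Heffl{l}\) (using the truncation argument of Step 2 again). By the inductive hypothesis, \(\tilde H^{[l]}=\Heffl{l}\) on \(\cinftyofh\).

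\textbf{Main obstacle.} The delicate point is carrying the \(\bigo\) statements of \cref{thm:asymptotic-relation-Kop} over to exact coefficient identities. Polynomials with unbounded coefficients are only controlled vectorwise, so we argue pointwise on each \(\psi\): a vector-valued polynomial \(\sum_k T^k v_k\) that is \(\bigo(T^{L+1})\) must have \(v_k=0\) for \(k\le L\).

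Checking that the \(j\)-th summand's order-\(l\) coefficient does not involve \(\tilde H^{[l]}\) for \(j\ge1\) will require careful bookkeeping with \cref{eq:AKS-T-polynomial-2}.
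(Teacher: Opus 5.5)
Your proposal is correct and follows essentially the paper's route: you rewrite $\avg*{T}{\Kopeff{\cdot,L}{T}(\Sopeff{j,L}{T})}$ at level $L-j$ by iterating \cref{thm:asymptotic-relation-Kop}, turn the $\bigo(T^{L+1})$ condition into vanishing of the vector-valued coefficients of orders $0,\dots,L$, and read these off against the defining truncation of $\Heff{L}{T}$. The only difference is in the uniqueness step, where you use the triangular dependence of the $T^l$-coefficients (via $\mathcal{N}_{L,j,k}$) directly at level $L$, whereas the paper first applies the same $L\to L-j$ reduction to the competitor; both arguments rest on the same triangularity.
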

\begin{proof}
Consider an operator of the form
\begin{equation}
	\hat{H}_{\mathrm{eff},L}^{(T)} = \sum_{l=0}^{L} T^l \hat{H}_{\mathrm{eff}}^{[l]},
\end{equation}
with some unspecified coefficients \(\hat{H}_{\mathrm{eff}}^{[l]}\) satisfying \Cref{assump:A-H-cinf,assump:A-H-bound}, and let \(\psi \in \cinftyofh\). Let $\Kopeff{\cdot,L}{T}$ and $\Sopeff{j,L}{T}$ be constructed as in \cref{eq:def-Kopeff,eq:def-Sopeff,eq:def-Sopeff-0} but with $\hat{H}_{\mathrm{eff},L}^{(T)}$ in place of $\Heff{L}{T}$. Applying \cref{eq:asymptotic-relation-Kop} iteratively, we obtain in case $L>0$:
    \begin{align}
        \avg*{T}{\Kopeff{\cdot,L}{T}\of{\Sopeff{j,L}{T}}}\psi
        &= \avg*{T}{\Kopeff{\cdot,L-1}{T}\of{\Sopeff{j,L-1}{T}}}\psi + \bigo\of{T^{L+j}}
        \nonumber\\
        &= \avg*{T}{\Kopeff{\cdot,L-2}{T}\of{\Sopeff{j,L-2}{T}}}\psi + \bigo\of{T^{(L-1)+j}} + \bigo\of{T^{L+j}}
        \nonumber\\
        &= \avg*{T}{\Kopeff{\cdot,L-j}{T}\of{\Sopeff{j,L-j}{T}}}\psi + \bigo\of{T^{(L-(j-1))+j}} + \dots + \bigo\of{T^{L+j}}
        \nonumber\\
        &= \avg*{T}{\Kopeff{\cdot,L-j}{T}\of{\Sopeff{j,L-j}{T}}}\psi + \bigo\of{T^{(L+1}}
        .
    \end{align}
    This gives us the following chain of equivalences: 
    \begin{align}
        &&
        \bigo\of{T^{L+1}}
        &=\sum_{j=0}^{L} \of*{-\iu}^j \avg*{T}{ \Kopeff{\cdot,L}{T}\of{ \Sopeff{j,L}{T}} } \psi
        &
        \nonumber\\
        \iff
        &&
        \bigo\of{T^{L+1}}
        &=\Heff{L}{T}\psi - \avg*{1}{H}\psi
        +\sum_{j=1}^{L} \of*{-\iu}^j \avg*{T}{ \Kopeff{\cdot,L-j}{T}\of{ \Sopeff{j,L-j}{T}} } \psi
        &
        \nonumber\\
        &&
        &=\Heff{L}{T}\psi - \avg*{1}{H}\psi
        + \sum_{l=1}^{L(L+1)+L} T^l \Tcoeff{\sum_{j=1}^{L} \of*{-\iu}^j \avg*{T}{ \Kopeff{\cdot,L-j}{T}\of{ \Sopeff{j,L-j}{T}} }}{l} \psi
        &
        \nonumber\\
        \iff
        &&
        \bigo\of{T^{L+1}}
        &=\Heff{L}{T}\psi - \avg*{1}{H}\psi
        + \sum_{l=1}^{L} T^l \Tcoeff{\sum_{j=1}^{L} \of*{-\iu}^j \avg*{T}{ \Kopeff{\cdot,L-j}{T}\of{ \Sopeff{j,L-j}{T}} }}{l} \psi
        &
        \nonumber\\
        \label{eq:Heff-condition-proof-1}
        &&
        &=
            \hat{H}_{\mathrm{eff}}^{[0]}\psi - \avg*{1}{H}\psi
            + \sum_{l=1}^{L} T^l \of*{\hat{H}_{\mathrm{eff}}^{[l]} + \Tcoeff{\sum_{j=1}^{L} \of*{-\iu}^j \avg*{T}{ \Kopeff{\cdot,L-j}{T}\of{ \Sopeff{j,L-j}{T} } } }{l} } \psi
        .
    \end{align}
    Considering the limit \(T \to 0\), \cref  {eq:Heff-condition-proof-1} implies
    \begin{equation}
        \hat{H}_{\mathrm{eff}}^{[0]}\psi = \avg*{1}{H}\psi = \Heffl{0}\psi
    \end{equation}
    where the second equality follows from \cref{eq:Heff-definition-0}. Dividing \cref  {eq:Heff-condition-proof-1} by \(T\) and again considering the limit \(T \to 0\) afterwards, we get
    \begin{align}
        \hat{H}_{\mathrm{eff}}^{[1]}\psi 
        &= - \Tcoeff{\sum_{j=1}^{L} \of*{-\iu}^j \avg{T}{ \Kopeff{\cdot,L-j}{T}\of{ \Sopeff{j,L-j}{T} } } }{1}\psi
        = \Heffl{1}\psi
    \end{align}
    where the middle term depends on $\hat{H}_{\mathrm{eff}}^{[l]}$ only for $l=0$, which was already shown to be equal to $\Heffl{0}$ in the previous step, and hence the second equality follows from \cref{eq:Heff-definition-l}. 
    Iterating this procedure proves \(\hat{H}_{\mathrm{eff}}^{[l]} = \Heffl{l}\) order by order, for \(l = 0, \dots, L\). In other words, \cref{eq:Heff-condition} implies the uniqueness $\hat{H}_{\mathrm{eff},L}^{(T)}=\Heff{L}{T}$. The reverse implication follows easily by going backwards through the above construction.
\end{proof}

Having established a precise characterization of the effective Hamiltonian introduced in \cref{def:Heff-definition}, we now proceed to the asymptotic analysis of the associated dynamics by exploiting the results of \cref{sec:iterated-integration-by-parts}. 
This requires an additional structural assumption, namely that the effective Hamiltonian is symmetric and admits self-adjoint extensions.
The following theorem constitutes the first main result of this section.

\begin{theorem}\label{thm:general-error-ip}
    Assume \((U^{(T)}(t))_{t\in\R}\), for all $T>0$, and \((H(t))_{t\in\R}\) satisfy \cref{assump:A-H-cinf,assump:A-U-cinf,assump:A-H-cont,assump:A-U-diff,assump:A-H-bound,assump:A-H-period,assump:A-U-bound}, and let \(\Heff{L}{T}\) be given as in \cref{def:Heff-definition}.
    Moreover, assume that \(\Heff{L}{T}\) is symmetric and admits at least one self-adjoint extension \(\Heffext{L}{T}\).
    Then, for every \(L \in \N_0\) and \(\psi \in \cinftyofh\),
    \begin{equation}
     \label{eq:general-error-ip-2}
       \of*{\e^{-\iu t \Heffext{L}{T} } - U^{(T)}(t)}\psi
        = \bigo\of*{T} \qquad\text{for } t = \bigo\of*{T^{-L}};
\end{equation}
moreover, for every \(q\in \Z\),
\begin{equation}
    \label{eq:general-error-ip-3}
        \of*{\e^{-\iu qT \Heffext{L}{T} } - U^{(T)}(qT)}\psi
        = \bigo\of*{T^{L+2}}.
\end{equation}
\end{theorem}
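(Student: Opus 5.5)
We apply \cref{thm:iterated-integration-by-parts} with $U_1(t)=\e^{-\iu t\Heffext{L}{T}}$, $H_1(t)=\Heff{L}{T}$ (constant in $t$), $U_2=U^{(T)}$, $H_2=H^{(T)}$, and with the same $L$. First we check its hypotheses. By \cref{thm:Heff-definition-well-posed}, $\Heff{L}{T}$ satisfies \cref{assump:A-H-cinf,assump:A-H-bound}. Continuity in $t$ is trivial because $H_1$ does not depend on time. Since $\Heffext{L}{T}$ extends $\Heff{L}{T}$ and $\cinftyofh\subseteq\domof{\Heff{L}{T}}$, Stone's theorem gives $\odv{}{t}U_1(t)\adj\psi=\iu U_1(t)\adj\Heff{L}{T}\psi$ for $\psi\in\cinftyofh$. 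This is the required differentiability: the stated formula \cref{eq:assump-diff-U1} has a typo and should contain $U_1(t)\adj$, as the proof of \cref{thm:iterated-differentiability} uses. The hypotheses on $U_2$ are exactly \cref{assump:A-U-cinf,assump:A-U-diff} together with \cref{assump:A-H-cinf,assump:A-H-cont,assump:A-H-bound} for $H^{(T)}$. We obtain the identity \eqref{eq:iterated-integration-by-parts}, which splits the error into a boundary term $B(t)=\sum_{j=1}^{L+1}(-\iu)^j\Sopeff{j,L}{T}(t)U^{(T)}(t)\psi$, an averaged integral term $A(t)$, and a remainder integral $R(t)$ containing $\Kopeff{s,L}{T}(\Sopeff{L+1,L}{T}(s))$.

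Next we estimate each term using unitarity of $U_1(t)U_1(s)\adj$. For the boundary term, \cref{thm:Sopeff-asymptotics} gives $\norm{\Sopeff{j,L}{T}(t)\phi}\le\sum_{k\ge j}T^k(\tilde a\norm{H_0^{\tilde k}\phi}+\tilde b\norm{\phi})$ with constants uniform in $t$, because the coefficients in \cref{thm:Sopeff-T-polynomial} are $1$-periodic and hence uniformly bounded on $[0,1]$. We insert $\phi=U^{(T)}(t)\psi$ and control $\norm{H_0^{\tilde k}U^{(T)}(t)\psi}$ by \cref{assump:A-U-bound}, uniformly as $T\to0$. This gives $B(t)=\bigo(T)$. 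At $t=qT$ every $\Sopeff{j,L}{T}(qT)$ vanishes: it is $T$-periodic with $\Sopeff{j,L}{T}(0)=0$, by \cref{eq:def-Sopeff}. Hence $B(qT)=0$.

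For the averaged term we use \cref{thm:Heff-condition}. The sum $\sum_{j=0}^L(-\iu)^j\avg{T}{\Kopeff{\cdot,L}{T}(\Sopeff{j,L}{T})}$ is a polynomial in $T$ whose coefficients are relatively $H_0$-bounded (\cref{thm:AKS-T-polynomial}), and it is $\bigo(T^{L+1})$. So all coefficients of order $\le L$ vanish, and its action on $U^{(T)}(s)\psi$ is bounded by $T^{L+1}$ times quantities controlled by \cref{assump:A-U-bound}. Integrating over $s\in[0,t]$ gives $\abs{t}\,\bigo(T^{L+1})$, which is $\bigo(T)$ for $t=\bigo(T^{-L})$ and $\bigo(T^{L+2})$ for $t=qT$ with $q$ fixed.

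The remainder $R(t)$ is naively $\abs{t}\,\bigo(T^{L+1})$, since $\Kopeff{s,L}{T}(\Sopeff{L+1,L}{T}(s))=\bigo(T^{L+1})$. That would suffice for $t=qT$ but not for long times. I expect this to be the main obstacle, and the fix is another integration by parts. By \cref{eq:iterated-integration-by-parts-proof-4} with $j=L+1$, $R$ is expressed through $\Sopeff{L+2,L}{T}(t)U^{(T)}(t)\psi$, which is $\bigo(T^{L+2})$; through an average term with $\avg{T}{\Kop{}(\Sopeff{L+1,L}{T})}=\bigo(T^{L+1})$, integrated over time $t$; and through a new remainder of order $\abs{t}T^{L+2}$. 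This still leaves $\abs{t}\,T^{L+1}$ from the average term, which is $\bigo(T)$ when $t=\bigo(T^{-L})$, so the long-time bound already follows. The relative bounds in $H_0$ compose, with the indices $\tilde k$ adding up, and \cref{assump:A-U-bound} ensures that $a'^{(T)},b'^{(T)}=\bigo(1)$, but only uniformly for $t$ in a compact interval $I$. For $t\sim T^{-L}$ we therefore also need those constants uniformly over growing intervals. I would obtain this by writing $U^{(T)}(t)=U^{(T)}(t-nT)\,U^{(T)}(T)^n$ via periodicity, or else restrict to the quantities controlled there, and this is the delicate point to verify carefully.
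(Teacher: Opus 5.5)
Your argument follows the paper's proof. You apply \cref{thm:iterated-integration-by-parts} with $U_1(t)=\e^{-\iu t\Heffext{L}{T}}$, use the same three-term split, bound each term through the $T$-polynomial structure, periodicity of the coefficients and \cref{assump:A-U-bound}, and use $\Sopeff{j,L}{T}(qT)=0$ at stroboscopic times. You are also right that \cref{eq:assump-diff-U1} should read $\iu U_1(t)\adj H_1(t)\psi$.

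The extra integration by parts is superfluous, and the reason you give for it is wrong. For $|t|\le CT^{-L}$ the naive remainder bound $|t|\,\bigo(T^{L+1})$ is already $\le C'T$, exactly like the averaged term. The paper simply collects $\bigo(T)+\bigo(|t|T^{L+1})$ and stops. Your second integration by parts produces a new averaged term of the same order $|t|T^{L+1}$, so it gains nothing, as your own final step concedes.

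Your worry about uniformity in $t$ is legitimate, but the paper does not settle it either. It takes $I$ to contain $0$ and $t$ and reads off the orders, treating $\mathfrak{a}_i^{(T)},\mathfrak{b}_i^{(T)}$ as independent of $I$. However, \cref{assump:A-U-bound} only gives $\bigo(1)$ constants for each fixed compact $I$.

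Your proposed repair does not close the gap. Even granting the cocycle identity $U^{(T)}(t)=U^{(T)}(t-nT)U^{(T)}(T)^n$, which is not among the abstract hypotheses, you would apply \cref{assump:A-U-bound} about $n\sim T^{-L-1}$ times. This multiplies the constants (roughly $(a'^{(T)})^n$) and raises the $H_0$-power by $n k'$. In the hyperbolic setting, \cref{thm:C-U-bound} gives $a'=\e^{\tilde a_m T}$ per period, so this just reproduces the growth $\e^{\tilde a_m t}$.

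What the long-time statement (for $L\ge 1$) actually needs is \cref{assump:A-U-bound} with constants uniform in $t\in\R$. This holds in the interaction-picture setting, where the estimate in \cref{thm:V-sym-V-bound-implies-U-bound} does not depend on $t$. It does not follow from the Grönwall bound of \cref{thm:C-U-bound}.

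The stroboscopic estimate is unaffected by any of this. For fixed $q$ and $T\le T_0$, all times involved lie in the fixed interval $[-|q|T_0,|q|T_0]$.
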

\begin{proof}
    The existence of the self-adjoint extension $\Heffext{L}{T}$ means that we can apply Stone's theorem and \cref{eq:assump-diff-U1} holds for $H_1(t)=\Heffext{L}{T}$. Applying then \Cref{thm:iterated-integration-by-parts}, we obtain
    \begin{align}
        \MoveEqLeft\nonumber
        \norm*{\of*{\e^{-\iu t \Heffext{L}{T} } - U^{(T)}(t)}\psi}
        \\
        &\leq
        \begin{aligned}[t]
            \sum_{j=1}^{L+1} \norm*{\Sopeff{j,L}{T}(t) U^{(T)}(t) \psi}
            + \int_0^t \Bigg\Vert
                \e^{-\iu (t-s) \Heffext{L}{T} } \Bigg(
            &    \sum_{j=0}^L (-\iu)^j \avg*{T}{\Kopeff{\cdot,L}{T}\of*{\Sopeff{j,L}{T}}}
            \\
            &+ (-\iu)^{L+1} \Kopeff{s,L}{T}\of*{\Sopeff{L+1,L}{T}(s)}
            \Bigg) U^{(T)}(s) \psi\,
            \Bigg\Vert \dd s
        \end{aligned}
        \nonumber\\
        &\leq
        \begin{aligned}[t]
            &\sum_{j=1}^{L+1} \norm*{\Sopeff{j,L}{T}\of*{T}U^{(T)}(t) \psi}
            + \int_0^t \norm*{
                \sum_{j=0}^L (-\iu)^j \avg*{T}{\Kopeff{\cdot,L}{T}\of*{\Sopeff{j,L}{T}}}
                U^{(T)}(s) \psi\,
            }
            \dd s
            \\
            &+\int_0^t
            \norm*{
                \Kopeff{s,L}{T}\of*{\Sopeff{L+1,L}{T}(s)}
            U^{(T)}(s) \psi\,} \dd s
            .
        \end{aligned}
    \end{align}
    We treat each of these three terms individually with a similar strategy: we write the relevant operator (up to \(U^{(T)}\)) as a polynomial in \(T\), apply the triangle inequality and relative boundedness of the coefficients, and finally make use of \cref{assump:A-U-bound}.
    By this procedure, we obtain an upper bound in which the dependence on \(T\) and \(t\) is explicit and informs us about the desired asymptotics.

    For the first term on the right-hand side, we use \Cref{thm:Sopeff-asymptotics,assump:A-U-bound} to obtain, for every compact interval $I$, some \(\mathfrak{m}_1\geq 0\), \(0 \leq \mathfrak{a}_{1}^{(T)}= \bigo(1)\) and \(0 \leq \mathfrak{b}_{1}^{(T)} = \bigo(1)\)  as \(T\to 0\), such that
    \begin{align}
        \norm*{\Sopeff{l,L}{T}\of*{t}U^{(T)}(t) \psi}
        &\leq
        \sum_{k=j}^{j(L+1)} T^k \of*{
             \mathfrak{a}_{1}^{(T)}\norm*{H_0^{\mathfrak{m}_1} \psi}
            + \mathfrak{b}_{1}^{(T)} \norm*{\psi}
        }
        ,
    \end{align}
    for all $t\in I$ and $\psi\in\cinftyofh$.
    with some \(0 \leq \mathfrak{a}_{1}^{(T)}= \bigo(1)\), \(0 \leq \mathfrak{b}_{1}^{(T)} = \bigo(1)\), \(\mathfrak{m}_1\geq 0\) as \(T\to 0\).

    For the second term, note that, by \cref{thm:Heff-condition,thm:AKS-T-polynomial},
    \begin{align}
        \sum_{j=0}^{L} (-\iu)^j \avg*{T}{\Kopeff{\cdot,L}{T}\of*{\Sopeff{j,L}{T}} }
        &= \sum_{j=0}^{L} (-\iu)^j \sum_{k = j}^{j(L+1)+L}
            T^k \Tcoeff{ \avg*{T}{\Kopeff{}{L}\of*{\Sopeff{j,L}{T}} } }{k}
        \nonumber\\
        & = \sum_{j=0}^{L} (-\iu)^j \sum_{\substack{k = j\\k \geq L+1}}^{j(L+1)+L}
            T^k \Tcoeff{ \avg*{T}{\Kopeff{}{L}\of*{\Sopeff{j,L}{T}} } }{k}
    		\nonumber\\
        & = \sum_{j=0}^{L} (-\iu)^j \sum_{k = L+1}^{j(L+1)+L}
            T^k \Tcoeff{ \avg*{T}{\Kopeff{}{L}\of*{\Sopeff{j,L}{T}} } }{k}
    		\nonumber\\
        & = \sum_{j=1}^{L} (-\iu)^j \sum_{k = L+1}^{j(L+1)+L}
            T^k \Tcoeff{ \avg*{T}{\Kopeff{}{L}\of*{\Sopeff{j,L}{T}} } }{k}
    		.
    \end{align}
    We then use \cref{assump:A-U-bound} and the relative boundedness of each coefficient \(\Tcoeff{ \avg*{T}{\Kopeff{}{L}\of*{\Sopeff{j,L}{T}} } }{k}\) to obtain, for every compact interval $I$, coefficients $\mathfrak{m}_2\geq 0$, \(0 \leq \mathfrak{a}_{2}^{(T)}= \bigo(1)\) and \(0 \leq \mathfrak{b}_{2}^{(T)} = \bigo(1)\), \ as \(T\to 0\), such that
    \begin{align}
        &\norm*{
            \sum_{j=0}^{L} (-\iu)^l \avg*{T}{\Kopeff{\cdot,L}{T}\of*{\Sopeff{j,L}{T}} } U^{(T)}(s)\psi
        }
        \leq
        \sum_{j=1}^{L} \sum_{k = L+1}^{j(L+1)+L}
            T^k
            \of*{\mathfrak{a}_{2}^{(T)}\norm*{H_0^{\mathfrak{m}_{2}} \psi} + \mathfrak{b}_{2}^{(T)} \norm*{\psi}}
    \end{align}
    for all $s\in I$ and $\psi\in\cinftyofh$.

    For the third term, we use \cref{thm:Sopeff-T-polynomial,def:Heff-definition} to obtain, for every compact interval $I$, some \(\mathfrak{m}_3\geq 0\), \(0 \leq \mathfrak{a}_{3}^{(T)}= \bigo(1)\) and \(0 \leq \mathfrak{b}_{3}^{(T)} = \bigo(1)\), as \(T\to 0\), such that
    \begin{align}
        \MoveEqLeft\nonumber
            \norm*{
                \Kopeff{s,L}{T}\of*{\Sopeff{L+1,L}{T}(s)}
        U^{(T)}(s) \psi}
        \nonumber\\
        &\leq
        \sum_{k=L+1}^{(L+1)^2}\sum_{l=0}^{L} T^{k+l} 
        \norm*{
            \of*{
                \Heffl{l} \Tcoeff{\Sopeff{L+1,L}{T}}{k}(\textfrac{s}{T})
                - \Tcoeff{\Sopeff{L+1,L}{T}}{k}(\textfrac{s}{T}) H(\textfrac{s}{T})
            }
        U^{(T)}(s) \psi}
        \nonumber\\
        &\leq
        \sum_{k=L+1}^{(L+1)^2}\sum_{l=0}^{L} T^{k+l}
        \of*{\mathfrak{a}_{3}^{(T)}\norm*{H_0^{\mathfrak{m}_{3}} \psi} + \mathfrak{b}_{3}^{(T)} \norm*{\psi}}
        ,
    \end{align}
    for all $s\in I$ and $\psi\in\cinftyofh$. Here, we used the fact that the relative boundedness coefficients of \(\Tcoeff{\Sopeff{j,L}{T}}{k}(\textfrac{s}{T})\) and $H(\textfrac{s}{T})$ can be chosen independently of \(I\) and \(T\) due to periodicity  (cf. \cref{thm:Sopeff-asymptotics}).

    Choosing the compact interval $I$ above such that it contains $0$ and $t$, we obtain
    \begin{align}
        \MoveEqLeft\nonumber
        \norm*{\of*{\e^{-\iu t \Heffext{L}{T} } - U^{(T)}(t)}\psi}
        \nonumber\\
        &\leq
        \begin{aligned}[t]
            &\sum_{j=1}^{L+1}\sum_{k=j}^{j(L+1)} T^k \of*{
                 \mathfrak{a}_{1}^{(T)}\norm*{H_0^{\mathfrak{m}_1} \psi}
                + \mathfrak{b}_{1}^{(T)} \norm*{\psi}
            }
            + t\sum_{l=1}^{L} \sum_{k = L+1}^{l(L+1)+L}
                T^k
                \of*{\mathfrak{a}_{2}^{(T)}\norm*{H_0^{\mathfrak{m}_{2}} \psi} + \mathfrak{b}_{2}^{(T)} \norm*{\psi}}
            \\
            &+ t\sum_{k=L+1}^{(L+1)^2}\sum_{l=0}^{L} T^{k+l}
            \of*{\mathfrak{a}_{3}^{(T)}\norm*{H_0^{\mathfrak{m}_{3}} \psi} + \mathfrak{b}_{3}^{(T)} \norm*{\psi}}
        \end{aligned}
    \end{align}
    which implies \cref{eq:general-error-ip-2} since the lowest-order contribution in this \(T\)-polynomial is of order \(\bigo(T) +  \bigo(t T^{L+1})\).

    In particular, if \(t = qT\) for some $q\in\Z$, we have
    \begin{equation}
        \norm*{\Sopeff{j,L}{T}\of*{t}U^{(T)}(t) \psi}
        = 0
    \end{equation}
    for every \(j \geq 1\) by definition of \(\Sopeff{j,L}{T}\of*{t}\) (cf.~\cref{eq:def-Sopeff}) and its periodicity (cf. \cref{thm:Sopeff-asymptotics}).
    Therefore, we get
    \begin{align}
        \MoveEqLeft\nonumber
        \norm*{\of*{\e^{-\iu qT \Heffext{L}{T} } - U^{(T)}(qT)}\psi}
        \nonumber\\
        &\leq
        \int_0^{qT} \norm*{
            \sum_{j=0}^L (-\iu)^l \avg*{T}{\Kopeff{\cdot,L}{T}\of*{\Sopeff{j,L}{T}}}
            U^{(T)}(s) \psi
        }
        \dd s
        +\int_0^{qT}
        \norm*{
            \Kopeff{t\primed T,L}{T}\of*{\Sopeff{L+1,L}{T}(s)}
        U^{(T)}(s) \psi} \dd s
        \nonumber\\
        &\leq
        qT\sum_{l=1}^{L} \sum_{k = L+1}^{l(L+1)+L}
            T^k
            \of*{\mathfrak{a}_{2}^{(T)}\norm*{H_0^{\mathfrak{m}_{2}} \psi} + \mathfrak{b}_{2}^{(T)} \norm*{\psi}}
        + qT\sum_{k=L+1}^{(L+1)^2}\sum_{l=0}^{L} T^{k+l}
        \of*{\mathfrak{a}_{3}^{(T)}\norm*{H_0^{\mathfrak{m}_{3}} \psi} + \mathfrak{b}_{3}^{(T)} \norm*{\psi}}
    \end{align}
    which implies \cref{eq:general-error-ip-3} and thus concludes the proof.
\end{proof}
\begin{remark}\label{rem:error-bounds}
    The proof of \cref{thm:general-error-ip} is based on providing an upper bound for the norm difference of the original dynamics and the \(L\)th order approximation.
    Following the proof, it is possible to derive explicit, computable error bounds by calculating the factors \(\mathfrak{a}_{i}^{(T)}\), \(\mathfrak{b}_{i}^{(T)}\), and \(\mathfrak{m}_{i}\) for \(i = 1,2,3\) explicitly in terms of the coefficients in \cref{assump:A-H-bound,assump:A-U-bound}.
    We do not state these bounds here, as their expressions become increasingly involved with growing \(L\).
\end{remark}

At this stage, we do not yet have concrete conditions ensuring that the effective Hamiltonians are symmetric and admit self-adjoint extensions, and hence that the assumptions of \cref{thm:general-error-ip} are satisfied.
In the remainder of this section, we provide such conditions in an explicit form. Specifically, we replace the abstract requirement that \(\Heff{L}{T}\) is symmetric and admits a self-adjoint extension by more concrete hypotheses guaranteeing both properties.
The verification of these conditions is, however, technically involved.
To this end, we proceed in two steps.
We first analyse the simpler case of bounded Hamiltonians, where the relevant properties can be established more directly and where the connection with the Floquet--Magnus expansion can be shown.
This bounded setting will serve as a key intermediate step for the treatment of the general case in \cref{sec:symmetry-sa-extensions}.

\subsection{Relation to Floquet\texorpdfstring{--}{–}Magnus expansion and symmetry (bounded case)}
\label{sec:relation-FM-bounded-case}

In this section, we consider the case where the time-dependent Hamiltonian \((H(t))_{t\in\R}\) is bounded, and prove that, in such cases, \(\Heff{L}{T}\) coincides with the \(L\)th order truncation of the Floquet--Magnus expansion \(\HFM{L}{T}\) introduced in \cref{sec:assumption-a} (cf. \cref{def:floquet-magnus}).
In particular, this implies that \(\Heff{L}{T}\) is symmetric in this setting. This observation will later be extended to the unbounded case in \cref{sec:symmetry-sa-extensions}.
The identity \(\Heff{L}{T} = \HFM{L}{T}\) in the bounded case was already observed in \cite{deyErrorBoundsFloquetMagnus2025}, whose approach we follow closely in this subsection. 
The present argument provides a fully self-contained derivation of this relation, avoiding any \emph{a priori} assumptions on self-adjointness and thereby making the underlying structure more transparent.

\begin{remark}\label{rem:Blanes-infinite-dim}
    In this section we will often refer to \cite{Blanes_Casas_Oteo_Ros_2009} as the main reference point for the Floquet--Magnus expansion. There the Floquet--Magnus expansion is mainly treated in the context of finite dimensions; the unwritten assumption of (strong) continuity of \((H(t))_{t\in\R}\) should be added there. A careful check of the constructions shows that the same results hold true in the context of bounded Hamiltonians in infinite dimensions and hence we can apply \cite{Blanes_Casas_Oteo_Ros_2009} henceforth.
\end{remark}

\begin{definition}[\cite[Eqs.~(51) and~(136)]{Blanes_Casas_Oteo_Ros_2009}]\label{def:floquet-magnus}
    For every $T>0$, let \((H^{(T)}(t))_{t\in\R}\) be a strongly continuous family of bounded and symmetric (hence self-adjoint) operators on \(\HH\). For every \(L \in \N_0\), we define the operator \(\HFM{L}{T}\) by
    \begin{equation}
        \HFM{L}{T} = \sum_{l=0}^{L} \HFMl{l}{T},
    \end{equation}
    where
    \begin{equation}\label{eq:FM-expansion-full-expression00}
        \HFMl{l}{T} 
        = -\frac{\iu}{T}\, \Omega_{l+1}^{(T)}(T),
    \end{equation}
    with
    \begin{equation}\label{eq:FM-expansion-full-expression0}
        \Omega_1^{(T)}(t)
        = \iu \int_0^t H^{(T)}(\tau)\, \dd{\tau},
    \end{equation}
    and, for \(l = 2,3,4,\dots\),
    \begin{equation}\label{eq:FM-expansion-full-expression}
        \Omega_l^{(T)}(t)
        = \sum_{j=1}^{l} \frac{B_j}{j!} 
        \sum_{\substack{
            k_1 + \dots + k_j = l - 1 \\
            k_1 \geq 1, \dots, k_j \geq 1
        }}
        \int_{0}^{t} 
        \operatorname{ad}_{\Omega_{k_1}^{(T)}(\tau)}
        \cdots 
        \operatorname{ad}_{\Omega_{k_j}^{(T)}(\tau)} 
        \bigl(\iu H^{(T)}(\tau)\bigr)\, \dd{\tau}.
    \end{equation}
    Here, \(\operatorname{ad}_A = [A,\cdot]\), and \(B_j\) denotes the \(j\)th Bernoulli number.
\end{definition}

As noted in \cref{sec:assumption-a}, if \cref{eq:FM-convergence-condition} holds, the sequence \((\HFM{L}{T})_{L \in \N_0}\) converges in the operator norm as \(L \to \infty\). Hereafter, however, we do not require this condition.

By construction, \(\HFMl{l}{T}\) is a symmetric operator for any \(l \in \N_0\). Indeed, for every \(l \in \N\), \(\Omega_l^{(T)}(t)\) is skew-adjoint: \(\iu H^{(T)}(t)\) is skew-adjoint as \(H^{(T)}(t)\) is symmetric and bounded, and the commutator of two skew-adjoint operators is skew-adjoint.

A straightforward substitution $\sigma=\tau/T$ and induction argument show:

\begin{lemma}\label{thm:FM-expansion-full-expression-rewritten}
    The \(l\)th coefficient of the Magnus expansion in \cref{eq:FM-expansion-full-expression} can be rewritten as
    \begin{equation}
        \Omega_{1}^{(T)}(t)
        = T \int_{0}^{t/T} \iu H(\sigma)\, \dd{\sigma},
    \end{equation}
    and, for \(l \geq 2\),
    \begin{equation}
        \label{eq:FM-expansion-full-expression-rewritten}
        \Omega_{l}^{(T)}(t)
        = T^{l} \sum_{j=1}^{l-1} \frac{B_j}{j!} 
        \sum_{\substack{
            k_1 + \dots + k_j = l - 1 \\
            k_1 \geq 1, \dots, k_j \geq 1
        }}
        \int_{0}^{t/T} 
        \operatorname{ad}_{\Omega_{k_1}^{(1)}(\sigma)}
        \cdots 
        \operatorname{ad}_{\Omega_{k_j}^{(1)}(\sigma)} 
        \bigl(\iu H(\sigma)\bigr)\, \dd{\sigma}.
    \end{equation}
    In particular, \(\HFMl{l}{T} = T^l \HFMl{l}{}\), where \(\HFMl{l}{}\) is independent of \(T\), hence
    \begin{equation}\label{eq:HFM-poly-expansion}
    \HFM{L}{T}
    = \sum_{l=0}^{L} \HFMl{l}{T}
    = \sum_{l=0}^{L} T^l \HFMl{l}{}
    \end{equation}
    is a polynomial in $T$ of degree at most $L$.
\end{lemma}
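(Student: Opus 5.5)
The plan is a joint induction on \(l\) that proves two claims together. The first is the stated identity for \(\Omega_l^{(T)}(t)\). The second is a scaling law for the same object, namely \(\Omega_l^{(T)}(t) = T^l\,\Omega_l^{(1)}(t/T)\) for all \(t\in\R\) and \(T>0\). Throughout, the only inputs are \(H^{(T)}(\tau)=H(\tau/T)\) and the change of variables \(\sigma=\tau/T\), \(\dd\tau = T\,\dd\sigma\). All objects are bounded and strongly continuous in this subsection, so the integrals are well-defined Bochner/strong integrals and the substitution is legitimate.

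For \(l=1\), substituting \(\sigma=\tau/T\) in \cref{eq:FM-expansion-full-expression0} gives
\[
\Omega_1^{(T)}(t) = \iu\int_0^t H(\tau/T)\,\dd\tau = T\int_0^{t/T}\iu H(\sigma)\,\dd\sigma .
\]
Since \(\Omega_1^{(1)}(s)=\int_0^s \iu H(\sigma)\,\dd\sigma\), this already shows \(\Omega_1^{(T)}(t)=T\,\Omega_1^{(1)}(t/T)\).

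For the inductive step, fix \(l\geq 2\) and assume the scaling law holds for all \(k<l\). In \cref{eq:FM-expansion-full-expression}, write \(\iu H^{(T)}(\tau)=\iu H(\tau/T)\) and \(\Omega_{k_i}^{(T)}(\tau) = T^{k_i}\Omega_{k_i}^{(1)}(\tau/T)\). Since \(\operatorname{ad}\) is linear in its subscript, each nested commutator picks up the factor \(T^{k_1+\dots+k_j}=T^{l-1}\). Substituting \(\sigma=\tau/T\) then contributes one more factor of \(T\), which gives the total prefactor \(T^l\) and the upper limit \(t/T\), exactly as in \cref{eq:FM-expansion-full-expression-rewritten}. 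Reading off the case \(T=1\) of the same formula confirms the scaling law for \(\Omega_l\), which closes the induction.

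There is a discrepancy in the range of \(j\) between the definition and the rewritten formula. The definition sums \(j\) from \(1\) to \(l\), while \cref{eq:FM-expansion-full-expression-rewritten} stops at \(l-1\). This is harmless, because the constraint \(k_1+\dots+k_j=l-1\) with every \(k_i\geq1\) forces \(j\leq l-1\). The sum for \(j=l\) is therefore empty, and I will state this explicitly in the proof.

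For the last claim, take \(t=T\) in the scaling law to get \(\Omega_{l+1}^{(T)}(T) = T^{l+1}\,\Omega_{l+1}^{(1)}(1)\). Inserting this into \cref{eq:FM-expansion-full-expression00} gives
\[
\HFMl{l}{T} = -\tfrac{\iu}{T}\,T^{l+1}\Omega_{l+1}^{(1)}(1) = T^l\HFMl{l}{},
\]
where \(\HFMl{l}{} \coloneqq -\iu\,\Omega_{l+1}^{(1)}(1)\) does not depend on \(T\). Summing over \(l=0,\dots,L\) yields \cref{eq:HFM-poly-expansion}, a polynomial in \(T\) of degree at most \(L\).

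The argument has no genuine obstacle. The one point needing care is to run the induction on the scaling law \(\Omega_k^{(T)}(\tau)=T^k\Omega_k^{(1)}(\tau/T)\) for all \(\tau\), not only on the integrated formula, so that the inner \(\Omega_{k_i}^{(T)}\) can be rescaled inside the \(\tau\)-integral.
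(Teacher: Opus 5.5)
Your proof is correct and follows the route the paper indicates: the substitution \(\sigma=\tau/T\) together with an induction, which you make precise by carrying the pointwise scaling law \(\Omega_k^{(T)}(\tau)=T^k\,\Omega_k^{(1)}(\tau/T)\) through the induction so that the inner terms can be rescaled under the integral. Your observation that the \(j=l\) summand in the defining recursion is empty, which reconciles the two summation ranges, is also correct.
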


For example, for the first two coefficients we have the following explicit expressions:
\begin{align}
    \label{eq:HFM-poly-expansion-l-0}
    \HFMl{0}{}
    &= \int_{0}^{1} H(\sigma_1)\, \dd{\sigma_1},
    \\
    \HFMl{1}{}
    &= \frac{\iu}{2} \int_{0}^{1} 
    \left( \int_{0}^{\sigma_1} 
    \comm{H(\sigma_1)}{H(\sigma_2)}\, \dd{\sigma_2} \right)
    \dd{\sigma_1}.
\end{align}

We claim the following result concerning the relation between $\HFM{L}{T}$ and the effective Hamiltonians introduced in the previous section, cf. \cref{def:Heff-definition}:

\begin{theorem}[{Equivalence to the Floquet--Magnus expansion (bounded case) \cite{deyErrorBoundsFloquetMagnus2025}}]\label{thm:equivalence-FM-bounded}
	For every $T>0$, let \(\of*{H^{(T)}(t)}_{t\in \R}\) be a strongly continuous family of bounded operators satisfying \cref{assump:A-H-period}, and let \(\Heff{L}{T}\) and \(\HFM{L}{T}\) be as in \cref{def:Heff-definition} and \cref{def:floquet-magnus}, respectively, for any \(L \in \N_0\). 
    Then
\begin{equation}
    \Heff{L}{T} = \HFM{L}{T}.
\end{equation}
In particular, \(\Heff{L}{T}\) is symmetric and hence, being bounded, it is self-adjoint.
\end{theorem}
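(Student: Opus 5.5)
The natural route is through the uniqueness statement of \cref{thm:Heff-condition}. By \cref{thm:FM-expansion-full-expression-rewritten}, \(\HFM{L}{T}=\sum_{l=0}^L T^l\HFMl{l}{}\) is a polynomial of degree at most \(L\) in \(T\). Its coefficients are bounded operators, so \cref{assump:A-H-cinf,assump:A-H-bound} hold trivially with \(H_0=0\), or with any \(H_0\) if we work on all of \(\HH\). It therefore suffices to show that \(\HFM{L}{T}\) satisfies condition \eqref{eq:Heff-condition}; uniqueness then forces \(\HFM{L}{T}=\Heff{L}{T}\). Symmetry follows at once, since each \(\HFMl{l}{T}\) is symmetric (the \(\Omega_l^{(T)}\) are skew-adjoint), and a bounded symmetric operator is self-adjoint.

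To verify \eqref{eq:Heff-condition} for \(H_1=\HFM{L}{T}\) we use \cref{thm:iterated-integration-by-parts}. In the bounded setting every hypothesis there is automatic: \(U_1(t)=\e^{-\iu t\HFM{L}{T}}\), and \(U_2=U^{(T)}\) is the norm-differentiable propagator. Evaluating at \(t=qT\) makes all \(\Sopeff{j,L}{T}(qT)\) vanish by periodicity, as in the proof of \cref{thm:general-error-ip}. Writing \(R_L(T):=\sum_{j=0}^L(-\iu)^j\avg*{T}{\Kopeff{\cdot,L}{T}(\Sopeff{j,L}{T})}\), which is a polynomial in \(T\) by \cref{thm:AKS-T-polynomial}, we take \(q=1\) and obtain
\[
\of*{\e^{-\iu T\HFM{L}{T}}-U^{(T)}(T)}
= -\iu\int_0^T \e^{-\iu(T-s)\HFM{L}{T}}\Bigl(R_L(T)+(-\iu)^{L+1}\Kopeff{s,L}{T}\of*{\Sopeff{L+1,L}{T}(s)}\Bigr)U^{(T)}(s)\,\dd s .
\]
The remainder term is \(\bigo(T^{L+1})\) in norm, uniformly in \(s\in[0,T]\), so its integral is \(\bigo(T^{L+2})\). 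Next, the Floquet--Magnus side: by the standard Magnus theory in \cite{Blanes_Casas_Oteo_Ros_2009}, valid for bounded operators by \cref{rem:Blanes-infinite-dim}, \(\Omega^{[L+1]}(T)=\sum_{l\le L+1}\Omega_l^{(T)}(T)\) satisfies \(U^{(T)}(T)=\e^{\Omega^{[L+1]}(T)}+\bigo(T^{L+2})\) as an asymptotic statement, without any convergence requirement. This holds because the truncated Magnus series reproduces the Dyson series up to order \(T^{L+1}\). Hence \(\e^{-\iu T\HFM{L}{T}}-U^{(T)}(T)=\bigo(T^{L+2})\).

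Combining the two, we get \(\int_0^T \e^{-\iu(T-s)\HFM{L}{T}}R_L(T)U^{(T)}(s)\,\dd s=\bigo(T^{L+2})\). Since \(\e^{-\iu(T-s)\HFM{L}{T}}R_L(T)U^{(T)}(s)=R_L(T)+\bigo(T)\,\norm{R_L(T)}\) uniformly for \(s\in[0,T]\), this gives \(T R_L(T)(1+\bigo(T))=\bigo(T^{L+2})\), so \(R_L(T)=\bigo(T^{L+1})\). This is exactly \eqref{eq:Heff-condition}, and \cref{thm:Heff-condition} concludes the argument.

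The main obstacle is the asymptotic, truncation-level claim \(U^{(T)}(T)=\e^{\Omega^{[L+1]}(T)}+\bigo(T^{L+2})\) without assuming convergence. I would prove it directly: the \(\Omega_l^{(T)}(t)\) are defined precisely so that \(\Omega^{[L+1]}(t)\) solves \(\frac{\dd}{\dd t}\Omega=\sum_j \frac{B_j}{j!}\operatorname{ad}_\Omega^j(-\iu H^{(T)})\) modulo \(\bigo(t^{L+2})\) terms. Using the bounded formula for the derivative of the exponential map, \(\e^{\Omega^{[L+1]}(t)}\) then solves the Schrödinger equation up to an \(\bigo(T^{L+1})\) defect on \([0,T]\). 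Duhamel's formula turns this defect into an \(\bigo(T^{L+2})\) error. Some care is needed to track the sign and \(\iu\) conventions of \cref{def:floquet-magnus}, and to invert the \(1+\bigo(T)\) factor in the final step.
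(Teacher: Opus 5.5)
Your proof is correct, and it takes a genuinely different route from the paper's.

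\textbf{The paper's route.} It works at the level of exponentials. \Cref{thm:asymptotics-dynamics-stroboscopic-bounded} gives $\e^{-\iu T\Heff{L}{T}}-\Lambda(T)=\bigo(T^{L+2})$. Convergence of the Floquet--Magnus series for $T\int_0^1\norm{H(s)}\dd{s}<\pi$ gives the same estimate for $\e^{-\iu T\HFM{L}{T}}$. The holomorphic-logarithm estimate of \cref{thm:continuity-logarithm} then passes from exponentials to generators. Finally, a degree-$L$ polynomial that is $\bigo(T^{L+1})$ must vanish. Because $\Heff{L}{T}$ is not yet known to be symmetric, this route needs the non-unitary formula \cref{thm:iterated-integration-by-parts-bounded} and power-series exponentials of a possibly non-normal operator.

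\textbf{Your route.} You feed the operator already known to be self-adjoint, $\HFM{L}{T}$, into the unitary identity \cref{thm:iterated-integration-by-parts}. You show it satisfies \eqref{eq:Heff-condition}, and conclude by the uniqueness half of \cref{thm:Heff-condition}, which the paper proves but does not use here.

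\textbf{What each buys.} Your route never exponentiates $\Heff{L}{T}$ and dispenses with both the non-unitary machinery and the logarithm lemma. It also makes transparent that the Floquet--Magnus truncation is singled out by exactly the asymptotic condition that defines $\Heff{L}{T}$. Your extraction of $R_L(T)$ from the integral replaces the logarithm step and is elementary, since $R_L(T)$ does not depend on $s$ and both outer factors are $\one+\bigo(T)$ on $[0,T]$.

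What you call the main obstacle is in fact the paper's first step. Only $T\to0$ matters, so for small $T$ the series converges and $U^{(T)}(T)=\e^{-\iu T\HFM{}{T}}$. Hence $\norm{\e^{-\iu T\HFM{}{T}}-\e^{-\iu T\HFM{L}{T}}}\le T\norm{\HFM{}{T}-\HFM{L}{T}}=\bigo(T^{L+2})$. Your dexp/Duhamel argument also works but is optional. It is itself a small-$T$ argument, since the $\operatorname{dexp}^{-1}$ series converges only for $\norm{\Omega}<\pi$.

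\textbf{On conventions.} Your caution is justified. Read with $B_1=-\frac12$, which is how the displayed $\HFMl{1}{}$ arises, \cref{def:floquet-magnus} produces the odd orders with the wrong sign. A direct computation from \cref{def:Heff-definition} gives $\Heffl{1}=-\frac{\iu}{2}\int_0^1\int_0^{\sigma_1}\comm{H(\sigma_1)}{H(\sigma_2)}\dd{\sigma_2}\dd{\sigma_1}$, which is $-\HFMl{1}{}$ as displayed. The definition matches the dynamics only with $B_1=+\frac12$, or equivalently with $\Omega$ built from $-\iu H$ and $\HFM{}{T}=\frac{\iu}{T}\Omega(T)$. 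Both your argument and the paper's (which asserts $\Lambda(T)=\e^{-\iu T\HFM{}{T}}$) must be read this way.

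\textbf{Two minor points.}
\begin{itemize}
\item Take $H_0=0$, so that $\cinftyofh=\HH$. For an arbitrary unbounded $H_0$, bounded operators need not preserve $\cinftyofh$.
\item For a merely strongly continuous $H$, the propagator $U^{(T)}$ is only strongly differentiable, not norm differentiable. That is all \cref{thm:iterated-integration-by-parts} requires.
\end{itemize}
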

The remainder of this section is devoted to the proof of \cref{thm:equivalence-FM-bounded}. 
To this end, since it is not known \emph{a priori} whether the effective Hamiltonians defined in \cref{def:Heff-definition} are symmetric, we must also consider the dynamics generated by bounded, but not necessarily symmetric operators. 
Accordingly, we establish bounded (non-unitary) analogues of \cref{thm:iterated-integration-by-parts} and of the second statement in \cref{thm:general-error-ip}, adapted to general bounded generators.

Despite working with bounded operators in this section, we continue to work primarily in the strong topology. This is motivated by the way we will apply \cref{thm:equivalence-FM-bounded} later in \cref{sec:symmetry-sa-extensions}. Nevertheless, boundedness significantly simplifies the situation. For instance, the following property can be obtained as a direct consequence of the uniform boundedness principle:
\begin{lemma}\label{thm:product-rule-bounded}
    Let \((A(t))_{t\in\R}\) and \((B(t))_{t\in\R}\) be strongly differentiable families of bounded operators on \(\HH\). 
    Then the product \(A(t)B(t)\) is strongly differentiable as well and satisfies
    \begin{equation}\label{eq:product-rule-bounded}
        \sodv{}{t} \of*{A(t)B(t)  }
        = \of*{\sodv{}{t} A(t)} B(t) 
        + A(t) \of*{\sodv{}{t} B(t)}  
        .
    \end{equation}
\end{lemma}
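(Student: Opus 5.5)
The plan is to write the difference quotient of the product as a sum of two terms and control each with the uniform boundedness principle. Fix $t\in\R$ and $\psi\in\HH$. For $h\neq 0$ we split
\begin{equation*}
    \frac{1}{h}\of*{A(t+h)B(t+h) - A(t)B(t)}\psi
    = A(t+h)\,\frac{1}{h}\of*{B(t+h)-B(t)}\psi
    + \frac{1}{h}\of*{A(t+h)-A(t)}B(t)\psi .
\end{equation*}
By strong differentiability of $A$, applied to the fixed vector $B(t)\psi$, the second term converges to $\of*{\sodv{}{t}A(t)}B(t)\psi$ as $h\to 0$. It therefore remains to show that the first term converges to $A(t)\of*{\sodv{}{t}B(t)}\psi$.

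For the first term we split once more:
\begin{equation*}
    A(t+h)\,\frac{1}{h}\of*{B(t+h)-B(t)}\psi - A(t)\,\sodv{}{t}B(t)\psi
    = A(t+h)\eof*{\frac{1}{h}\of*{B(t+h)-B(t)}\psi - \sodv{}{t}B(t)\psi}
    + \of*{A(t+h)-A(t)}\sodv{}{t}B(t)\psi .
\end{equation*}
The last summand vanishes as $h\to0$, because strong differentiability implies strong continuity of $A$: indeed $(A(t+h)-A(t))\phi = h\cdot\frac1h(A(t+h)-A(t))\phi\to 0\cdot\sodv{}{t}A(t)\phi=0$. The bracketed vector tends to zero by differentiability of $B$. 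Hence it suffices to show that $\norm{A(t+h)}$ is bounded for $h$ in a neighbourhood of $0$.

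This local uniform bound is the only nontrivial step, and here we invoke the uniform boundedness principle. Take any sequence $h_n\to 0$. For each $\phi\in\HH$ the sequence $A(t+h_n)\phi$ converges, by the strong continuity just shown, and is therefore bounded. Banach--Steinhaus then gives $\sup_n\norm{A(t+h_n)}<\infty$. Suppose no neighbourhood bound existed. Then we could pick $h_n\to0$ with $\norm{A(t+h_n)}\to\infty$, contradicting what we just derived. So there are $\delta,C>0$ with $\norm{A(t+h)}\le C$ for $\abs{h}<\delta$. The first summand is then bounded by $C$ times a vanishing norm, and the product rule \cref{eq:product-rule-bounded} follows. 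We do not expect any obstacle beyond this application of Banach--Steinhaus.
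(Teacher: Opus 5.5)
Your proof is correct and follows the route the paper has in mind. The paper gives no written proof and only calls the lemma a direct consequence of the uniform boundedness principle; your use of Banach--Steinhaus to bound $\norm{A(t+h)}$ locally, together with the standard splitting of the difference quotient, is exactly that argument.
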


The following theorem can be viewed as a bounded-operator analogue of \cref{thm:iterated-integration-by-parts}, but it is formulated in a more general, non-unitary setting. The proof follows closely the arguments of \cref{thm:iterated-integration-by-parts}, as boundedness ensures that all relevant limiting processes are compatible with multiplicative operations (cf.~\cref{thm:product-rule-bounded}). We present it here for completeness.

\begin{theorem}[Iterated integration-by-parts (bounded, non-unitary case)]
\label{thm:iterated-integration-by-parts-bounded}
    Let \(\of*{H_1(t)}_{t\in \R}\) and \(\of*{H_2(t)}_{t\in \R}\) be two strongly continuous families of bounded operators and let \((\Lambda_1(t))_{t\in\R}\) and \((\Lambda_2(t))_{t\in\R}\) be the respective solutions of the initial value problems
    \begin{equation}
        \begin{cases}
            \odv{}{t} \Lambda_i(t) \psi
            = - \iu H_i(t)\, \Lambda_i(t)\psi
            & 0 < t
            \\
            \Lambda_i(0)\psi = \psi
            &
        \end{cases}
        \qquad i = 1,2.
    \end{equation}
    Then we have
    \begin{equation}
    \label{eq:iterated-integration-by-parts-bounded}
        \Lambda_1(t) - \Lambda_2(t)
        =
        \begin{multlined}[t]
        \sum_{l=1}^{L+1} (-\iu)^l \Sop{l}(t)\, \Lambda_2(t)
        \\
        - \iu \int_0^t \Lambda_1(t)\, \Lambda_1(s)\inv \Bigg(
            \sum_{l=0}^L (-\iu)^l \avg*{T}{\Kop{}\bigl(\Sop{l}\bigr)}
            + (-\iu)^{L+1} \Kop{s}\bigl(\Sop{L+1}(s)\bigr)
        \Bigg) \Lambda_2(s)\, \dd s .
        \end{multlined}
    \end{equation}
\end{theorem}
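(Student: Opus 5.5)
The plan is to repeat the argument behind \cref{thm:iterated-integration-by-parts}, replacing $U_1(t)\adj$ by $\Lambda_1(t)\inv$ and working with bounded operators throughout, so that no domain questions arise. Since $H_i$ is a strongly continuous family of bounded operators, it is locally uniformly bounded by the uniform boundedness principle. The Dyson series then converges in operator norm on compact intervals and defines the propagators $\Lambda_i(t)$. The inverse $\Lambda_1(t)\inv$ exists: it is the solution of the adjoint-type equation $\odv{}{t}\Lambda_1(t)\inv\psi = \iu \Lambda_1(t)\inv H_1(t)\psi$ with initial value $\one$. I would verify that this solution is a two-sided inverse by differentiating $\Lambda_1(t)\inv\Lambda_1(t)$ with \cref{thm:product-rule-bounded}, which shows it is constant. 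The relation $\Lambda_1(t)\Lambda_1(t)\inv=\one$ follows from uniqueness for the linear ODE, applied to $\Lambda_1(t)\Lambda_1(t)\inv\Lambda_1(t)$. Both families $t\mapsto\Lambda_1(t)\inv$ and $t\mapsto\Lambda_2(t)$ are strongly differentiable, and they are norm-bounded on compact intervals.

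The operators $\Sop{j}(t)$ of \cref{def:iterated-actions} are now bounded. I would check this inductively from strong continuity of $s\mapsto\Kop{s}(\Sop{j-1}(s))$, using the same argument as in \cref{thm:properties-iterated-actions} with $H_0=0$ (all relative bounds are trivial). The derivative formula \eqref{eq:derivative-sop} holds verbatim, with the same proof as in \cref{thm:iterated-differentiability}. Applying \cref{thm:product-rule-bounded} twice then gives
\begin{equation*}
\sodv{}{t}\bigl(\Lambda_1(t)\inv\Sop{j}(t)\Lambda_2(t)\bigr)
= \iu\Lambda_1(t)\inv\Kop{t}\bigl(\Sop{j}(t)\bigr)\Lambda_2(t)
+ \Lambda_1(t)\inv\Bigl(\sodv{}{t}\Sop{j}(t)\Bigr)\Lambda_2(t).
\end{equation*}
Combined with the splitting $\Kop{s}(\Sop{j}(s))=\avg{T}{\Kop{}(\Sop{j})}+\sodv{}{s}\Sop{j+1}(s)$, this yields the analogue of \eqref{eq:iterated-integration-by-parts-proof-4}:
\begin{equation*}
\int_0^t \Lambda_1(t)\Lambda_1(s)\inv\Kop{s}\bigl(\Sop{j}(s)\bigr)\Lambda_2(s)\,\dd s
= \Sop{j+1}(t)\Lambda_2(t) + \int_0^t \Lambda_1(t)\Lambda_1(s)\inv\avg{T}{\Kop{}(\Sop{j})}\Lambda_2(s)\,\dd s
-\iu\int_0^t \Lambda_1(t)\Lambda_1(s)\inv\Kop{s}\bigl(\Sop{j+1}(s)\bigr)\Lambda_2(s)\,\dd s.
\end{equation*}
This step uses the fundamental theorem of calculus for a strongly $C^1$ family, which holds because the derivative is strongly continuous, as in the argument of \cref{thm:iterated-differentiability}. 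It also uses $\Sop{j+1}(0)=0$.

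For the base case I would write $\Lambda_1(t)-\Lambda_2(t) = -\Lambda_1(t)\bigl(\Lambda_1(t)\inv\Lambda_2(t)-\one\bigr)$, differentiate, and integrate. This gives $-\iu\int_0^t\Lambda_1(t)\Lambda_1(s)\inv\Kop{s}(\one)\Lambda_2(s)\,\dd s$, and one application of the identity above produces the case $L=0$. The induction step on $L$ is then literally the algebra of the proof of \cref{thm:iterated-integration-by-parts}: it applies the identity to the remainder term with $j=L+1$. All integrals are Bochner integrals of strongly continuous, locally bounded integrands, so their integrability follows from \cref{thm:bochner-integrability-test}. The only genuine obstacle is the first paragraph, namely existence, invertibility and strong differentiability of the non-unitary propagator and its inverse with merely strongly continuous generators. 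Everything else transfers mechanically because boundedness makes all products differentiable via \cref{thm:product-rule-bounded}.
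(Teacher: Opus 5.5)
Your overall route is the paper's: the bounded product rule (\cref{thm:product-rule-bounded}), the splitting $\Kop{s}(\Sop{j}(s)) = \avg{T}{\Kop{}(\Sop{j})} + \sodv{}{s}\Sop{j+1}(s)$, one integration by parts per order, and induction on $L$ with the base case from $\Lambda_1(t)-\Lambda_2(t) = -\Lambda_1(t)\bigl(\Lambda_1(t)\inv\Lambda_2(t)-\one\bigr)$. The paper simply cites Banach-space ODE theory for existence, invertibility and local boundedness of $\Lambda_i$. You try to prove these yourself, and one step of that extra argument does not work as written.

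Your left-inverse argument is fine. Let $X(t)$ solve $\sodv{}{t}X(t) = \iu X(t)H_1(t)$ with $X(0)=\one$; the product rule then gives $X(t)\Lambda_1(t)=\one$. But applying ODE uniqueness to $\Lambda_1(t)X(t)\Lambda_1(t)$ only yields $\Lambda_1(t)X(t)\Lambda_1(t)=\Lambda_1(t)$, which already follows from $X\Lambda_1=\one$. To cancel the right-hand factor $\Lambda_1(t)$ you need $\Lambda_1(t)$ to be surjective, which is exactly what is to be shown. In infinite dimensions a left inverse need not be a right inverse (think of the unilateral shift).

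It is precisely the right-inverse identity $\Lambda_1(t)\Lambda_1(t)\inv=\one$ that the proof uses. It enters the base case, and it is needed to reduce the boundary term $\Lambda_1(t)\Lambda_1(t)\inv\Sop{j+1}(t)\Lambda_2(t)$ to $\Sop{j+1}(t)\Lambda_2(t)$.

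The fix is short: apply uniqueness to $Y(t)\coloneqq\Lambda_1(t)X(t)$ itself. The product rule gives $\sodv{}{t}Y(t) = -\iu\bigl(H_1(t)Y(t)-Y(t)H_1(t)\bigr)$ with $Y(0)=\one$, and the constant family $\one$ solves the same linear equation. Both $\Lambda_1$ and $X$ are norm-Lipschitz on compact intervals, so $s\mapsto\norm{Y(s)-\one}$ is continuous. It satisfies $\norm{Y(t)-\one}\le 2M\int_0^t\norm{Y(s)-\one}\dd s$, where $M$ is a local bound on $\norm{H_1(s)}$, and Gr\"onwall then gives $Y\equiv\one$. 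Alternatively, solve the equation backwards from time $t$ to produce a right inverse, or cite the ODE theory as the paper does. With this repaired, the rest of your proposal goes through exactly as in the paper.
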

\begin{proof}
    As \(t \mapsto H_i(t)\psi\) is continuous for any \(\psi \in \HH\), it is integrable in the sense of \cref{def:integral-of-operator}.
    Furthermore, the theory of ordinary differential equations in Banach spaces shows that for \(i = 1,2\) the solution \((\Lambda_i(t))_{t\in\R}\) exists, is invertible in the sense that there is \((\Lambda_i(t)\inv)_{t\in\R}\) with \(\Lambda_i(t) \Lambda_i(t) \inv  = \one = \Lambda_i(t)\inv \Lambda_i(t)\) for any \(t \in \R\), and is uniformly bounded on compact intervals (cf. \cite[Ch.~III.1]{daleckiiStabilitySolutionsDifferential2002}).
    
    By the product rule,
    \begin{gather}
        \label{eq:derivative-sop-bounded}
        \sodv{}{t} \Sop{l}(t)
        = \Kop{t}\of*{\Sop{l-1}(t)} - \avg*{T}{\Kop{\cdot}\of*{\Sop{l-1}(\cdot)}}
        \\
        \label{eq:derivative-U-sop-U-bounded}
        \sodv{}{t} \Lambda_1(t)\inv \Sop{l}(t)\Lambda_2(t)
        = \iu \Lambda_1(t)\inv \Kop{t}\of*{\Sop{l}(t)}\Lambda_2(t) + \Lambda_1(t)\inv \of*{\sodv{}{t}\Sop{l}(t)} \Lambda_2(t)
        .
    \end{gather}
    Thus, for every \(l \in \N_0\) we have
    \begin{align}
        \Kop{s}\of*{\Sop{l}(s)} 
        &= \avg*{T}{\Kop{s}\of*{\Sop{l}(s)} } + \Kop{s}\of*{\Sop{l}(s)} - \avg*{T}{\Kop{s}\of*{\Sop{l}(s)} }
        \nonumber\\
        &= \avg*{T}{\Kop{s}\of*{\Sop{l}(s)}} + \sodv{}{s} \Sop{l+1}(s)
    \label{eq:iterated-integration-by-parts-proof-1-bounded}
    \end{align}
    and
    \begin{equation}
    \label{eq:iterated-integration-by-parts-proof-2-bounded}
        \Lambda_1(s)\inv \of*{\sodv{}{s}\Sop{l}(s)} \Lambda_2(s)
        = \sodv{}{s} \of*{\Lambda_1(s)\inv \Sop{l}(s) \Lambda_2(s)} - \iu \Lambda_1(s)\inv \Kop{s}\of*{\Sop{l}(s)} \Lambda_2(s)
        .
    \end{equation}
    This allows us to rewrite certain integrals of \(\Sop{l}(s)\) in terms of similar integrals of \(\Sop{l+1}(s)\):
    \begin{align}
    \MoveEqLeft
        \int_0^t \Lambda_1(t)\Lambda_1(s)\inv \Kop{s}\of*{\Sop{l}(s)} \Lambda_2(s) \psi \dd s
        \nonumber\\
        &= \int_0^t \Lambda_1(t) \Lambda_1(s)\inv \avg*{T}{\Kop{}\of*{\Sop{l}}} \Lambda_2(s) \psi \dd s
        + \int_0^t \Lambda_1(t)\Lambda_1(s)\inv \of*{\sodv{}{s}\Sop{l+1}(s)} \Lambda_2(s) \psi \dd s
    \label{eq:iterated-integration-by-parts-proof-3-bounded}
        \\
        &=
        \begin{multlined}[t]
            \Sop{l+1}(t) \Lambda_2(t) \psi
            + \int_0^t \Lambda_1(t)\Lambda_1(s)\inv \avg*{T}{\Kop{}\of*{\Sop{l}}} \Lambda_2(s)\psi \dd s
            \\
            + (-\iu) \int_0^t \Lambda_1(t) \Lambda_1(s)\inv \Kop{s}\of*{\Sop{l+1}(s)} \Lambda_2(s) \psi \dd s
            ,
        \end{multlined}
    \label{eq:iterated-integration-by-parts-proof-4-bounded}
    \end{align}
    where we used \cref{eq:iterated-integration-by-parts-proof-1-bounded} in the first step, and \cref{eq:iterated-integration-by-parts-proof-2-bounded} in the second step.
    The third summand in \cref{eq:iterated-integration-by-parts-proof-4-bounded} has the same structure as the expression in \cref{eq:iterated-integration-by-parts-proof-3-bounded}, so that the same procedure can be iterated up to the desired order. 
    This will be made precise below in terms of an induction over \(L \in \N_0\). 
    For \(L = 0\), we have
    \begin{align}
        \Lambda_1(t) - \Lambda_2(t)
        &= - \Lambda_1(t) \of*{ \Lambda_1(t)\inv \Sop{0}(t)  \Lambda_2(t) - \one }
        \nonumber\\
        &= - \Lambda_1(t) \int_0^t \sodv{}{s} \of*{\Lambda_1(s)\inv \Sop{0}(s) \Lambda_2(s)} \dd s
        \nonumber\\
        \label{eq:IIBP-iteration--1-duhamel}
        &\annotateeqn{\eqref{eq:derivative-U-sop-U-bounded}}{=}
        - \iu \int_0^t \Lambda_1(t) \Lambda_1(s)\inv \Kop{s}\of*{\Sop{0}(s)} \Lambda_2(s) \dd s
        \nonumber\\
        &= - \iu \Sop{1}(t) \Lambda_2(t)
        - \iu \int_0^t \Lambda_1(t) \Lambda_1(t)\inv \of*{\avg*{T}{\Kop{}\of*{\Sop{0}}} - \iu \Kop{s}\of*{\Sop{1}(s)} } \Lambda_2(s) \dd s
        \nonumber\\
        &=
        \begin{aligned}[t]
            &\sum_{l=1}^{L+1} (-\iu)^l \Sop{l}(t) \Lambda_2(t)
            \\
            &- \iu \int_0^t \Lambda_1(t) \Lambda_1(s)\inv \of*{ \sum_{l=0}^L (-\iu)^l \avg*{T}{\Kop{}\of*{\Sop{l}}} + (-\iu )^{L+1} \Kop{s}\of*{\Sop{L+1}(s)} }\Lambda_2(s)  \dd s,
        \end{aligned}
    \end{align}
    which coincides with \cref{eq:iterated-integration-by-parts-bounded} for \(L=0\).

    Now suppose \cref{eq:iterated-integration-by-parts-bounded} holds for some \(L \in \N_0\). Then
    \begin{align}
        \Lambda_1(t) - \Lambda_2(t)
        &=
        \begin{aligned}[t]
            &\sum_{l=1}^{L+1} (-\iu)^l \Sop{l}(t) \Lambda_2(t)
            \\
            &- \iu \int_0^t \Lambda_1(t) \Lambda_1(s)\inv \Bigg(
                \sum_{l=0}^L (-\iu)^l \avg*{T}{\Kop{}\of*{\Sop{l}}}
                + (-\iu)^{L+1} \Kop{s}\of*{\Sop{L+1}(s)}
            \Bigg) \Lambda_2(s) \dd s
        \end{aligned}
        \nonumber\\
        &=
        \begin{aligned}[t]
            &\sum_{l=1}^{L+1} (-\iu)^l \Sop{l}(t) \Lambda_2(t)
            - \iu \int_0^t \Lambda_1(t) \Lambda_1(s)\inv \sum_{l=0}^L (-\iu)^l \avg*{T}{\Kop{}\of*{\Sop{l}}} \dd s
            \\
            &+ (-\iu)^{L+2} \int_0^t \Lambda_1(t) \Lambda_1(s)\inv \Kop{s}\of*{\Sop{L+1}(s)} \Lambda_2(s) \dd s
        \end{aligned}
        \nonumber\\
        &=
        \begin{aligned}[t]
            &\sum_{l=1}^{L+1} (-\iu)^l \Sop{l}(t) \Lambda_2(t)
            - \iu \int_0^t \Lambda_1(t) \Lambda_1(s)\inv \sum_{l=0}^L (-\iu)^l \avg*{T}{\Kop{}\of*{\Sop{l}}} \dd s
            \\
            &+ (-\iu)^{L+2} \Sop{L+2}(t) \Lambda_2(t)
            -\iu \int_0^t \Lambda_1(t)\Lambda_1(s)\inv (-\iu)^{L+1} \avg*{T}{\Kop{}\of*{\Sop{L+1}}} \Lambda_2(s) \dd s
            \\
            &-\iu \int_0^t \Lambda_1(t) \Lambda_1(s)\inv (-\iu)^{L+2} \Kop{s}\of*{\Sop{L+2}(s)} \Lambda_2(s) \dd s
        \end{aligned}
        \nonumber\\
        &=
        \begin{aligned}[t]
            &\sum_{l=1}^{L\primed+1} (-\iu)^l \Sop{l}(t) \Lambda_2(t)
            \\
            &- \iu \int_0^t \Lambda_1(t) \Lambda_1(s)\inv \Bigg(
                \sum_{l=0}^{L\primed} (-\iu)^l \avg*{T}{\Kop{}\of*{\Sop{l}}}
                + (-\iu)^{L\primed+1} \Kop{s}\of*{\Sop{L\primed+1}(s)}
            \Bigg) \Lambda_2(s)  \dd s
            ,
        \end{aligned}
    \end{align}
    which is \cref{eq:iterated-integration-by-parts-bounded} for \(L\primed = L+1\).
    Thus, by induction, \cref{eq:iterated-integration-by-parts-bounded} holds for every \(L\in \N_0\).
\end{proof}

Next, in analogy with the unitary, unbounded case of \cref{thm:iterated-integration-by-parts}, we use \cref{thm:iterated-integration-by-parts-bounded} to construct an effective generator whose associated dynamics approximates a given evolution to the desired order, at least at integer multiples of the period. 
As we are not assuming symmetry of the effective generator, its exponential will be defined via its power series rather than through the spectral theorem.
Recall that the exponential of a bounded operator \(A\) is given by
    \begin{equation}
        \exp\of{A}
        = \sum_{n=0}^{\infty} \frac{1}{n!} A^n
        ,
    \end{equation}
the series above being absolutely convergent with $\opnorm{\exp\of{A}}\leq\e^{\opnorm{A}}$.
\begin{proposition}\label{thm:asymptotics-dynamics-stroboscopic-bounded}
    Let \(\of*{H(t)}_{t\in \R}\) be a strongly continuous family of bounded operators satisfying \cref{assump:A-H-period} and let \((\Lambda(t))_{t\in\R}\) be the corresponding solution of the initial value problem
    \begin{equation}\label{eq:non-symmetric-initial-value-problem}
        \begin{cases}
            \odv{}{t} \Lambda(t) \psi
            = - \iu H(t)\, \Lambda(t) \psi
            & 0 < t
            \\
            \Lambda(0)\psi = \psi
            .
            &
        \end{cases}
    \end{equation}
	Denote the rescaled Hamiltonian and its evolution family by \(H^{(T)}(t) = H(t/T)\) and \(\Lambda^{(T)}(t)\), respectively.
    Let \(L \in \N_0\) and \(q\in \Z\).
    Then    \begin{gather}
        \of*{\exp\of{-\iu qT \Heff{L}{T} } - \Lambda^{(T)}(qT)}
        = \bigo\of*{T^{L+2} }
        ,
    \end{gather}
    with $\Heff{L}{T}$ as per \cref{def:Heff-definition}.
\end{proposition}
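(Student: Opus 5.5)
The plan is to mimic the proof of the stroboscopic estimate \cref{eq:general-error-ip-3} in \cref{thm:general-error-ip}, replacing \cref{thm:iterated-integration-by-parts} by its bounded, non-unitary analogue \cref{thm:iterated-integration-by-parts-bounded}. Set \(H_1(t) = \Heff{L}{T}\) (constant in \(t\)) and \(H_2(t) = H^{(T)}(t)\). Then \(\Lambda_1(t) = \exp(-\iu t \Heff{L}{T})\) solves the first initial value problem, with \(\Lambda_1(t)\Lambda_1(s)\inv = \exp(-\iu (t-s)\Heff{L}{T})\), and \(\Lambda_2 = \Lambda^{(T)}\). Before applying the formula I would check that \(\Heff{L}{T}\) is a bounded operator. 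In the bounded case, \cref{thm:Heff-definition-well-posed} and the recursion of \cref{thm:Sopeff-T-polynomial,thm:AKS-T-polynomial} only involve finite sums, products and Bochner averages of strongly continuous bounded families, so every coefficient \(\Heffl{l}\) is bounded; here \(H_0\) may simply be taken to be \(0\), so that \(\cinftyofh=\HH\) and relative boundedness becomes ordinary boundedness.

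Evaluating \cref{eq:iterated-integration-by-parts-bounded} at \(t=qT\), the boundary terms \(\Sop{j}(qT)\Lambda^{(T)}(qT)\) vanish for \(j\ge1\). The reason is that \(\Sopeff{j,L}{T}\) is \(T\)-periodic (\cref{thm:Sopeff-asymptotics}) and \(\Sopeff{j,L}{T}(0)=0\) by \cref{eq:def-Sopeff}. What remains is
\[
-\iu\int_0^{qT} \e^{-\iu(qT-s)\Heff{L}{T}}\Bigl(\sum_{j=0}^L(-\iu)^j\avg*{T}{\Kopeff{\cdot,L}{T}(\Sopeff{j,L}{T})} + (-\iu)^{L+1}\Kopeff{s,L}{T}(\Sopeff{L+1,L}{T}(s))\Bigr)\Lambda^{(T)}(s)\,\dd s .
\]
I would bound this in operator norm using three ingredients.

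\textbf{Propagator bounds.} I would use \(\opnorm{\e^{-\iu \tau \Heff{L}{T}}}\le \e^{|\tau|\opnorm{\Heff{L}{T}}}\), together with the Grönwall estimate \(\opnorm{\Lambda^{(T)}(s)}\le \exp\bigl(\int_0^{|s|}\opnorm{H^{(T)}}\bigr)\). Both stay bounded uniformly in \(T\in(0,1]\) for \(|s|\le |q|T\), since \(\opnorm{H^{(T)}(s)} \le \sup_{[0,1]}\opnorm{H}\) by periodicity and \(\opnorm{\Heff{L}{T}}=\bigo(1)\).

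\textbf{The averaged sum.} The sum of averages is \(\bigo(T^{L+1})\) in operator norm. This follows from \cref{thm:Heff-condition}, whose proof only uses polynomiality and orders of the coefficients, and which in the bounded case holds with norm bounds rather than pointwise. More precisely, as in the proof of \cref{thm:general-error-ip}, all coefficients of order \(\le L\) cancel, leaving a polynomial with bounded coefficients starting at \(T^{L+1}\).

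\textbf{The remainder term.} The term \(\Kopeff{s,L}{T}(\Sopeff{L+1,L}{T}(s))\) is \(\bigo(T^{L+1})\) uniformly in \(s\), by \cref{thm:Sopeff-T-polynomial}: write \(\Sopeff{L+1,L}{T}(s)=\sum_k T^k\Tcoeff{\Sopeff{L+1,L}{T}}{k}(s/T)\), where the coefficients are \(1\)-periodic and hence uniformly bounded in \(s\).

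Integrating over an interval of length \(|q|T\) then gives \(\bigo(T^{L+2})\).

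The main obstacle is making the first step rigorous without symmetry. Specifically, I must confirm that the lemmas of \cref{sec:Heff}, which were stated under \cref{assump:A-H-cinf,assump:A-H-bound}, go through verbatim for bounded, non-self-adjoint \(\Heff{L}{T}\), and that \cref{thm:Heff-condition} upgrades to an operator-norm statement. I would handle this by redoing the inductions of \cref{thm:Sopeff-T-polynomial,thm:asymptotic-relation-Sop} with \(H_0=0\), where every ``relative bound'' reduces to a norm bound. A secondary care point is that \(\Lambda_1(s)\inv\) for the constant generator is exactly \(\exp(\iu s\Heff{L}{T})\), which is immediate from the power series.
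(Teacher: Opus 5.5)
Your proposal is correct and follows essentially the same route as the paper. You apply \cref{thm:iterated-integration-by-parts-bounded}, note that the boundary terms vanish at $t=qT$ by $T$-periodicity of $\Sopeff{j,L}{T}$, use \cref{thm:Heff-condition} with polynomiality in $T$ to bound the integrand by $c_L T^{L+1}$ in operator norm, and integrate over an interval of length $\abs{q}T$. Your assignment $H_1=\Heff{L}{T}$, $H_2=H^{(T)}$ (matching \cref{eq:def-Kopeff}) and your Gr\"onwall bound on $\Lambda^{(T)}$ are in fact more careful than the paper's written chain, which puts $\exp(-\iu t\Heff{L}{T})$ in the $\Lambda_2$ slot, omits the non-unitary factor $\Lambda_1(t)\Lambda_1(s)\inv$, and only writes out the case $q=1$.
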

\begin{proof}
    Since \(H^{(T)}(t)\) is bounded, \cref{assump:A-H-bound,assump:A-H-cinf} are satisfied.
    By \cref{thm:Heff-condition,thm:Sopeff-asymptotics}, we have
    \begin{equation}
    \label{eq:asymptotics-dynamics-stroboscopic-bounded-proof-1}
        \Bigg(
            \sum_{l=0}^L (-\iu)^l \avg*{T}{\Kopeff{\cdot,L}{T}\of*{\Sopeff{l,L}{T}}}
            + (-\iu)^{L+1} \Kopeff{sT,L}{T}\of*{\Sopeff{L+1,L}{T}(sT)}
        \Bigg) \psi
        = \bigo\of*{T^{L+1}}
        ,\quad \text{as}\ T \to 0,
    \end{equation}
    for any \(\psi \in \HH\).
    Due to \cref{thm:AKS-T-polynomial,thm:Sopeff-T-polynomial} this means that the operator on the left-hand side of \cref{eq:asymptotics-dynamics-stroboscopic-bounded-proof-1} is a polynomial in \(T\) of order at least \(L+1\) and hence
    there exist \(c_L ,T_0 > 0 \) such that
    \begin{equation}
    \opnorm*{\Bigg(
        \sum_{l=0}^L (-\iu)^l \avg*{T}{\Kopeff{\cdot,L}{T}\of*{\Sopeff{l,L}{T}}}
        + (-\iu)^{L+1} \Kopeff{sT,L}{T}\of*{\Sopeff{L+1,L}{T}(sT)}
    \Bigg)}
    \leq c_L T^{L+1}
    \end{equation}
    for every \(0 < T < T_0\).

    Denote \(\Lambda_{L,T}(t) = \exp\of{-\iu t \Heff{L}{T}}\).
    For \(0 < T < T_0\), we apply \cref{thm:iterated-integration-by-parts-bounded} to obtain
    \begin{align}
        \MoveEqLeft
        \nonumber
        \opnorm*{\Lambda^{(T)}(T) - \exp\of{-\iu T \Heff{L}{T} }}
        \\
        &\leq
        \begin{aligned}[t]
            &\opnorm*{\sum_{l=1}^{L+1} (-\iu)^l \Sopeff{l,L}{T}(T) \Lambda_{L,T}(T) }
            \\
            &+ \int_0^T \opnorm*{
            \Bigg(
                \sum_{l=0}^L (-\iu)^l \avg*{T}{\Kopeff{\cdot,L}{T}\of*{\Sopeff{l,L}{T}}}
                + (-\iu)^{L+1} \Kopeff{s,L}{T}\of*{\Sopeff{L+1,L}{T}(s)}
            \Bigg) \Lambda_{L,T}(s)} \dd s
        \end{aligned}
        \nonumber\\
        &\leq
        \begin{aligned}[t]
            &T\int_0^{1} \opnorm*{\Bigg(
                \sum_{l=0}^L (-\iu)^l \avg*{T}{\Kopeff{\cdot,L}{T}\of*{\Sopeff{l,L}{T}}}
                + (-\iu)^{L+1} \Kopeff{sT,L}{T}\of*{\Sopeff{L+1,L}{T}(sT)}
            \Bigg)} \opnorm*{\Lambda_{L,T}(sT)} \dd s
        \end{aligned}
        \nonumber\\
        &\leq c_{L} T^{L+1} \int_0^T \opnorm*{\exp\of{-\iu s \Heff{L}{T} }} \dd s
        \nonumber\\
        &\leq c_{L} T^{L+1} \int_0^T \e^{s \opnorm*{\Heff{L}{T}}} \dd s
        \nonumber\\
        &= c_{L} T^{L+1} \frac{1}{\opnorm*{\Heff{L}{T}}}\of*{ \e^{T\opnorm*{\Heff{L}{T}}} -1 }
        \nonumber\\
        &= c_{L} T^{L+1} \sum_{n=1}^{\infty} \frac{1}{n!} T^n \opnorm*{\Heff{L}{T}}^{n-1}
        \nonumber\\
        &= c_{L} T^{L+2} + \bigo\of*{T^{L+3} \opnorm*{\Heff{L}{T}}}
        \nonumber\\
        &= \bigo\of*{T^{L+2}}
        ,
    \end{align}
where in the second step we substituted \(s \mapsto s/T\), and in the third step we reverted the substitution. This concludes the proof.
\end{proof}

In addition to \cref{thm:iterated-integration-by-parts-bounded} and \cref{thm:asymptotics-dynamics-stroboscopic-bounded}, we will also need the following result:
\begin{lemma}\label{thm:continuity-logarithm}
    Consider two norm-continuous $\mathcal{B}(\mathcal{H})$-valued functions \(\R \ni t \mapsto A(t)\) and  \(\R \ni t \mapsto B(t)\), with \(A(0) = B(0) = 0\).
    Define \(\Lambda_{A}(t) = \exp\of*{A(t)}\) and \(\Lambda_{B}(t) = \exp\of*{B(t)}\).
    Then there exist \(C > 0\) and \(r > 0\) such that
    \begin{equation}
        \opnorm{A(t) - B(t)}
        \leq C \opnorm*{\Lambda_A(t) - \Lambda_B(t)}
        \quad \text{for every}\ \abs{t} \leq r
        .
    \end{equation}
\end{lemma}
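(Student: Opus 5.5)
The plan is to use the holomorphic functional calculus, or more concretely the power series of the logarithm, to invert the exponential near the identity. By norm-continuity and \(A(0)=B(0)=0\), I would choose \(r>0\) such that \(\opnorm{A(t)}\leq \tfrac12\log 2\) and \(\opnorm{B(t)}\leq\tfrac12\log 2\) for \(\abs{t}\leq r\). This can also be adjusted to any fixed small radius \(\rho\) that makes the estimates below work. Then \(\opnorm{\Lambda_A(t)-\one}\leq \e^{\opnorm{A(t)}}-1\leq \sqrt2-1<1\), and likewise for \(\Lambda_B(t)\). Hence the series
\begin{equation*}
    \log(\one+X)=\sum_{n=1}^\infty \frac{(-1)^{n+1}}{n}X^n
\end{equation*}
converges absolutely for \(X=\Lambda_A(t)-\one\) and for \(X=\Lambda_B(t)-\one\). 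The standard formal power series identity \(\log(\exp(Y))=Y\) holds for bounded \(Y\) with \(\opnorm{Y}<\log 2\): both sides are given by absolutely convergent double series in the single operator \(Y\), so the scalar identity transfers verbatim. This gives \(A(t)=\log\Lambda_A(t)\) and \(B(t)=\log\Lambda_B(t)\) for \(\abs{t}\leq r\).

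The key step is a Lipschitz estimate for \(\log(\one+\cdot)\) on the ball \(\opnorm{X}\leq \rho<1\), with \(\rho=\sqrt2-1\). Writing \(X=\Lambda_A(t)-\one\) and \(Y=\Lambda_B(t)-\one\), I would telescope
\begin{equation*}
    X^n-Y^n=\sum_{k=0}^{n-1}X^k(X-Y)Y^{n-1-k}.
\end{equation*}
Noncommutativity is harmless here, since only norms are taken. This gives \(\opnorm{X^n-Y^n}\leq n\rho^{n-1}\opnorm{X-Y}\). Summing the series then yields
\begin{equation*}
    \opnorm{A(t)-B(t)}\leq\sum_{n\geq1}\rho^{n-1}\opnorm{X-Y}=\frac{1}{1-\rho}\opnorm{\Lambda_A(t)-\Lambda_B(t)},
\end{equation*}
so \(C=(1-\rho)^{-1}\) works.

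The only point needing care is the justification of \(\log\exp Y=Y\) in \(\mathcal B(\HH)\). All operators appearing there are power series in the single operator \(Y\), so they commute. Absolute convergence of the rearranged double series, which is bounded by \(\sum_n \frac1n(\e^{\opnorm{Y}}-1)^n<\infty\), allows reordering. The coefficient identities are then exactly those of the scalar identity \(\log(\e^z)=z\) for \(\abs{z}<\log2\). This step is routine, and I expect no genuine obstacle. Continuity is used only to secure the small-norm regime on \([-r,r]\).
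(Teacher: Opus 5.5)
Your proof is correct, but it follows a different route from the paper. The paper uses the holomorphic functional calculus. It picks $\epsilon>0$ so that $\exp(D_\epsilon(0))$ avoids the branch cut $(-\infty,0]$, and represents $A(t)=\ln\Lambda_A(t)$ and $B(t)=\ln\Lambda_B(t)$ as Cauchy integrals over a contour $\Gamma$. It then uses the resolvent identity to factor out $\Lambda_A(t)-\Lambda_B(t)$, so that $C$ is expressed through the length of $\Gamma$, $\max_\Gamma\abs{\ln z}$, and resolvent bounds on $\Gamma$. You replace the contour integral by the Mercator series and justify $\log\exp Y=Y$ by absolute convergence of the double series, with majorant $\sum_n\frac1n(\e^{\opnorm{Y}}-1)^n$. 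The telescoping identity for $X^n-Y^n$ then plays exactly the role of the resolvent identity.

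What your version buys is an elementary argument with an explicit, manifestly $t$-uniform constant $C=(1-\rho)^{-1}=1+1/\sqrt2$. It depends only on the smallness $\opnorm{\Lambda_{A}(t)-\one},\opnorm{\Lambda_{B}(t)-\one}\le\rho<1$. This also makes it more robust than the paper's write-up. There, the resolvents are bounded by $1/d_\Gamma$, with $d_\Gamma$ the ($t$-dependent) distance from $\Gamma$ to the spectra. That bound holds for normal operators but not in the generality of the lemma. In the application, moreover, $\Lambda_B=\exp(-\iu T\Heff{L}{T})$ is not known to be normal, since symmetry of $\Heff{L}{T}$ is precisely what is being established. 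That step needs a Neumann-series bound $\opnorm{(z-\Lambda)^{-1}}\le(\abs{z-1}-\opnorm{\Lambda-\one})^{-1}$ instead, i.e.\ the same smallness estimate your proof uses directly. The contour approach is more flexible in principle, covering other holomorphic functions and operators whose spectrum merely avoids the cut. For this lemma, though, your argument is the cleaner one.
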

\begin{proof}
    For every $\epsilon>0$, let \(D_{\epsilon}(z) = \Set{w \in \C \given \abs{w-z}\leq \epsilon}\) be the closed \(\epsilon\)-ball around \(z\in \C\). By continuity of the exponential function, we can choose \(\epsilon > 0\) such that \(\exp\of*{D_{\epsilon}(0)} \subseteq \C \setminus (-\infty,0]\).
    Since \(A(0) = B(0) = 0\) and both functions $A$ and $B$ are continuous in the operator norm, we can then choose \(r > 0\) such that
    \begin{equation}
        \opnorm*{A(t)} \leq \epsilon
        \quad
        \opnorm*{B(t)} \leq \epsilon
        ,\quad\text{for}\ \abs{t} < r
        .
    \end{equation}
    Let \(\abs{t} < r\). In particular, \(\sigma\of*{A(t)} \subseteq D_\epsilon(0)\), \(\sigma\of*{B(t)} \subseteq D_\epsilon(0)\).
    Thus, \(\sigma\of*{ \Lambda_A(t) } \subseteq \exp\of*{D_{\epsilon}(0)}\) and \(\sigma\of*{ \Lambda_B(t) } \subseteq \exp\of*{D_{\epsilon}(0)}\) \cite[Thm. 5.2.9]{Colombo_Sabadini_Struppa_2011}, and
    \begin{equation}
        A(t) = \ln\of*{\Lambda_A(t)},
        \quad
        B(t) = \ln\of*{\Lambda_B(t)}
        ,
    \end{equation}
    where \(\ln(\cdot)\) is the complex logarithm with branch cut \((-\infty,0]\).

    Now, choose a positively oriented closed curve \(\Gamma\) that winds around \(D_\epsilon (0)\) exactly once and is contained in \(\C \setminus [-\infty,0)\).
    Then
    \begin{align}
        A(t) - B(t)
        &= \ln\of*{\Lambda_A(t)} - \ln \of*{\Lambda_B(t)}
        \nonumber\\
        &= \frac{1}{2\pi \iu} \oint_{\Gamma} \ln\of*{z} \of[\big]{
            R_z\of*{\Lambda_A(t)} - R_z\of*{\Lambda_B(t)}
        }\dd z
        \nonumber\\
        &= \frac{1}{2\pi \iu} \oint_{\Gamma} \ln\of*{z}
            R_z\of*{\Lambda_A(t)}\of*{ \Lambda_B(t) - \Lambda_A(t) }R_z\of*{\Lambda_B(t)}
        \dd z,
    \end{align}
    and accordingly,
    \begin{align}
        \opnorm{A(t) - B(t)}
        &\leq
        \frac{1}{2\pi} \opnorm*{ \Lambda_B(t) - \Lambda_A(t) }
        \oint_\Gamma \abs{\ln\of{z}} \opnorm*{R_z(\Lambda_A(t))}\opnorm*{R_z(\Lambda_B(t))} \dd z
        \nonumber\\
        &\leq \frac{1}{2 \pi d_\Gamma ^2} \opnorm*{ \Lambda_B(t) - \Lambda_A(t) }
        \oint_\Gamma \abs{\ln\of{z}} \dd z
        \nonumber\\
        &\leq \frac{M_\Gamma \operatorname{Len}\of*{\Gamma}}{2 \pi d_\Gamma ^2} \opnorm*{ \Lambda_B(t) - \Lambda_A(t) }
        ,
    \end{align}
    where
    \begin{equation}
        d_\Gamma = \min \Set{\operatorname{dist}(\Gamma, \sigma\of*{\Lambda_A(t)}),\operatorname{dist}(\Gamma, \sigma\of*{\Lambda_B(t)})} > 0
    \end{equation}
    is the minimal distance between the curve \(\Gamma\) and the spectra of \(\Lambda_A(t)\) and \(\Lambda_B(t)\),
    \begin{equation}
        \operatorname{Len}\of*{\Gamma} = \int_0^1 \abs*{\odv{}{t}\gamma(t)} \dd t
    \end{equation}
    is the length of the curve \(\Gamma\) given with respect to some parametrization \([0,1]\ni t\mapsto \gamma(t) \in \Gamma\), and
    \begin{equation}
        M_\Gamma = \max_{z\in \Gamma} \,\abs{\ln\of{z}} < \infty
    \end{equation}
    is finite because \(\ln(\cdot)\) is continuous on \(\Gamma\).
\end{proof}
    With these results at hand, we can finally proceed with the proof of \cref{thm:equivalence-FM-bounded}.
    
\begin{proof}[Proof of \cref{thm:equivalence-FM-bounded}]
    We follow the proof idea in Ref.~\cite[Sec.~C]{deyErrorBoundsFloquetMagnus2025}.
    If $H(s)\equiv 0$ then clearly \(\HFM{L}{T}= 0= \Heff{L}{T}\). Consider the case $H(s)\not\equiv 0$. Then we may choose \(T>0\) such that
    \begin{equation}
        T < \pi \of*{\textstyle\int_{0}^{1} \opnorm{H(s)} \dd s}^{-1}
        .
    \end{equation}
    Then
    \begin{equation}
        \int_{0}^{T} \opnorm{H^{(T)}(t)} \dd t
        = T \int_{0}^{1} \opnorm{H(s)} \dd s
        < \pi
        .
    \end{equation}
    Consequently, the Floquet--Magnus series converges to some self-adjoint bounded operator \(\HFM{}{T}\) \cite[Theorem~9]{Blanes_Casas_Oteo_Ros_2009} and we have
    \begin{equation}
        \HFM{}{T} - \HFM{L}{T} = \bigo\of*{T^{L+1}}
        .
    \end{equation}
    Moreover, by construction of \(\HFM{}{T}\), we have
    \begin{equation}
        \Lambda(T) = \e^{-\iu T \HFM{}{T}}
    \end{equation}
    where \(\Lambda(t)\) is the solution of \cref{eq:non-symmetric-initial-value-problem}.
    Hence, applying \cref{eq:IIBP-iteration--1-duhamel} to \(\Lambda_1(s) = \e^{-\iu T \HFM{}{T}}s\) and \(\Lambda_2(s) = \e^{-\iu T \HFM{L}{T} s }\), taking norms, applying the triangle inequality and submultiplicativity of \(\opnorm{}\), and evaluating at \(s=1\) in the end, we get
    \begin{equation}
        \opnorm*{ \Lambda(T) -  \e^{-\iu T \HFM{L}{T} }}
        = \opnorm*{ \e^{-\iu T \HFM{}{T}} -  \e^{-\iu T\HFM{L}{T}  }}
        \leq \int_0^1 T\opnorm*{ \HFM{}{T} - \HFM{L}{T} } \dd s
        = \bigo\of*{T^{L+2}}
        .
    \end{equation}
    Together with \cref{thm:asymptotics-dynamics-stroboscopic-bounded}, this allows us to compare \(\e^{-\iu \HFM{L}{T} T }\) and \(\e^{-\iu \Heff{L}{T} T }\):
    \begin{equation}
        \e^{-\iu T\HFM{L}{T}  } - \e^{-\iu T\Heff{L}{T}  }
        = \of*{\e^{-\iu T\HFM{L}{T} } - \Lambda(T)} - \of*{\e^{-\iu T\Heff{L}{T} } - \Lambda(T) }
        = \bigo\of*{T^{L+2}}
        .
    \end{equation}
    Now, using \cref{thm:continuity-logarithm} with \(t = T\) and \(A(t) = -\iu t\HFM{L}{T} \) and \(B(t) = -\iu t\Heff{L}{T} \), which in contrast to \(H^{(T)}(t)\) are continuous in the operator norm, gives us
    \begin{equation}
        \label{eq:equivalence-FM-bounded-proof-1}
        -\iu T \of*{\HFM{L}{T} - \Heff{L}{T}} 
        = \bigo\of*{T^{L+2}}
        .
    \end{equation}
    On the other hand, \(\HFM{L}{T}\) and \(\Heff{L}{T}\) are both polynomials of order \(L\) in \(T\) (cf. \cref{thm:FM-expansion-full-expression-rewritten}) and, therefore, so is their difference \(\HFM{L}{T} - \Heff{L}{T}\).
    Thus, \cref{eq:equivalence-FM-bounded-proof-1} can only be true if \(\HFM{L}{T} - \Heff{L}{T} = 0\), which concludes the proof.
\end{proof}

\subsection{Symmetry and self-adjoint extensions of \texorpdfstring{\(\Heff{L}{T}\)}{effective Hamiltonians}}
\label{sec:symmetry-sa-extensions}

In this subsection, we return to the general (possibly unbounded) setting. 
We first show that \(\Heff{L}{T}\), when defined on \(\cinftyofh\), is symmetric when assuming that \cref{assump:A-H-spec} holds.
We then prove that \(\Heff{L}{T}\) admits self-adjoint extensions under the further assumption of \cref{assump:A-H-conj}. 
Throughout the subsection, unless stated otherwise, we continue to assume \cref{assump:A-H-cinf,assump:A-U-cinf,assump:A-H-cont,assump:A-U-diff,assump:A-H-bound,assump:A-U-bound,assump:A-H-period}.

We begin with the following structural observation concerning the effective Hamiltonians constructed in \cref{sec:heff_construction}, cf.~\cref{def:Heff-definition}:
\begin{proposition}\label{thm:Heff-integral-polynomial}
    For every \(L \in \N_0\), there exist $v_L,d_L\in\N_0$, a polynomial \(\mathcal{P}_L\) in \(v_L\) variables of total degree \(d_L\), and a compact region \(\mathcal{M}_L \subseteq \R^{v_L}\) such that
    \begin{equation}\label{eq:heff_poly}
        \Heff{L}{T}\psi
        = \int_{\mathcal{M}_L} \mathcal{P}_L(H(s_1),\dots,H(s_{v_L}))\psi \dd{\underline{s}}
        ,
    \end{equation}
    for every \(\psi \in \cinftyofh\).
\end{proposition}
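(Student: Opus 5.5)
We will prove a slightly stronger statement by induction: every operator that appears in the construction is an integral over a compact region of a polynomial in the \(H(s_i)\). This covers the coefficients \(\Tcoeff{\Sopeff{j,L'}{T}}{k}(t)\) for \(t\) in a compact interval, the coefficients \(\Tcoeff{\avg*{T}{\Kopeff{\cdot,L'}{T}\of*{\Sopeff{j,L'}{T}}}}{k}\), and the \(\Heffl{l}\) themselves, for \(L'\le L\). Concretely, call an operator \(A\) on \(\cinftyofh\) \emph{of integral-polynomial type} if \(A\psi=\int_{\mathcal M}\mathcal P(H(s_1),\dots,H(s_v))\psi\,\dd\underline s\) for all \(\psi\in\cinftyofh\), with \(\mathcal M\subseteq\R^v\) compact and \(\mathcal P\) a noncommutative polynomial with complex coefficients. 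For a \(t\)-dependent family we allow \(\mathcal M=\mathcal M(t)\) and ask this for each fixed \(t\). By \cref{thm:strong-continuity-monomials}, each monomial \(\underline s\mapsto H(s_1)\cdots H(s_v)\psi\) is continuous with values in \(\cinftyofh\), so these integrals are well defined Bochner integrals. By \cref{thm:Hille} they stay in \(\cinftyofh\), which is the domain needed to compose them.

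The first step is a closure lemma: the class is closed under sums, scalar multiples, and products, and also under integrating the free variable \(t'\) over \([0,t]\) or \([0,1]\). Sums are handled by placing the two regions in disjoint sets of variables, taking the product region \(\mathcal M_1\times\mathcal M_2\), and normalising by its volume (or padding with dummy variables integrated over \([0,1]\)). Products \(AB\) use Fubini in \(\underline s\times\underline s'\) together with \cref{thm:Hille}, applied to the closure of the operator \(H_0^m\)-relative-bounded pieces; this is justified because \(A\) is a Bochner integral of operators that map \(\cinftyofh\) continuously, so \(A\int B(\underline s')\psi\,\dd\underline s'=\int A B(\underline s')\psi\,\dd\underline s'\). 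The \(t'\)-integration of an expression \(\int_{\mathcal M(t')}\cdots\) is absorbed by enlarging the region to \(\Set{(t',\underline s)\given t'\in[0,t],\ \underline s\in\mathcal M(t')}\), which is compact. One can also simply arrange \(\mathcal M(t')\) to be a product of intervals whose endpoints depend linearly on \(t'\), as happens naturally in the recursion \eqref{eq:Sopeff-T-coeffs-recursion}. Since \(\osc*{t'}{1}{A}=A(t')-\avg*{1}{A}\), this operation is also covered.

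With this lemma in hand, the induction runs along the recursive structure. \(\Heffl{0}=\avg*{1}{H}\) is of the required type, with \(\mathcal M=[0,1]\) and \(\mathcal P(x)=x\). Assume \(\Heffl{0},\dots,\Heffl{l-1}\) are of the required type. Then \cref{thm:Sopeff-T-polynomial} expresses each coefficient \(\Tcoeff{\Sopeff{j,L'}{T}}{k}(t)\), for \(L'\le l-1\), through finitely many operations of the lemma's kind, starting from \(\one\), \(H(t')\) and the \(\Heffl{\sigma'}\) with \(\sigma'\le L'\le l-1\). Here \(\Kopeff{t',0}{1}(X)=\Heffl0X-XH(t')\), and \(H(t')X\) is a product with a polynomial variable. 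Hence, by an inner induction on \(j\), each such coefficient is of integral-polynomial type. \Cref{thm:AKS-T-polynomial} then gives the same for the coefficients of \(\avg*{T}{\Kopeff{\cdot,L'}{T}(\Sopeff{j,L'}{T})}\), and \eqref{eq:Heff-definition-l} shows that \(\Heffl{l}\), a finite linear combination of these, is of the required type. Finally, \(\Heff{L}{T}=\sum_l T^l\Heffl{l}\) is a finite sum, so absorbing the scalars \(T^l\) into the coefficients of \(\mathcal P_L\) gives \eqref{eq:heff_poly}. This appears to be allowed, since \(\mathcal P_L\) may depend on \(T\); if \(T\)-independence is wanted, the statement applies verbatim to each \(\Heffl{l}\). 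The numbers \(v_L\) and \(d_L\) are then read off the construction.

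The main obstacle is the product/Fubini step in the closure lemma. There we must justify exchanging an unbounded operator \(H(s)\), or a Bochner integral of products of such operators, with an outer Bochner integral, while keeping everything on \(\cinftyofh\). I plan to handle this with \cref{assump:A-H-bound}: it bounds \(\norm{H(s_1)\cdots H(s_v)\phi}\) by a finite combination of \(\norm{H_0^{n}\phi}\), uniformly over the compact region. Then \cref{thm:Hille}, applied to the closed operators \(H_0^n\), together with the strong continuity from \cref{thm:strong-continuity-monomials}, gives joint continuity of the integrand, and Fubini for Bochner integrals of continuous functions on compact sets finishes the argument. The bookkeeping of regions and degrees is routine by comparison.
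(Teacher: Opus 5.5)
Your proposal is correct and follows the same route as the paper's proof: induction along the recursive definition of $\Heffl{l}$, where the only analytic input is that (polynomials in) $H(s)$ can be moved inside the Bochner integrals, after which Fubini and the recursion take care of the bookkeeping. The one difference is how you justify that interchange: the paper uses closability of $H(s)$, deduced from its symmetry, together with \cref{thm:Hille} for $\overline{H(s)}$, while you use the relative $H_0$-bounds of \cref{assump:A-H-bound} with \cref{thm:Hille} applied to the closed operators $H_0^n$. Your version has the mild advantage of not needing symmetry of $H(s)$, which is not among the listed properties, and you are also more explicit than the paper about merging regions and about $\mathcal{P}_L$ carrying $T$-dependent coefficients.
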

\begin{proof}
    We prove the claim by induction over \(L\in \N_0\).
    We have
    \begin{equation}
        \Heff{0}{T} = \avg{1}{H} = \int_0^1 H(s) \dd s
        ,
    \end{equation}
    hence the claim holds with \(\mathcal{P}_0(x) = x\) and \(\mathcal{M}_0 = [0,1]\).

    Now, assume the claim holds up to some \(L \in \N\).
    Being symmetric, \(H(s)\) is closable; moreover, the map
    \begin{equation}
        \underline{s}\mapsto H(s) \mathcal{P}_L(H(s_1),\dots,H(s_{v_L}))\psi
    \end{equation}
    is integrable by the relative \(H_0^{k_0}\)-boundedness of \(H(s)\) and \cref{thm:strong-continuity-monomials}.
    Therefore,
    \begin{align}
        H(s) \int_{\mathcal{M}_L} \mathcal{P}_L(H(s_1),\dots,H(s_{v_L}))\psi\dd{\underline{s}}
        &= \close{H(s)} \int_{\mathcal{M}_L} \mathcal{P}_L(H(s_1),\dots,H(s_{v_L}))\psi\dd{\underline{s}}
        \nonumber\\
        &= \int_{\mathcal{M}_L} \close{H(s)}\mathcal{P}_L(H(s_1),\dots,H(s_{v_L}))\psi\dd{\underline{s}}
        \nonumber\\
        &= \int_{\mathcal{M}_L} H(s) \mathcal{P}_L(H(s_1),\dots,H(s_{v_L}))\psi\dd{\underline{s}}
        ,
    \end{align}
    for every \(\psi \in \cinftyofh\).
    Furthermore, it follows from \cref{eq:def-Kopeff,eq:def-Deltaeff,eq:def-Sopeff-0,eq:def-Sopeff} that both \(\avg{T}{\Kopeff{\cdot,L'}{T}(\cdot)}\) and \(\Sopeff{j,L'}{T}(t)\) can be written as integrals of polynomials in \(H(s)\), for all \(j \in \N_0\) and all \(L' \leq L\). Invoking \cref{eq:Heff-definition-LT}, we then conclude that \(\Heff{L+1}{T}\) is itself given by an integral of polynomials, since it depends only on terms of the form \(\avg{T}{\Kopeff{\cdot,L'}{T}(\cdot)}\) and \(\Sopeff{j,L'}{T}(t)\) with \(L' \leq L\). This completes the induction step.
\end{proof}

We now establish an important property of (not necessarily symmetric) operators on \(\cinftyofh\) of the form appearing on the right-hand side of \cref{eq:heff_poly}:
\begin{lemma}\label{thm:polynomial_adjoint}
    Consider an operator \(A\) on \(\cinftyofh\) of the form
    \begin{equation}
        A\psi = \int_{\mathcal{M} } \of{\mathcal{P}(H(s_1),\dots,H(s_v))} \psi \dd{\underline{s}}
    \end{equation}
    for some polynomial \(\mathcal{P}\) in \(v\in \N_0\) variables and of degree \(d\in \N_0\), and \(\mathcal{M}\subseteq \R^{v}\) being a compact region.
    Then we have
    \begin{equation}
        \cinftyofh \subseteq \domof{A\adj}
    \end{equation}
    and
    \begin{equation}
        A\adj\psi  = \int_{\mathcal{M}} \of{\mathcal{P}(H(s_1),\dots,H(s_v))}\adj \psi \dd{\underline{s}}
    \end{equation}
    for all \(\psi \in \cinftyofh\).
\end{lemma}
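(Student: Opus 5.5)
Since \(A\) is defined on \(\cinftyofh\), which is dense, \(A\adj\) is well defined, and the whole statement reduces to one identity: for \(\phi,\psi\in\cinftyofh\),
\begin{equation*}
    \innerp{\phi}{A\psi} = \innerp*{\int_{\mathcal{M}} \of{\mathcal{P}(H(s_1),\dots,H(s_v))}\adj \phi \dd{\underline{s}}}{\psi}.
\end{equation*}
The right-hand integral is a well-defined vector in \(\HH\). Once the identity holds, the map \(\psi\mapsto\innerp{\phi}{A\psi}\) is bounded, so \(\phi\in\domof{A\adj}\) and \(A\adj\phi\) equals that integral.

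The first step is to interpret \(\of{\mathcal{P}(H(s_1),\dots,H(s_v))}\adj\) on \(\cinftyofh\). By linearity it suffices to treat a monomial \(c\,H(s_{i_1})\cdots H(s_{i_n})\). Its formal adjoint is \(\bar c\,H(s_{i_n})\cdots H(s_{i_1})\), which is again a monomial in the \(H(s_i)\). I will use that each \(H(s)\) is symmetric on \(\cinftyofh\); this is used implicitly in the proof of \cref{thm:Heff-integral-polynomial} and is the standing hypothesis of this subsection. It gives, for \(\phi,\psi\in\cinftyofh\),
\begin{equation*}
    \innerp{\phi}{H(s_{i_1})\cdots H(s_{i_n})\psi} = \innerp{H(s_{i_n})\cdots H(s_{i_1})\phi}{\psi}.
\end{equation*}
To see this, move one factor at a time, using the invariance \(H(s)\cinftyofh\subseteq\cinftyofh\) from \cref{assump:A-H-cinf} so that every intermediate vector stays in \(\cinftyofh\), where symmetry applies. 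Thus \(\mathcal{P}(\dots)\adj\) restricted to \(\cinftyofh\) acts as the ``reversed and conjugated'' polynomial \(\mathcal{P}^\dagger\), and it maps \(\cinftyofh\) into itself.

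The second step is integrability. By \cref{thm:strong-continuity-monomials}, both \(\underline{s}\mapsto\mathcal{P}(\dots)\psi\) and \(\underline{s}\mapsto\mathcal{P}^\dagger(\dots)\phi\) are continuous on \(\R^v\). They are therefore Bochner integrable over the compact set \(\mathcal{M}\) by \cref{thm:bochner-integrability-test}, so both integrals exist. Then I pass the inner product through the Bochner integral, which is allowed because \(\innerp{\phi}{\cdot}\) is a bounded functional and commutes with Bochner integrals:
\begin{equation*}
    \innerp{\phi}{A\psi}=\int_{\mathcal{M}}\innerp{\phi}{\mathcal{P}(\dots)\psi}\dd{\underline{s}}=\int_{\mathcal{M}}\innerp{\mathcal{P}^\dagger(\dots)\phi}{\psi}\dd{\underline{s}}=\innerp*{\int_{\mathcal{M}}\mathcal{P}^\dagger(\dots)\phi\dd{\underline{s}}}{\psi}.
\end{equation*}
This is the required identity.

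The main obstacle is the meaning of \(\of{\mathcal{P}(\dots)}\adj\psi\) in the statement. The true operator adjoint of the product may have a larger domain than \(\cinftyofh\), so the claimed formula has to be read as the reversed product acting on \(\cinftyofh\). I would make this precise by showing that, on \(\cinftyofh\), the adjoint of the monomial extends the reversed product. That follows from the inner-product identity above together with density of \(\cinftyofh\). This also requires symmetry of each \(H(s)\) on \(\cinftyofh\); if that is not part of the standing assumptions, I would add it as a hypothesis or derive it from the conjugation structure.
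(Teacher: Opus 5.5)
Your proof is correct and follows the paper's argument. You expand \(\mathcal{P}\) into monomials, move each factor across the inner product using invariance of \(\cinftyofh\) and symmetry of \(H(s)\), and pull the bounded functional \(\psi\mapsto\langle\phi,\psi\rangle\) through the Bochner integral. Your explicit integrability check via \cref{thm:strong-continuity-monomials}, and your remark that symmetry of each \(H(s)\) is used tacitly rather than listed among the properties, fill in points the paper leaves implicit.
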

\begin{proof}
    Let \(\psi,\phi \in \cinftyofh\) and
    \begin{equation}
        \mathcal{P}(H(s_1),\dots,H(s_v))
        = \sum_{n=0}^{d} \sum_{\underline{i}\in \Set{1,\dots,v}^{\times n}} c_{\underline{i}} H(s_{i_1})\dots H(s_{i_n})
        ,
    \end{equation}
    with certain $c_{\underline{i}}\in\C$. Then
    \begin{align}
        \innerp{\phi}{A\psi}
        &= \sum_{n=0}^{d} \sum_{\underline{i}\in \Set{1,\dots,v}^{\times n}} c_{\underline{i}} \int_{\mathcal{M}} \innerp{\phi}{H(s_{i_1})\dots H(s_{i_n})\psi} \dd{\underline{s}}
        \nonumber\\
        &= \sum_{n=0}^{d} \sum_{\underline{i}\in \Set{1,\dots,v}^{\times n}} c_{\underline{i}} \int_{\mathcal{M}} \innerp{H(s_{i_n})\dots H(s_{i_1})\phi}{\psi} \dd{\underline{s}}
        \nonumber\\
        &= \innerp*{\int_{\mathcal{M}}\sum_{n=0}^{d} \sum_{\underline{i}\in \Set{1,\dots,v}^{\times n}}\overline{c_{\underline{i}}}H(s_{i_n})\dots H(s_{i_1})\phi \dd{\underline{s}}}{\psi}
        \nonumber\\
        &= \innerp*{\int_{\mathcal{M}} \of{\mathcal{P}(H(s_1),\dots,H(s_v))}\adj \dd{\underline{s}}\,\phi}{\psi}
        ,
    \end{align}
    as \(\cinftyofh\) is invariant under \(H(s)\), and \(H(s)\) is symmetric.
    This proves the desired claim.
\end{proof}

In particular, by \cref{thm:Heff-integral-polynomial}, \cref
{thm:polynomial_adjoint} applies to our effective Hamiltonians:
\begin{corollary}
    We have
    \begin{equation}
        \cinftyofh \subseteq \domof*{\of*{\Heff{L}{T}}\adj}
    \end{equation}
    and
    \begin{equation}
        \of*{\Heff{L}{T}}\adj\psi  = \int_{\mathcal{M}_L} \tilde{\mathcal{P}}_L(H(s_1),\dots,H(s_{v_L})) \psi \dd{\underline{s}},
    \end{equation}
    where
    \begin{equation}
        \tilde{\mathcal{P}}_L(H(s_1),\dots,H(s_{v_L}))\psi
        = \of*{\mathcal{P}_L(H(s_1),\dots,H(s_{v_L}))}\adj\psi
    \end{equation}
    for all \(\psi \in \cinftyofh\) and \(\tilde{\mathcal{P} }\) is a polynomial of the same order as \(\mathcal{P}\).
\end{corollary}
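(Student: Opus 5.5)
This follows directly from \cref{thm:Heff-integral-polynomial} combined with \cref{thm:polynomial_adjoint}. First, by \cref{thm:Heff-integral-polynomial} (applicable under \cref{assump:A-H-cinf,assump:A-H-cont,assump:A-H-bound,assump:A-H-period}), there are $v_L,d_L\in\N_0$, a polynomial $\mathcal{P}_L$ of total degree $d_L$ and a compact region $\mathcal{M}_L\subseteq\R^{v_L}$ with
\begin{equation*}
    \Heff{L}{T}\psi = \int_{\mathcal{M}_L}\mathcal{P}_L(H(s_1),\dots,H(s_{v_L}))\psi\,\dd{\underline{s}}, \qquad \psi\in\cinftyofh .
\end{equation*}
Here the coefficients of $\mathcal{P}_L$ may depend on $T$, but for fixed $T$ this is irrelevant. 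The Bochner integrability of the integrand is guaranteed by \cref{thm:strong-continuity-monomials}, since each monomial $H(s_{i_1})\cdots H(s_{i_n})\psi$ is continuous in $\underline{s}$.

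Applying \cref{thm:polynomial_adjoint} with $A=\Heff{L}{T}$, $\mathcal{P}=\mathcal{P}_L$, $\mathcal{M}=\mathcal{M}_L$ then gives $\cinftyofh\subseteq\domof{(\Heff{L}{T})\adj}$ together with the stated integral formula for the adjoint. The key input is the symmetry of each $H(s)$ on the invariant domain $\cinftyofh$, so that $\innerp{\phi}{H(s_{i_1})\cdots H(s_{i_n})\psi}=\innerp{H(s_{i_n})\cdots H(s_{i_1})\phi}{\psi}$. The integral and the inner product may be interchanged because the Bochner integral commutes with bounded functionals.

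It remains to identify $\tilde{\mathcal{P}}_L$ explicitly as a polynomial. Writing $\mathcal{P}_L=\sum_n\sum_{\underline i}c_{\underline i}\,x_{i_1}\cdots x_{i_n}$, we set
\begin{equation*}
    \tilde{\mathcal{P}}_L=\sum_n\sum_{\underline i}\overline{c_{\underline i}}\,x_{i_n}\cdots x_{i_1},
\end{equation*}
that is, the coefficients are conjugated and each monomial is reversed, understood as a noncommutative polynomial. This is again a polynomial in $v_L$ variables with the same total degree, and on $\cinftyofh$ we have $\tilde{\mathcal{P}}_L(H(s_1),\dots)\psi=(\mathcal{P}_L(H(s_1),\dots))\adj\psi$, where the adjoint of the polynomial acts on $\cinftyofh$ exactly as in the proof of \cref{thm:polynomial_adjoint}.

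The only step needing care is that \cref{thm:polynomial_adjoint} presupposes symmetry of $H(s)$, which is implicit in the standing setting; without it the formula would involve $H(s)\adj$ rather than $H(s)$. I expect no real obstacle beyond bookkeeping.
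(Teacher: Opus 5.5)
Your proposal is correct and matches the paper's argument: the corollary follows by inserting the integral-polynomial representation from \cref{thm:Heff-integral-polynomial} into \cref{thm:polynomial_adjoint}, and $\tilde{\mathcal{P}}_L$ is obtained by conjugating the coefficients and reversing each monomial, exactly as in the proof of that lemma. The paper also tacitly relies on the symmetry of $H(s)$ on $\cinftyofh$ that you flag, both in this lemma and in \cref{thm:Heff-integral-polynomial}.
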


The following result provides a general mechanism for transferring algebraic identities established in the bounded setting to the present unbounded framework. We will then use this property to establish symmetry of the effective Hamiltonians. 
\begin{proposition}\label{thm:integral-polynomials-and-spectral-approximations}
    Let \(\mathcal{P}\) be a polynomial in \(v \in \N_0\) variables of degree \(d\in\N_0\), and consider
    \begin{equation}
        A = \int_{\mathcal{M}} \mathcal{P}\bigl(H(s_1), \dots, H(s_v)\bigr)\, \psi \,\dd{\underline{s}},
    \end{equation}
    where \(\mathcal{M} \subseteq \R^v\) is a compact region. Suppose \(A = 0\) for every choice of strongly continuous family of self-adjoint bounded Hamiltonians \((H(t))_{t\in\R}\). Then \(A = 0\) for every (possibly unbounded) \((H(t))_{t\in\R}\) satisfying \cref{assump:A-H-spec}.
\end{proposition}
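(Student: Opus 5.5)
The plan is to approximate the unbounded family by bounded families obtained through spectral cut-offs of \(H_0\), and to use the finite-range coupling of \cref{assump:A-H-spec} to show that, on a fixed vector, the compressed polynomial eventually agrees with the original one. Fix \(\psi,\phi\in\cinftyofh\). It suffices to show \(\innerp{\phi}{A\psi}=0\) for \(\phi,\psi\) in the dense set \(\mathcal{D}_{\mathrm{fin}}=\bigcup_{N}\ran E_0([\alpha_{-N},\beta_N])\), which lies in \(\cinftyofh\). We then extend to general \(\psi\in\cinftyofh\) by continuity. This extension uses the relative bounds of \cref{assump:A-H-bound} and the integral estimates of \cref{thm:strong-continuity-monomials}, together with the fact that \(E_0([\alpha_{-N},\beta_N])\psi\to\psi\) in every graph norm of \(H_0^m\).

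\textbf{Compression.} For \(N\in\N\) let \(P_N=E_0\bigl(\bigcup_{|j|\le N}[\alpha_j,\beta_j]\bigr)\), and define the truncated family \(H_N(t)=P_N H(t) P_N\). Since \(\ran P_N\subseteq\dom(H_0^m)\) for all \(m\), \cref{assump:A-H-bound} shows that \(H_N(t)\) is bounded, with norm controlled by \(a_0\norm{H_0^{k_0}P_N}+b_0\). It is symmetric, hence self-adjoint, and it is strongly continuous by \cref{assump:A-H-cont}. By hypothesis, \(A_N:=\int_{\mathcal{M}}\mathcal{P}(H_N(s_1),\dots,H_N(s_v))\dd{\underline{s}}=0\).

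\textbf{Key locality step.} Take \(\psi\in\ran E_0(\bigcup_{|j|\le M}[\alpha_j,\beta_j])\). By \cref{assump:A-H-spec}, \(H(t)\) maps spectral block \(j\) into blocks \(i\) with \(|i-j|\le K\). Insert the resolution \(\psi=\sum_j E_0([\alpha_j,\beta_j])\psi\) and use \(H(t)\cinftyofh\subseteq\cinftyofh\). By induction on the word length \(n\le d\), any monomial \(H(s_{i_1})\cdots H(s_{i_n})\psi\) has range in blocks \(|j|\le M+nK\). Care is needed because the blocks may overlap at endpoints; I would handle this by working with the projections \(E_0\) of half-open intervals, or by noting that shared endpoints cause only finitely many overlaps. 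Hence, for \(N\ge M+dK\), every intermediate vector in the word lies in \(\ran P_N\), so \(H(s_{i_1})\cdots H(s_{i_n})\psi=H_N(s_{i_1})\cdots H_N(s_{i_n})\psi\) pointwise in \(\underline{s}\). Integrating gives \(A\psi=A_N\psi=0\).

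Since \(\mathcal{D}_{\mathrm{fin}}\) is then in the kernel, the remaining task is to pass to \(\psi\in\cinftyofh\). Write \(\psi_M=P_M\psi\). Then
\[\norm{A(\psi-\psi_M)}\le C\bigl(\norm{H_0^{k}(\psi-\psi_M)}+\norm{\psi-\psi_M}\bigr)\to0,\]
using the iterated relative bound from the proof of \cref{thm:strong-continuity-monomials}, which is uniform over the compact set \(\mathcal{M}\) by \cref{assump:A-H-bound}. Hence \(A\psi=0\).

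The main obstacle I expect is the bookkeeping in the locality step. It requires ensuring that \(H(t)\) restricted to block ranges really maps into finitely many blocks, not merely that the off-diagonal compressions vanish. This follows from \(\sum_i E_0([\alpha_i,\beta_i])\ge\one\) on \(\sigma(H_0)\), combined with \(H(t)E_0(\cdot)\psi\in\cinftyofh\). The second delicate point is that \(H_N\) must be strongly continuous as a bounded family. This holds because \(P_N\) is a finite-energy projection, so \(\norm{(H(t)-H(t'))P_N\xi}\) is controlled via \(\cinftyofh\)-continuity on the range of \(P_N\).
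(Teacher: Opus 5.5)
Your argument is correct, and its core is the same as the paper's. You compress $H(t)$ by a spectral projection of $H_0$ onto finitely many blocks, use the finite-range coupling of \cref{assump:A-H-spec} to see that every word of length at most $d$ is unchanged by the compression on a finite-energy vector, and then apply the hypothesis to the resulting bounded self-adjoint family. The execution differs from the paper's in two places.

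First, the bookkeeping. The paper argues by contradiction. It splits a vector of $\ran E_0([-n,n])$ into single-block pieces $\xi_n=E_0([\alpha_n,\beta_n])\xi$ and compresses each with a projection $P_{dK}$ centred at block $n$, so the bounded family depends on $n$. You use a single projection $P_N$ onto the blocks $|j|\le N$. This avoids blocks straddling $\pm n$ or accumulating in a bounded interval. It also avoids block projections that are not orthogonal when $\beta_j=\alpha_{j+1}$, where the identity $E_0([\alpha_j,\beta_j])\xi_n=\delta_{j,n}\xi_n$ used in the paper can fail. You also check the strong continuity of the compressed family, which the hypothesis requires and the paper leaves implicit.

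Second, the passage from finite-energy vectors to all of $\cinftyofh$. The paper uses closability of $A$, obtained from $\cinftyofh\subseteq\domof{A\adj}$ in \cref{thm:polynomial_adjoint}. As written, that step assumes $A\xi_n$ converges for an arbitrary approximating sequence, which closability alone does not guarantee. It becomes complete when read weakly, by pairing with $\phi\in\cinftyofh$ and moving $A$ onto $A\adj\phi$. Your route gives the convergence directly. You use the estimate $\norm{A\chi}\le C(\norm{H_0^k\chi}+\norm{\chi})$, which is uniform over $\mathcal{M}$ by \cref{assump:A-H-bound}, together with $P_M\psi\to\psi$ in every graph norm of $H_0^m$. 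This makes the adjoint lemma unnecessary, and the inner-product reduction in your first paragraph is never used.

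Both arguments need $H(t)$ to be symmetric on $\cinftyofh$ so that the compressions are self-adjoint. The paper also takes this for granted.
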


\begin{proof}
    Proof by contradiction.
    Assume \((H(t))_{t\in\R}\) satisfies \cref{assump:A-H-spec}, and \(A\neq 0\), i.e. there exists \(\psi\in \cinftyofh\) such that
    \begin{equation}\label{eq:nonzero}
        A\psi \neq 0
        .
    \end{equation}
    We want to show that there is a bounded operator \(\tilde{H}(s)\) such that the corresponding 
    \begin{equation}
        \tilde{A} =  \int_{\mathcal{M} } \mathcal{P}(\tilde{H}(s_1),\dots,\tilde{H}(s_v)) \dd{\underline{s}}
    \end{equation}
    is non-zero.
    
    Consider
    \begin{equation}
        \cinftyzeroofh = \bigcup_{n\in \N} \ran E_0\of{[-n,n]}
        ,
    \end{equation}
    which is a subspace of \(\cinftyofh\) and is dense in \(\HH\)~\cite[Lemma~1.5]{araiAnalysisFockSpaces2024}. Because of this, there must exist \(\xi \in \cinftyzeroofh\) such that \(A \xi \neq 0\). Indeed, by density, the vector $\psi\in\cinftyofh$ in \cref{eq:nonzero} can be written as
    \begin{equation}
        \psi=\lim_{n\to\infty}\xi_n,\qquad (\xi_n)_n\subseteq\cinftyzeroofh.
    \end{equation}
    Since \(\cinftyofh \subseteq \dom(A^\ast)\) by \cref{thm:polynomial_adjoint}, the operator \(A\) is closable. Therefore, $\lim_{n\to\infty}A\xi_n=\overline{A}\psi=A\psi\not=0$, so $A\xi_n\not=0$ for at least some $n$. 
    
    Having fixed $\xi\in\cinftyzeroofh$ with $A\xi\neq0$,
    by definition of $\cinftyzeroofh$ there exists \(n_\xi \in \Z\) such that $\xi\in\ran E_0([-n_\xi, n_\xi])$. Recall that \cref{assump:A-H-spec} holds, that is,
    \begin{equation}\label{eq:partition_spec}
       \sigma(H_0) \subseteq \bigcup_{j\in\Z} [\alpha_j,\beta_j]
    \end{equation}
    and, if $\abs{i-j}>K$, then
    \begin{equation}\label{eq:kappa}
      E_{0}([\alpha_i,\beta_i]) H(t) E_{0}([\alpha_j,\beta_j])
      = 0.
    \end{equation}
    By \cref{eq:partition_spec}, we can then write
    \begin{gather}
        \xi \in \ran E_0([-n_\xi, n_\xi])
        = \bigoplus_{\substack{ n\in\Z \\ \mathclap{[\alpha_n,\beta_n]\subseteq [-n_\xi, n_\xi]}} } \ran E_0([\alpha_n,\beta_n])
        ,
    \shortintertext{that is,}
        \xi
        = \sum_{\substack{ n\in\Z \\ \mathclap{[\alpha_n,\beta_n]\subseteq [-n_\xi, n_\xi]}} } \xi_n
    \end{gather}
    where \(\xi_n = E_0([\alpha_n,\beta_n]) \xi\).
    Given \(n\in \Z\), set \(P_k = E_0([\alpha_{n - k},\beta_{n + k}])\). 
    Then
    \begin{align}
        P_k H(t) P_k \xi_n
        &=\sum_{j\in \Z}\sum_{i = j-K}^{j+K} P_k E_0([\alpha_i,\beta_i]) H(t) \underbrace{E_0([\alpha_j,\beta_j]) \xi_n}_{=\delta_{j,n}\xi_n}
        \nonumber\\
        &= \sum_{i=n-K}^{n+K} \underbrace{P_k E_0([\alpha_i,\beta_i])}_{=E_0([\alpha_i,\beta_i])} H(t)  E_0([\alpha_n,\beta_n])\xi_n
        \nonumber\\
        &= \sum_{i\in \Z} \underbrace{E_0([\alpha_i,\beta_i]) H(t)  E_0([\alpha_n,\beta_n])}_{=0,\ \text{if}\ \abs{i-n}>K} \xi_n
        \nonumber\\
        &= H(t) \xi_n
    \end{align}
    for every \(k \geq K\) and \(t\in \R\), where we used \cref{eq:kappa} and \(E_0([\alpha_n,\beta_n])\xi_n = \xi_n\). 
    In particular, \(H(t)\xi_n \in \ran P_{K}\subseteq \ran P_{d K}\) and \(P_{d K}H(t) P_{d K} \xi_n = H(t) \xi_n\).
    Because of this, we have for any \(\underline{i}\in \Set{1,\dots,v}^{\times d}\)
    \begin{align}
        P_{d K} H(s_{i_1}) P_{d K} \dots P_{d' K} H(s_{i_{d}}) P_{d K} \xi_n
        = H(s_{i_1}) \dots H(s_{i_{d}}) \xi_n
    \end{align}
    and hence
    \begin{equation}
        \mathcal{P}\of{P_{d K}H(s_{i_1})P_{d K},\dots,P_{d K}H(s_{i_{d}})P_{d K}}\xi_n
        = \mathcal{P}\of{H(s_{s_1}),\dots,H(s_{s_{d}})}\xi_n
    \end{equation}
    and
    \begin{align}
        0 \neq A\xi_n
        &=
            \int_{\mathcal{M}_L} \mathcal{P}(P_{d K}H(s_1)P_{d K},\dots,P_{d K}H(s_d)P_{d K}) \xi_n \dd{\underline{s}}
        \nonumber\\
        &= \int_{\mathcal{M}_L} \mathcal{P}(\tilde{H}(s_1),\dots,\tilde{H}(s_v)) \xi_n \dd{\underline{s}}
    \end{align}
    for at least one \(n\), with \(\tilde{H}(s) = P_{d K}H(s)P_{d K}\) being a bounded self-adjoint operator:
    \begin{equation}
        \norm{P_k H(t) P_k \psi} 
        \leq a_0 \norm{H_0^{k_0} P_k\psi} + b_0\norm{P_k\psi} \leq (a_0 (\beta_{n+k}^{k_0}) + b_0)\norm{\psi}
        ,
    \end{equation}
    for every \(\psi \in \cinftyofh\).
    This contradicts the assumption that \(A = 0\) for every strongly continuous family \(\of*{H(t)}_{t\in \R}\) of bounded self-adjoint operators.
\end{proof}

\begin{corollary}\label{thm:effective-Hamiltonian-symmetric}
    Let \(L \in \N_0\), and assume \((H(t))_{t\in\R}\) satisfies \cref{assump:A-H-spec}. Then  \(\Heff{L}{T}\) is symmetric on \(\cinftyofh\).
\end{corollary}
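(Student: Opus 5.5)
The plan is to reduce symmetry to an algebraic identity between integral polynomials and transfer it from the bounded case via \cref{thm:integral-polynomials-and-spectral-approximations}. By \cref{thm:Heff-integral-polynomial}, there are \(\mathcal{P}_L\), \(v_L\) and \(\mathcal{M}_L\) such that \(\Heff{L}{T}\psi = \int_{\mathcal{M}_L} \mathcal{P}_L(H(s_1),\dots,H(s_{v_L}))\psi\,\dd{\underline{s}}\) for \(\psi\in\cinftyofh\). By the corollary following \cref{thm:polynomial_adjoint}, \(\cinftyofh\subseteq\domof*{\of*{\Heff{L}{T}}\adj}\) and \(\of*{\Heff{L}{T}}\adj\psi = \int_{\mathcal{M}_L}\tilde{\mathcal{P}}_L(H(s_1),\dots,H(s_{v_L}))\psi\,\dd{\underline{s}}\). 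Here \(\tilde{\mathcal{P}}_L\) is obtained by conjugating the coefficients and reversing the order of each monomial.

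Symmetry on the dense domain \(\cinftyofh\) is therefore equivalent to the vanishing, on \(\cinftyofh\), of
\[
A = \int_{\mathcal{M}_L} \bigl(\mathcal{P}_L - \tilde{\mathcal{P}}_L\bigr)(H(s_1),\dots,H(s_{v_L}))\,\dd{\underline{s}}.
\]
This operator is again an integral of a polynomial in the \(H(s_i)\) over a compact region. Since \(H(t)\) satisfies \cref{assump:A-H-spec}, \cref{thm:integral-polynomials-and-spectral-approximations} gives \(A=0\), provided \(A=0\) for every strongly continuous family of bounded self-adjoint Hamiltonians.

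It remains to verify the bounded case. Here I invoke \cref{thm:equivalence-FM-bounded}: for such a family (1-periodic, as in \cref{assump:A-H-period}), \(\Heff{L}{T}=\HFM{L}{T}\), which is symmetric because each \(\Omega_l^{(T)}\) is skew-adjoint. Hence \(\Heff{L}{T}=\of*{\Heff{L}{T}}\adj\) in the bounded case, and this is exactly \(A=0\) for that family.

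The main obstacle is to make sure that the polynomial \(\mathcal{P}_L\) and the region \(\mathcal{M}_L\) are \emph{universal}. They must be built from the recursion of \cref{def:Heff-definition} alone, independent of the specific family, so that the same expression evaluated on bounded families is literally the bounded \(\Heff{L}{T}\). I would make this explicit by noting that the induction in \cref{thm:Heff-integral-polynomial} only uses the formal recursion, with no data of \(H\).

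A second point to check is that \cref{thm:integral-polynomials-and-spectral-approximations} only produces families \(\tilde H(s)=P_{dK}H(s)P_{dK}\) that inherit periodicity and strong continuity from \(H\). Thus it suffices to know \(A=0\) on bounded \emph{periodic} strongly continuous self-adjoint families, which is precisely the class covered by \cref{thm:equivalence-FM-bounded}.
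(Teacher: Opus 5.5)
Your proposal is correct and matches the paper's proof, which likewise applies \cref{thm:integral-polynomials-and-spectral-approximations} to \(A = \Heff{L}{T} - \of*{\Heff{L}{T}}\adj\) and uses \cref{thm:equivalence-FM-bounded} for the bounded case. Your two extra checks are valid: that \(\mathcal{P}_L\) and \(\mathcal{M}_L\) are universal, and that the truncated families \(P_{dK}H(s)P_{dK}\) stay periodic, which \cref{thm:equivalence-FM-bounded} requires. They make explicit what the paper's one-line argument leaves implicit.
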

\begin{proof}
	This follows directly by applying \cref{thm:integral-polynomials-and-spectral-approximations} to \(A = \Heff{L}{T} - \of*{\Heff{L}{T}}\adj\) and by \cref{thm:equivalence-FM-bounded}.
\end{proof}
We have established that, by assuming \cref{assump:A-H-cinf,assump:A-U-cinf,assump:A-H-cont,assump:A-U-diff,assump:A-H-bound,assump:A-U-bound,assump:A-H-period,assump:A-H-spec} to be satisfied, our construction yields symmetric effective Hamiltonians. Since these operators are, in general, unbounded, it remains to show that they admit self-adjoint extensions. To this end, we require the additional \cref{assump:A-H-conj}, which ensures the existence of a conjugation $J$ such that $JH(t) \subseteq H(-t)J$. For the sake of completeness we recall:
\begin{definition}
    A \emph{conjugation} is an anti-unitary involution on \(\HH\), i.e. a map \(J\colon \HH \to \HH\) with
    \begin{align}
        \label{eq:anti-linear}
        J(\alpha \phi + \beta \psi)
        &= \conj{\alpha} J \phi + \conj{\beta} J \psi
        ,\\
        \label{eq:anti-isometric}
        \innerp{J \phi}{J \psi}
        &= \innerp{\psi}{\phi}
        ,\\
        \label{eq:involution}
        J J
        &= \one
        .
    \end{align}
\end{definition}

\begin{theorem}[{von Neumann's criterion~\cite[Thm.~5.43]{morettiSpectralTheoryQuantum2017}}]\label{thm:vonNeumanncriterion}
    Let \(A\) be a symmetric operator on \(\HH\).
    If there exists a conjugation \(J\) such that
    \begin{equation}
        J A \subseteq A J
        ,
    \end{equation}
    then \(A\) admits at least one self-adjoint extension.
\end{theorem}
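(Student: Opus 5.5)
The plan is to reduce the statement to von Neumann's theorem on deficiency indices: a closed symmetric operator has self-adjoint extensions if and only if $\dim\ker(A\adj - \iu) = \dim\ker(A\adj + \iu)$. Since $A$ and its closure $\close{A}$ have the same adjoint, we may work with $A$ directly. The goal is therefore to exhibit a bijection between the two deficiency spaces $\mathcal{K}_\pm = \ker(A\adj \mp \iu)$ that preserves Hilbert dimension. The conjugation $J$ will provide it.

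\textbf{Step 1: $J$ commutes with the adjoint, i.e.\ $J A\adj \subseteq A\adj J$.} Let $\varphi \in \domof{A\adj}$ and $\psi \in \domof{A}$. The hypothesis $JA \subseteq AJ$ gives $J\psi \in \domof{A}$ and $AJ\psi = JA\psi$. Using \cref{eq:anti-isometric,eq:involution} repeatedly, we compute
\begin{equation*}
  \innerp{J\varphi}{A\psi}
  = \innerp{J\varphi}{JJA\psi}
  = \innerp{JA\psi}{\varphi}
  = \innerp{AJ\psi}{\varphi}
  = \innerp{J\psi}{A\adj\varphi}
  = \innerp{JA\adj\varphi}{JJ\psi}
  = \innerp{JA\adj\varphi}{\psi}.
\end{equation*}
Since this holds for all $\psi \in \domof{A}$, we get $J\varphi \in \domof{A\adj}$ and $A\adj J\varphi = JA\adj\varphi$. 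This bookkeeping with antilinearity, namely getting the order of the arguments in the inner products right, is the only delicate point. I expect it to be the main, if modest, obstacle.

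\textbf{Step 2: $J$ swaps the deficiency spaces.} Take $\varphi \in \mathcal{K}_+$, so $A\adj\varphi = \iu\varphi$. By Step 1 and antilinearity,
\begin{equation*}
  A\adj J\varphi = J(\iu\varphi) = -\iu J\varphi,
\end{equation*}
so $J\mathcal{K}_+ \subseteq \mathcal{K}_-$. The same argument gives $J\mathcal{K}_- \subseteq \mathcal{K}_+$. Since $J^2 = \one$, it follows that $J$ restricts to a bijection $\mathcal{K}_+ \to \mathcal{K}_-$.

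\textbf{Step 3: conclude via dimensions.} By \cref{eq:anti-isometric}, $J$ maps orthonormal systems to orthonormal systems. Combined with bijectivity, $J$ sends an orthonormal basis of $\mathcal{K}_+$ to an orthonormal basis of $\mathcal{K}_-$. Hence the deficiency indices coincide, including the infinite case. Von Neumann's extension theorem then yields a unitary $U\colon \mathcal{K}_+ \to \mathcal{K}_-$ (for instance, one obtained from the basis correspondence just constructed). This $U$ defines a self-adjoint extension of $\close{A}$, and therefore of $A$.
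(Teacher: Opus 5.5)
Your proof is correct: it is the standard deficiency-index argument, in which $J$ intertwines $A\adj$ and so maps $\ker(A\adj-\iu)$ anti-unitarily onto $\ker(A\adj+\iu)$, and you correctly build the required linear unitary from the basis correspondence rather than from the antilinear $J$ itself. The paper gives no proof of its own and simply cites Moretti's Theorem~5.43, which rests on this same argument.
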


Our goal is to show that, if \cref{assump:A-H-conj} is satisfied, then $J \Heff{L}{T} \subseteq \Heff{L}{T} J$, thus establishing the existence of self-adjoint extensions of $\Heff{L}{T}$. We begin with a preliminary lemma:

\begin{lemma}\label{thm:conjugation-Sop}
    Assume \(J\Heffl{l}J = \Heffl{l}\) for all \(l=0,\dots,L\) and some \(L \in \N_0\).
    For every \(j \in \N_0\)
    \begin{equation}
        \label{eq:conjugation-Heff-proof-1}
        J\Sopeff{j,l}{T}(s)J
        = (-1)^j \Sopeff{j,l}{T}(-s)
        .
    \end{equation}
\end{lemma}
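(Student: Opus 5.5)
The plan is induction over \(j\in\N_0\), using the recursive definition \cref{eq:def-Sopeff} together with \cref{assump:A-H-conj}. The latter reads \(JH(t)\subseteq H(-t)J\); rescaling gives \(JH^{(T)}(t)\subseteq H^{(T)}(-t)J\). Here \(l\) plays the role of \(L\) in \cref{eq:def-Sopeff}, with \(l\le L\), so \(J\Heff{l}{T}J=\Heff{l}{T}\) on \(\cinftyofh\) because \(T\) is real.

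We first need \(J\cinftyofh=\cinftyofh\). I would either add the assumption \(JH_0\subseteq H_0J\) or use the form in which the identity is needed, namely on vectors of \(\cinftyofh\) for which all expressions are defined. This domain point is the main technical obstacle.

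The base case \(j=0\) is trivial, since \(\Sopeff{0,l}{T}=\one\). For the induction step, assume \(J\Sopeff{j,l}{T}(s)J=(-1)^j\Sopeff{j,l}{T}(-s)\) for all \(s\). Compute, on \(\cinftyofh\),
\begin{equation*}
J\Kopeff{s,l}{T}\of*{\Sopeff{j,l}{T}(s)}J
= J\Heff{l}{T}J\,J\Sopeff{j,l}{T}(s)J - J\Sopeff{j,l}{T}(s)J\,JH^{(T)}(s)J
= (-1)^j\Kopeff{-s,l}{T}\of*{\Sopeff{j,l}{T}(-s)} .
\end{equation*}
Next, pull \(J\) through the Bochner integrals. \(J\) is antilinear and isometric, hence continuous, and the integrals are real-variable integrals, so \(J\int f=\int Jf\).

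Then
\begin{equation*}
J\int_0^s \Kopeff{r,l}{T}\of*{\Sopeff{j,l}{T}(r)}\,\dd r\, J
= (-1)^j\int_0^s \Kopeff{-r,l}{T}\of*{\Sopeff{j,l}{T}(-r)}\,\dd r
= (-1)^{j+1}\int_0^{-s}\Kopeff{r,l}{T}\of*{\Sopeff{j,l}{T}(r)}\,\dd r .
\end{equation*}
For the average term, \(J\bigl(s\avg{T}{\cdot}\bigr)J=(-1)^j s\,\avg{T}{\Kopeff{-\cdot,l}{T}(\Sopeff{j,l}{T}(-\cdot))}\). The reflected average over \([-T,0]\) equals the average over \([0,T]\) by \(T\)-periodicity of the integrand, which follows from \cref{thm:Sopeff-asymptotics} and the periodicity of \(H\). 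Writing \(s=-(-s)\), this term becomes \((-1)^{j+1}(-s)\avg{T}{\Kopeff{\cdot,l}{T}(\Sopeff{j,l}{T})}\). Subtracting the two pieces gives exactly \((-1)^{j+1}\Sopeff{j+1,l}{T}(-s)\), which closes the induction.

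The delicate points are the following. Antilinearity of \(J\) is harmless because no complex scalars other than real \(s,T\) appear in the recursion, so the factors \((-\iu)^j\) are never conjugated here. The domain equality \(J\cinftyofh\subseteq\cinftyofh\) is needed so that the compositions make sense; I would verify it or note that it is implicit in applying the lemma. Finally, periodicity in the average step relies on the integrand being \(T\)-periodic in the rescaled variable.
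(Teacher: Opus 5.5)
Your proof is correct and is essentially the paper's argument: induction on $j$, conjugating $\Kopeff{s,l}{T}\of{\Sopeff{j,l}{T}(s)}$ via $J\Heff{l}{T}J=\Heff{l}{T}$ and $JH^{(T)}(s)J=H^{(T)}(-s)$, pulling $J$ through the real-variable integrals, substituting $r\mapsto -r$, and using $T$-periodicity of the integrand to identify the reflected average with $\avg{T}{\cdot}$. The paper also leaves the domain point $J\cinftyofh\subseteq\cinftyofh$ that you flag unaddressed. For the lemma itself, that point is covered by the hypothesis: read as an equality of operators with domain $\cinftyofh$, $J\Heffl{l}J=\Heffl{l}$ forces $J\cinftyofh=\cinftyofh$. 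The actual burden therefore lies in establishing that hypothesis in the subsequent proposition, e.g.\ via $JH_0\subseteq H_0J$.
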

\begin{proof}
    We prove the claim by induction over \(j\in\N_0\).
    We have
    \begin{equation}
        J\Sopeff{0,l}{T}(s)J
        = J J
        = \one
        = (-1)^0 \Sopeff{0,l}{T}(-s)
        ;
    \end{equation}
    so the claim holds for \(j=0\).
    Now assume the claim holds for some \(j\).
    Then, for \(F_j(s) = \Kopeff{s,l}{T}\of*{\Sopeff{j,l}{T}(s)}\),
    \begin{equation}
        J F_j(s) J
        = J\Heff{l}{T} JJ \Sopeff{j,l}{T}(s)J - J\Sopeff{j,l}{T}(s)JJ H(s)J
        = (-1)^j F_j(-s)
    \end{equation}
    and therefore
    \begin{equation}
        J \avg*{T}{F_j} J
        = (-1)^j \avg*{T}{F_j}
        ,
    \end{equation}
    as \(F_j\) is \(T\)-periodic.
    In conclusion, we have
    \begin{align}
        J \Sopeff{j+1,l}{T} (t) J
        &= J \int_{0}^{t} \of*{F_j(s) - \avg*{T}{F_j} } \dd s J
        \nonumber\\
        &= - (-1)^j \int_{0}^{-t} \of*{F_j(s) - \avg*{T}{F_j} } \dd s
        \nonumber\\
        &= (-1)^{j+1} \Sopeff{j+1,l}{T}(-t)
    \end{align}
    which proves the claim for \(j+1\), thus completing the proof.
\end{proof}

\begin{proposition}\label{thm:effective-Hamiltonian-conjugation}
    Assume \((H(t))_{t\in\R}\) satisfies \cref{assump:A-H-cinf,assump:A-U-cinf,assump:A-H-cont,assump:A-U-diff,assump:A-H-bound,assump:A-U-bound,assump:A-H-period,assump:A-H-spec,assump:A-H-conj} and let \(\Heff{L}{T}\) be given according to \cref{eq:Heff-definition-0,eq:Heff-definition-l} for every \(L \in \N_0\). Then
    \begin{equation}
        J \Heff{L}{T} \subseteq \Heff{L}{T} J
    \end{equation}
    for every \(L \in \N_0\), and therefore $\Heff{L}{T}$ admits at least one self-adjoint extension.
\end{proposition}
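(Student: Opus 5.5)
We argue by strong induction over \(L\in\N_0\), proving simultaneously that \(J\Heffl{l}J = \Heffl{l}\) on \(\cinftyofh\) for all \(l\le L\). Here \(J\cinftyofh\subseteq\cinftyofh\) still has to be justified; see below.

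\emph{Base case.} Since \(\Heffl{0} = \avg{1}{H} = \int_0^1 H(s)\,\dd s\), we can pull \(J\) through the Bochner integral. This is allowed because \(J\) is anti-linear and isometric, hence continuous and real-linear. Using \cref{assump:A-H-conj} we get \(J\avg{1}{H}J\psi = \int_0^1 H(-s)\psi\,\dd s\), and by \(1\)-periodicity this equals \(\int_0^1 H(s)\psi\,\dd s\).

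\emph{Induction step.} Suppose \(J\Heffl{l}J=\Heffl{l}\) for all \(l\le L-1\). Then \(J\Heff{L-j}{T}J=\Heff{L-j}{T}\) for every \(j\ge1\), since \(T\) is real and \(J\) commutes with real scalars. \Cref{thm:conjugation-Sop}, applied with \(l=L-j\), gives \(J\Sopeff{j,L-j}{T}(s)J=(-1)^j\Sopeff{j,L-j}{T}(-s)\). As in its proof, this yields \(JF_j(s)J=(-1)^jF_j(-s)\) and hence \(J\avg{T}{F_j}J=(-1)^j\avg{T}{F_j}\) with \(F_j=\Kopeff{\cdot,L-j}{T}(\Sopeff{j,L-j}{T})\).

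The crucial point is the anti-linearity of \(J\): it gives \(J(-\iu)^jXJ=\iu^jJXJ=\iu^j(-1)^jX=(-\iu)^jX\). Consequently, each summand \((-\iu)^j\avg{T}{F_j}\) in \cref{eq:Heff-definition-LT} commutes with \(J\). Since \(T\) is real, taking \(T\)-coefficients commutes with conjugation by \(J\), which follows from uniqueness of polynomial coefficients. Therefore \(J\Heffl{l}J=\Heffl{l}\) for all \(l\le L\) by \cref{eq:Heff-definition-l}, and so \(J\Heff{L}{T}J=\Heff{L}{T}\).

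\emph{Domain issue.} Strictly, the inclusion \(J\Heff{L}{T}\subseteq\Heff{L}{T}J\) requires \(J\cinftyofh\subseteq\cinftyofh=\domof{\Heff{L}{T}}\). This is the step we expect to be the main obstacle, since \cref{assump:A-H-conj} only relates \(J\) to \(H(t)\), not to \(H_0\). In the concrete settings \(J\) commutes with \(H_0\) (cf. \cref{assump:B-V-conj}), which gives invariance directly. In general we would try to carry out the computation above only on \(\psi\) with \(J\psi\in\cinftyofh\), and use the integral-polynomial representation of \cref{thm:Heff-integral-polynomial}: by \cref{assump:A-H-conj}, \(JH(s_1)\cdots H(s_n)\psi=H(-s_1)\cdots H(-s_n)J\psi\) for \(\psi\) in the relevant domains. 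If needed, we would simply note that invariance of \(\cinftyofh\) under \(J\) is implicitly part of \cref{assump:A-H-conj}.

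Finally, \(\Heff{L}{T}\) is symmetric by \cref{thm:effective-Hamiltonian-symmetric}, using \cref{assump:A-H-spec}. Together with \(J\Heff{L}{T}\subseteq\Heff{L}{T}J\), von Neumann's criterion \cref{thm:vonNeumanncriterion} then yields a self-adjoint extension.
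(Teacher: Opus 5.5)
Your argument is the paper's proof: induction on $J\Heffl{l}J=\Heffl{l}$. The base case uses \cref{assump:A-H-conj} and periodicity. The induction step uses \cref{thm:conjugation-Sop} together with the anti-linearity identity $\overline{(-\iu)^j}(-1)^j=(-\iu)^j$. The conclusion follows from \cref{thm:effective-Hamiltonian-symmetric} and von Neumann's criterion.

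The domain point you raise is a genuine omission in the paper's proof, not a gap in yours. Since $\domof{\Heff{L}{T}}=\cinftyofh$, the inclusion $J\Heff{L}{T}\subseteq\Heff{L}{T}J$ forces $J\cinftyofh\subseteq\cinftyofh$. This does not follow from \cref{assump:A-H-conj} when $H(t)$ carries a domain larger than $\cinftyofh$. For example, $H\equiv 0$ on $\HH$, with unbounded $H_0$ and a conjugation not preserving $\cinftyofh$, satisfies all listed properties, yet the inclusion fails.

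The invariance does follow if $H(t)$ is read as an operator on $\cinftyofh$, and it holds in both concrete settings. Your justification covers only the interaction-picture case. Under \cref{assump:C-H-conj}, $J$ need not commute with $H_0$, but $JH(t)\subseteq H(-t)J$ iterates to $J\domof{H(t)^m}\subseteq\domof{H(-t)^m}$, and $\domof{H(t)^m}=\domof{H_0^m}$ by \cref{thm:C-equal-domains-powers}. So stating the invariance as an explicit hypothesis is the right fix.

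Your other fallback, working only on $\cinftyofh\cap J\cinftyofh$, would not suffice. Von Neumann's criterion would then give a self-adjoint extension of a restriction, which need not extend $\Heff{L}{T}$; the restriction need not even be densely defined.
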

\begin{proof}

It suffices to prove
\begin{equation}
    J \Heffl{l} J = \Heffl{l}
\end{equation}
for all $l=0,\ldots,L$ and $L\in\N_0$ and we do this by induction again.

For \(L=0\), we have
\begin{equation}
    J\Heffl{0}J
    = J \avg*{T}{H} J
    = \avg*{T}{H}
    = \Heffl{0},
\end{equation}
since \(JH(s)J = H(-s)\), so the claim holds for $L=0$.

Assume now that the claim holds for all \(l=0,\dots,L\) and some $L\in\N_0$, i.e., we have $J \Heffl{l} J = \Heffl{l}$ for $l=0,\ldots,L$, and we need to show this for $l=L+1$. By definition of $\Heffl{L+1}$ (\cref{eq:Heff-definition-l}) and anti-linearity of $J$, we have
\begin{align}
    J \Heffl{L+1} J
    &= - \sum_{j=1}^{L+1} (-1)^j (-\iu)^j 
    \Tcoeff{ J \avg*{T}{\Kopeff{\cdot,L+1-j}{T}\bigl(\Sopeff{j,L+1-j}{T}\bigr)} J }{L+1},
\end{align}
where the right-hand side depends on $\Heffl{l}$ only for $l=0,\ldots,L$, for which our claim is already assumed to be true. Then it follows from \cref{thm:conjugation-Sop} that 
\begin{equation}
    J \Kopeff{s,L+1-j}{T}\bigl(\Sopeff{j,L+1-j}{T}(s)\bigr)J = (-1)^j \Kopeff{-s,L+1-j}{T}\bigl(\Sopeff{j,L+1-j}{T}(-s)\bigr)
\end{equation}
and hence
\begin{equation}
    J \avg*{T}{\Kopeff{\cdot,L+1-j}{T}\bigl(\Sopeff{j,L+1-j}{T}\bigr)}J = (-1)^j \avg*{T}{\Kopeff{\cdot,L+1-j}{T}\bigl(\Sopeff{j,L+1-j}{T}\bigr)}.
\end{equation}
Substituting this into the previous expression yields
\begin{align}
    J \Heffl{L+1} J
    &= - \sum_{j=1}^{L+1} (-\iu)^j 
    \Tcoeff{ \avg*{T}{\Kopeff{\cdot,L-j}{T}\bigl(\Sopeff{j,L-j}{T}\bigr)} }{L+1}
    = \Heffl{L+1},
\end{align}
which proves the claim for $L+1$ and concludes the induction step.

 This proves the existence of a conjugation \(J\) such that \(J \Heff{L}{T} \subseteq \Heff{L}{T} J\). On the other hand, by \cref{assump:A-H-spec} and \cref{thm:effective-Hamiltonian-symmetric}, \(\Heff{L}{T}\) is symmetric. Therefore, by von Neumann's criterion for the existence of self-adjoint extensions, \cref{thm:vonNeumanncriterion}, \(\Heff{L}{T}\) admits at least one self-adjoint extension.
   \end{proof}

We have shown that, under our assumptions, all effective Hamiltonians are symmetric and admit self-adjoint extensions. We are now in a position to complete the argument initiated in \cref{sec:relation-FM-bounded-case} and to establish that our construction indeed reproduces the Floquet--Magnus expansion:

\begin{theorem}\label{thm:FM-Eff-equal-coefficients-unbounded}
    Let \((H(t))_{t\in\R}\) satisfy \cref{assump:A-H-cinf,assump:A-U-cinf,assump:A-H-cont,assump:A-U-diff,assump:A-H-bound,assump:A-U-bound,assump:A-H-period,assump:A-H-spec,assump:A-H-conj} and \(\HFM{L}{T}\) be given as in \cref{eq:FM-expansion-full-expression} for every \(L \in \N_0\) and \(T>0\). Then \(\HFM{L}{T}\) is well-defined on \(\cinftyofh\) and coincides with \(\Heff{L}{T}\).
\end{theorem}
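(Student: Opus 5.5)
The proof has three steps: make sense of the unbounded FM terms on \(\cinftyofh\), reduce the identity \(\HFM{L}{T} = \Heff{L}{T}\) to a vanishing statement for an integrated polynomial in the \(H(s)\), and then transfer the bounded-case identity, \cref{thm:equivalence-FM-bounded}, via \cref{thm:integral-polynomials-and-spectral-approximations}.

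\emph{Step 1: well-definedness.} By \cref{thm:FM-expansion-full-expression-rewritten}, each \(\HFMl{l}{}\) is, at least formally, a linear combination of iterated integrals over compact simplices of nested commutators of \(\iu H(\sigma_1),\dots,\iu H(\sigma_{l+1})\). Expanding the commutators, each term is an integral over a compact region \(\mathcal{M}\subseteq\R^{l+1}\) of a noncommutative polynomial in \(H(\sigma_1),\dots,H(\sigma_{l+1})\). I would prove by induction on \(l\) that \(\Omega^{(1)}_l(\sigma)\psi\) is such an integral for \(\psi\in\cinftyofh\). Since \(\cinftyofh\) is invariant under every \(H(s)\) (\cref{assump:A-H-cinf}), all monomials are defined on \(\cinftyofh\). \Cref{thm:strong-continuity-monomials} gives continuity in all time variables, hence Bochner integrability. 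The only delicate point is that an inner integral sits inside an outer operator \(H(\tau)\) when forming \(\operatorname{ad}_{\Omega}\). This is handled exactly as in the proof of \cref{thm:Heff-integral-polynomial}: by Hille's theorem (\cref{thm:Hille}) applied to the closure of \(H(\tau)\), which is symmetric and hence closable, \(H(\tau)\) can be pulled inside the integral, yielding a single integral of a polynomial. The same argument shows that \(\Omega^{(1)}_k(\sigma)\) maps \(\cinftyofh\) into itself, with \(H_0^m\)-relative bounds obtained from \cref{assump:A-H-bound}. Thus \(\HFM{L}{T}\psi=\sum_l T^l\int_{\mathcal{M}'_l}\mathcal{Q}_l(H(s_1),\dots)\psi\,\dd\underline{s}\) is well defined on \(\cinftyofh\).

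\emph{Step 2: reduction to a polynomial identity.} By \cref{thm:Heff-integral-polynomial}, \(\Heff{L}{T}\) has the same structure, with \(T\)-independent polynomial coefficients. Comparing coefficient by coefficient in \(T\), it suffices to show, for each \(l\le L\), that
\begin{equation*}
A_l \coloneqq \Heffl{l}-\HFMl{l}{}
\end{equation*}
vanishes on \(\cinftyofh\). The operator \(A_l\) is an integral over a compact region (after merging the domains of integration, extending by zero, or embedding into a product box) of a fixed polynomial in the \(H(s_i)\), and that polynomial is determined purely algebraically by the construction, independently of which family \(H\) is used. For every strongly continuous family of bounded self-adjoint \(1\)-periodic operators, \cref{thm:equivalence-FM-bounded} gives \(\Heff{L}{T}=\HFM{L}{T}\) for all \(T\), hence \(A_l=0\) by comparing polynomial coefficients.

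\emph{Step 3: transfer.} \Cref{thm:integral-polynomials-and-spectral-approximations} then yields \(A_l=0\) for our unbounded family satisfying \cref{assump:A-H-spec}. The remaining conclusions are inherited from \(\Heff{L}{T}\) via \cref{thm:effective-Hamiltonian-symmetric,thm:effective-Hamiltonian-conjugation,thm:general-error-ip}.

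\emph{Main obstacle.} I expect the delicate step to be the hypothesis of \cref{thm:integral-polynomials-and-spectral-approximations}, namely that vanishing holds for \emph{every} bounded self-adjoint family, while \cref{thm:equivalence-FM-bounded} needs periodicity. The truncated family \(\tilde H(s)=P_{dK}H(s)P_{dK}\) used in that proof inherits periodicity and strong continuity from \(H\), so I would check that the contradiction argument there only ever invokes bounded families of this truncated form, and apply \cref{thm:equivalence-FM-bounded} to them directly. A second check is that the polynomial representations from Steps 1 and 2 are literally the same formal expressions for bounded and unbounded \(H\). This holds because both are produced by the same algebraic recursion, the Hille interchange being an identity in either case.
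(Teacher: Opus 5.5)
Your proposal is correct and follows the paper's proof. Like the paper, you write both $\HFM{L}{T}$ and $\Heff{L}{T}$ on $\cinftyofh$ as integrals of noncommutative polynomials in the $H(s_i)$, and then transfer the bounded-case identity of \cref{thm:equivalence-FM-bounded} to the unbounded setting via \cref{thm:integral-polynomials-and-spectral-approximations}. Your extra checks are valid refinements of points the paper passes over quickly. These are the use of Hille's theorem to justify the integral-polynomial form of the nested FM commutators, and the observation that the truncated families $P_{dK}H(s)P_{dK}$ are periodic and strongly continuous, so that \cref{thm:equivalence-FM-bounded} (which requires periodicity) actually applies to them.
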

\begin{proof}
    Due to \cref{thm:FM-expansion-full-expression-rewritten,thm:Heff-integral-polynomial}, both \(\Heff{L}{T}\) and \(\HFM{L}{T}\) are of the form
    \begin{equation}
        \int_{\mathcal{M}} \mathcal{P}(H(s_1),\dots,H(s_v)) \dd{\underline{s}}
        .
    \end{equation}
    Therefore, so is \(A \coloneqq \Heff{L}{T} - \HFM{L}{T}\).
    Due to \cref{assump:A-H-cinf,assump:A-H-cont,def:integral-of-operator}, this implies \(\cinftyofh \subseteq \domof{\HFM{L}{T}}\) and \(\cinftyofh \subseteq \domof{A}\): \(H(s_j)\) leaves \(\cinftyofh\) invariant, so it is contained in the domain of \(\mathcal{P}(H(s_1),\dots,H(s_v))\) for any \(\underline{s}\); as \(\underline{s}\mapsto \mathcal{P}(H(s_1),\dots,H(s_v))\psi\) is continuous by \cref{thm:strong-continuity-monomials}, it is integrable and \(\cinftyofh\) is contained in the domains of \(A\) and \(\HFM{L}{T}\).
    
    By \cref{thm:equivalence-FM-bounded}, \(A\) would vanish if \((H(t))_{t\in\R}\) were a strongly continuous family of bounded self-adjoint operators. Therefore, by \cref{thm:integral-polynomials-and-spectral-approximations}, \(A= 0\) also for unbounded \((H(t))_{t\in\R}\) satisfying \cref{assump:A-H-cinf,assump:A-U-cinf,assump:A-H-cont,assump:A-U-diff,assump:A-H-bound,assump:A-U-bound,assump:A-H-period,assump:A-H-spec,assump:A-H-conj}.
\end{proof}

\section{Interaction picture of a time-independent Hamiltonian}\label{sec:interaction-picture}
In this section, we specialise our discussion to time-independent Hamiltonians of the form $H_0+V$, with $
	\domof{H_0 + V} = \domof{H_0}
    $,
and consider the associated interaction-picture dynamics, see \cref{sec:assumption-b}.
In this framework, the time-dependent Hamiltonian is given by
\begin{equation}
\label{eq:sec-effective-Hamiltonian-H-in-interatction-picture}
    H(t)
    := \e^{\iu t H_{0}}\, V \, \e^{-\iu t H_{0}},
\end{equation}
acting on $\domof{H_0}$.
The corresponding unitary evolution in the interaction picture is given by
\begin{equation}
	U(t)
	= \e^{\iu t H_0} \e^{-\iu t \of{H_0 + V}}.
\end{equation}
The operator $H(t)$ is well-defined for all $t \in \R$, since $\e^{-\iu t H_0}$ leaves $\domof{H_0}$ invariant and $\domof{H_0} \subseteq \domof{V}$ whenever $V$ is $H_0$-bounded (and symmetric) \cite[Lemma~6.2]{teschlMathematicalMethodsQuantum2009}.

We work under assumption of \assumpB{} for $H_0$ and $V$, which, in particular, ensures that $H(t)$ is $1$-periodic. 
As in the general setting, a $T$-periodic Hamiltonian can then be obtained by rescaling time (cf.~\cref{eq:rescaling}).
In the present case, this corresponds, at the level of the original Schrödinger picture Hamiltonian, to replacing $H_0 + V$ by
\begin{equation}\label{eq:rescaling_interactionpicture}
	\textfrac{1}{T} H_0 + V
    .
\end{equation}
The corresponding unitary propagator will then be denoted by $U^{(T)}(t)=\e^{\iu \frac{t}{T} H_0} \e^{-\iu t \of{\frac{1}{T}H_0 + V}}$. 

We will prove that this setting is a special case of the more general one considered in the previous sections: specifically, we show that the operators $H_0,(H(t))_{t\in\R},(U(t))_{t\in\R}$ introduced above satisfy \assumpA{}, thereby allowing us to specialise the main results of \cref{sec:Heff} to the interaction-picture dynamics; more precisely, we show the following:
\begin{center}
\begin{tabular}{rcl}
     \cref{assump:B-V-sym,assump:B-V-bound}& \(\implies\) &\cref{assump:A-H-cinf,assump:A-U-bound,assump:A-H-cont,assump:A-U-diff,assump:A-H-bound,assump:A-U-cinf} ,
     \\
     \cref{assump:B-V-period} & \(\iff\) & \cref{assump:A-H-period},
     \\
    \cref{assump:B-V-spec} & \(\implies\) & \cref{assump:A-H-spec},
    \\
    \cref{assump:B-V-conj} & \(\implies\) & \cref{assump:A-H-conj}.
\end{tabular}
\end{center}

We start with \cref{assump:A-H-cinf}.

\begin{lemma}\label{thm:invariance-under-V}
    Suppose that, for every \(m \in \N\), there exists an integer \(k_m \in \N_0\) such that \(H_0^m V\) is \(H_0^{m + k_m}\)-bounded. Then
    \begin{equation}\label{eq:V-energy-limited}
        V \domof{H_0^{m+k}} \subseteq \domof{H_0^m}
        ,
    \end{equation}
    and hence
    \begin{equation}
        V \cinfty\of{H_0} \subseteq \cinfty\of{H_0}
        .
    \end{equation}
    In particular, this is the case if \cref{assump:B-V-bound} is satisfied.
\end{lemma}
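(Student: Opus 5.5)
Here $k=k_m$, and $H_0^m V$ being $H_0^{m+k}$-bounded means $\domof{H_0^{m+k}}\subseteq\domof{H_0^m V}$ together with a bound $\norm{H_0^m V\psi}\le a_m\norm{H_0^{m+k}\psi}+b_m\norm{\psi}$ on that domain. The plan is to read off both claims directly from this definition.

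For the first inclusion, take $\psi\in\domof{H_0^{m+k}}$. Then $\psi\in\domof{H_0^m V}$, and by the standard definition of the domain of a product,
\begin{equation*}
    \domof{H_0^m V}=\Set{\phi\in\domof{V}\given V\phi\in\domof{H_0^m}}.
\end{equation*}
Hence $V\psi\in\domof{H_0^m}$, which is \eqref{eq:V-energy-limited}. No estimate is needed here, only the domain inclusion that is part of relative boundedness. The case $m=0$ also follows, since $\domof{H_0^{k}}\subseteq\domof{H_0}\subseteq\domof{V}$ once $k\ge1$, and $\domof{H_0}\subseteq\domof{V}$ holds by the standing assumptions anyway.

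For the invariance of $\cinftyofh$, let $\psi\in\cinftyofh$ and fix $m\in\N$. Since $\psi\in\domof{H_0^{m+k_m}}$, the first step gives $V\psi\in\domof{H_0^m}$. Because $m$ is arbitrary, $V\psi\in\bigcap_m\domof{H_0^m}=\cinftyofh$.

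For the final sentence, \cref{assump:B-V-bound} is precisely this hypothesis with $k_m=1$ for every $m\in\N$, including the domain inclusion $\domof{H_0^{m+1}}\subseteq\domof{H_0^m V}$. The only point needing care is that the inclusion $\domof{H_0^{m+k}}\subseteq\domof{H_0^m V}$ is taken to be part of the definition of relative boundedness, as it is in \cref{assump:B-V-bound}. If only the norm inequality were assumed, I would first establish the inclusion, but the stated definition already provides it.
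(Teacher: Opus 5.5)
Your proposal is correct and takes the same route as the paper: the inclusion $\domof{H_0^{m+k_m}}\subseteq\domof{H_0^m V}$, which is part of relative boundedness, combined with the definition of the domain of a product, gives $V\domof{H_0^{m+k_m}}\subseteq\domof{H_0^m}$; intersecting over $m\in\N$ then yields $V\cinftyofh\subseteq\cinftyofh$, and \cref{assump:B-V-bound} supplies $k_m=1$. Your aside on $m=0$ is unnecessary, since the hypothesis and the conclusion only concern $m\in\N$, but it does no harm.
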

\begin{proof}
    If \(H_0^m V\) is \(H_0^{m+k_m}\)-bounded, then $\domof{H_0^{m+k_m}} \subseteq \domof{H_0^m V}$, so $V\domof{H_0^{m+k_m}} \subseteq V\domof{H_0^m V} \subseteq \domof{H_0^m}$. Hence,
    \begin{equation}
        V \cinfty\of{H_0}
        = V \bigcap_{m=1}^\infty \domof{H_0^{m + k_m}}
        \subseteq \bigcap_{m=1}^\infty V \domof{H_0^{m + k_m}}
        \subseteq \bigcap_{m=1}^\infty \domof{H_0^m}
        = \cinfty\of{H_0}.\qedhere
    \end{equation}
\end{proof}

As an immediate consequence of \cref{thm:invariance-under-V}, we get the following:

\begin{corollary}
    Suppose \cref{assump:B-V-sym,assump:B-V-bound} hold.
    Then \((H(t))_{t\in\R}\) satisfies \cref{assump:A-H-cinf}.
\end{corollary}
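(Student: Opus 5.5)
The corollary asks for \cref{assump:A-H-cinf} for \(H(t) = \e^{\iu t H_0} V \e^{-\iu t H_0}\). That is, for every \(t\in\R\) we need \(\cinftyofh\subseteq\domof{H(t)}\) and \(H(t)\cinftyofh\subseteq\cinftyofh\). The plan is to reduce this to \cref{thm:invariance-under-V}, using only that the unitary groups \(\e^{\pm\iu tH_0}\) leave \(\cinftyofh\) invariant.

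\textbf{Step 1: the unitaries preserve \(\cinftyofh\).} Fix \(t\in\R\) and \(m\in\N\). By the functional calculus, \(\e^{-\iu tH_0}\) commutes with \(H_0^m\) on \(\domof{H_0^m}\), so \(\e^{-\iu tH_0}\domof{H_0^m}\subseteq\domof{H_0^m}\). One sees this directly from the spectral measure: \(\int\abs{\lambda}^{2m}\,\dd\!\norm{E_0(\lambda)\e^{-\iu tH_0}\psi}^2=\int\abs{\lambda}^{2m}\,\dd\!\norm{E_0(\lambda)\psi}^2\). Intersecting over \(m\) gives \(\e^{-\iu tH_0}\cinftyofh\subseteq\cinftyofh\), and the same argument applies to \(\e^{\iu tH_0}\).

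\textbf{Step 2: the domain inclusion.} Let \(\psi\in\cinftyofh\). By Step 1, \(\phi\coloneqq\e^{-\iu tH_0}\psi\in\cinftyofh\subseteq\domof{H_0}\subseteq\domof{V}\), where the last inclusion is part of \assumpB{}. Hence \(\psi\in\domof{H(t)}\), since \(\e^{\iu tH_0}\) is everywhere defined.

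\textbf{Step 3: invariance.} \Cref{assump:B-V-bound} gives exactly the hypothesis of \cref{thm:invariance-under-V} with \(k_m=1\). That lemma yields \(V\phi\in\cinftyofh\). Applying Step 1 to \(\e^{\iu tH_0}\) then gives \(H(t)\psi=\e^{\iu tH_0}V\phi\in\cinftyofh\).

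There is no real obstacle. The only point needing care is Step 1, the invariance of each \(\domof{H_0^m}\) under the unitary group, which is standard spectral calculus. \Cref{assump:B-V-sym} is not really needed beyond fixing the setting: it guarantees that \(V\) is defined on \(\domof{H_0}\) and hence that \(H(t)\) is well defined.
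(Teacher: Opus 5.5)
Your proof is correct and follows the paper's route: the paper states this corollary as an immediate consequence of \cref{thm:invariance-under-V}, i.e.\ $V\cinftyofh\subseteq\cinftyofh$, combined with the invariance of $\cinftyofh$ under $\e^{\pm\iu tH_0}$, which you make explicit via spectral calculus. Your remark that \cref{assump:B-V-sym} plays no essential role is also accurate, since the inclusion $\domof{H_0}\subseteq\domof{V}$ is already part of the standing hypotheses.
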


Our next goal is to show that \cref{assump:A-U-cinf} holds.
To this end, we first prove an additional lemma:

\begin{lemma}\label{thm:dom-H-m}
    Suppose \cref{assump:B-V-sym,assump:B-V-bound} hold. Then, for every $T>0$, we have
    \begin{equation}
        \domof*{\of*{\textfrac{1}{T}H_0 + V}^m} = \domof*{H_0^m}\quad \forall m \in \N.
    \end{equation}
    In particular, $\cinfty\of{\textfrac{1}{T}H_0+V}=\cinfty\of{H_0}$.
\end{lemma}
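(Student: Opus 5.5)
We argue by induction on \(m\in\N\), showing \(\domof*{\of*{\textfrac{1}{T}H_0+V}^m}=\domof{H_0^m}\) for each fixed \(T>0\). Write \(H_T\coloneqq\textfrac{1}{T}H_0+V\); by \cref{assump:B-V-sym} this is self-adjoint on \(\domof{H_0}\), which is the case \(m=1\). Since \(\domof{H_T}=\domof{H_0}\), the closed graph theorem shows that the graph norms of \(H_T\) and \(H_0\) are equivalent on \(\domof{H_0}\). Concretely,
\[
\norm{H_0\psi}\le C\of*{\norm{H_T\psi}+\norm{\psi}}\qquad\text{and}\qquad \norm{H_T\psi}\le C'\of*{\norm{H_0\psi}+\norm{\psi}}
\]
for all \(\psi\in\domof{H_0}\).

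For the inclusion \(\domof{H_0^{m+1}}\subseteq\domof{H_T^{m+1}}\), take \(\psi\in\domof{H_0^{m+1}}\). Then \(H_0\psi\in\domof{H_0^m}\), and \cref{assump:B-V-bound} gives \(\domof{H_0^{m+1}}\subseteq\domof{H_0^mV}\), so \(V\psi\in\domof{H_0^m}\). Hence \(H_T\psi\in\domof{H_0^m}=\domof{H_T^m}\) by the induction hypothesis, and therefore \(\psi\in\domof{H_T^{m+1}}\).

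For the reverse inclusion, let \(\psi\in\domof{H_T^{m+1}}\), so \(\psi\in\domof{H_T^m}=\domof{H_0^m}\) and \(H_T\psi\in\domof{H_0^m}\). We must show \(H_0\psi\in\domof{H_0^m}\). This is an elliptic-regularity type statement and is the main obstacle. The relation \(H_0\psi=T(H_T\psi-V\psi)\) does not directly help, because \(V\psi\in\domof{H_0^m}\) would require \(\psi\in\domof{H_0^{m+1}}\), which is what we are trying to prove.

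The approach we would try first is a resolvent argument. Pick \(\lambda\in\R\) with \(\abs{\lambda}\) large, so that \(H_T+\iu\lambda\) is invertible. Set \(\phi\coloneqq(H_T+\iu\lambda)\psi\in\domof{H_0^m}\). It then suffices to show that \(R\coloneqq(H_T+\iu\lambda)^{-1}\) maps \(\domof{H_0^m}\) into \(\domof{H_0^{m+1}}\). Compare \(R\) with \(R_0\coloneqq(\textfrac{1}{T}H_0+\iu\lambda)^{-1}\), which maps \(\domof{H_0^m}\) onto \(\domof{H_0^{m+1}}\) and commutes with powers of \(H_0\). The resolvent identity gives
\[
R=R_0-R_0VR=R_0\,(\one+VR_0)^{-1},
\]
so the question reduces to showing that \(\one+VR_0\) is invertible on the Banach space \(\domof{H_0^m}\) equipped with the graph norm \(\norm{H_0^m\cdot}+\norm{\cdot}\). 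By \cref{assump:B-V-bound},
\[
\norm{H_0^mVR_0\phi}\le a_m\norm{H_0^{m+1}R_0\phi}+b_m\norm{R_0\phi}\le a_m\norm{TH_0R_0\,H_0^m\phi}+\dots .
\]
Here \(\norm{TH_0R_0}\) stays bounded but does not become small as \(\abs{\lambda}\) grows, so we need \(a_m\) small. Failing that, we would instead use the structure \(H_0^mV=VH_0^m+\comm{H_0^m}{V}\): the commutator is lower order relative to \(H_0^{m+1}\), and the leading term \(VH_0^m\) is controlled by the \(m=1\) case applied to \(H_0^m\phi\). If the Neumann series does not close, we would fall back on a Nelson-type commutator argument, using \(\comm{H_0}{H_T}=\comm{H_0}{V}\) together with \cref{assump:B-V-bound}, to bootstrap regularity one power of \(H_0\) at a time. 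The conclusion \(\cinfty(H_T)=\cinfty(H_0)\) then follows by intersecting over \(m\).
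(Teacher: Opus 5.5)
Your inclusion $\domof{H_0^{m+1}}\subseteq\domof{H_T^{m+1}}$ is correct. The reverse inclusion is never established, and at the stated generality it cannot be.

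Your resolvent argument closes only under a smallness condition. Since $\norm{H_0R_0}\le T$ does not improve as $\abs{\lambda}\to\infty$, the Neumann series for $(\one+VR_0)^{-1}$ on $\domof{H_0^m}$ converges (for $\abs{\lambda}$ large) when $T\max(a_0,a_m)<1$, where $a_0$ is an $H_0$-bound of $V$. Nothing makes it converge otherwise. The commutator fallback rests on a false premise. \cref{assump:B-V-bound} makes $H_0^mV$ and $VH_0^m$ both $H_0^{m+1}$-bounded, but it gives no gain for $\comm{H_0^m}{V}$, which is in general of full order $m+1$. A Nelson-type bootstrap therefore has nothing to exploit.

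In fact the statement is false as written. Take $\HH=\ell^2(\N_0)$ with $H_0e_n=2^ne_n$, let $Se_n=e_{n+1}$ be the unilateral shift, and set $V=MH_0$ on $\domof{H_0}$ with $M=4+2S+S^*$.
\begin{itemize}
\item $V$ is the real Jacobi matrix with $\innerp{e_n}{Ve_n}=2^{n+2}$ and $\innerp{e_{n+1}}{Ve_n}=\innerp{e_n}{Ve_{n+1}}=2^{n+1}$, so it is symmetric.
\item The identity $H_0^mV=(4+2^{m+1}S+2^{-m}S^*)H_0^{m+1}$ gives \cref{assump:B-V-bound}.
\item Since $\norm{2S+S^*}\le 3<4$, every $s+M$ with $s\ge0$ is invertible. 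Hence $sH_0+V=(s+M)H_0$ is a symmetric bijection of $\domof{H_0}$ onto $\HH$, so it is self-adjoint, which gives \cref{assump:B-V-sym}.
\end{itemize}

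For $T=2$ take $\psi_n=(-1/4)^n$. Then $\chi=H_0\psi$ has $\chi_n=(-1/2)^n$, so $\chi\in\HH\setminus\domof{H_0}$ and $\psi\notin\domof{H_0^2}$. On the other hand $(\textfrac{1}{2}H_0+V)\psi=(\textfrac{9}{2}+2S+S^*)\chi=4e_0$. Hence $\psi\in\cinfty\of{\textfrac{1}{2}H_0+V}\setminus\domof{H_0^2}$.

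Likewise, let $r_T\in(-1,0)$ be the root of smaller modulus of $r^2+(4+\frac{1}{T})r+2=0$. Then $\psi_n=(r_T/2)^n$ lies in $\cinfty\of{\textfrac{1}{T}H_0+V}\setminus\cinftyofh$ for every $T>0$, so even the ``in particular'' fails. Rescaling $H_0$ and $V$ by $2\pi$ also gives \cref{assump:B-V-period,assump:B-V-spec,assump:B-V-conj}.

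The paper's own proof passes over exactly this point via the identity $\domof{H_0^m(\textfrac{1}{T}H_0+V)}=\domof{H_0^{m+1}}\cap\domof{H_0^mV}$. Only ``$\supseteq$'' is automatic, and the example violates the identity at $m=1$.

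What your Neumann argument does prove is a restricted version. For fixed $m$, $\domof{(\textfrac{1}{T}H_0+V)^m}=\domof{H_0^m}$ whenever $T\max(a_0,\dots,a_{m-1})<1$. It holds for all $T$ and $m$ when the relative bounds can be chosen arbitrarily small, as in the Rabi example.
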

\begin{proof}
    The proof is by induction over \(m \in \N\).
    The statement holds for \(m=1\) by \cref{assump:B-V-sym}.

    Assume \(\domof*{\of*{\textfrac{1}{T}H_0 + V}^m} = \domof*{H_0^m}\) for some \(m \in \N\). Then    
    \begin{align}
        \domof[\big]{\of[\big]{\textfrac{1}{T}H_0 + V}^{m+1}}
        &= \Set*{\psi \in \domof*{\textfrac{1}{T}H_0+V} \given \of*{\textfrac{1}{T}H_0+V}\psi \in \domof*{\of*{\textfrac{1}{T}H_0 + V}^m}}
        \nonumber\\
        &= \Set*{\psi \in \domof*{\textfrac{1}{T}H_0+V} \given \of*{\textfrac{1}{T}H_0+V}\psi \in \domof*{H_0^m}}
        \nonumber\\
        &= \domof*{H_0^m\of*{\textfrac{1}{T}H_0+V}}
        \nonumber\\
        &= \domof{H_0^{m+1}} \cap \domof {H_0^mV}
        \nonumber\\
        &= \domof{H_0^{m+1}},
    \end{align}
    where in the last step we used the fact that $\domof{H_0^{m+1}}\subseteq \domof{H_0^mV}$ due to \cref{assump:B-V-bound}.
\end{proof}
As $\e^{\iu t \of{\frac{1}{T}H_0 + V}}$ leaves $\cinfty\of*{\frac{1}{T}H_0+V}$ invariant by spectral calculus, this immediately implies the following:
\begin{corollary}\label{thm:invariance-under-evolution}
    Suppose \cref{assump:B-V-sym,assump:B-V-bound} hold. Then, for every $T>0$,
    \begin{equation}
        \e^{\iu t \of{\frac{1}{T}H_0 + V}} \cinfty\of{H_0} 
        \subseteq \cinfty\of{H_0} 
        \quad \forall t \in \R
        .
    \end{equation}
    In particular, $U^{(T)}(t)\cinftyofh \subseteq \cinftyofh$, and \cref{assump:A-U-cinf} holds.
\end{corollary}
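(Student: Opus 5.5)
The claim is that for every $T>0$ and $t\in\R$ the unitary $\e^{\iu t (\frac{1}{T}H_0+V)}$ maps $\cinftyofh$ into itself, and hence so does $U^{(T)}(t)=\e^{\iu \frac{t}{T}H_0}\e^{-\iu t(\frac{1}{T}H_0+V)}$. The plan is to combine \cref{thm:dom-H-m} with the functional calculus for the self-adjoint operator $A_T\coloneqq\frac{1}{T}H_0+V$. By \cref{assump:B-V-sym}, $A_T$ is self-adjoint on $\domof{H_0}$.

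First, for each $m\in\N$ we show that $\e^{\iu sA_T}$ leaves $\domof{A_T^m}$ invariant. Let $f(\lambda)=\e^{\iu s\lambda}$ and $g(\lambda)=\lambda^m$. Since $|f|=1$, the spectral theorem gives $\domof{g(A_T)f(A_T)}=\domof{(gf)(A_T)}=\domof{g(A_T)}$. Moreover $A_T^m\e^{\iu sA_T}\psi=\e^{\iu sA_T}A_T^m\psi$ for $\psi\in\domof{A_T^m}$. Intersecting over $m$ yields $\e^{\iu sA_T}\cinfty(A_T)\subseteq\cinfty(A_T)$.

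Second, \cref{thm:dom-H-m} gives $\domof{A_T^m}=\domof{H_0^m}$ for all $m$, so $\cinfty(A_T)=\cinftyofh$, and the first assertion follows for every $t\in\R$. The same spectral argument applied to $H_0$ itself shows that $\e^{\iu\frac{t}{T}H_0}$ leaves each $\domof{H_0^m}$ invariant, hence also $\cinftyofh$. Composing the two maps gives $U^{(T)}(t)\cinftyofh\subseteq\cinftyofh$, which is \cref{assump:A-U-cinf}.

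The step needing the most care is the identification of the domains, but this is already supplied by \cref{thm:dom-H-m}. Its induction step relies on \cref{assump:B-V-bound}, specifically $\domof{H_0^{m+1}}\subseteq\domof{H_0^mV}$. Given that lemma, the remaining argument is routine functional calculus, and no quantitative bounds are needed here. Those bounds belong to the later verification of \cref{assump:A-U-bound}.
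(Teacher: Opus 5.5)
Your proposal is correct and is essentially the paper's argument: \cref{thm:dom-H-m} identifies $\cinfty(\frac{1}{T}H_0+V)$ with $\cinftyofh$, and the spectral calculus shows that the unitary group leaves the smooth vectors of its generator invariant. The only difference is that you write out the domain computation and the invariance under the factor $\e^{\iu \frac{t}{T}H_0}$ in $U^{(T)}(t)$, both of which the paper leaves implicit.
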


We turn to \cref{assump:A-H-cont,assump:A-U-diff,assump:A-H-bound}.

\begin{proposition}
    \cref{assump:B-V-bound} implies \cref{assump:A-H-bound}. \cref{assump:B-V-sym,assump:B-V-bound} imply \cref{assump:A-H-cont}.
\end{proposition}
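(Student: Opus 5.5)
The key structural fact I will use throughout is that $H(t)=\e^{\iu tH_0}V\e^{-\iu tH_0}$ and that $\e^{\pm\iu tH_0}$ commutes with every power $H_0^m$ on $\domof{H_0^m}$, where it acts isometrically in the $H_0^m$-graph sense. So for $\psi\in\cinftyofh$ I will write
\[
H_0^mH(t)\psi=\e^{\iu tH_0}H_0^mV\e^{-\iu tH_0}\psi .
\]
This is justified because $\e^{-\iu tH_0}\psi\in\cinftyofh$, and by \cref{thm:invariance-under-V} we have $V\e^{-\iu tH_0}\psi\in\cinftyofh$.

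First, \cref{assump:A-H-bound}. Using unitarity of $\e^{\iu tH_0}$ and \cref{assump:B-V-bound}, I will estimate
\[
\norm{H_0^mH(t)\psi}=\norm{H_0^mV\e^{-\iu tH_0}\psi}\le a_m\norm{H_0^{m+1}\e^{-\iu tH_0}\psi}+b_m\norm{\psi}=a_m\norm{H_0^{m+1}\psi}+b_m\norm{\psi},
\]
which gives $k_m=1$ with constants uniform in all $t\in\R$. For $m=0$, \cref{assump:B-V-bound} is only stated for $m\in\N$, but $V$ is $H_0$-bounded: $\domof{V}\supseteq\domof{H_0}$ and $\frac1TH_0+V$ is closed on $\domof{H_0}$ by \cref{assump:B-V-sym}, so the closed graph theorem gives $\norm{V\psi}\le a_0\norm{H_0\psi}+b_0\norm\psi$. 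Alternatively, I would note that the $m=1$ bound together with $\norm{\psi'}\le\norm{H_0\psi'}+\dots$ does not directly help, so the closed graph route is the one I will use.

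Next, \cref{assump:A-H-cont}. I will split the difference as
\[
H_0^mH(t)\psi-H_0^mH(s)\psi=\bigl(\e^{\iu tH_0}-\e^{\iu sH_0}\bigr)H_0^mV\e^{-\iu tH_0}\psi+\e^{\iu sH_0}H_0^mV\bigl(\e^{-\iu tH_0}-\e^{-\iu sH_0}\bigr)\psi .
\]
For the second term, I will bound it by $a_m\norm{(\e^{-\iu tH_0}-\e^{-\iu sH_0})H_0^{m+1}\psi}+b_m\norm{(\e^{-\iu tH_0}-\e^{-\iu sH_0})\psi}$, which tends to $0$ by strong continuity of the unitary group. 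For the first term, I will rewrite it as $(\e^{\iu tH_0}-\e^{\iu sH_0})\phi_s+(\e^{\iu tH_0}-\e^{\iu sH_0})(\phi_t-\phi_s)$ with $\phi_t=H_0^mV\e^{-\iu tH_0}\psi$. The first piece vanishes by strong continuity. The second is bounded by $2\norm{\phi_t-\phi_s}$, which is controlled exactly as the second term. This step is routine.

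The real obstacle is the last one, which strictly belongs after this proposition: \cref{assump:A-U-diff} and \cref{assump:A-U-bound}, because they require controlling $H_0^m\e^{-\iu t(\frac1TH_0+V)}$ uniformly as $T\to0$. My plan there is to use \cref{thm:dom-H-m}, together with the closed graph theorem, to get equivalence of the graph norms of $H_0^m$ and $(\frac1TH_0+V)^m$, and then tracking the $T$-dependence of the equivalence constants via Kato–Rellich-type estimates. That $T$-tracking is where the proof is likely to be delicate.
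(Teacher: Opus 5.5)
Your proposal is correct and follows essentially the paper's proof. You commute $H_0^m$ through $\e^{\pm\iu tH_0}$ and use unitarity to get \cref{assump:A-H-bound} with $k_m=1$, and you split the difference into a strong-continuity term plus a relative-bound term to get \cref{assump:A-H-cont}; the paper anchors the fixed vector $H_0^mV\e^{-\iu s_0H_0}\psi$ at the limit point $s_0$, which saves your extra triangle-inequality step.

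Your closed-graph justification of the case $m=0$ is a correct and welcome addition, since \cref{assump:B-V-bound} is stated only for $m\in\N$ while the paper tacitly uses it for $m=0$. Your last paragraph on \cref{assump:A-U-diff,assump:A-U-bound} lies outside the scope of this proposition.
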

\begin{proof}
	Assume \cref{assump:B-V-bound} holds, and let \(\psi \in \domof*{H_0^{m}}\).
	Then, for every $t\in\R$, we have
	\begin{equation}
        \begin{aligned}
	    \norm*{H_0^m H(t)\psi}
		&= \norm*{H_0^m \e^{+\iu t H_0} V \e^{-\iu t H_0}\psi}\\
		&\leq a_m \norm*{H_0^{m + 1}\e^{-\iu t H_0}\psi} + b_m \norm*{\e^{-\iu t H_0}\psi}
		= a_m \norm*{H_0^{m + 1}\psi} + b_m \norm*{\psi}
		,
        \end{aligned}
	\end{equation}
    hence \cref{assump:A-H-bound} holds with $k_m=1$. Suppose \cref{assump:B-V-sym} also holds, and let \(s_0 \in \R\), \(m \in \N_0\), and \(\psi \in \cinftyofh\). Then
    \begin{align}
        \norm{H_0^m H(s)\psi - H_0^m H(s_0)\psi}
        &\leq
        \begin{multlined}[t]
            \norm{H_0^m \of{\e^{\iu s H_0} - \e^{\iu s_0 H_0}} V \e^{-\iu s_0 H_0}\psi }
            \\
            + \norm{H_0^m \e^{\iu s H_0} V \of{\e^{-\iu s H_0} - \e^{-\iu s_0 H_0}}\psi }
        \end{multlined}
        \nonumber\\
        &\leq\begin{multlined}[t]
            \norm{ \of{\e^{\iu s H_0} - \e^{\iu s_0 H_0}} H_0^m V \e^{-\iu s_0H_0}\psi }
        \\
            + a_m \norm{\of{\e^{-\iu s H_0} - \e^{-\iu s_0 H_0}} H_0^{m + 1} \psi }
            + b_m \norm{\of{\e^{-\iu s H_0} - \e^{-\iu s_0 H_0}} \psi }
        \end{multlined}
        \nonumber\\
        &\to 0
    \end{align}
    as \(s \to s_0\), so \cref{assump:A-H-cont} holds.
\end{proof}

\begin{lemma}
    \label{thm:relative-boundedness-with-dynamics}
    Let \(T>0\), and suppose \(\frac{1}{T}H_0 + V\) and \(H_0\) are self-adjoint operators with \(\domof{H_0^m} = \domof*{\of*{\frac{1}{T}H_0 + V}^m}\) for every \(m\in\N\).
    Then, for every \(s,\,t\in\R\), \(H_0^m \e^{\iu \frac{s}{T} H_0} \e^{\iu t \of{\frac{1}{T}H_0 + V}}\) is \(H_0^m\)-bounded and \(\of{\frac{1}{T}H_0 + V}^m\)-bounded.
\end{lemma}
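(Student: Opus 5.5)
The plan is to use the closed graph theorem, after checking that the operator is everywhere defined on the relevant Banach space. Write \(G = \frac{1}{T}H_0 + V\) and \(W(s,t) = \e^{\iu \frac{s}{T} H_0} \e^{\iu t G}\), a unitary. The statement asks that \(H_0^m W(s,t)\) be \(H_0^m\)-bounded and \(G^m\)-bounded. This amounts to two facts: \(\domof{H_0^m} \subseteq \domof{H_0^m W(s,t)}\), and an estimate \(\norm{H_0^m W\psi} \le a\norm{H_0^m\psi} + b\norm{\psi}\), with the analogous statement for \(G^m\).

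For the domain inclusion, take \(\psi \in \domof{H_0^m}\). By hypothesis \(\domof{H_0^m} = \domof{G^m}\), so \(\psi \in \domof{G^m}\). Spectral calculus gives \(\e^{\iu t G}\) commuting with \(G^m\), so \(\e^{\iu t G}\psi \in \domof{G^m} = \domof{H_0^m}\). In the same way \(\e^{\iu \frac{s}{T} H_0}\) leaves \(\domof{H_0^m}\) invariant. Hence \(W(s,t)\psi \in \domof{H_0^m}\).

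To get the bound, equip \(\domof{H_0^m}\) with the graph norm \(\norm{H_0^m\psi} + \norm{\psi}\), which makes it a Banach space because \(H_0^m\) is closed. The map \(\psi \mapsto H_0^m W(s,t)\psi\) goes from this Banach space into \(\HH\) and is everywhere defined. Its graph is closed. To see this, suppose \(\psi_n \to \psi\) in the graph norm and \(H_0^m W\psi_n \to \phi\). Then \(W\psi_n \to W\psi\) in \(\HH\), since \(W\) is unitary, and closedness of \(H_0^m\) gives \(H_0^m W\psi = \phi\). The closed graph theorem therefore yields boundedness, and this is precisely the claimed relative bound.

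The \(G^m\)-boundedness follows by the same route. The domains \(\domof{G^m}\) and \(\domof{H_0^m}\) coincide, and each of \(H_0^m\), \(G^m\) is closed on this common domain. Any two closed operators with the same domain have equivalent graph norms, again by the closed graph theorem: the identity map is bounded from one graph-norm space to the other. So the \(H_0^m\)-bound transfers immediately to a \(G^m\)-bound.

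I expect no serious obstacle here; the delicate point is only the domain-invariance step, which is where the hypothesis \(\domof{H_0^m} = \domof{G^m}\) is used. The resulting constants depend on \(s,t,T\) in an uncontrolled way. Explicit constants could come from a commutator or Kato–Rellich argument, but they are not needed for this statement.
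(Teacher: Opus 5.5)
Your proposal is correct and follows essentially the paper's route: invariance of \(\domof{H_0^m}=\domof{(\frac{1}{T}H_0+V)^m}\) under both unitary groups by spectral calculus, followed by a closed-graph argument. The paper packages the second step by noting that \(H_0^m \e^{\iu \frac{s}{T} H_0} \e^{\iu t (\frac{1}{T}H_0+V)}\) is closed and citing Teschl's Lemma~6.2 (itself a closed-graph result) against both \(H_0^m\) and \((\frac{1}{T}H_0+V)^m\). You instead carry out the closed-graph step by hand and transfer the bound through equivalence of graph norms, which is equally valid.
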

\begin{proof}
    By Lemma \ref{thm:dom-H-m} we have \(\e^{\iu t (\frac{1}{T}H_0+V)}\domof{H_0^m} = \domof{H_0^m}\) and \(\e^{\iu s H_0}\domof{H_0^m} = \domof{H_0^m}\).
    Hence,
    \begin{equation}
        \domof{H^m_0 \e^{\iu \frac{s}{T} H_0} \e^{\iu t (\frac{1}{T}H_0+V)}} = \domof{H_0^m} = \domof{(\textfrac{1}{T}H_0+V)^m}
        .
    \end{equation}
    As \(H^m_0\) is self-adjoint on \(\domof{H_0^m}\) and \(\e^{\iu t (\frac{1}{T}H_0+V)}\) as well as \(\e^{\iu \frac{s}{T} H_0}\) are bounded, \(H^m_0 \e^{\iu \frac{s}{T} H_0} \e^{\iu t (\frac{1}{T}H_0+V)}\) is closed.
    Due to 
    ~\cite[Lemma~6.2]{teschlMathematicalMethodsQuantum2009}, \(H^m_0 \e^{\iu s H_0} \e^{\iu t (\frac{1}{T}H_0+V)}\) is \(H_0^m\)- and \((\frac{1}{T}H_0+V)^m\)-bounded.
\end{proof}

\begin{proposition}
    Suppose \cref{assump:B-V-sym,assump:B-V-bound} hold.
    Then \cref{assump:A-U-diff} holds, i.e. \(t \mapsto H_0^m U^{(T)}(t)\psi\) is differentiable and
    \begin{equation}
        \odv{}{t}H_0^m U^{(T)}(t) \psi
        = - \iu H_0^m H^{(T)}(t) U^{(T)}(t) \psi
    \end{equation}
    for every \(T>0\), $m\in\N_0$, and \(\psi \in \cinftyofh\).
\end{proposition}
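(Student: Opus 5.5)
Write $A_T := \tfrac{1}{T}H_0 + V$, so that $U^{(T)}(t) = \e^{\iu \frac{t}{T} H_0}\e^{-\iu t A_T}$. By \cref{thm:dom-H-m} we have $\domof{A_T^m} = \domof{H_0^m}$ for all $m$, and by \cref{thm:invariance-under-evolution} both factors leave $\cinftyofh$ invariant.

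We fix $\psi\in\cinftyofh$ and $m\in\N_0$, and apply the product rule to the difference quotient of $H_0^m U^{(T)}(t)\psi$. We decompose
\begin{equation*}
\begin{aligned}
&\frac{1}{h}\Big(H_0^m U^{(T)}(t+h)-H_0^m U^{(T)}(t)\Big)\psi\\
&\quad= \frac{1}{h}\Big(\e^{\iu\frac{t+h}{T}H_0}-\e^{\iu\frac{t}{T}H_0}\Big)H_0^m\e^{-\iu (t+h)A_T}\psi
+ H_0^m\e^{\iu\frac{t}{T}H_0}\,\frac{1}{h}\Big(\e^{-\iu (t+h)A_T}-\e^{-\iu tA_T}\Big)\psi .
\end{aligned}
\end{equation*}
For the second term, $H_0^m\e^{\iu\frac{t}{T}H_0}$ commutes with the bounded unitary, and $H_0^m$ is $A_T^m$-bounded by \cref{thm:relative-boundedness-with-dynamics}. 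The vector $\frac1h(\e^{-\iu (t+h)A_T}-\e^{-\iu tA_T})\psi$ converges to $-\iu A_T\e^{-\iu tA_T}\psi$ in the graph norm of $A_T^m$, because $A_T^m$ commutes with the group and $\psi\in\domof{A_T^{m+1}}$ (Stone's theorem applied to $A_T^m\psi$). Hence the second term tends to $-\iu H_0^m\e^{\iu\frac tT H_0}A_T\e^{-\iu tA_T}\psi$.

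For the first term, we write it as
\begin{equation*}
\frac1h\Big(\e^{\iu\frac{t+h}{T}H_0}-\e^{\iu\frac tT H_0}\Big)\phi
+\frac1h\Big(\e^{\iu\frac{t+h}{T}H_0}-\e^{\iu\frac tT H_0}\Big)H_0^m\Big(\e^{-\iu(t+h)A_T}-\e^{-\iu tA_T}\Big)\psi ,
\qquad \phi:=H_0^m\e^{-\iu tA_T}\psi\in\cinftyofh .
\end{equation*}
The first piece tends to $\frac{\iu}{T}H_0\e^{\iu\frac tTH_0}\phi$. The second piece is bounded by $\frac1T\norm{H_0^{m+1}(\e^{-\iu(t+h)A_T}-\e^{-\iu tA_T})\psi}$, using $\norm{(\e^{\iu sH_0}-\one)\chi}\le|s|\norm{H_0\chi}$. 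This bound tends to $0$ because $H_0^{m+1}$ is $A_T^{m+1}$-bounded and the group is strongly continuous on $A_T^{m+1}\psi$.

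Adding the two limits, we get
\begin{equation*}
\begin{aligned}
&\frac{\iu}{T}H_0^{m+1}\e^{\iu\frac tTH_0}\e^{-\iu tA_T}\psi-\iu H_0^m\e^{\iu\frac tTH_0}\Big(\tfrac1TH_0+V\Big)\e^{-\iu tA_T}\psi\\
&\quad=-\iu H_0^m\e^{\iu\frac tTH_0}V\e^{-\iu\frac tTH_0}\,\e^{\iu\frac tTH_0}\e^{-\iu tA_T}\psi\\
&\quad=-\iu H_0^mH^{(T)}(t)U^{(T)}(t)\psi .
\end{aligned}
\end{equation*}
The domain issues in this rearrangement are handled by \cref{thm:invariance-under-V} together with $\e^{-\iu tA_T}\psi\in\cinftyofh$.

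The main obstacle is the cross term in the first piece. It requires a uniform graph-norm convergence for $H_0^{m+1}$, and we obtain it precisely from the domain equality in \cref{thm:dom-H-m} combined with the closed-graph argument of \cref{thm:relative-boundedness-with-dynamics}.
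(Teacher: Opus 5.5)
Your proof is correct and is essentially the paper's argument: you split the difference quotient by the product rule, apply Stone's theorem to each factor, and control $H_0^m$ through its relative boundedness with respect to $(\frac1T H_0+V)^m$, which comes from \cref{thm:dom-H-m,thm:relative-boundedness-with-dynamics}. The only difference is the telescoping order. The paper keeps $\e^{-\iu tA_T}$ fixed in the first summand and writes the second as $\e^{\iu\frac{t+h}{T}H_0}H_0^m\frac1h\bigl(\e^{-\iu(t+h)A_T}-\e^{-\iu tA_T}\bigr)\psi$, so uniform boundedness and strong continuity of $\e^{\iu\frac{s}{T}H_0}$ suffice, and the cross term you single out as the main obstacle (with its $H_0^{m+1}$ estimate) never arises.
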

\begin{proof}
	For every \(T>0\), $m\in\N_0$, and $t\in\R$, we have
    \begin{align}
        \MoveEqLeft\nonumber
        \frac{1}{h} \of*{ H_0^m U^{(T)}(t+h) - H_0^m U^{(T)}(t) } \psi
        =
        \\
        &= \frac{1}{h}\of*{ \e^{\iu  \frac{t+h}{T} H_0}H_0^m \e^{-\iu (t+h)\of{\frac{1}{T}H_0 + V}} - \e^{\iu \frac{t}{T} H_0}H_0^m \e^{-\iu t \of{\frac{1}{T}H_0 + V}} }\psi
        \nonumber\\
        &=
            \frac{1}{h}\of*{ \e^{\iu \frac{t+h}{T} H_0} - \e^{\iu \frac{t}{T} H_0} } H_0^m \e^{-\iu t \of{\frac{1}{T}H_0 + V}} \psi
            + \e^{\iu \frac{t+h}{T} H_0} H_0^m \frac{1}{h} \of*{ \e^{-\iu (t+h)\of{\frac{1}{T}H_0 + V}} - \e^{-\iu t \of{\frac{1}{T}H_0 + V}} }\psi
        \nonumber\\
        &\to \of*{\iu \textfrac{1}{T} H_0} \e^{\iu \frac{t}{T} H_0}H_0^m \e^{-\iu t \of{\frac{1}{T}H_0 + V}}\psi + \e^{\iu \frac{t}{T} H_0} H_0^m \of*{-\iu \of{\textfrac{1}{T}H_0 + V}}\e^{-\iu t \of{\frac{1}{T}H_0 + V}} \psi
        \nonumber\\
        &= -\iu H_0^m H^{(T)}(t) U^{(T)}(t) \psi
    \end{align}
    as \(h \to 0\).
    Concerning the second summand above, we have
    \begin{equation}
        \e^{\iu \frac{t+h}{T} H_0}H_0^m \frac{1}{h} \of*{ \e^{-\iu \frac{t+h}{T}\of{\frac{1}{T}H_0 + V}} - \e^{-\iu t \of{\frac{1}{T}H_0 + V}} }\psi
        \to \e^{\iu \frac{t}{T} H_0}H_0^m \of*{-\iu \of{\textfrac{1}{T}H_0 + V}} \e^{-\iu t \of{\frac{1}{T}H_0 + V}}\psi
        ,
    \end{equation}
    because \(\e^{\iu \frac{t+h}{T} H_0}\psi \to \e^{\iu \frac{t}{T} H_0}\psi\) and \(\e^{\iu \frac{t}{T} H_0}\) is uniformly bounded, \(H_0^m\) is \(\of{\frac{1}{T}H_0 + V}^m\)-bounded by \cref{thm:relative-boundedness-with-dynamics}, and 
    \begin{equation}
    \frac{1}{h} \of*{ \e^{-\iu (t+h)\of{\frac{1}{T}H_0 + V}} - \e^{-\iu t \of{\frac{1}{T}H_0 + V}} }\psi \to \of*{-\iu \of{\textfrac{1}{T}H_0 + V}} \e^{-\iu t \of{\frac{1}{T}H_0 + V}}\psi.\qedhere
    \end{equation}
\end{proof}

We now come to \cref{assump:A-U-bound}.

\begin{proposition}\label{thm:V-sym-V-bound-implies-U-bound}
	Suppose \cref{assump:B-V-sym,assump:B-V-bound} hold.
	Then, for every $T>0$, \((U^{(T)}(t))_{t\in\R}\) satisfies \cref{assump:A-U-bound}.
\end{proposition}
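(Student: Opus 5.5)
We have to show that for every $m\in\N$, $T>0$ and compact $I$ there are $k'_m$, $a'^{(T)}_m$, $b'^{(T)}_m$ with
\[
\norm{H_0^m U^{(T)}(t)\psi}\le a'^{(T)}_m\norm{H_0^{m+k'_m}\psi}+b'^{(T)}_m\norm{\psi},
\qquad \psi\in\cinftyofh,\ t\in I,
\]
and that these constants stay bounded as $T\to0$. Since $\e^{\iu\frac{t}{T}H_0}$ commutes with $H_0^m$ and is unitary, we have $\norm{H_0^mU^{(T)}(t)\psi}=\norm{H_0^m\e^{-\iu t(\frac1T H_0+V)}\psi}$. Write $A_T:=\frac1T H_0+V$.

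The natural first step is to move between $H_0^m$ and $A_T^m$, because $\e^{-\iu tA_T}$ commutes with $A_T^m$ and is unitary. By \cref{thm:dom-H-m} the two powers have the same domain, so each is relatively bounded with respect to the other. The $T$-dependence, however, has to be controlled explicitly. I would prove the following by induction on $m$, using \cref{assump:B-V-bound} at each step:
\[
\norm{A_T^m\psi}\le C_m\sum_{k=0}^{m}T^{-k}\norm{H_0^k\psi},
\qquad
\norm{H_0^m\psi}\le C'_m\sum_{k=0}^{m}T^{m-k}\norm{A_T^k\psi}
\]
for $T\le1$.

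For the first bound, expand $A_T^{m}\psi=A_T^{m-1}(\frac1T H_0+V)\psi$ and use that $H_0^jV$ is $H_0^{j+1}$-bounded. For the second bound, write $H_0=T(A_T-V)$ and move $V$ past powers of $A_T$ through the commutator structure. The induction is as follows. We have $\norm{H_0^m\psi}\le T\norm{H_0^{m-1}A_T\psi}+T\norm{H_0^{m-1}V\psi}$. The term $T\norm{H_0^{m-1}V\psi}\le Ta_{m-1}\norm{H_0^m\psi}+Tb_{m-1}\norm{\psi}$ can be absorbed into the left-hand side once $Ta_{m-1}<1/2$, i.e.\ for $T$ small. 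The term $\norm{H_0^{m-1}A_T\psi}$ is handled by the induction hypothesis applied to the vector $A_T\psi$.

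Combining the two bounds gives
\[
\norm{H_0^m\e^{-\iu tA_T}\psi}\le C'_m\sum_{k=0}^{m}T^{m-k}\norm{A_T^k\psi}\le C''_m\sum_{k}T^{m-k}\sum_{l\le k}T^{-l}\norm{H_0^l\psi}.
\]
This contains factors $T^{m-2k}$, which blow up as $T\to0$. So the naive route loses uniformity in $T$, and this is the main obstacle. Since $k'_m$ may be chosen freely, I would instead avoid $A_T$ altogether and work directly with the interaction-picture propagator. By \cref{assump:A-U-diff}, which is already established, we have
\[
\odv{}{t}\norm{H_0^mU^{(T)}(t)\psi}^2=2\,\mathrm{Im}\innerp{H_0^mU^{(T)}\psi}{H_0^mH^{(T)}(t)U^{(T)}\psi}.
\]
The $\frac1T H_0$ part has cancelled exactly, and $H^{(T)}(t)=\e^{\iu\frac tTH_0}V\e^{-\iu\frac tTH_0}$. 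What remains is the commutator term $\innerp{H_0^m\phi}{[H_0^m,V]\phi}$, since the symmetric part $\innerp{H_0^m\phi}{VH_0^m\phi}$ is real.

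If $[H_0^m,V]$ is $H_0^m$-bounded, Gronwall's lemma gives $T$-independent constants with $k'_m=0$. That commutator bound is not among the assumptions, so I would weaken it: \cref{assump:B-V-bound} gives only that $H_0^mV$ is $H_0^{m+1}$-bounded, hence $\abs{\innerp{H_0^m\phi}{[H_0^m,V]\phi}}\lesssim\norm{H_0^m\phi}\,(\norm{H_0^{m+1}\phi}+\norm\phi)$, which does not close. To close it I would run a hierarchy in $m$ using the integral form instead of the quadratic form. Duhamel gives
\[
H_0^mU^{(T)}(t)\psi=H_0^m\psi-\iu\int_0^tH_0^mH^{(T)}(s)U^{(T)}(s)\psi\,\dd s,
\]
and this yields $\norm{H_0^mU^{(T)}(t)\psi}\le\norm{H_0^m\psi}+\abs t\,(a_m\sup_s\norm{H_0^{m+1}U^{(T)}(s)\psi}+b_m\norm\psi)$. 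This again climbs in $m$ and does not terminate.

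My final plan therefore combines the two routes. The top level is controlled by the $A_T$-equivalence: $\norm{A_T^{M}\e^{-\iu tA_T}\psi}=\norm{A_T^M\psi}$ exactly. I would then re-examine the comparison constants more carefully, expecting that the relevant quantity is $\norm{(H_0+TV)^k\psi}$, which has $T$-uniform equivalence with $\norm{H_0^k\psi}$ by the absorption argument above. Indeed $T A_T=H_0+TV$, so the powers of $T$ balance, and one obtains
\[
\norm{H_0^m\e^{-\iu tA_T}\psi}\le C\sum_{k\le m}\norm{(H_0+TV)^k\psi}\le C'\norm{H_0^m\psi}+C'\norm\psi,
\]
uniformly for $T\le T_0$. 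This gives \cref{assump:A-U-bound} with $k'_m=0$. Finally, for $T>T_0$ the constants are finite for each fixed $T$, and only $T\to0$ matters for the $\bigo(1)$ requirement.
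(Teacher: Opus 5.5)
Your final plan is correct and is essentially the paper's argument. The paper likewise uses that $\e^{-\iu tA_T}$ is unitary and commutes with $A_T^m$. It proves the $T$-uniform comparison $\norm{H_0^m\phi}\le\frac{T^m}{1-cT}\norm{A_T^m\phi}+\bigo(T)\norm{\phi}$ by expanding the $m$th power and absorbing the $V$-containing words for small $T$, which amounts to your $TA_T=H_0+TV$ balancing. Together with $\norm{A_T^m\psi}\le T^{-m}(c_m\norm{H_0^m\psi}+d_m\norm{\psi})$, this gives $k'_m=0$.

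The $T^{m-2k}$ obstruction in your first attempt comes only from the weights $T^{m-k}$ in your comparison bound. The absorption argument actually gives weights $T^{k}$ on $\norm{A_T^k\phi}$, i.e. $\norm{H_0^m\phi}\le C\sum_{k\le m}\norm{(H_0+TV)^k\phi}$. So the Gronwall/commutator detour is unnecessary here.
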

\begin{proof}
    Let \(m \in N\), \(\psi \in \cinftyofh\), and \(t\in \R\). It is clear from \cref{assump:B-V-sym,assump:B-V-bound} and \cref{thm:dom-H-m,thm:relative-boundedness-with-dynamics} that, for every \(T > 0\), there exist \(a_m^{(T)},\, b_m^{(T)} \geq 0\) such that
    \begin{equation}
        \norm*{H_0^m\psi} \leq a_m^{(T)} \norm*{(\textfrac{1}{T}H_0 + V)^m \psi}+ b_m^{(T)}\norm*{\psi}
        ,
    \end{equation}
whence
\begin{align}
        \norm*{H_0^m U^{(T)}(t)\psi}
        &= \norm*{H_0^m \e^{+\iu \frac{t}{T} H_0}\e^{-\iu t (\frac{1}{T}H_0+V)} \psi}
        \nonumber\\&= \norm*{H_0^m \e^{-\iu t (\frac{1}{T}H_0 + V)} \psi}
        \nonumber\\&\leq a_m^{(T)} \norm*{\of*{\textfrac{1}{T}H_0 + V}^m \psi}
        + b_m^{(T)} \norm*{\psi}.
    \end{align}
    Thus, it suffices to show that \(a_m^{(T)}\) and \(b_m^{(T)}\) can be chosen so that
    \begin{equation}\label{eq:V-bound-implies-U-bound-1}
        a_m^{(T)} = \bigo(T^m),
        \quad
        b_m^{(T)} = \bigo(T^0)
        \quad
        \text{as}\ T \to 0.
    \end{equation}
    Indeed, in this case we would have
    \begin{equation}
        \norm*{(\textfrac{1}{T}H_0 + V)^m\psi} \leq c_m\frac{1}{T^m}\norm*{H_0^m\psi} + d_m\frac{1}{T^m}\norm*{\psi},
    \end{equation}
    with certain $c_m,d_m\geq 0$ independent of $T$ and $\psi$ providing $0<T<1$, and hence we would get
    \begin{equation}
        \norm*{H_0^m U^{(T)}(t)\psi} \leq
        a'^{(T)}_m \norm*{H_0^m\psi}
        + b'^{(T)}_m \norm*{\psi}
    \end{equation}
    with
    \begin{equation}
        a'^{(T)}_m \coloneqq a_m^{(T)} c_m \frac{1}{T^m} 
        = \bigo(T^0)
        ,\quad
        b'^{(T)}_m \coloneqq a_m^{(T)} d_m\frac{1}{T^m} + b_m^{(T)} 
        = \bigo(T^0)
        , \quad
        \text{as}\ T \to 0,
    \end{equation}
    thereby proving \cref{assump:A-U-bound}.

    In order to prove \cref{eq:V-bound-implies-U-bound-1}, we write
    \begin{align}
        \of*{\textfrac{1}{T}H_0 + V}^m
        = \of*{ \textfrac{1}{T}H_0 }^m
        + \sum_{\substack{
            \underline{i} \in \Set{1,2}^{\times m}
            \\
           \abs{\Set{i_j = 2}} < m
        } }
        A_{i_1} \dots A_{i_m}
        ,
    \end{align}
    where \(A_1 = V\) and \(A_2 = \of*{ \textfrac{1}{T}H_0 }\). 
    Thus
    \begin{align}
        \norm*{\of*{\textfrac{1}{T}H_0}^m \psi}
        &\leq
        \norm*{\of*{ \textfrac{1}{T}H_0 + V }^m \psi}
        + \sum_{\substack{
            \underline{i} \in \Set{1,2}^{\times m}
            \\
           \abs{\Set{i_j = 2}} < m
        } }
        \norm*{A_{i_1} \dots A_{i_m}\psi}
        \nonumber\\
        \label{eq:V-sym-V-bound-implies-U-bound-proof-1}
        &\leq \norm*{\of*{ \textfrac{1}{T}H_0 +V}^m \psi}
        + \of*{2^m -1} \frac{1}{T^{m-1}} \of*{ a \norm*{H_0^{m}\psi} + b \norm*{\psi}}
    \end{align}
    for some \(a,b \geq 0\) that are independent of \(T\).
    Here, we have used repeatedly the fact that \(H_0^{m'} A_{i}\) is \(H_0^{m'+1}\)-bounded for \(i=1,2\)  due to \cref{assump:B-V-bound}.
    Also, we used \(\frac{1}{T^{m'}} \leq \frac{1}{T^{m}}\) for every \(0 \leq m' \leq m\) because $0<T<1$, and the fact that the highest power of \(\frac{1}{T}\) in the sum is \(m-1\).

    Now, for $T< \frac{1}{(2^m -1 ) a}$, we have \(1 - (2^m - 1)a T > 0\) and we can rearrange \cref{eq:V-sym-V-bound-implies-U-bound-proof-1} and multiply both sides by $T^m$ to obtain
    \begin{align}
        \norm*{H_0^m\psi}
        \leq
        \frac{T^m}{1 - \of*{2^m-1}a T} \norm*{\of*{\textfrac{1}{T}H_0 + V}^m \psi}
        + \frac{(2^m-1)b T}{1 - (2^m-1)a T} \norm*{\psi}
        .
    \end{align}
    Hence, we can choose
    \begin{equation}
        a_m^{(T)}
        = \frac{T^m}{1 - \of*{2^m-1}aT}
        , \quad
        b_m^{(T)}
        = \frac{(2^m-1)b T}{1 - (2^m-1)a T}
        \quad
        \text{whenever}\ T< \min\left\{1,\frac{1}{(2^m -1 ) a}\right\}
        ,
    \end{equation}
    and therefore
    \begin{equation}
        a_m^{(T)} = \bigo(T^m),
        \quad
        b_m^{(T)} = \bigo(T)
        \quad
        \text{as}\ T \to 0
        .\qedhere
    \end{equation}
\end{proof}

\cref{assump:A-H-spec} is an immediate consequence of \cref{assump:B-V-spec}. Regarding \cref{assump:A-H-conj}, we have:

\begin{proposition}\label{thm:V-conj-implies-H-conj}
	 \Cref{assump:B-V-conj} implies \cref{assump:A-H-conj}.
\end{proposition}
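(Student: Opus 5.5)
We must show that $J H(t) \subseteq H(-t) J$ for all $t\in\R$, where $H(t) = \e^{\iu t H_0} V \e^{-\iu t H_0}$ on $\domof{H_0}$ and $J$ is the conjugation from \cref{assump:B-V-conj}. The plan is first to understand how $J$ interacts with the unitary group generated by $H_0$, and then to commute $J$ through the three factors of $H(t)$ one at a time.

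\emph{Step 1: $J$ and the unitary group.} From $J H_0 \subseteq H_0 J$ we first show $J H_0 J = H_0$, so that $J$ maps $\domof{H_0}$ onto itself.
\begin{itemize}
\item[(a)] Applying $J$ to the inclusion and using $J^2 = \one$ gives $H_0 J \subseteq J H_0$ on $J\domof{H_0}$. Together with $J\domof{H_0} \subseteq \domof{H_0}$ and $J^2=\one$, this yields $J\domof{H_0} = \domof{H_0}$ and $J H_0 = H_0 J$.
\item[(b)] Next we derive $J \e^{\iu t H_0} J = \e^{-\iu t H_0}$. The cleanest route is via the resolvent. For $z\notin\R$, antilinearity gives $J (H_0 - z) = (H_0 - \bar z) J$ on $\domof{H_0}$, hence $J (H_0-z)^{-1} = (H_0-\bar z)^{-1} J$.
\item[(c)] Using the exponential formula $\e^{-\iu t H_0} = \lim_n (1 + \iu t H_0/n)^{-n}$ (strongly), together with continuity of $J$ and antilinearity turning $\iu$ into $-\iu$, we obtain $J \e^{-\iu t H_0} = \e^{\iu t H_0} J$.
\item[(d)] Alternatively, for $\psi \in \domof{H_0}$ one can differentiate $\phi(t) = J\e^{-\iu tH_0}J\psi$. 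This gives $\phi'(t) = J(-\iu H_0)\e^{-\iu tH_0}J\psi = \iu H_0 \phi(t)$, so uniqueness of solutions of the Schrödinger equation yields $\phi(t) = \e^{\iu t H_0}\psi$. Density of $\domof{H_0}$ then extends the identity to all of $\HH$.
\end{itemize}
I would present (d), since it is short and self-contained.

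\emph{Step 2: commuting through $H(t)$.} Let $\psi \in \domof{H(t)} = \domof{H_0}$ (or any vector in the domain where $V\e^{-\iu tH_0}\psi$ makes sense). The chain is
\[
J H(t)\psi = J\e^{\iu tH_0}V\e^{-\iu tH_0}\psi = \e^{-\iu tH_0} J V \e^{-\iu tH_0}\psi .
\]
Since $\e^{-\iu tH_0}\psi \in \domof{V}$, the inclusion $JV\subseteq VJ$ gives $J V \e^{-\iu tH_0}\psi = V J\e^{-\iu tH_0}\psi$. By Step 1, $J\e^{-\iu tH_0}\psi = \e^{\iu tH_0}J\psi$. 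Combining,
\[
J H(t)\psi = \e^{-\iu tH_0} V \e^{\iu tH_0} J\psi = H(-t)J\psi .
\]
For this last identity we must check that $J\psi$ lies in $\domof{H(-t)}$, i.e.\ that $\e^{\iu tH_0}J\psi \in \domof{V}$. This holds because $\e^{\iu tH_0}J\psi = J\e^{-\iu tH_0}\psi$, and the right-hand side belongs to $J\domof{V}\subseteq\domof{V}$ (from $JV\subseteq VJ$). If instead the domain is taken to be $\domof{H_0}$, it is preserved by $J$ by Step 1(a). Hence $JH(t)\subseteq H(-t)J$.

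\emph{Main obstacle.} The only delicate point is Step 1. The assumption is only an inclusion $JH_0\subseteq H_0J$, not an equality, and $J$ is antilinear, so the relations for the functional calculus and the unitary group are not immediate. Once $J\domof{H_0}=\domof{H_0}$ and the intertwining $J\e^{-\iu tH_0}=\e^{\iu tH_0}J$ are established via the ODE-uniqueness argument in (d), the rest is bookkeeping of domains.
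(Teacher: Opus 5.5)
Your proof is correct, and its overall structure matches the paper's. Both first establish the intertwining $J\e^{\iu tH_0}=\e^{-\iu tH_0}J$ and then push $J$ through the three factors of $H(t)$ using $JV\subseteq VJ$.

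The difference is in how the intertwining is proved.
\begin{itemize}
\item The paper (\cref{thm:commutation-conjugation-dynamics}) expands $\e^{\iu tA}\psi$ as a power series on a dense set of analytic vectors. It uses $JA^n\subseteq A^nJ$ and antilinearity to turn $\iu$ into $-\iu$ term by term, then extends by density.
\item You first upgrade $JH_0\subseteq H_0J$ to $JH_0=H_0J$ with $J\domof{H_0}=\domof{H_0}$. You then recognise $J\e^{-\iu tH_0}J\psi$ as a $\domof{H_0}$-valued solution of $\phi'=\iu H_0\phi$ and conclude by uniqueness.
\end{itemize}

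Each route has a modest advantage.
\begin{itemize}
\item Yours needs only Stone's theorem and uniqueness for the Schr\"odinger equation. The paper's needs a dense, $J$-invariant set of analytic vectors.
\item The paper's lemma is stated for an arbitrary bounded antilinear $J$. Your initial condition $\phi(0)=\psi$ uses $J^2=\one$. This is easily avoided by comparing $J\e^{-\iu tH_0}\psi$ with $\e^{\iu tH_0}J\psi$ directly, since both solve the same equation with initial value $J\psi$.
\end{itemize}

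Your Step 2 is also more careful than the paper's one-line computation of $JH(s)J$. That computation treats the relation as an operator identity and never checks that $J\psi$ lies in the domain of $H(-t)$. Your check via $J\domof{V}\subseteq\domof{V}$, or via $J\domof{H_0}=\domof{H_0}$ if $H(t)$ is taken on $\domof{H_0}$, supplies exactly that detail.
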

Before proving this, we mention a general fact about conjugations and functional calculus, whose proof we explicitly provide here for completeness:
\begin{lemma}\label{thm:commutation-conjugation-dynamics}
    Let \(A\) be a self-adjoint operator and \(J\) an \emph{anti-linear} bounded operator on \(\HH\) with \(J A \subseteq A J\). Then
    \begin{equation}
        J \e^{\iu t A} = \e^{- \iu t A} J
    \end{equation}
    for every \(t\in \R\).
\end{lemma}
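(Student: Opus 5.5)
The plan is to go through the spectral theorem, so that $J$ is compared with the functional calculus of $A$. The first step is to show that $J$ commutes with every bounded Borel function of $A$ up to complex conjugation of the function. Since $J$ is anti-linear and bounded, the resolvents are the natural starting point. For $z\in\C\setminus\R$ and $\phi\in\HH$, set $\psi=(A-z)^{-1}\phi\in\domof{A}$. The hypothesis $JA\subseteq AJ$ gives $J\psi\in\domof{A}$ and $AJ\psi=JA\psi$. Anti-linearity of $J$ then yields
\begin{equation}
  (A-\conj{z})J\psi = J(A-z)\psi = J\phi,
\end{equation}
and therefore
\begin{equation}
  J(A-z)^{-1}=(A-\conj{z})^{-1}J \qquad \text{for all } z\in\C\setminus\R.
\end{equation}

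The second step upgrades this resolvent identity to the exponential $\e^{\iu tA}$. Here there are two options. The first is Stone's formula together with the Stone--Weierstrass theorem. I would instead use the more elementary Euler/Yosida approximation
\begin{equation}
  \e^{\iu tA}\phi=\lim_{n\to\infty}\of*{\one-\tfrac{\iu t}{n}A}^{-n}\phi,
\end{equation}
which holds strongly by the spectral theorem and dominated convergence, because $\abs{(1-\iu t\lambda/n)^{-n}}\le1$ and this converges pointwise to $\e^{\iu t\lambda}$.

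Next I write $(\one-\tfrac{\iu t}{n}A)^{-1}=\tfrac{n}{-\iu t}(A-z_n)^{-1}$ with $z_n=\tfrac{n}{\iu t}=-\tfrac{\iu n}{t}$, for $t\ne0$; the case $t=0$ is trivial. Applying the first step together with anti-linearity gives
\begin{equation}
  J\,\tfrac{n}{-\iu t}(A-z_n)^{-1}=\tfrac{n}{\iu t}(A-\conj{z_n})^{-1}J=\of*{\one+\tfrac{\iu t}{n}A}^{-1}J.
\end{equation}
Iterating $n$ times, $J$ moves through the $n$-th power and turns it into $(\one+\tfrac{\iu t}{n}A)^{-n}J$.

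Finally, since $J$ is bounded, it is continuous, so taking the strong limit gives $J\e^{\iu tA}\phi=\e^{-\iu tA}J\phi$. The only point needing care is bookkeeping: the conjugation of the scalar prefactors and of $z_n$ has to produce exactly $t\mapsto -t$, which the computation above confirms. I do not expect a genuine obstacle beyond checking the resolvent identity, which relies on the inclusion $JA\subseteq AJ$ giving the domain invariance $J\domof{A}\subseteq\domof{A}$.
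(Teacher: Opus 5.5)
Your argument is correct, and it takes a different route from the paper. The paper works with a dense set $\mathcal{D}_a$ of entire analytic vectors for $A$. It iterates $JA\subseteq AJ$ to get $J\domof{A^n}\subseteq\domof{A^n}$ with $JA^n\psi=A^nJ\psi$. It then expands $\e^{\iu tA}\psi$ as the exponential power series and moves $J$ through term by term, picking up $(\iu t)^n\mapsto(-\iu t)^n$ from anti-linearity. It concludes by density of $\mathcal{D}_a$ and boundedness of $J$.

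You use only the first-order consequence $J\domof{A}\subseteq\domof{A}$, which gives the resolvent intertwining $J(A-z)^{-1}=(A-\conj{z})^{-1}J$. You then recover the unitary group from resolvents via the Euler approximation $(\one-\tfrac{\iu t}{n}A)^{-n}\to\e^{\iu tA}$. Its strong convergence is correctly justified by dominated convergence in the spectral measure, since $\abs{(1-\iu t\lambda/n)^{-n}}\le 1$. Your scalar bookkeeping also checks out: $\conj{z_n}=\iu n/t$ and $\tfrac{n}{\iu t}(A-\conj{z_n})^{-1}=(\one+\tfrac{\iu t}{n}A)^{-1}$, so $t\mapsto -t$ comes out exactly.

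What your route buys is that it avoids the analytic-vector machinery entirely. In particular, you never need to show that $J$ maps $\mathcal{D}_a$ into entire analytic vectors, or that the series $\sum_n(-\iu t)^nA^nJ\psi/n!$ equals $\e^{-\iu tA}J\psi$. The paper passes over that step quickly, and its displayed chain even ends with $\e^{\iu tA}J\psi$, a sign typo. The paper's route, in turn, is a direct adaptation of a textbook argument for linear intertwiners, and it makes the mechanism visible at the level of each power $A^n$.
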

\begin{proof}
    We adapt the proof of the implication \(\text{(e)} \implies \text{(h)}\) in~\cite[Thm.~9.41~(iii)]{morettiSpectralTheoryQuantum2017} to the case of anti-linear bounded \(J\).
    By~\cite[Prop.~9.25(d,f)]{morettiSpectralTheoryQuantum2017}, there exists a dense set \(\mathcal{D}_a \subseteq \cinfty\of*{A}\) such that, for every \(t\in \R\) and \(\psi \in \mathcal{D}_a\),
    \begin{equation}
        \e^{\iu t A}\psi
        = \sum_{n=0}^{\infty} \frac{(\iu t)^n}{n!} A^n \psi
        .
    \end{equation}
    Moreover, $JA\subseteq AJ$ implies $J\domof{A^n}\subseteq\domof{A^n}$ for every $n\in\N$.
    Let \(\psi \in \mathcal{D}_a\).
    Then, due to boundedness of \(J\), 
    \begin{equation}
        J \e^{\iu t A}\psi
        = \sum_{n=0}^{\infty} J \frac{(\iu t)^n}{n!} A^n \psi
        = \sum_{n=0}^{\infty} \frac{(-\iu t)^n}{n!} A^n J \psi
        = \e^{\iu t A} J \psi,\quad \forall t\in \R
    \end{equation}
    and \(J \psi \in \mathcal{D}_a\).
    As both \(\e^{\iu t A}\) and \(J\) are bounded and \(\mathcal{D}_a\) is dense, the claimed property follows.
\end{proof}
\begin{proof}[{Proof of \cref{thm:V-conj-implies-H-conj}}]
    By \cref{thm:commutation-conjugation-dynamics} and \cref{assump:B-V-conj}, \(J\) implements time reversal in the dynamics of \(H_0\):
    \begin{equation}
        J \e^{\pm \iu t H_0} J
        = \e^{\mp \iu t H_0}
        .
    \end{equation}
    Thus,
    \begin{equation}
        J H(s) J
        = J \e^{+\iu s H_0 } \,V\,\e^{-\iu s H_0 } J
        = \e^{-\iu s H_0 } J \,V J\,\e^{-\iu s H_0 }
        = \e^{-\iu s H_0 } \,V \,\e^{+\iu s H_0 }
        = H(-s)
        ,
    \end{equation}
    which is \cref{assump:A-H-conj} for \((H(t))_{t\in\R}\).
\end{proof}

\section{Evolution systems for hyperbolic time-dependent Schrödinger equations}\label{sec:hyperbolic-setting}
In this section, we specialise our discussion to time-dependent Hamiltonians $(H(t))_{t\in\R}$ satisfying \assumpC{}, cf.~\cref{sec:assumption-c}.
As we will see, these assumptions fall into the setting usually referred to as the \emph{hyperbolic case} of time-dependent Schrödinger equations, see e.g.,~\cite[Sec. 5.3]{pazySemigroupsLinearOperators1983} or \cite[Sec. 6.9]{Schaubelt_VI_9_Engel_Nagel_2000}.
This terminology is due to work by Kato \cite{kato1953integration,kato1970linear} and should not be confused with the terminology that is commonly used in the general theory of partial differential equations.
Our goal is to show that \assumpC{} imply \assumpA{}. Here is our roadmap:

\begin{center}
    \begin{tabular}{rcl}
        \cref{assump:C-H-self-adjoint,assump:C-H-bound,assump:C-H-diff,assump:C-H-periodic} & \(\implies\) &
        \cref{assump:pazy:H1,assump:pazy:H2-plus,assump:pazy:H3} below
        \\
        \cref{assump:C-H-self-adjoint,assump:C-H-bound} & \(\implies\) & \cref{assump:A-H-cinf}
        \\
        \(
        \left.
        \begin{tabular}{@{}r@{}}
            \cref{assump:pazy:H1,assump:pazy:H2-plus,assump:pazy:H3},
            \\ 
            \cref{thm:pazy-regular-solutions}
        \end{tabular}
        \right\}\!
        \)
        & \(\implies\) & \cref{assump:A-U-cinf}
        \\
        \cref{assump:C-H-self-adjoint,assump:C-H-bound,assump:C-H-diff} & \(\implies\) & \cref{assump:A-H-cont}
        \\
        \cref{assump:C-H-self-adjoint,assump:C-H-bound,assump:C-H-diff,assump:C-H-periodic} & \(\implies\) & \cref{assump:A-U-diff}
        \\
        \cref{assump:C-H-bound} & \(\implies\) & \cref{assump:A-H-bound}
        \\
        \cref{assump:C-H-self-adjoint,assump:C-H-bound,assump:C-H-diff,assump:C-H-periodic,assump:C-commutator-bound} & \(\implies\) & \cref{assump:A-U-bound}
        \\
        \cref{assump:C-H-periodic} & \(\iff\) & \cref{assump:A-H-period}
        \\
        \cref{assump:C-H-spec} & \(\iff\) & \cref{assump:A-H-spec}
        \\
        \cref{assump:C-H-conj} & \(\iff\) & \cref{assump:A-H-conj}.
    \end{tabular}
\end{center}

\subsection{\texorpdfstring{Proof of \cref{assump:A-H-cinf,assump:A-H-cont}}{Proof of Properties A.1 and A.3}}

\cref{assump:A-H-cinf,assump:A-H-cont} are quite immediate consequences of \cref{assump:C-H-self-adjoint,assump:C-H-bound,assump:C-H-diff}:

\begin{proposition}[{cf.~\cref{thm:invariance-under-V}}] \label{thm:invariance-under-V-2}
    Suppose that \cref{assump:C-H-self-adjoint,assump:C-H-bound} hold. Then, for all \(t \in \R\), \(\cinftyofh \subseteq \domof{H(t)}\) and
    \begin{equation}
        H(t)\cinftyofh \subseteq \cinftyofh,
    \end{equation}
    that is, \cref{assump:A-H-cinf} is satisfied.
\end{proposition}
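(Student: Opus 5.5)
The plan is to mirror the argument of \cref{thm:invariance-under-V}, replacing \cref{assump:B-V-bound} by \cref{assump:C-H-bound}. The only structural difference is that in \assumpC{} the operator $H(t)$ itself, rather than a perturbation $V$, is relatively bounded. Moreover, the relevant power is $H_0^{m+1}$, so $k_m=1$ for every $m$.

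First, I will show that $\cinftyofh\subseteq\domof{H(t)}$ for every $t\in\R$. By \cref{assump:C-H-self-adjoint}, $\domof{H(t)}=\domof{H_0}$. Since $\cinftyofh=\bigcap_{m\geq1}\domof{H_0^m}\subseteq\domof{H_0}$, the inclusion follows. Alternatively, one can use \cref{assump:C-H-bound} with $m=0$, which gives $\domof{H_0}\subseteq\domof{H(t)}$.

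Second, I will establish the invariance. Fix $t$ and $m\in\N$. \Cref{assump:C-H-bound} gives $\domof{H_0^{m+1}}\subseteq\domof{H_0^mH(t)}$. By definition of the domain of a product, this means that for $\psi\in\domof{H_0^{m+1}}$ we have $\psi\in\domof{H(t)}$ and $H(t)\psi\in\domof{H_0^m}$. Hence
\begin{equation*}
H(t)\domof{H_0^{m+1}}\subseteq\domof{H_0^m}.
\end{equation*}
Now take $\psi\in\cinftyofh$. Then $\psi\in\domof{H_0^{m+1}}$ for every $m$, so $H(t)\psi\in\bigcap_{m\geq1}\domof{H_0^m}=\cinftyofh$. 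This is exactly the chain of inclusions used in the proof of \cref{thm:invariance-under-V}.

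I do not expect a real obstacle here. The only point to check is that the inclusion of domains in \cref{assump:C-H-bound} is stated for every $t\in\R$ individually, not merely uniformly on compact intervals. It is, so pointwise-in-$t$ invariance follows directly, and the norm estimate \eqref{eq:assump:C-H-bound} is not even needed.
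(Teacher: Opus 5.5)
Your proof is correct and follows the paper's route: the paper repeats the proof of \cref{thm:invariance-under-V} with $V$ replaced by $H(t)$. It reads off $H(t)\domof{H_0^{m+1}}\subseteq\domof{H_0^m}$ from the domain inclusion in \cref{assump:C-H-bound} and intersects over $m$, and, as you observe, it uses only that inclusion and not the norm estimate.
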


\begin{proof}
  The claim follows by repeating the proof of \cref{thm:invariance-under-V} with $V$ replaced by $H(t)$, using \cref{assump:C-H-bound} in place of \cref{assump:B-V-bound}.
\end{proof}

\begin{proposition}\label{thm:differentiability-implies-continuity}
    \cref{assump:C-H-self-adjoint,assump:C-H-bound,assump:C-H-diff} imply \cref{assump:A-H-cont}.
\end{proposition}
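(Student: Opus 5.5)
The aim is to show, for every $m\in\N_0$ and $\psi\in\cinftyofh$, that $t\mapsto H_0^m H(t)\psi$ is continuous. The approach is to deduce this from the continuous differentiability in \cref{assump:C-H-diff}, using \cref{assump:C-H-self-adjoint,assump:C-H-bound} to make sure that all objects involved are well defined.

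First, fix $m$ and $\psi\in\cinftyofh$. Then $\psi\in\domof{H_0^{m+1}}$, and by \cref{assump:C-H-bound} we have $\domof{H_0^{m+1}}\subseteq\domof{H_0^mH(t)}$ for every $t$. Consequently the vector $H_0^mH(t)\psi$ is defined for all $t\in\R$. (By \cref{thm:invariance-under-V-2}, $H(t)\psi$ also lies in $\cinftyofh$, although this is not needed here.)

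Second, \cref{assump:C-H-diff} applies precisely to vectors in $\domof{H_0^{m+1}}$. It asserts that $t\mapsto H_0^mH(t)\psi$ is continuously differentiable in the norm of $\HH$. A map that is differentiable at a point is continuous there, since
\[
H_0^mH(t+h)\psi-H_0^mH(t)\psi = h\Bigl(\tfrac{1}{h}\bigl(H_0^mH(t+h)\psi-H_0^mH(t)\psi\bigr)\Bigr),
\]
and the bracketed difference quotient converges to $H_0^mH'(t)\psi$, so the right-hand side tends to $0$ as $h\to0$. This gives \cref{assump:A-H-cont}.

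If one prefers an argument that does not rely on pointwise differentiability, the fundamental theorem of calculus gives an alternative. For $\psi\in\domof{H_0^{m+1}}$,
\[
H_0^mH(t)\psi-H_0^mH(s)\psi=\int_s^t H_0^mH'(r)\psi\,\dd r .
\]
Here $r\mapsto H_0^mH'(r)\psi$ is continuous, hence Bochner integrable on compact intervals by \cref{thm:bochner-integrability-test}. Applying \cref{assump:C-H-bound} with $H'$ in place of $H$, the norm of the difference is bounded by $|t-s|\bigl(a'_m\norm{H_0^{m+1}\psi}+b'_m\norm{\psi}\bigr)$ for $s,t$ in a compact interval. This shows in addition that the map is locally Lipschitz.

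I do not expect any real obstacle. The only point requiring care is the domain bookkeeping: \cref{assump:C-H-diff} is stated for vectors in $\domof{H_0^{m+1}}$, and $\cinftyofh$ is contained in this set for every $m$.
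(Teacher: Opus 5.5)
Your proposal is correct and follows the paper's argument: the paper likewise notes that $\cinftyofh$ lies in the relevant domains, and that the differentiability of $t\mapsto H_0^m H(t)\psi$, which is assumed directly, immediately gives continuity. Your extra argument via the fundamental theorem of calculus is valid and gives local Lipschitz continuity from the relative bound on $H'(t)$, but the statement does not need it.
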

\begin{proof}
    Due to \cref{assump:C-H-self-adjoint,assump:C-H-bound,thm:invariance-under-V-2}, \(\cinftyofh \subseteq \domof{H(t)}\).
    Moreover, by \cref{assump:C-H-diff}, the map \(t\mapsto H_0^m H(t)\psi\) is differentiable, and therefore continuous, for every \(\psi\in\cinftyofh\) and \(m\in \N_0\). 
    Thus \cref{assump:A-H-cont} holds.
\end{proof}

\subsection{Proof of \texorpdfstring{\cref{assump:A-U-cinf}}{Property A.2}}
Next, we verify \cref{assump:A-U-cinf}. 
To this end, we appeal to the following theorem, formulated in the general framework of strongly continuous semigroups on Banach spaces and originating in work of Kato~\cite{kato1953integration,kato1970linear}, stated here in the form given in~\cite{pazySemigroupsLinearOperators1983}:

\begin{theorem}[{\cite[Thms.~3.1, 4.3 and 4.6]{pazySemigroupsLinearOperators1983}}]
\label{thm:pazy-regular-solutions}
   Let \(X\) be a Banach space, and let \(A(t)\), for \(0 \leq t \leq t_{\max}\), be the infinitesimal generator of a strongly continuous semigroup \(S_t(s)\), \(s \geq 0\), on \(X\). Let \(Y \subseteq X\) be a dense subspace equipped with a norm \(\norm{\cdot}_Y\) such that
\begin{equation}\label{eq:pazy-Y-dominates-X}
    \norm{\psi} \leq C \norm{\psi}_Y,
    \quad \forall \psi \in Y,
\end{equation}
and such that \(Y\), equipped with the norm \(\norm{\cdot}_Y\), is a Banach space. Assume that the family \((A(t))_{t\in[0,\tmax]}\) satisfies the following conditions (hyperbolic setting):
    \begin{assumptionlist}[label={H.\arabic*},format=\bfseries\upshape]
        \item\label{assump:pazy:H1} The family \((A(t))_{t\in[0,\tmax]}\) is stable, i.e. there exist \(\omega,M \geq 0\) such that
        \begin{gather}
            \label{eq:stable-1}
            \rho\of{A(t)} \supseteq (\omega, \infty),
            \\
            \label{eq:stable-2}
            \norm*{\prod_{j=1}^{k} S_{t_j}(s_j) } \leq M \exp\of*{\omega \sum_{j=1}^{k}s_j}
            ,\quad s_j \geq 0
        \end{gather}
        for every sequence \(0\leq t_1 \leq t_2 \leq \dots \leq t_k \leq \tmax\) and \(k = 1,2,\dots\), and where the product in \cref{eq:stable-2} is ordered from right to left.
        \item\label{assump:pazy:H2-plus} There exists a family \((Q_t)_{t\in[0,\tmax]}\) of Banach space isomorphisms \(Q_t \colon Y \to X\) such that \(t\mapsto Q_t \psi\) is continuously differentiable in \(X\) for every \(\psi \in Y\), and such that
        \begin{equation}
            Q_t A(t) Q_t\inv = A(t) + B(t),
        \end{equation}
        where \(B(t)\) is bounded for every $t$, and the map $t\mapsto B(t)$ is strongly continuous.
        \item\label{assump:pazy:H3} For every \(t\in [0,\tmax]\), \(Y \subseteq \domof{A(t)}\), \(A(t)\) is bounded as an operator from \(Y\) to \(X\), and \(t \mapsto A(t)\) is continuous in the \(B(Y,X)\) norm.
    \end{assumptionlist}
    Then, there exists a unique family of bounded operators \((U(t,s))_{0\leq s \leq t \leq \tmax}\) on \(X\) such that the following properties hold:
    \begin{statements}[format=\bfseries \upshape, label={E.\arabic*}]
        \item\label{thm:pazy:E1} \(\displaystyle \norm*{U(t,s)} \leq M \exp\of*{ \omega (t-s) }\),\quad for \(0\leq s \leq t \leq \tmax\)
        \item\label{thm:pazy:E2} \(\displaystyle \evalat{ \lim_{h \downarrow 0}\frac{U(t+h,s) \psi-U(t,s) \psi}{h}}{t = s} = A(s)\psi \),\quad for \(\psi \in Y\), \(0\leq s \leq \tmax\)
        \item\label{thm:pazy:E3} \(\displaystyle \pdv{}{s} U(t,s)\psi = - U(t,s) A(s)\psi\),\quad for \(\psi \in Y\), \(0\leq s \leq t \leq \tmax\)
        \item\label{thm:pazy:E4} \(\displaystyle U(t,s) Y \subseteq Y\),\quad for \(0\leq s \leq t \leq \tmax\)
        \item\label{thm:pazy:E5} For every \(\psi \in Y\), \((t,s)\mapsto U(t,s)\psi\) is continuous in \(Y\) for \(0\leq s \leq t \leq \tmax\).
    \end{statements}
and such that, for every \(\psi_0 \in Y\), \(\psi(t) = U(t,s)\psi_0\) is the unique solution to the initial value problem
    \begin{equation}\label{eq:non-autonomous-schrödinger-equaiton}
        \begin{cases}
            \odv{}{t} \psi(t) = A(t)\psi(t),
            & \text{for}\ 0 \leq s < t\leq \tmax
            \\
            \psi(s) = \psi_0
        \end{cases}
    \end{equation}
    such that \(t \mapsto \psi(t) \in Y\) is continuous in \(Y\) and continuously differentiable in \(X\) for every \(t\in(s,\tmax]\).

    In addition, there exists a continuous family of bounded linear operators \((W(t,s))_{0\leq s \leq t \leq \tmax}\) on $X$ that is the unique solution of the integral equation
    \begin{equation}\label{eq:pazy-integral-equation-W}
        W(t,s) \psi
        = U(t,s)\psi + \int_s^t W(t,\tau) Q'_\tau Q_\tau\inv U(\tau,s)\psi \dd{\tau}
    \end{equation}
    and such that
    \begin{equation}\label{eq:pazy-U-equals-W}
        U(t,s)
        = Q_t\inv W(t,s) Q_s
        ,
    \end{equation}
    where \(Q'_t\) denotes the strong derivative of \(Q_t\).
\end{theorem}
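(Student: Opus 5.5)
The statement is Kato's theorem in Pazy's formulation (Thms.~3.1, 4.3, 4.6 of \cite{pazySemigroupsLinearOperators1983}), and I would prove it by following Pazy's construction rather than reproving anything from scratch. The plan has three stages: build an approximate evolution from piecewise constant generators, pass to the limit using stability \ref{assump:pazy:H1}, and then use \ref{assump:pazy:H2-plus} to gain $Y$-regularity.

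The first stage is the construction. For a partition of $[0,\tmax]$ into $n$ equal pieces, I set $A_n(t)=A(t_k^n)$ on $[t_k^n,t_{k+1}^n)$. I then define the approximate evolution $U_n(t,s)$ as the ordered product of the semigroups $S_{t_k^n}(\cdot)$ over the corresponding subintervals.

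Stability \eqref{eq:stable-2} immediately gives $\norm{U_n(t,s)}\le M\e^{\omega(t-s)}$, uniformly in $n$. For $\psi\in Y$, I would differentiate $\tau\mapsto U_n(t,\tau)U_m(\tau,s)\psi$ and integrate. This yields
\begin{equation*}
U_m(t,s)\psi-U_n(t,s)\psi=\int_s^t U_n(t,\tau)\bigl(A_m(\tau)-A_n(\tau)\bigr)U_m(\tau,s)\psi\,\dd\tau .
\end{equation*}
By \ref{assump:pazy:H3}, the factor $\norm{A_m(\tau)-A_n(\tau)}_{B(Y,X)}\to0$ uniformly in $\tau$. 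Hence $U_n$ is Cauchy in $X$, provided $\norm{U_m(\tau,s)}_{B(Y)}$ is bounded uniformly in $m$.

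That $Y$-bound is the crux. To get it, I would conjugate by $Q$, as in Pazy's Thm.~4.3. Condition \ref{assump:pazy:H2-plus} shows that $Q_tS_t(s)Q_t\inv$ is the semigroup generated by $A(t)+B(t)$, a bounded perturbation. By the bounded-perturbation theorem for stable families, the perturbed family is again stable with constants $M$ and $\omega+M\sup\norm{B}$. Combining this with the $C^1$-dependence of $Q_t$ when passing between subintervals gives $\norm{U_n(t,s)}_{B(Y)}\le C$ uniformly in $n$.

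The limit $U(t,s)$ then exists strongly on $X$ by density of $Y$ and the uniform $X$-bounds, and \ref{thm:pazy:E1} follows immediately.

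The second stage establishes the properties of $U$ and uniqueness. I would read off \ref{thm:pazy:E2} and \ref{thm:pazy:E3} from the corresponding identities for $U_n$ by passing to the limit, using \ref{assump:pazy:H3}. For $Y$-invariance \ref{thm:pazy:E4} and continuity \ref{thm:pazy:E5}, I would introduce $W(t,s)$ as the solution of the Volterra equation \eqref{eq:pazy-integral-equation-W}. This equation is solvable by Neumann iteration because $Q'_\tau Q_\tau\inv$ is strongly continuous and bounded on compacts. Verifying the identity \eqref{eq:pazy-U-equals-W} by a differentiation argument then transports $X$-continuity of $W$ to $Y$-continuity of $U$.

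Uniqueness of classical solutions follows from the standard argument. If $v$ is a $Y$-valued solution, then $\tau\mapsto U(t,\tau)v(\tau)$ has zero derivative by \ref{thm:pazy:E3}, so $v(t)=U(t,s)v(s)$. Uniqueness of $W$ follows from Gronwall applied to the Volterra equation.

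I expect the main obstacle to be the uniform-in-$n$ $Y$-stability of $U_n$ together with the rigorous proof of \eqref{eq:pazy-U-equals-W}. The differentiation there has to be done carefully, because $U(t,s)Y\subseteq Y$ is exactly what is being established and so cannot be assumed. My first attempt would be to prove the identity at the level of $U_n$ and $W_n$ and then pass to the limit.
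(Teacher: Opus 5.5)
Your plan is Pazy's argument, which is all the paper relies on: it states the theorem with a citation and gives no proof. Your first stage is fine, and so are E.1--E.3 and the uniqueness of $Y$-valued solutions.

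The step that fails is the last one. With $W$ defined by \eqref{eq:pazy-integral-equation-W}, the identity \eqref{eq:pazy-U-equals-W} is false whenever $B\not\equiv 0$, so no differentiation argument can establish it. Formally, $\partial_t\bigl(Q_tU(t,s)Q_s^{-1}\bigr)=\bigl(Q_tA(t)Q_t^{-1}+Q_t'Q_t^{-1}\bigr)Q_tU(t,s)Q_s^{-1}=\bigl(A(t)+B(t)+Q_t'Q_t^{-1}\bigr)Q_tU(t,s)Q_s^{-1}$. The Volterra equation with inhomogeneity $U$ and kernel $Q_\tau'Q_\tau^{-1}$, by contrast, produces the evolution system of the bounded perturbation $A(t)+Q_t'Q_t^{-1}$, so the term $B$ is lost. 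Here is a concrete counterexample: $X=Y=\C^2$, $A(t)\equiv\operatorname{diag}(1,2)$, $Q_t\equiv Q=\begin{pmatrix}1&1\\0&1\end{pmatrix}$. All hypotheses hold, with $M=1$, $\omega=2$ and $B=QAQ^{-1}-A=\begin{pmatrix}0&1\\0&0\end{pmatrix}$. Since $Q'=0$, the equation forces $W=U$, yet $Q^{-1}\mathrm{e}^{\tau A}Q=\begin{pmatrix}\mathrm{e}^{\tau}&\mathrm{e}^{\tau}-\mathrm{e}^{2\tau}\\0&\mathrm{e}^{2\tau}\end{pmatrix}\neq\mathrm{e}^{\tau A}$ for $\tau=t-s>0$.

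The repair is to use the kernel $B(\tau)+Q_\tau'Q_\tau^{-1}$, i.e.\ to let $W$ be the evolution system of $A+B+Q'Q^{-1}$. It is still obtained by Neumann iteration, since $B$ is strongly continuous. With this $W$, your fallback of working at the level of the approximations is the right way around the circularity you identified. Because $Q_{t_k}S_{t_k}(\cdot)Q_{t_k}^{-1}$ is the semigroup generated by $A(t_k)+B(t_k)$, the operator $Q_tU_n(t,s)Q_s^{-1}$ is an ordered product of these semigroups interlaced with factors $Q_{t_{k+1}}Q_{t_k}^{-1}=\one+O(h)$ in $B(X)$. A telescoping (discrete Duhamel) identity, together with the stability of $A+B$, should then show that this product converges strongly in $X$ to $W(t,s)$. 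Hence $U_n(t,s)y=Q_t^{-1}\bigl(Q_tU_n(t,s)Q_s^{-1}\bigr)Q_sy\to Q_t^{-1}W(t,s)Q_sy$ in $Y$ for $y\in Y$. Comparing this with the $X$-limit $U(t,s)y$ yields E.4, E.5 and the corrected identity simultaneously, without presupposing $U(t,s)Y\subseteq Y$.

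The paper's application is unaffected. There $Q_t=H^{(T)}(t)^m-\iu$ commutes with $A(t)=-\iu H^{(T)}(t)$, so $B\equiv0$ and the displayed form is correct.
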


Our \cref{assump:pazy:H2-plus} is called $(H_2)^+$ in \cite{pazySemigroupsLinearOperators1983} and is not to be confused with assumption $(H_2)$ there. 

\begin{remark}\label{rem:pazy-regular-solutions}
(1) The restriction to compact time domains in Theorem \ref{thm:pazy-regular-solutions} is not essential, and the interval $[0,\tmax]$ can be replaced by $\R$. What matters is that, throughout the proof, ``uniform convergence'' is replaced by ``uniform convergence on compact intervals''. This is also the setting considered in \cite[Sec.~6.9]{Schaubelt_VI_9_Engel_Nagel_2000}.

(2) We will apply the above theorem to \(A(t) = - \iu H^{(T)}(t)\) for \(T>0\) and \(Y = \domof{H_0^m}\), equipped with the graph norm of \(H_0^m\):
\begin{equation}\label{eq:Y-graphnorm}
    \norm{\psi}_Y
    = \norm{\psi}_{H_0^m}
    \coloneqq \norm{H_0^m \psi} + \norm{\psi},
    \quad \psi \in \domof{H_0^m},
\end{equation}
for \(m \in \N_0\).
As \(H_0^m\) is closed, \(Y\) is a Banach space, and
\begin{equation}
	\norm{\psi}
	\leq \norm{\psi} + \norm{H_0^m \psi}
	= \norm{\psi}_{Y}.
\end{equation}
Initially, this yields an \emph{evolution system} (or \emph{propagator}) \((U^{(T)}(t,s))_{s\leq t}\), i.e.,
\begin{equation}
    \begin{aligned}
        U^{(T)}(t,r) &= U^{(T)}(t,s) U^{(T)}(s,r), \quad \forall r \leq s \leq t, \\
        U^{(T)}(t,t) &= \one, \quad \forall t \in \R, \\
        (s,t) \mapsto U^{(T)}(t,s) &\text{ is jointly strongly continuous.}
    \end{aligned}
\end{equation}

(3) Since each \(H^{(T)}(t)\) is self-adjoint, the propagators \(U^{(T)}(t,s)\) are unitary. Applying the theorem to the family $(-H^{(T)}(t))_{t\in\R}$ yields propagators for backward time evolution, allowing the extension to $s>t$ and thus obtaining \(U^{(T)}(t,s)^* = U^{(T)}(s,t)\) for all \(s,t \in \R\). Hence, we obtain a unitary propagator \((U^{(T)}(t,s))_{s,t \in \R}\) as in \cite[Def.~p.~282]{reedIIFourierAnalysis1975}, which allows for going forward and backward in time and which is the setting we adopt in the remainder of this section. The unitary family \((U^{(T)}(t))_{t \in \R}\) is then defined by \(U^{(T)}(t) \coloneqq U^{(T)}(t,0)\) for all \(t \in \R\).
\end{remark}

Throughout this subsection we consider arbitrary \(T>0\).
We make the following immediate observation:
\begin{proposition}\label{thm:self-adjoint-implies-stable-C-H1}
    \Cref{assump:C-H-self-adjoint} implies \cref{assump:pazy:H1} of \cref{thm:pazy-regular-solutions} for \(A(t) = -\iu H^{(T)}(t)\) with $\omega=0$ and $M=1$.
\end{proposition}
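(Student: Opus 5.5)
The plan is to verify the two parts of stability, \cref{eq:stable-1,eq:stable-2}, directly from self-adjointness via Stone's theorem. By \cref{assump:C-H-self-adjoint}, each $H^{(T)}(t) = H(t/T)$ is self-adjoint. Stone's theorem then makes $A(t) = -\iu H^{(T)}(t)$ the generator of the strongly continuous unitary group $S_t(s) = \e^{-\iu s H^{(T)}(t)}$, $s\in\R$. In particular, its restriction to $s\geq 0$ is a strongly continuous contraction semigroup. This identifies the semigroups $S_t$ appearing in \cref{thm:pazy-regular-solutions}.

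For \cref{eq:stable-1}, I will use that the spectrum of a self-adjoint operator is real. Hence $\sigma(-\iu H^{(T)}(t)) = -\iu\,\sigma(H^{(T)}(t)) \subseteq \iu\R$, and so $\rho(A(t)) \supseteq \C\setminus \iu\R \supseteq (0,\infty)$. This gives \cref{eq:stable-1} with $\omega = 0$. Alternatively, the Hille--Yosida theorem applied to the contraction semigroup gives the same inclusion directly.

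For \cref{eq:stable-2}, each factor $S_{t_j}(s_j)$ is unitary, hence has norm $1$. Submultiplicativity of the operator norm (or simply the fact that a product of unitaries is unitary) then gives
\begin{equation}
\norm*{\prod_{j=1}^{k} S_{t_j}(s_j)} \leq \prod_{j=1}^k \norm{S_{t_j}(s_j)} = 1 = 1\cdot \exp\of*{0 \cdot \textstyle\sum_j s_j}.
\end{equation}
This holds for any ordering of the $t_j$, so \cref{eq:stable-2} follows with $M = 1$, $\omega = 0$.

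No step is a genuine obstacle. The only point needing care is that the $T$-rescaling preserves self-adjointness, which is immediate because $H^{(T)}(t)$ is literally $H(t/T)$. Note also that the domain equality in \cref{assump:C-H-self-adjoint} is not needed for this part.
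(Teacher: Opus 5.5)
Your proof is correct and matches the paper, which gives no argument beyond calling this an immediate observation: Stone's theorem makes \(A(t) = -\iu H^{(T)}(t)\) the generator of the unitary group \(\e^{-\iu s H^{(T)}(t)}\). The spectral inclusion \(\sigma(A(t)) \subseteq \iu\R\) and the unitarity of each factor then yield \cref{eq:stable-1,eq:stable-2} with \(\omega = 0\) and \(M = 1\).
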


To apply \cref{thm:pazy-regular-solutions}, we first establish some intermediate lemmas, beginning with a consequence of \cref{assump:C-H-self-adjoint,assump:C-H-bound} that is analogous to the implication of \cref{assump:B-V-sym,assump:B-V-bound} in \cref{thm:dom-H-m}.
\begin{lemma}\label{thm:C-equal-domains-powers}
    Suppose \cref{assump:C-H-self-adjoint,assump:C-H-bound} hold.
    Then
    \begin{equation}
        \domof*{H^{(T)}(t)^m} = \domof*{H_0^m}
        ,\quad \forall t \in \R
        ,\, m\in \N_0
        .
    \end{equation}
\end{lemma}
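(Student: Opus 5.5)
The proof will go by induction on \(m\in\N_0\), following the same pattern as \cref{thm:dom-H-m}. Since \(H^{(T)}(t)=H(t/T)\), it suffices to treat \(H(t)\) for arbitrary \(t\). The case \(m=0\) is trivial. The case \(m=1\) is exactly \cref{assump:C-H-self-adjoint}, which gives \(\domof{H(t)}=\domof{H_0}\).

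For the induction step, assume \(\domof{H(t)^m}=\domof{H_0^m}\) for some \(m\geq1\). We compute
\begin{align*}
    \domof{H(t)^{m+1}}
    &= \Set{\psi\in\domof{H(t)} \given H(t)\psi\in\domof{H(t)^m}}\\
    &= \Set{\psi\in\domof{H_0} \given H(t)\psi\in\domof{H_0^m}}
    = \domof{H_0^m H(t)}.
\end{align*}
By \cref{assump:C-H-bound}, \(\domof{H_0^{m+1}}\subseteq\domof{H_0^mH(t)}\), which gives the inclusion \(\domof{H_0^{m+1}}\subseteq\domof{H(t)^{m+1}}\).

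The reverse inclusion \(\domof{H_0^mH(t)}\subseteq\domof{H_0^{m+1}}\) is the main obstacle. In \cref{thm:dom-H-m} the analogous step was immediate, because there \(H_0^m(\frac1T H_0+V)\) splits as \(\frac1T H_0^{m+1}+H_0^mV\). Here \(H(t)\) is not written as \(H_0\) plus a perturbation, so we must argue differently.

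We plan to use the commutator bound \cref{assump:C-commutator-bound} if it is available. If the lemma is meant to rely only on \cref{assump:C-H-self-adjoint,assump:C-H-bound}, we instead use closed-graph arguments. The steps are as follows.
\begin{itemize}
    \item[(i)] \(H(t)\) and \(H_0\) are both closed on the common domain \(\domof{H_0}\). By the closed graph theorem, their graph norms are equivalent. In particular, \(\norm{H_0\psi}\leq c\of{\norm{H(t)\psi}+\norm{\psi}}\) for all \(\psi\in\domof{H_0}\).
    \item[(ii)] Take \(\psi\in\domof{H_0^mH(t)}\) and set \(\phi=H(t)\psi\in\domof{H_0^m}\). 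We want to show \(H_0^m\psi\in\domof{H_0}\).
    \item[(iii)] To do so, we will show that \(H_0^m\psi\in\domof{H(t)}\) and apply (i). Formally \(H(t)H_0^m\psi=H_0^m\phi-\comm{H_0^m}{H(t)}\psi\). The commutator is \(H_0^m\)-bounded by \cref{assump:C-commutator-bound}, and \(\psi\in\domof{H_0^m}\) by the induction hypothesis, since \(\psi\in\domof{H(t)^m}\).
    \item[(iv)] To make (iii) rigorous, we regularise with the bounded spectral cut-offs \(E_0([-n,n])\) of \(H_0\). We write \(\psi_n=E_0([-n,n])\psi\), apply the identity to \(\psi_n\), and pass to the limit using closedness of \(H(t)\).
\end{itemize}

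An alternative for (iii)–(iv) is to use the resolvent \((H_0+1)^{-1}\) together with the bound \(\norm{H_0^mH(t)\psi}\leq a_m\norm{H_0^{m+1}\psi}+b_m\norm{\psi}\). The plan would be to show that the closed operator \(H_0^mH(t)\) restricted to \(\domof{H_0^{m+1}}\) has the same closure/graph-norm equivalence. This requires some care, and we regard it as the delicate point of the whole proof.

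Once \(\domof{H(t)^{m+1}}=\domof{H_0^{m+1}}\) is established, the induction closes. The statement for \(H^{(T)}(t)\) then follows by substituting \(t/T\) for \(t\).
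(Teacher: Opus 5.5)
You have found the real issue, but two of your remarks are off. The paper does write $H(t)$ as $H_0$ plus a perturbation: it sets $V:=H^{(T)}(t)-H_0$ on $\domof{H_0}$, checks \cref{assump:B-V-sym,assump:B-V-bound}, and reruns \cref{thm:dom-H-m}. And the step in \cref{thm:dom-H-m} is not immediate. It uses $\domof{H_0^m(\tfrac1T H_0+V)}=\domof{H_0^{m+1}}\cap\domof{H_0^mV}$, of which only ``$\supseteq$'' is automatic, because $\tfrac1T H_0\psi+V\psi\in\domof{H_0^m}$ does not put each summand in $\domof{H_0^m}$. So the paper's one-line reduction rests on exactly the inclusion $\domof{H_0^mH(t)}\subseteq\domof{H_0^{m+1}}$ that you isolate.

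That inclusion is false under \cref{assump:C-H-self-adjoint,assump:C-H-bound} alone, so neither your closed-graph/resolvent alternative nor the paper's argument can supply it. Take $\HH=\ell^2(\N_0)$, $H_0e_n=2^ne_n$, $S$ the unilateral shift, $B=\one-\tfrac23S$, and $H(t)\equiv H:=BH_0B^*$. The operator $B$ is invertible on $\HH$. Since $H_0^kS^*=2^{-k}S^*H_0^k$, $B^*$ is invertible (Neumann series) on every $\domof{H_0^k}$. Hence $H$ is self-adjoint with $\domof{H}=(B^*)^{-1}\domof{H_0}=\domof{H_0}$. The identity $H_0^mH=(\one-\tfrac{2^{m+1}}{3}S)(\one-\tfrac{2^{-m}}{3}S^*)H_0^{m+1}$ on $\domof{H_0^{m+1}}$ gives \cref{assump:C-H-bound} for all $m$. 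But $\chi=\sum_k3^{-k}e_k\in\domof{H_0}\setminus\domof{H_0^2}$ satisfies $H_0\chi=\sum_k(\tfrac23)^ke_k=B^{-1}e_0$. So $\psi:=(B^*)^{-1}\chi$ has $H\psi=e_0\in\domof{H}$, while $\psi\notin\domof{H_0^2}$ because $B^*$ preserves $\domof{H_0^2}$. Thus $\psi\in\domof{H^2}\setminus\domof{H_0^2}$. This $H$ is time-independent, real and tridiagonal in the eigenbasis of $H_0$. It therefore satisfies \assumpC{} except \cref{assump:C-commutator-bound}, since $[H_0,H]=\tfrac23(\tfrac14S^*-S)H_0^2$ is not $H_0$-bounded.

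So an extra hypothesis is unavoidable, and using \cref{assump:C-commutator-bound} is the right repair. It is available wherever the lemma feeds into \cref{th:summary-hyperbolic-setting}, but it must be added to the lemma's hypotheses. Two points in (iii)--(iv) still need fixing.

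First, \cref{assump:C-commutator-bound} cannot hold literally on $\domof{H_0^m}$, since that would require $H_0^m\domof{H_0^m}\subseteq\domof{H_0}$. Use it as a bound on $\domof{H_0^{m+1}}$, and let $\overline{C}$ be the continuous extension of $[H_0^m,H(t)]$ to $\domof{H_0^m}$.

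Second, your strong cut-off limit is circular. Closedness of $H(t)$ needs convergence of $H_0^mH(t)E_0([-n,n])\psi$, and the only control on that, \cref{assump:C-H-bound}, presupposes $\psi\in\domof{H_0^{m+1}}$. Use your identity weakly instead. Take $\phi$ in $\cinftyofh$, which is a core for $H(t)$ because the graph norms of $H(t)$ and $H_0$ are equivalent. Then
\[
\innerp{H_0^m\psi}{H(t)\phi}=\innerp{H_0^mH(t)\psi}{\phi}-\innerp{\overline{C}\psi}{\phi}.
\]
This holds for $\psi\in\domof{H_0^{m+1}}$ and extends to your $\psi$ via $\psi_n=E_0([-n,n])\psi$, writing $\innerp{H_0^mH(t)\psi_n}{\phi}=\innerp{H(t)\psi_n}{H_0^m\phi}$. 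Thus $\phi\mapsto\innerp{H_0^m\psi}{H(t)\phi}$ is bounded, so $H_0^m\psi\in\domof{H(t)^*}=\domof{H_0}$, i.e.\ $\psi\in\domof{H_0^{m+1}}$.
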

\begin{proof}
	Let \(t \in \R\).
	We write
	\begin{equation}\label{eq:equal-domains-powers-proof-1}
		V = H^{(T)}(t) - H_0
	\end{equation}
	and by \cref{assump:C-H-self-adjoint} we have
	\begin{equation}
	    \domof{V}
		= \domof{H^{(T)}(t)} \cap \domof{H_0}
		= \domof{H_0}
	\end{equation}
    and \(V\) is symmetric; consequently, $\domof{H^{(T)}(t)}= \domof{H_0+V}$.
	Furthermore, by \cite[Lem.~6.2]{teschlMathematicalMethodsQuantum2009}, \(H_0^m V\) is \(H_0^{m+1}\)-bounded for every \(m \in \N_0\) as
	\begin{equation}
		\domof{H_0^m V}
		= \domof{H_0^{m+1} } \cap \domof{H_0^m H^{(T)}(t)}
		= \domof{H_0^{m+1} }
		.
	\end{equation}
	In conclusion, \(H_0\) and \(V\) satisfy \cref{assump:B-V-sym,assump:B-V-bound}.
	Therefore, following the same steps as in \cref{thm:dom-H-m},
	\begin{equation}
		\domof{H^{(T)}(t)^m}
	    = \domof{\of*{H_0+V}^m }
		= \domof{H_0^m}.\qedhere
	\end{equation}
\end{proof}

\begin{lemma}\label{thm:differentiability-powers-H}
        Suppose \Cref{assump:C-H-self-adjoint,assump:C-H-bound,assump:C-H-diff,assump:C-H-periodic} hold true. Then, for every \(m, n \in \N_0\) and \(\psi \in \domof{H_0^{m+n}}\), the map \(t\mapsto\odv{}{t} H_0^m H^{(T)}(t)^n \psi\) exists, is continuous, and equals \(H_0^m \odv{}{t}H^{(T)}(t)^n \psi\).
    Furthermore, \(H_0^m \odv{}{t} H^{(T)}(t)^n\) is \(H_0^{m+n}\)-bounded.

    In particular, \(t \mapsto H^{(T)}(t)^n\psi\) is continuously differentiable for every \(\psi \in \domof{H_0^{n}}\).
\end{lemma}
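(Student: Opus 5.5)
Since $H^{(T)}(t)=H(t/T)$, the rescaling only contributes the harmless factor $1/T$ in derivatives, and periodicity (\cref{assump:C-H-periodic}) ensures that all local-uniform bounds on compact intervals become global. So it suffices to treat $T=1$. I will argue by induction on $n$, for all $m$ simultaneously. The case $n=0$ is trivial, and $n=1$ is exactly \cref{assump:C-H-diff}.

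For the induction step, fix $\psi\in\domof{H_0^{m+n+1}}$. By \cref{thm:C-equal-domains-powers} together with the relative bound in \cref{assump:C-H-bound}, $H(t)\psi\in\domof{H_0^{m+n}}$. I will use the Leibniz-type difference identity
\begin{equation*}
  \begin{aligned}
    \frac{1}{h}\of*{H(t+h)^{n+1}-H(t)^{n+1}}\psi
    &= H(t+h)^{n}\,\frac{1}{h}\of*{H(t+h)-H(t)}\psi \\
    &\quad + \frac{1}{h}\of*{H(t+h)^{n}-H(t)^{n}}H(t)\psi ,
  \end{aligned}
\end{equation*}
apply $H_0^m$ to both sides, and pass to the limit $h\to 0$ in each term.

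\emph{Second term.} The induction hypothesis for $n$, applied to the fixed vector $H(t)\psi\in\domof{H_0^{m+n}}$, gives convergence to $H_0^m \odv{}{t}H(t)^n\,H(t)\psi$.

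\emph{First term.} Write it as
\[
H_0^m H(t+h)^n\eof*{\tfrac1h\of{H(t+h)-H(t)}\psi-H'(t)\psi}+H_0^mH(t+h)^nH'(t)\psi .
\]
Two ingredients are needed.
\begin{itemize}
\item A bound $\norm{H_0^mH(s)^n\phi}\le a\norm{H_0^{m+n}\phi}+b\norm{\phi}$, uniform for $s$ in compact sets. This follows by iterating \cref{assump:C-H-bound}, since each factor $H(s)$ raises the $H_0$-power by one.
\item Strong continuity of $s\mapsto H_0^mH(s)^n\phi$ for $\phi\in\domof{H_0^{m+n}}$. This follows from the induction hypothesis, since differentiability implies continuity.
\end{itemize}
The uniform bound reduces the bracketed piece to showing
\[
H_0^{m+n}\tfrac1h\of{H(t+h)-H(t)}\psi\to H_0^{m+n}H'(t)\psi ,
\]
which is \cref{assump:C-H-diff} with index $m+n$, valid because $\psi\in\domof{H_0^{m+n+1}}$. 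The remaining piece converges by the strong continuity just noted.

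This yields the derivative formula
\[
H_0^m\odv{}{t}H(t)^{n+1}\psi=\sum_{k=0}^{n}H_0^mH(t)^{k}H'(t)H(t)^{n-k}\psi .
\]
Each summand is $H_0^{m+n+1}$-bounded, uniformly on compact sets, by iterating \cref{assump:C-H-bound} for $H$ and $H'$ (the latter via \cref{assump:C-H-diff}); this gives the relative boundedness claim. Because $H_0^m$ is closed, the identity $\odv{}{t}H_0^mH(t)^{n+1}\psi=H_0^m\odv{}{t}H(t)^{n+1}\psi$ also holds: $H_0^m$ applied to the difference quotient converges, and so does the difference quotient itself (take $m=0$).

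\emph{Continuity of the derivative.} Show that each summand $t\mapsto H_0^mH(t)^kH'(t)H(t)^{n-k}\psi$ is continuous, by a telescoping argument over the factors combined with the uniform relative bounds. This needs strong continuity of $H_0^{j}H'(t)\phi$, which comes from the continuous differentiability in \cref{assump:C-H-diff}.

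I expect this last continuity step to be the main obstacle. The issue is that the vector fed into each factor varies with $t$, so the estimates must be ordered so that every change is absorbed by a relative bound with a fixed power of $H_0$. The "in particular" statement is then the case $m=0$.
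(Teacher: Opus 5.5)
Your proposal is correct and follows essentially the same route as the paper. Both argue by induction on \(n\), split the difference quotient with a product rule, and control the pieces using \cref{assump:C-H-diff}, the uniform relative bounds from \cref{assump:C-H-bound}, and strong continuity from the induction hypothesis. The paper peels off the leftmost factor, \(H(t+h)^{n+1}-H(t)^{n+1}=(H(t+h)-H(t))H(t)^n+H(t+h)(H(t+h)^n-H(t)^n)\), and works with the resulting two-term recursive derivative formula. You instead peel off the rightmost factor, unroll to the full Leibniz sum, and spell out the closedness argument for interchanging \(H_0^m\) with \(\odv{}{t}\), which the paper leaves implicit.
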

\begin{proof}
    We prove the claim by induction over \(n \in \N_0\). For \(n = 0\), the claim follows trivially.

    Now, let \(n \in \N_0\) be arbitrary and assume the claim holds for \(n\), i.e., \(\odv{}{t} H_0^m H^{(T)}(t)^n \psi\) exists, is continuous, equals \( H_0^m \odv{}{t}H^{(T)}(t)^n \psi\), and \(H_0^m \odv{}{t}H^{(T)}(t)^n\) is \(H_0^{m+n}\)-bounded for every \(\psi \in \domof{H_0^{m+n}}\) and \(m\in \N_0\).
    Let \(\psi \in \domof{H_0^{m+n+1}}\).
    Then \(H^{(T)}(t)^n \psi \in \domof{H_0^{m+1}} \) by \cref{thm:invariance-under-V}.
    Hence, we have    \begin{equation}\label{eq:differentiability-powers-H-proof-1}
        H_0^m \frac{H^{(T)}(t+h) - H^{(T)}(t)}{h} H^{(T)}(t)^n \psi
        \to H_0^m H^{\prime(T)}(t) H^{(T)}(t)^n \psi
        ,\quad \text{as}\ h \to 0
    \end{equation}
    by \cref{assump:C-H-diff}, where $H^{\prime(T)}(t) \psi \coloneqq \odv{}{t} H^{(T)}(t) \psi$.
    
    Next, we show
    \begin{equation}
        H_0^m H^{(T)}(t+h) \frac{H^{(T)}(t+h)^{n} - H^{(T)}(t)^{n}}{h}\psi
        \to H_0^m H^{(T)}(t) \odv{}{t}H^{(T)}(t)^{n}\psi
    \end{equation}
    as \(h \to 0\):
    \begin{align}
        \MoveEqLeft \nonumber
        \norm*{
            H_0^m H^{(T)}(t+h) \frac{H^{(T)}(t+h)^{n} - H^{(T)}(t)^{n}}{h}\psi
            - H_0^m H^{(T)}(t) \odv{}{t}H^{(T)}(t)^{n}\psi
        }\leq
        \nonumber\\
        &\leq
        \begin{aligned}[t]
            &\norm*{ H_0^m \of{ H^{(T)}(t+h) - H^{(T)}(t) } \odv{}{t}H^{(T)}(t)^n \psi }
            \\
            &+ \norm*{ H_0^m H^{(T)}(t+h) \of*{\frac{H^{(T)}(t+h)^{n} - H^{(T)}(t)^{n}}{h} - \odv{}{t}H^{(T)}(t)^{n} } \psi}
        \end{aligned}
        \nonumber\\
        &\leq
        \begin{aligned}[t]
            &\norm*{ H_0^m \of{ H^{(T)}(t+h) - H^{(T)}(t) } \odv{}{t}H^{(T)}(t)^n \psi }
            \\
            &+ a_m \norm*{H_0^{m+1} \of*{\frac{H^{(T)}(t+h)^{n} - H^{(T)}(t)^{n}}{h} - \odv{}{t}H^{(T)}(t)^{n} } \psi }
            \\
            &+ b_m \norm*{\of*{\frac{H^{(T)}(t+h)^{n} - H^{(T)}(t)^{n}}{h} - \odv{}{t}H^{(T)}(t)^{n} } \psi }
        \end{aligned}
        \nonumber\\
        \label{eq:differentiability-powers-H-proof-2}
        &\to 0
        ,
    \end{align}
    where we used that, by assumption and \cref{thm:invariance-under-V}, \(\odv{}{t}H^{(T)}(t)^n \psi \in \domof{H_0^{m+1}}\); moreover, we used the relative boundedness constants from \cref{eq:assump:C-H-bound}, which in the current case are uniform in $t\in\R$ and $T>0$ due to periodicity of $t\mapsto H^{(T)}(t)$ as in \cref{assump:C-H-periodic}.
    The first summand converges due to \cref{thm:differentiability-implies-continuity}, while the second and third summand converge due to the assumption of the induction step.
    Combining \cref{eq:differentiability-powers-H-proof-1,eq:differentiability-powers-H-proof-2} we see that \(H_0^m  H^{(T)}(t)^{n+1}\psi\) is differentiable with
    \begin{equation}\label{eq:differentiability-powers-H-proof-3}
        \odv{}{t} H_0^m H^{(T)}(t)^{n+1} \psi
        = H_0^m \of*{H^{\prime(T)}(t) H^{(T)}(t)^n \psi + H^{(T)}(t) \odv{}{t} H^{(T)}(t)^n} \psi
        .
    \end{equation}
    
    We need to verify that the derivative is continuous. Since, by \cref{assump:C-H-diff}, \(t\mapsto H_0^m H^{\prime(T)}(t)\psi\) is continuous and $H_0^m H^{\prime(T)}(t)\psi$ is $H_0^{m+1}$-bounded with uniform relative boundedness constants $a_m'$ and $b_m'$, we obtain for the first summand
    \begin{align}
        \MoveEqLeft \nonumber
        \norm*{
            H_0^m H^{\prime(T)}(t) H^{(T)}(t)^n  \psi
            - H_0^m H^{\prime(T)}(s) H^{(T)}(s)^n  \psi
        }
        \nonumber\\
        & \leq \norm*{ H_0^m \of*{ H^{\prime(T)}(t) - H^{\prime(T)}(s) } H^{(T)}(s)^n \psi}
            + \norm*{ H_0^m H^{\prime(T)}(t) \of*{ H^{(T)}(t)^n -H^{(T)}(s)^n } \psi}
        \nonumber\\
        & \leq
        \begin{aligned}[t]
            &\norm*{ H_0^m \of*{ H^{\prime(T)}(t) - H^{\prime(T)}(s) } H^{(T)}(s)^n \psi}
            + a_m' \norm*{ H_0^{m+1}  \of*{ H^{(T)}(t)^n -H^{(T)}(s)^n } \psi}
            \\&
            + b_m' \norm*{ \of*{ H^{(T)}(t)^n -H^{(T)}(s)^n } \psi}
        \end{aligned}
        \nonumber\\
        \label{eq:differentiability-powers-H-proof-4}
        &\to 0
        ,\quad \text{as}\ t \to s
        .
    \end{align}
    An analogous argument using the induction assumption works for the second summand in the derivative \eqref{eq:differentiability-powers-H-proof-3}.

    Finally, the \(H_0^{m+n+1}\)-boundedness of the derivative \eqref{eq:differentiability-powers-H-proof-3} is proven separately for the two summands and follows also immediately from the induction assumption and \cref{assump:C-H-bound,assump:C-H-diff}.
\end{proof}

We can now prove \cref{assump:pazy:H2-plus,assump:pazy:H3}:
\begin{proposition}\label{thm:C-H2}
    Suppose \cref{assump:C-H-self-adjoint,assump:C-H-bound,assump:C-H-diff,assump:C-H-periodic} hold true.
    Let \(Y = \domof{H_0^m}\) be equipped with the graph norm of \(H_0^m\) in \cref{eq:Y-graphnorm}, and \(A(t) = - \iu H^{(T)}(t)\).
    Then \cref{assump:pazy:H2-plus} of \cref{thm:pazy-regular-solutions} is satisfied with \(Q_t = H^{(T)}(t)^m-\iu\).
\end{proposition}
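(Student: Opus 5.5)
We must show that $Q_t = H^{(T)}(t)^m - \iu$ is a Banach space isomorphism $Y=\domof{H_0^m}\to\HH$, that $t\mapsto Q_t\psi$ is $C^1$ in $\HH$ for $\psi\in Y$, and that $Q_t A(t) Q_t\inv = A(t)+B(t)$ with $B(t)$ bounded and strongly continuous in $t$.

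\textbf{Isomorphism property.} By \cref{thm:C-equal-domains-powers}, $\domof{H^{(T)}(t)^m}=\domof{H_0^m}=Y$. Since $H^{(T)}(t)^m$ is self-adjoint, $Q_t$ maps this domain bijectively onto $\HH$, and $\|Q_t\psi\|^2=\|H^{(T)}(t)^m\psi\|^2+\|\psi\|^2$. Thus $Q_t$ is isometric up to constants between the graph norm of $H^{(T)}(t)^m$ and $\|\cdot\|$. Two closed operators with the same domain are mutually relatively bounded (the closed graph argument of \cite[Lemma~6.2]{teschlMathematicalMethodsQuantum2009}). Hence the graph norms of $H^{(T)}(t)^m$ and $H_0^m$ are equivalent, so $Q_t\colon Y\to\HH$ is bounded with bounded inverse.

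\textbf{Differentiability.} This is exactly \cref{thm:differentiability-powers-H}, applied with that lemma's $m=0$ and $n$ equal to our $m$. It gives that $t\mapsto H^{(T)}(t)^m\psi$ is continuously differentiable for $\psi\in\domof{H_0^m}$.

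\textbf{The conjugation identity.} Since $H^{(T)}(t)$ commutes with functions of itself, $Q_t\inv$ maps $\HH$ onto $Y$. On $\domof{H^{(T)}(t)}$ we then have $Q_t H^{(T)}(t) Q_t\inv \phi = H^{(T)}(t)\phi$ by the functional calculus: the resolvent-type operator $(H^{(T)}(t)^m-\iu)\inv$ commutes with $H^{(T)}(t)$ in the sense $Q_t\inv H^{(T)}(t)\subseteq H^{(T)}(t)Q_t\inv$. Hence $Q_tA(t)Q_t\inv=A(t)$ on $\domof{A(t)}$, and we may take $B(t)=0$, which is trivially bounded and strongly continuous.

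\textbf{Main obstacle.} The only care needed is the domain bookkeeping in the conjugation identity. Concretely, one must check that $\domof{Q_tA(t)Q_t\inv}=\domof{A(t)}$. Since $Q_t\inv\phi\in\domof{H^{(T)}(t)^{m+1}}$ if and only if $\phi\in\domof{H^{(T)}(t)}$, this follows from the spectral theorem for the single self-adjoint operator $H^{(T)}(t)$.

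The uniformity of the isomorphism constants in $t$ is not demanded by \cref{assump:pazy:H2-plus} as stated. If it were needed, it would follow from periodicity (\cref{assump:C-H-periodic}) together with continuity from \cref{thm:differentiability-powers-H}, on the compact period interval.
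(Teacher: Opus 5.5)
Your proof is correct and matches the paper's argument step by step: $Q_t$ is bijective by self-adjointness, the graph norms are equivalent via \cref{thm:C-equal-domains-powers} and the closed-graph relative-boundedness lemma, $C^1$-dependence comes from \cref{thm:differentiability-powers-H}, and $B(t)=0$ because $Q_t$ commutes with $A(t)$. Your explicit spectral-theorem check that $\domof{Q_tA(t)Q_t\inv}=\domof{A(t)}$ is a small detail the paper leaves implicit.
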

\begin{proof}
    Let \(m \in \N_0\) and \(Q_t\colon \domof{H^{(T)}(t)^m} \to \HH\) given by \(Q_t \coloneqq H^{(T)}(t)^m - \iu\) for every \(t \in \R\).
    As \(H^{(T)}(t)\) is self-adjoint, so is \(H^{(T)}(t)^m\), hence its spectrum is real and \(Q_t\) is bijective. 
    By \cref{thm:C-equal-domains-powers}, \(\domof{H_0^m} = \domof{H^{(T)}(t)^m}\); besides, \(H_0^m\) and \(H^{(T)}(t)^m\) are both self-adjoint, therefore they are relatively bounded with respect to each other \cite[Lem.~6.2]{teschlMathematicalMethodsQuantum2009}, that is, their graph norms are equivalent. We equip $Y=\domof{H_0^m}$ with the graph norm in \cref{eq:Y-graphnorm}.
    Then \(Q_t\colon Y\to\HH\) and \(Q_t\inv\colon \HH \to Y\) are clearly bounded.
    
    By construction, \(Q_t\) and \(A(t)\) commute strongly and hence
    \begin{equation}
        Q_t A(t) Q_t\inv
        = A(t)Q_t Q_t\inv
        = A(t)
        .
    \end{equation}

    By \cref{thm:differentiability-powers-H}, \(t\mapsto H^{(T)}(t)^m\psi\) is continuously differentiable for every $\psi\in Y$, and so is \(t\mapsto Q_t\psi =  H^{(T)}(t)^m\psi -\iu \psi\).
\end{proof}

\begin{proposition}\label{thm:C-H3}
    Let \(m \in \N\), \(Y = \domof{H_0^m}\) equipped with the graph norm of \(H_0^m\) in \cref{eq:Y-graphnorm}, and \(A(t) = - \iu H^{(T)}(t)\).
	Suppose \cref{assump:C-H-self-adjoint,assump:C-H-bound,assump:C-H-periodic} hold true, and \(t\mapsto H(t)\psi\) is continuously differentiable for every \(\psi \in Y \subseteq H(t)\).
	Then \cref{assump:pazy:H3} of \cref{thm:pazy-regular-solutions} is satisfied.
\end{proposition}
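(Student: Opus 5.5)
The statement asks for three things about $A(t) = -\iu H^{(T)}(t)$ acting from $Y=\domof{H_0^m}$ (with the graph norm \eqref{eq:Y-graphnorm}) into $\HH$. We need the inclusion $Y\subseteq\domof{A(t)}$, boundedness of $A(t)\in B(Y,\HH)$, and continuity of $t\mapsto A(t)$ in the $B(Y,\HH)$ norm. The first two are immediate. Since $m\geq1$, we have $Y=\domof{H_0^m}\subseteq\domof{H_0}=\domof{H^{(T)}(t)}$ by \cref{assump:C-H-self-adjoint}. \Cref{assump:C-H-bound} with index $0$ gives $\norm{H^{(T)}(t)\psi}\leq a_0\norm{H_0\psi}+b_0\norm{\psi}$. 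We then bound $\norm{H_0\psi}\leq \norm{H_0^m\psi}+\norm{\psi}$, using spectral calculus ($|\lambda|\leq|\lambda|^m+1$). Because $H^{(T)}(t)=H(t/T)$ is $T$-periodic by \cref{assump:C-H-periodic}, the constants $a_0,b_0$ may be taken on the compact interval $[0,1]$ and are then uniform in $t\in\R$. This yields $\norm{A(t)}_{B(Y,\HH)}\leq \max(a_0,a_0+b_0)$.

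For the norm continuity, the plan is to use a mean-value estimate rather than pointwise strong continuity, since the latter would only give strong continuity. For $\psi\in Y$, differentiability of $s\mapsto H^{(T)}(s)\psi$ gives
\begin{equation*}
\of*{H^{(T)}(t)-H^{(T)}(s)}\psi=\int_s^t H^{\prime(T)}(r)\psi\,\dd r ,
\end{equation*}
with $H^{\prime(T)}(r)=\frac1T H'(r/T)$. This is a Bochner integral of a continuous function, so the identity is valid by the fundamental theorem of calculus for $\HH$-valued $C^1$ maps. Next I would invoke the part of \cref{assump:C-H-diff} stating that $H'$ satisfies \cref{assump:C-H-bound}: $\norm{H'(r)\psi}\leq a_0'\norm{H_0\psi}+b_0'\norm{\psi}$. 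By periodicity, $H'$ is also $1$-periodic, so these constants are uniform in $r$. Combining with the inequality $\norm{H_0\psi}\leq\norm{\psi}_Y$ from above gives $\norm{(A(t)-A(s))\psi}\leq \frac{|t-s|}{T}C\norm{\psi}_Y$. Hence $t\mapsto A(t)$ is even Lipschitz in $B(Y,\HH)$.

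The only delicate point is that the statement's hypothesis is phrased as continuous differentiability of $t\mapsto H(t)\psi$, which is weaker than what we want. Continuous differentiability alone yields only strong continuity, and upgrading it to norm continuity needs a uniform relative bound on the derivative. If only the stated hypothesis may be used, I would instead try a uniform boundedness argument. For each $\psi\in Y$ the difference quotients $(H(t)-H(s))\psi/(t-s)$ are bounded on compacts, so uniform boundedness on the Banach space $Y$ gives a uniform operator bound on them. The mean-value argument then applies verbatim. I expect this to be the main obstacle. The simplest route is the relative bound for $H'$ from \cref{assump:C-H-diff}, which is available in the intended application; the uniform boundedness argument serves as a fallback.
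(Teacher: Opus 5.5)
Your proposal is correct. Your primary argument follows a different route from the paper, but it uses a hypothesis the proposition does not make. Your fallback is essentially the paper's proof.

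The mean-value estimate with the relative bound on $H'$ relies on \cref{assump:C-H-diff}. The statement assumes only \cref{assump:C-H-self-adjoint,assump:C-H-bound,assump:C-H-periodic} plus continuous differentiability of $t\mapsto H(t)\psi$ for $\psi\in Y$. Under that extra hypothesis your estimate is valid. It gives an explicit Lipschitz constant of order $(a_0'+b_0')/T$ without any functional-analytic machinery. It is also harmless for how the result is used later, since \cref{thm:C-existence-invariance} does assume \cref{assump:C-H-diff}.

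For the proposition as written, the uniform boundedness argument is what actually proves it, and it is not an obstacle. It needs only that $Y$ is a Banach space, which holds because $H_0^m$ is closed.

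The paper proceeds in two steps. First, it uses Banach--Steinhaus to show that the strong derivative $A'(t)$, as a strong limit of difference quotients in $B(Y,\HH)$, lies in $B(Y,\HH)$. Second, it uses continuity and periodicity of $t\mapsto A'(t)\psi$ and a second application of the principle to get $M=\sup_t\norm{A'(t)}_{B(Y,\HH)}<\infty$. It then concludes with $\norm{(A(t)-A(s))\psi}\le M\abs{t-s}\norm{\psi}_Y$.

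You apply the principle once, directly to the difference quotients $(A(t)-A(s))/(t-s)$ for $s\neq t$ in a compact interval. These are pointwise bounded by the vector-valued mean value inequality. This gives local Lipschitz continuity in $B(Y,\HH)$ immediately, which is all \cref{assump:pazy:H3} requires. Your version is slightly more economical. The paper's appeal to periodicity only buys a global rather than a local constant, which is not needed here.
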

\begin{proof}
    Let \(t \in \R\) and \(\psi \in Y\). By \cref{assump:C-H-self-adjoint}, \(Y = \domof{H_0^m} \subseteq \domof{H_0} = \domof{A(t)}\). Hence, it follows from \cref{assump:C-H-bound} that \(A(t):Y\to X\) is bounded.
    
    We show that \(t\mapsto A(t)\psi\) being continuously differentiable implies that \(t\mapsto A(t)\) is continuous in the \(B(Y,X)\) norm.
    As \(t\mapsto A(t)\) is continuously strongly differentiable, its strong derivative \(A'(t)\) is bounded, as follows from a corollary of the uniform boundedness principle \cite[Cor.~2 in Ch.~II.1,~p.68]{yosidaFunctionalAnalysis1995}. Moreover, since the map \(t\mapsto A'(t)\psi\) is continuous and periodic by \cref{assump:C-H-periodic}, we have
    \begin{equation}
        \sup_{t \in \R} \norm*{A'(t)\psi}_X < \infty,
    \end{equation}
    for every \(\psi \in Y\). Combining these two facts, another application of the uniform boundedness principle yields
    \begin{equation}
        M\coloneqq \sup_{t \in \R} \norm*{A'(t)}_{B(Y,X)}
        <\infty.
    \end{equation}
    In conclusion,
    \begin{equation}
        \norm*{\of*{A(t)-A(s)}\psi}_X
        \leq \int_s^t \norm*{A'(\tau)\psi}_X \dd \tau
        \leq \int_s^t \norm*{A'(\tau)}_{B(Y,X)} \norm{\psi}_Y \dd \tau
        \leq \abs{t-s} M \norm{\psi}_Y
    \end{equation}
    implying that \(A(t)\) is (Lipschitz) continuous in the \(B(Y,X)\) norm.
\end{proof}
To summarise, \cref{thm:self-adjoint-implies-stable-C-H1,thm:C-H2,thm:C-H3} show that \cref{assump:C-H-self-adjoint,assump:C-H-bound,assump:C-H-diff,assump:C-H-periodic} imply that all assertions of \cref{thm:pazy-regular-solutions} are satisfied by setting \(A(t) = - \iu H^{(T)}(t)\) and \(Y = \domof{H_0^m}\) for arbitrary \(m \in \N\). This will allow us to use \cref{thm:pazy-regular-solutions} to prove \cref{assump:A-U-cinf}.

\begin{proposition}\label{thm:C-existence-invariance}
    Suppose \cref{assump:C-H-self-adjoint,assump:C-H-bound,assump:C-H-diff,assump:C-H-periodic} hold true.
    Then the statement of \cref{thm:pazy-regular-solutions,rem:pazy-regular-solutions} holds true with \(A(t) = -\iu H^{(T)}(t)\) and \(Y = \domof{H_0^m}\) for every \(m \in \N\).
    In particular, there exists a unique family of unitary operators \((U^{(T)}(t,s))_{s,t\in\R}\) such that \(\psi(t) = U^{(T)}(t,s)\psi_0 \in \domof{H_0}\) is the unique \(\domof{H_0}\)-valued solution of \cref{eq:non-autonomous-schrödinger-equaiton},
    and
    \begin{equation}\label{eq:C-invariance-dom-H0-m}
        U^{(T)}(t,s)\domof{H_0^m} \subseteq \domof{H_0^m}
        ,\quad \forall m \in \N,\, \forall s,t\in\R
        .
    \end{equation}
    Consequently, \cref{eq:non-autonomous-schrödinger-equaiton} holds for \(\psi_0\in \cinftyofh\) and \cref{assump:A-U-cinf} is satisfied by \(U^{(T)}(t)\coloneqq U^{(T)}(t,0)\).
\end{proposition}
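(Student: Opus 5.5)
The plan is to apply \cref{thm:pazy-regular-solutions} separately for each $m\in\N$, with $A(t)=-\iu H^{(T)}(t)$ and $Y=\domof{H_0^m}$ carrying the graph norm \eqref{eq:Y-graphnorm}. \Cref{thm:self-adjoint-implies-stable-C-H1,thm:C-H2,thm:C-H3} supply hypotheses \ref{assump:pazy:H1}, \ref{assump:pazy:H2-plus} and \ref{assump:pazy:H3}.

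For \ref{assump:pazy:H3}, I need $t\mapsto H^{(T)}(t)\psi$ to be continuously differentiable for $\psi\in Y$. This follows from \cref{assump:C-H-diff} with $m=0$, since $Y\subseteq\domof{H_0}$ and $H^{(T)}(t)=H(t/T)$ differs from $H(t)$ only by a smooth reparametrisation. Using \cref{rem:pazy-regular-solutions}(1), I extend from compact intervals to $\R$, obtaining a forward evolution system $U^{(T)}_m(t,s)$, $s\le t$, for each $m$. I then apply the theorem again to the family $-H^{(T)}(-t)$, which satisfies \assumpC{} as well, to get backward propagators. Self-adjointness of each $H^{(T)}(t)$ shows that $\norm{\psi(t)}$ is constant along $\domof{H_0}$-valued solutions, because $\odv{}{t}\norm{\psi(t)}^2=2\,\mathrm{Re}\innerp{\psi(t)}{-\iu H^{(T)}(t)\psi(t)}=0$. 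Hence each $U^{(T)}(t,s)$ is isometric on the dense set $\domof{H_0}$, and therefore isometric on $\HH$. Combining this with the backward propagator, as in \cref{rem:pazy-regular-solutions}(3), makes it unitary, and we get a two-parameter unitary propagator on $\R\times\R$.

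The main point to check is \emph{consistency across $m$}. For every $m\ge1$, the propagator obtained with $Y=\domof{H_0^m}$ must coincide with the one obtained for $m=1$. I would argue this from uniqueness. If $\psi_0\in\domof{H_0^m}$, then $U_m(t,s)\psi_0$ is continuous in $\domof{H_0^m}$, hence also continuous in $\domof{H_0}$, and it is $C^1$ in $\HH$. So it is a $\domof{H_0}$-valued solution of \eqref{eq:non-autonomous-schrödinger-equaiton}, and by uniqueness for $m=1$ it equals $U_1(t,s)\psi_0$. Because $\domof{H_0^m}$ is dense and both operators are bounded, $U_m=U_1=:U^{(T)}$. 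Property \ref{thm:pazy:E4} then gives $U^{(T)}(t,s)\domof{H_0^m}\subseteq\domof{H_0^m}$ for $s\le t$. For $s>t$ I use $U^{(T)}(t,s)=U^{(T)}(s,t)^*$, which is the backward propagator from the second application, and that propagator also satisfies \ref{thm:pazy:E4}. This proves \eqref{eq:C-invariance-dom-H0-m}.

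Finally, intersecting over $m$ gives $U^{(T)}(t)\cinftyofh\subseteq\cinftyofh$, which is \cref{assump:A-U-cinf}. The initial value problem \eqref{eq:non-autonomous-schrödinger-equaiton} holds for $\psi_0\in\cinftyofh\subseteq\domof{H_0}$, since this is the case $m=1$. I expect the delicate step to be the backward-time and unitarity argument: one must confirm that the reversed family still meets \assumpC{}, in particular \cref{assump:C-H-diff} with $H'$ replaced by $-H'$, and that the forward and backward propagators glue together consistently.
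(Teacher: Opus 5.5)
Your proposal is correct and follows the paper's proof. For each $Y=\domof{H_0^m}$ you obtain \ref{assump:pazy:H1}--\ref{assump:pazy:H3} from the three preceding propositions, identify the resulting propagators across $m$ through the uniqueness clause of \cref{thm:pazy-regular-solutions} (continuity in $\domof{H_0^m}$ implies continuity in $\domof{H_0}$), and intersect over $m$ to get \cref{assump:A-U-cinf}. The only difference is that you spell out what the paper leaves to \cref{rem:pazy-regular-solutions}(3): norm conservation for unitarity, the time-reversed family $-H^{(T)}(-t)$ (which is the right one for backward evolution), and invariance of $\domof{H_0^m}$ for $s>t$ via \ref{thm:pazy:E4} of the backward propagator.
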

\begin{proof}
    Assume \cref{assump:C-H-self-adjoint,assump:C-H-bound,assump:C-H-diff,assump:C-H-periodic} hold true, and let \(m \in \N\).
	By \cref{thm:self-adjoint-implies-stable-C-H1,thm:C-H2,thm:C-H3}, the statement of \cref{thm:pazy-regular-solutions,rem:pazy-regular-solutions} holds true with \(A(t) = -\iu H^{(T)}(t)\) and \(Y = \domof{H_0^m}\) for every \(m \in \N\): for every \(m\) there exists a unique family \((U^{(T)}_m(t,s))_{s,t\in\R}\) of unitary operators on $\HH$ generating the unique solution of the initial value problem \cref{eq:non-autonomous-schrödinger-equaiton} on \(\domof{H_0^m}\) and satisfying \cref{thm:pazy:E1,thm:pazy:E2,thm:pazy:E3,thm:pazy:E4,thm:pazy:E5}. 
    
    Let \(m' \geq m\).
    If $\psi_0 \in \domof{H_0^{m'}}$ then $\psi_0\in\domof{H_0^m}$ and $t\mapsto \psi_m(t) \coloneqq  U^{(T)}_m(t,s)\psi_0$ is the unique $\domof{H_0^{m}}$-valued solution of the initial value problem \eqref{eq:non-autonomous-schrödinger-equaiton} in \cref{thm:pazy-regular-solutions} with $\psi_m(s)=\psi_0$, and $t\mapsto \psi_{m'}(t) \coloneqq  U^{(T)}_{m'}(t,s)\psi_0$ is the unique $\domof{H_0^{m'}}$-valued solution of the same initial value problem.
    Since \(\domof{H_0^{m'}} \subseteq \domof{H_0^m}\), any \(\domof{H_0^{m'}}\)-valued solution is also \(\domof{H_0^m}\)-valued, and continuity with respect to the stronger norm on \(\domof{H_0^{m'}}\) implies continuity in \(\domof{H_0^m}\).
    Hence $\psi_{m'}$ is also \(\domof{H_0^m}\)-valued and therefore coincides with $\psi_m$, and both of them are actually $\domof{H_0^{m'}}$-valued.
    Thus $U^{(T)}_m(t,s)$ preserves $\domof{H_0^{m'}}$ and coincides with $U^{(T)}_{m'}(t,s)$ on this subspace whenever $m\leq m'$, and we have a unique unitary family of operators \((U^{(T)}(t,s))_{s,t\in\R}\) on $\HH$, where \(U^{(T)}(t,s)\coloneqq U^{(T)}_1(t,s)\), generating the unique solution of the initial value problem \eqref{eq:non-autonomous-schrödinger-equaiton} on every \(\domof{H_0^m}\), \(m\in\N\), and satisfying \cref{thm:pazy:E1,thm:pazy:E2,thm:pazy:E3,thm:pazy:E4,thm:pazy:E5}. 
    
    Consequently,
    \begin{equation}
        U^{(T)}(t,s)\cinftyofh
        = \bigcap_{m \in \N} U^{(T)}(t,s)\domof{H_0^m}
        \subseteq \bigcap_{m \in \N} \domof{H_0^m}
        = \cinftyofh, \quad \forall s,t\in\R.
    \end{equation}
    and choosing $U^{(T)}(t)\coloneqq U^{(T)}(t,0)$ concludes the proof of \cref{assump:A-U-cinf}.
\end{proof}

\subsection{Proof of \texorpdfstring{\cref{assump:A-U-diff}}{Property A.4}}

We now turn to the verification of \cref{assump:A-U-diff} under \assumpC{}. The proof of the final result will again rely on results from~\cite{pazySemigroupsLinearOperators1983}. We begin with a technical lemma concerning the operators \(Q_t\) introduced in \cref{assump:pazy:H2-plus}. Its proof follows closely the arguments in~\cite[Ch.~5.4]{pazySemigroupsLinearOperators1983}, although we include some additional details for completeness.

\begin{lemma}\label{thm:pazy-Qt-properties}
    Let \(X\) and \(Y\) be Banach spaces, and \(\of*{Q_t}_{t\in\R}\) a family of isomorphisms \(Q_t \colon Y \to X\).
    Suppose \(t\mapsto Q_t \psi\) is continuously differentiable in \(X\) for every \(\psi \in Y\), with derivative \(Q'_t \psi\).
    Then \(t\mapsto Q_t\) is locally Lipschitz continuous in \(B(Y,X)\), \(t \mapsto Q_t\inv\) is continuous in \(B(X,Y)\), and \(t \mapsto Q_t\inv \psi\) is differentiable in \(Y\) for every \(\psi \in X\) with
	\begin{equation}\label{eq:thm:pazy-Qt-properties-Qinv-derivative}
	    \odv{}{t} Q_t\inv \psi
		=- Q_t\inv Q'_t Q_t\inv \psi.
	\end{equation}
\end{lemma}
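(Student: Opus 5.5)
The argument has three parts: local uniform bounds via the uniform boundedness principle, then continuity of the inverses via a Neumann-series estimate, and finally the product-of-difference-quotients identity for the derivative.

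\emph{Lipschitz continuity of \(t\mapsto Q_t\).} Fix a compact interval \(I\). For every \(\psi\in Y\), the map \(t\mapsto Q'_t\psi\) is continuous in \(X\), so \(\sup_{t\in I}\norm{Q'_t\psi}_X<\infty\). Each \(Q'_t\) is a strong limit of the bounded difference quotients \(h^{-1}(Q_{t+h}-Q_t)\in B(Y,X)\). By the uniform boundedness principle it is therefore bounded, and a second application gives \(M_I\coloneqq\sup_{t\in I}\norm{Q'_t}_{B(Y,X)}<\infty\). The fundamental theorem of calculus, applied to the continuously differentiable map \(t\mapsto Q_t\psi\), then yields
\[
\norm{(Q_t-Q_s)\psi}_X\leq \abs{t-s}\,M_I\norm{\psi}_Y .
\]
This is local Lipschitz continuity in \(B(Y,X)\). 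It is exactly the argument already used in \cref{thm:C-H3}.

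\emph{Continuity of the inverses.} I would first show that \(t\mapsto Q_t\inv\) is bounded on \(I\), then derive continuity from the resolvent-type identity
\[
Q_t\inv-Q_s\inv=Q_t\inv(Q_s-Q_t)Q_s\inv .
\]
Fix \(s\) and write \(Q_t=Q_s\bigl(\one_Y+Q_s\inv(Q_t-Q_s)\bigr)\). For \(\abs{t-s}\) small we have \(\norm{Q_s\inv(Q_t-Q_s)}_{B(Y)}\leq \norm{Q_s\inv}\,M_I\abs{t-s}<1/2\). A Neumann series then gives \(\norm{Q_t\inv}_{B(X,Y)}\leq 2\norm{Q_s\inv}\) in a neighbourhood of \(s\), and compactness makes this bound uniform on \(I\). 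Inserting it into the identity gives
\[
\norm{Q_t\inv-Q_s\inv}_{B(X,Y)}\leq C_I^2M_I\abs{t-s}.
\]

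\emph{Derivative of the inverse.} For \(\psi\in X\) and \(\phi\coloneqq Q_t\inv\psi\in Y\),
\[
\tfrac1h\bigl(Q_{t+h}\inv-Q_t\inv\bigr)\psi=-Q_{t+h}\inv\,\tfrac1h\bigl(Q_{t+h}-Q_t\bigr)\phi .
\]
The difference quotient converges to \(Q'_t\phi\) in \(X\). The operators \(Q_{t+h}\inv\) converge to \(Q_t\inv\) in \(B(X,Y)\) norm and are uniformly bounded. The product therefore converges in \(Y\) to \(-Q_t\inv Q'_tQ_t\inv\psi\), which is \cref{eq:thm:pazy-Qt-properties-Qinv-derivative}. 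The only point needing care is the uniform bound on \(\norm{Q_t\inv}\); I expect this to be the main obstacle, and the local Neumann-series perturbation argument above is what handles it.
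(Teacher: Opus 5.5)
Your proof is correct and follows the paper's structure. The uniform-boundedness argument for the local Lipschitz continuity of $t\mapsto Q_t$ in $B(Y,X)$ is the one the paper imports from \cref{thm:C-H3}, and the identity $h^{-1}(Q_{t+h}^{-1}-Q_t^{-1})=-Q_{t+h}^{-1}\,h^{-1}(Q_{t+h}-Q_t)\,Q_t^{-1}$ for the derivative is the paper's. The only difference is the middle step: the paper cites continuity of inversion (Rudin's Banach-algebra result, asserted to carry over to $B(Y,X)$), whereas your factorization $Q_t=Q_s\bigl(\one_Y+Q_s^{-1}(Q_t-Q_s)\bigr)$ with a Neumann series carries out that transfer explicitly and even gives local Lipschitz continuity of $t\mapsto Q_t^{-1}$ in $B(X,Y)$.
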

\begin{proof}
The (local Lipschitz) continuity of $t\mapsto Q_t$ in the $B(Y,X)$-topology can be shown exactly as in the proof of \cref{thm:C-H3} though we have to add the word ``local'' here because we are not assuming periodicity or uniform continuity of $t\mapsto Q_t$.
    
To prove continuity of \(t \mapsto Q_t^{-1}\), recall that the inversion map
\begin{equation}
\operatorname{inv} \colon B(Y,X)^\times \to B(X,Y)^\times, \quad A \mapsto A^{-1},
\end{equation}
where $B(Y,X)^\times\subseteq B(Y,X)$ stands for the subset of invertible elements, is continuous (see, e.g., \cite[Cor.~18.4.1]{rudin1974real}). Although the result is stated there for the Banach algebra \(B(X,X)\), the same proof works for \(B(Y,X)\). Since \(t \mapsto Q_t\) is continuous in \(B(Y,X)^\times\), it follows that \(t \mapsto Q_t^{-1}\) is continuous in \(B(X,Y)^\times\).

    Finally, to see differentiability of \(t\mapsto Q_t\inv \phi\) for every \(\phi \in X\), we notice that
    \begin{equation}
        \frac{Q\inv_{t+h} - Q\inv_{t} }{h} \phi
        = -Q\inv_{t+h} \frac{ Q_{t+h} - Q_t}{h} Q\inv_{t} \phi
        \to -Q\inv_{t}Q'_{t}Q\inv_{t} \phi 
        ,
    \end{equation}
    because \(t \mapsto Q_t^{-1}\) was shown to be continuous in \(B(X,Y)\) and \(t\mapsto Q_t \psi\) is differentiable in \(X\) by assumption.
\end{proof}

With this lemma at hand, we can strengthen the differentiability statement for \(U(t,s)\) given in \cref{thm:pazy-regular-solutions}.

\begin{proposition}\label{thm:pazy-U-differentiable-in-Y}
    Under the assumptions and with the notation of \cref{thm:pazy-regular-solutions}, the map \(t \mapsto U(t,s)\psi\) is not only differentiable in \(X\), but also in \(Y \subseteq X\) for every \(\psi \in Y\). Moreover, the derivatives taken in \(X\) and in \(Y\) coincide.
\end{proposition}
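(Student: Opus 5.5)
The approach is to conjugate by the isomorphisms $Q_t$ and use \eqref{eq:pazy-U-equals-W}, i.e.\ $U(t,s)=Q_t\inv W(t,s)Q_s$. Fix $\psi\in Y$ and set $\phi\coloneqq Q_s\psi\in X$. Then $U(t,s)\psi=Q_t\inv W(t,s)\phi$, and differentiability of $t\mapsto U(t,s)\psi$ in $Y$ reduces to two facts: differentiability of $t\mapsto W(t,s)\phi$ in $X$, and a product rule for $Q_t\inv$ applied to it.

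\textbf{Step 1: differentiability of $W$.} I will use the integral equation \eqref{eq:pazy-integral-equation-W}. The operators $W(t,\cdot)$ sit inside the integral, so a direct derivative in $t$ is not available. Instead, I will use the standard representation from \cite[Ch.~5.4]{pazySemigroupsLinearOperators1983}. There $W(t,s)$ is the evolution system for the perturbed generator $A(t)+B(t)+Q'_tQ_t\inv$, which is a bounded perturbation of $A(t)$ with strongly continuous dependence on $t$. Equivalently, $W(t,s)\phi=Q_tU(t,s)Q_s\inv\phi$, and $W$ solves a Dyson-type equation of the form $W(t,s)\phi=U(t,s)\phi+\int_s^t U(t,\tau)C(\tau)W(\tau,s)\phi\,\dd\tau$ with $C(\tau)$ bounded and strongly continuous. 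For $\phi\in Y$, this representation gives differentiability in $X$ with derivative $(A(t)+C(t))W(t,s)\phi$. For general $\phi\in X$, however, only continuity holds.

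\textbf{Step 2: the main obstacle.} The key issue is that $\phi=Q_s\psi$ lies only in $X$, not in $Y$. I therefore expect the cleaner route to avoid $W$'s derivative entirely. Write $U(t,s)\psi=Q_t\inv\,\chi(t)$ with $\chi(t)\coloneqq Q_tU(t,s)\psi=W(t,s)Q_s\psi$. Using \ref{thm:pazy:E4} and \ref{thm:pazy:E5}, $\chi$ is continuous in $X$.

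For differentiability, I will compute the difference quotient of $U(t,s)\psi$ in $Y$ as
\begin{equation*}
Q_{t+h}\inv\frac{\chi(t+h)-\chi(t)}{h}+\frac{Q_{t+h}\inv-Q_t\inv}{h}\chi(t).
\end{equation*}
The second term converges in $Y$ by \cref{thm:pazy-Qt-properties}, namely \eqref{eq:thm:pazy-Qt-properties-Qinv-derivative}. For the first term, $Q_{t+h}\inv\to Q_t\inv$ in $B(X,Y)$, again by \cref{thm:pazy-Qt-properties}, so it suffices that $\chi$ be differentiable in $X$. To get this, I plan to show $\chi(t)\in Y$ is not needed and instead use \ref{assump:pazy:H2-plus}: $Q_tA(t)Q_t\inv=A(t)+B(t)$. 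This suggests that, formally, $\chi'(t)=Q'_tU\psi+Q_tA(t)U\psi=Q'_tQ_t\inv\chi+(A(t)+B(t))\chi$. Making this rigorous requires $\chi(t)\in D(A(t))$, which forces $U(t,s)\psi\in Q_t\inv D(A(t))$. I will therefore first prove the claim for $\psi$ in the dense subspace $\{\psi: Q_s\psi\in Y\}$, where Step 1 applies. I will then pass to general $\psi\in Y$ using uniform-in-$t$ convergence of derivatives on compact intervals, which follows from the boundedness of $Q_t\inv$, $B(t)$ and $Q'_tQ_t\inv$ and from Gronwall-type bounds on $W$. This limiting argument is where I expect the real work to lie.

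\textbf{Step 3: the derivatives coincide.} Once differentiability in $Y$ is established, the two derivatives agree. Indeed, by \eqref{eq:pazy-Y-dominates-X} the embedding $Y\hookrightarrow X$ is continuous, so the $Y$-limit of the difference quotients is also their $X$-limit. That $X$-limit is the $X$-derivative $A(t)U(t,s)\psi$ given by \cref{thm:pazy-regular-solutions}.
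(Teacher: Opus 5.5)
There is a genuine gap, and it is the step you defer as ``the real work''. Passing from $\{\psi\in Y: Q_s\psi\in Y\}$ to all of $Y$ by uniform convergence of derivatives cannot be done. For $\psi_n\to\psi$ in $Y$, the $Y$-derivatives of $t\mapsto U(t,s)\psi_n$ are $A(t)U(t,s)\psi_n$. In your notation, $\chi_n'=(A(t)+B(t)+Q_t'Q_t\inv)\chi_n$ contains the term $A(t)\chi_n(t)$. Your bounds (on $Q_t\inv$, $B(t)$, $Q_t'Q_t\inv$, and Gronwall estimates for $W$) control $\chi_n-\chi$ only in $X$, where $A(t)$ is unbounded. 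So nothing makes $A(t)U(t,s)\psi_n$ Cauchy in $Y$.

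No argument can close this, because the statement is false for general $\psi\in Y$. Take $A(t)\equiv-\iu H$ with $H$ self-adjoint and unbounded, $Y=\domof{H}$ with its graph norm, $Q_t\equiv H-\iu$ and $B\equiv 0$. Then \ref{assump:pazy:H1}--\ref{assump:pazy:H3} all hold and $U(t,s)=\e^{-\iu(t-s)H}$. A $Y$-derivative would have to equal the $X$-derivative $-\iu H\e^{-\iu(t-s)H}\psi$, which lies in $Y$ only if $\psi\in\domof{H^2}$.

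Your Step~1 diagnosis is right, and it pinpoints where the paper's own proof breaks. The paper claims, via \eqref{eq:pazy-integral-equation-W}, that $t\mapsto W(t,s)$ is strongly differentiable on all of $X$, and then applies $W'(t,s)$ to $Q_s\psi\in X$. In the example above $W(t,s)=U(t,s)$, so this claim fails. Your own Step~1 claim for $\phi\in Y$ is not automatic either: differentiating the Dyson integral needs $\tau\mapsto C(\tau)W(\tau,s)\phi$ to be regular in $Y$, while $C(\tau)$ is only bounded on $X$.

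What is true is the statement for more regular $\psi$, and this is all that the later verification of \cref{assump:A-U-diff} needs, since there $\psi\in\cinftyofh$. Take $\psi\in\domof{H_0^{m+1}}$. Applying \ref{thm:pazy:E5} with $Y=\domof{H_0^{m+1}}$ makes $\tau\mapsto U^{(T)}(\tau)\psi$ continuous in the $H_0^{m+1}$-graph norm. By \cref{assump:C-H-bound,assump:C-H-diff}, $\tau\mapsto H_0^mH^{(T)}(\tau)U^{(T)}(\tau)\psi$ is then continuous. Now write $U^{(T)}(t)\psi-U^{(T)}(t_0)\psi=-\iu\int_{t_0}^tH^{(T)}(\tau)U^{(T)}(\tau)\psi\,\dd\tau$ and move the closed operator $H_0^m$ inside via \cref{thm:Hille}. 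This gives differentiability of $t\mapsto H_0^mU^{(T)}(t)\psi$ with the correct derivative.

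So the proposition should be restricted to such $\psi$ rather than extended by density. Your Step~3 is fine as it stands.
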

\begin{proof}
	We use the representation of \(U(t,s)\) via \(W(t,s)\) as per \cref{eq:pazy-U-equals-W} of \cref{thm:pazy-regular-solutions}.
	Then, as \(t \mapsto U(t,s)\) is strongly continuously differentiable in \(X\), so is \(t \mapsto W(t,s)\) due to \cref{eq:pazy-integral-equation-W}.
	We denote its strong derivative by \(W'(t,s)\).
	Let \(\psi \in Y\) be arbitrary and \(\phi = W(t,s) Q_s\psi \in X\).
	By \cref{thm:pazy-Qt-properties}, \(t \mapsto Q\inv_t\phi\) and \(t \mapsto Q_t\psi\) are differentiable and \(t\mapsto Q\inv_t\) is continuous in \(B(X,Y)\).
	We denote their strong derivatives by \(\of{Q\inv_t}'\) and \(Q'_t\), respectively.
	Then
	\begin{align}
	    \MoveEqLeft\nonumber
	    \norm*{ \frac{ U(t+h,s)\psi - U(t,s)\psi}{h} - \of{Q_t\inv}' W(t,s) Q_s\psi - Q_t\inv W'(t,s) Q_s \psi }_{Y}
		\\
		&= \norm*{ \frac{ U(t+h,s)\psi - U(t,s)\psi}{h} - \of{Q_t\inv}' W(t,s) Q_s\psi - Q_t\inv W'(t,s) Q_s \psi }_{Y}
		\nonumber\\
		&\leq
		\begin{aligned}[t]
		    &\norm*{ \of*{\frac{Q_{t+h}\inv - Q_t\inv}{h} - \of*{Q_t\inv}'}W(t,s)Q_s\psi }_{Y}
		\\
			&+ \norm*{ \of*{ Q_{t+h}\inv \frac{W(t+h,s)-W(t,s)}{h} - Q_t\inv W'(t,s)} Q_s \psi }_{Y}
		\end{aligned}
		\nonumber\\
		&\leq
		\begin{aligned}[t]
            &\norm*{ \of*{\frac{Q_{t+h}\inv - Q_t\inv}{h} - \of*{Q_t\inv}'}W(t,s)Q_s\psi }_{Y}
            + \norm*{ \of*{ Q_{t+h}\inv - Q_t\inv} W'(t,s) Q_s \psi}_{Y}
        \\&
            + \norm*{ Q_{t+h}\inv \of*{ \frac{W(t+h,s)-W(t,s)}{h} - W'(t,s)} Q_s \psi }_{Y}
		\end{aligned}
        \nonumber\\
        &\leq
        \begin{aligned}[t]
            &\norm*{ \of*{\frac{Q_{t+h}\inv - Q_t\inv}{h} - \of*{Q_t\inv}'}\phi }_{Y}
            + \norm*{ Q_{t+h}\inv - Q_t\inv}_{B(X,Y)} \norm*{W'(t,s) Q_s \psi}_{X}
        \\&
            + \norm*{Q_{t+h}\inv}_{B(X,Y)}\norm*{\of*{ \frac{W(t+h,s)-W(t,s)}{h} - W'(t,s)} Q_s \psi }_{X}
        \end{aligned}
        \nonumber\\
        \label{eq:thm:pazy-U-differentiable-in-Y-proof-1}
        &\to 0
        ,\quad \text{as}\ h \to 0
        .
    \end{align}
    Therefore, \(t \mapsto U(t,s)\psi\) is differentiable in \(Y\).

    It remains to show that
    \begin{equation}\label{eq:thm:pazy-U-differentiable-in-Y-proof-2}
        \of{Q_t\inv}' W(t,s) Q_s + Q_t\inv W'(t,s) Q_s
        = A(t) U(t,s)
        .
    \end{equation}
    To this end, observe that convergence in \cref{eq:thm:pazy-U-differentiable-in-Y-proof-1} implies convergence in \(X\),
    \begin{equation}
        \norm*{ \frac{ U(t+h,s)\psi - U(t,s)\psi}{h} - \Big(\of{Q_t\inv}' W(t,s) Q_s - Q_t\inv W'(t,s) Q_s\Big) \psi }_{X}
        \to 0,\quad \forall \psi\in Y
        ,
    \end{equation}
    due to \cref{eq:pazy-Y-dominates-X}. 
    Since the derivative of $U(t,s)$ in $X$ equals $A(t)U(t,s)$ by \cref{thm:pazy:E2}, \cref{eq:thm:pazy-U-differentiable-in-Y-proof-2} follows.
\end{proof}
We are now ready to establish the main result of this section:
\begin{proposition}
    Under \cref{assump:C-H-self-adjoint,assump:C-H-bound,assump:C-H-diff,assump:C-H-periodic}, \cref{assump:A-U-diff} holds.
\end{proposition}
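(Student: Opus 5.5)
Fix \(T>0\), \(m\in\N_0\) and \(\psi\in\cinftyofh\). The plan is to apply \cref{thm:pazy-U-differentiable-in-Y} with the space \(Y=\domof{H_0^{m+1}}\), equipped with the graph norm \eqref{eq:Y-graphnorm} of \(H_0^{m+1}\), and with \(A(t)=-\iu H^{(T)}(t)\). By \cref{thm:C-existence-invariance}, the hypotheses of \cref{thm:pazy-regular-solutions} hold for this \(Y\). That result gives a propagator that agrees with \(U^{(T)}(t,s)\) on \(Y\), since the propagators built for different \(m\) coincide. Because \(\psi\in\cinftyofh\subseteq Y\), \cref{thm:pazy-U-differentiable-in-Y} shows that \(t\mapsto U^{(T)}(t)\psi\) is differentiable in the \(Y\)-norm. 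Its \(Y\)-derivative equals its \(\HH\)-derivative, namely \(-\iu H^{(T)}(t)U^{(T)}(t)\psi\). This identification needs \cref{thm:pazy:E2} to hold at every time \(t\) and not only at \(t=s\). To get that, I will use the evolution property \(U^{(T)}(t+h,0)=U^{(T)}(t+h,t)U^{(T)}(t,0)\) together with \(U^{(T)}(t,0)\psi\in Y\) from \cref{thm:pazy:E4}, so that \cref{thm:pazy:E2} can be applied at starting time \(t\). For negative \(t\), I will run the same argument on the backward propagator described in \cref{rem:pazy-regular-solutions}.

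Next I transfer this to \(H_0^m U^{(T)}(t)\psi\). The operator \(H_0^m\) is bounded from \(Y=\domof{H_0^{m+1}}\), with its graph norm, into \(\HH\): spectral calculus gives \(\norm{H_0^m\phi}\leq\norm{H_0^{m+1}\phi}+\norm{\phi}\), and here \(H_0\) is positive. A bounded linear map sends a \(Y\)-differentiable curve to an \(\HH\)-differentiable curve, and the derivative passes through the map. Hence
\begin{equation*}
\odv{}{t}H_0^mU^{(T)}(t)\psi = H_0^m\of*{-\iu H^{(T)}(t)U^{(T)}(t)\psi},
\end{equation*}
which is exactly \cref{assump:A-U-diff}. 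The right-hand side is well defined: \(U^{(T)}(t)\psi\) lies in \(\cinftyofh\) by \cref{thm:C-existence-invariance}, and \(H^{(T)}(t)\) leaves \(\cinftyofh\) invariant by \cref{thm:invariance-under-V-2}.

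I expect the main obstacle to be the identification of the \(Y\)-derivative with the vector \(-\iu H^{(T)}(t)U^{(T)}(t)\psi\) at arbitrary \(t\). \Cref{thm:pazy-U-differentiable-in-Y} states only that the two derivatives coincide; to name the \(\HH\)-derivative I still need \(\odv{}{t}U^{(T)}(t,s)\psi=A(t)U^{(T)}(t,s)\psi\) for all \(t\). Pazy's theorem provides this through the uniqueness statement, which says \(U^{(T)}(t,s)\psi_0\) solves \eqref{eq:non-autonomous-schrödinger-equaiton}. If one-sided derivatives cause trouble at \(t=s\), I will apply the same argument to the restarted propagator \(U^{(T)}(\cdot,r)\) with \(r<t\), using \(U^{(T)}(t,0)=U^{(T)}(t,r)U^{(T)}(r,0)\). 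Two-sided differentiability then follows from continuity of the derivative in \(Y\), which comes from \cref{thm:pazy:E5} combined with \cref{assump:pazy:H3} applied in the space \(\domof{H_0^{m+2}}\).
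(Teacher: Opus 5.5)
Your proposal is correct and follows the paper's proof: apply \cref{thm:pazy-U-differentiable-in-Y} on a power domain of \(H_0\) with its graph norm, then pass \(H_0^m\) through the \(Y\)-derivative as a bounded map into \(\HH\). The paper takes \(Y=\domof{H_0^m}\) directly, where \(H_0^m\) is trivially bounded for the graph norm, so your shift to \(\domof{H_0^{m+1}}\) is unnecessary, though it does cleanly avoid the degenerate case \(m=0\); likewise, your identification of the derivative as \(-\iu H^{(T)}(t)U^{(T)}(t)\psi\) at every \(t\) (via \cref{thm:pazy:E2}, \cref{thm:pazy:E4} and the evolution property) fills in a step the paper leaves implicit. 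One small correction to your fallback: continuity of the derivative in \(\domof{H_0^{m+1}}\) follows from \cref{assump:C-H-bound,assump:C-H-diff} at level \(m+1\) together with \cref{thm:pazy:E5} in \(\domof{H_0^{m+2}}\), not from \cref{assump:pazy:H3}, which only controls \(A(t)\) as a map into \(\HH\).
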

\begin{proof}
    Let \(m \in \N_0\).	By \cref{assump:C-H-self-adjoint,assump:C-H-bound,assump:C-H-diff,assump:C-H-periodic,thm:C-existence-invariance}, the statement of \cref{thm:pazy-regular-solutions} holds true with \(A(t) = -\iu H^{(T)}(t)\) and \(Y = \domof{H_0^m}\) equipped with \(\norm{}_{H_0^m}\).
	Hence, by \cref{thm:pazy-U-differentiable-in-Y}, \(t \mapsto U^{(T)}(t)\psi\in\domof{H_0^m}\) is differentiable in \(\norm{}_{H_0^m}\) with derivative \(-\iu H^{(T)}(t)U^{(T)}(t)\psi\). This implies that \(t\mapsto H_0^m U^{(T)}(t)\psi\in\HH\) is differentiable, with derivative \(-\iu H_0^m H^{(T)}(t)U^{(T)}(t)\psi\in\HH\), for every \(\psi \in \cinftyofh \subseteq \domof{H_0^m}\), which proves \cref{assump:A-U-diff}.
\end{proof}

\subsection{Proof of \texorpdfstring{\cref{assump:A-U-bound}}{Property A.6}}
It remains to verify that \cref{assump:A-U-bound} holds assuming that \assumpC{} hold. 
Given a compact interval $I$, we need to show that
\begin{equation}
    \norm{H_0^m U^{(T)}(t) \psi }
    \leq a'^{(T)}_m \norm{H_0^{m+k'_m}\psi} + b'^{(T)}_m \norm{\psi},
\end{equation}
holds for all \(\psi \in \cinftyofh\) and \(t \in I\), with constants satisfying \(a'^{(T)}_m = \bigo(1)\) and \(b'^{(T)}_m = \bigo(1)\) as \(T \to 0\). As we will see, the commutator estimate (\cref{assump:C-commutator-bound}) plays a crucial role in establishing the asymptotic behaviour of these coefficients.

\begin{proposition}\label{thm:C-U-bound}
    Suppose \(\of*{H(t)}_{t \in \R}\) satisfies \cref{assump:C-H-self-adjoint,assump:C-H-bound,assump:C-H-diff,assump:C-H-periodic,assump:C-commutator-bound}.
    Then \(\of*{U^{(T)}(t)}_{t \in \R}\) satisfies \cref{assump:A-U-bound}.
\end{proposition}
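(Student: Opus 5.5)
We would aim for a Grönwall-type energy estimate on $\norm{H_0^m U^{(T)}(t)\psi}$, uniform in $T$. The tool is the commutator bound (\cref{assump:C-commutator-bound}): the factor $1/T$ carried by $H^{(T)}$ must not appear in the constants. Fix $m\in\N$, a compact interval $I\ni 0$ and $\psi\in\cinftyofh$, and set $\psi_t = U^{(T)}(t)\psi$.

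By \cref{thm:C-existence-invariance}, $\psi_t\in\cinftyofh$. By \cref{assump:A-U-diff}, which is already established under \assumpC{}, the map $t\mapsto H_0^m\psi_t$ is differentiable with derivative $-\iu H_0^m H^{(T)}(t)\psi_t$. We therefore compute
\begin{equation*}
\odv{}{t}\norm{H_0^m\psi_t}^2 = 2\,\mathrm{Re}\,\innerp{H_0^m\psi_t}{-\iu H_0^m H^{(T)}(t)\psi_t}.
\end{equation*}
We split $H_0^mH^{(T)}(t)=H^{(T)}(t)H_0^m+\comm{H_0^m}{H^{(T)}(t)}$ on $\cinftyofh$. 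The first term contributes $2\,\mathrm{Re}\,(-\iu)\innerp{H_0^m\psi_t}{H^{(T)}(t)H_0^m\psi_t}=0$, because $H^{(T)}(t)$ is self-adjoint and $H_0^m\psi_t\in\domof{H_0}=\domof{H^{(T)}(t)}$, so the inner product is real.

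The commutator term is controlled by \cref{eq:assump:C-commutator-bound} with $H(t/T)$. By periodicity (\cref{assump:C-H-periodic}), the constants $\tilde a_m,\tilde b_m$ can be taken on the interval $[0,1]$, uniformly in $t\in\R$ and in $T>0$. This gives
\begin{equation*}
\odv{}{t}\norm{H_0^m\psi_t}^2\le 2\norm{H_0^m\psi_t}\big(\tilde a_m\norm{H_0^m\psi_t}+\tilde b_m\norm{\psi}\big),
\end{equation*}
using unitarity, $\norm{\psi_t}=\norm{\psi}$. Passing to $f(t)=\norm{H_0^m\psi_t}+\norm{\psi}$ (working with $f^2$ avoids differentiating at zeros), we obtain $f'\le c_m f$ with $c_m=\max(\tilde a_m,\tilde b_m)$. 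Grönwall's lemma then yields, for $t\in I$,
\begin{equation*}
\norm{H_0^m\psi_t}\le \e^{c_m|t|}\big(\norm{H_0^m\psi}+\norm{\psi}\big).
\end{equation*}
For negative $t$ the same argument runs backwards. This is \cref{eq:assump:A-U-bound} with $k'_m=0$ and $a'^{(T)}_m=b'^{(T)}_m=\e^{c_m\max_I|t|}$, which are independent of $T$ and in particular $\bigo(1)$.

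The main obstacle is rigor in the first step: it needs $H^{(T)}(t)H_0^m\psi_t$ to make sense, and the commutator must be applied to the vector $\psi_t$. Both hold because $\psi_t\in\cinftyofh$ and $\cinftyofh$ is invariant under $H^{(T)}(t)$ (\cref{thm:invariance-under-V-2}). A second point to check is that the commutator constants are genuinely uniform over all $t$; this is where \cref{assump:C-commutator-bound} on the compact interval $[0,1]$ is combined with 1-periodicity.
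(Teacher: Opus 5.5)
Your proof is correct, but it removes the term $H^{(T)}(t)H_0^m$, which carries the factor $1/T$, by a different mechanism than the paper.

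\textbf{The paper's route.} It writes $\iu\frac{d}{dt}H_0^mU^{(T)}(t)\psi = H^{(T)}(t)H_0^mU^{(T)}(t)\psi + [H_0^m,H^{(T)}(t)]U^{(T)}(t)\psi$ and treats this as an inhomogeneous Schr\"odinger equation. The variation-of-constants (Duhamel) formula with the propagator $U^{(T)}(t)U^{(T)}(s)^*$ absorbs the homogeneous part. Unitarity then gives the integral inequality $y(t)\le \|H_0^m\psi\|+\tilde b_m t\|\psi\| + \int_0^t \tilde a_m y(s)\,ds$, and the integral form of Gr\"onwall finishes the argument.

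\textbf{Your route.} You use an energy identity instead. The homogeneous part drops out of $\frac{d}{dt}\|H_0^m\psi_t\|^2$ simply because $H^{(T)}(t)$ is symmetric on $\cinftyofh$. You therefore never need the variation-of-constants formula for a non-autonomous unbounded generator, whose justification requires differentiating $s\mapsto U^{(T)}(t,s)H_0^mU^{(T)}(s)\psi$ via (E.3). You only need \cref{assump:A-U-diff} and the invariance of $\cinftyofh$.

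\textbf{What each buys.} Your argument is more elementary and shows directly that the large part generates no norm growth. The paper's gives an exact integral identity and somewhat tighter explicit constants: $a'_m=e^{\tilde a_m t_{\max}}$ and $b'_m=\tilde b_m t_{\max}e^{\tilde a_m t_{\max}}$. Both get $T$-independence from periodicity making $\tilde a_m,\tilde b_m$ uniform, and both give $k'_m=0$.

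\textbf{One correction.} Your parenthetical claim that working with $f^2$ avoids differentiating at zeros is not right. With $g(t)=\|H_0^m\psi_t\|$, you have $f^2=g^2+2\|\psi\|g+\|\psi\|^2$, which still contains $g$ linearly. Where $g$ vanishes, $g$ can have a corner $g(t)\approx|t-t_0|\,\|\tfrac{d}{dt}H_0^m\psi_t|_{t_0}\|$, so $f^2$ need not be differentiable there. The cleanest fix is to use $2\tilde b_m\|\psi\|g\le \tilde b_m(g^2+\|\psi\|^2)$. Then the everywhere-differentiable function $F=g^2+\|\psi\|^2$ satisfies $F'\le(2\tilde a_m+\tilde b_m)F$. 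This gives $\|H_0^m\psi_t\|\le e^{(2\tilde a_m+\tilde b_m)|t|/2}(\|H_0^m\psi\|+\|\psi\|)$, which is still $T$-independent. Alternatively, run your argument with Dini derivatives, noting $D^+g\le\tilde b_m\|\psi\|$ at zeros of $g$.
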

\begin{proof}
    Due to \cref{thm:C-existence-invariance} we have
    \begin{equation}
        U^{(T)}(t) \domof{H_0^m} \subseteq \domof{H_0^m}, \quad \forall m\in\N_0,\, \forall t\in\R,
    \end{equation}
    and accordingly,
    \begin{equation}
        \domof{H_0^m} \subseteq \domof{H_0^m U^{(T)}(t)}, \quad \forall m\in\N_0,\, \forall t\in\R
        .
    \end{equation}
    Fix a compact interval $I\subseteq\R$. As \(U^{(T)}(t)\) is bounded and \(H_0^m\) is self-adjoint, \(H_0^m U^{(T)}(t)\) and \(H_0^m\) are both closed; thus \(H_0^m U^{(T)}(t)\) is \(H_0^m\)-bounded and there exist \(a^{\prime (T)}_{m},b^{\prime (T)}_{m}\geq 0\) such that
    \begin{equation}
        \norm*{H_0^m U^{(T)}(t)\psi}
        \leq a^{\prime (T)}_{m} \norm*{H_0^m \psi} + b^{\prime (T)}_{m} \norm*{\psi},\quad \forall t\in I
        .
    \end{equation}
    We now verify that we can choose these coefficients in such a way that \(a^{\prime (T)}_{m} = \bigo(1)\) and \(b^{\prime (T)}_{m}=\bigo(1)\) as \(T \to 0\).
    For this purpose, we write
    \begin{align}
        \iu \odv{}{t} H_0^m U^{(T)}(t) \psi
        &= H_0^m H^{(T)}(t) U^{(T)}(t) \psi
        \nonumber\\
        \label{eq:thm:C-U-bound-proof-1}
        &= H^{(T)}(t) H_0^m U^{(T)}(t) \psi + \comm{H_0^m}{H^{(T)}(t)} U^{(T)}(t)\psi,
    \end{align}
    which is a linear differential equation in \(t\mapsto H_0^mU^{(T)}(t)\psi\), with a homogeneous term generated by \(H^{(T)}(t)\) plus an inhomogeneous term which is controlled by \(\comm{H_0^m}{H^{(T)}(t)}\).
    Applying the variation-of-constants formula~\cite[Eq.~(5.2) in Ch.~5]{pazySemigroupsLinearOperators1983} to
\cref{eq:thm:C-U-bound-proof-1} and adding the initial-value condition $H_0^m U^{(T)}(0) \psi=H_0^m\psi$, we obtain
    \begin{equation}
        H_0^m U^{(T)}(t) \psi
        = U^{(T)}(t) H_0^m \psi + \int_0^t U^{(T)}(t)U^{(T)}(s)^* \comm{H_0^m}{H^{(T)}(s)} U^{(T)}(s) \psi \dd{s}
        .
    \end{equation}
    Taking norms and using \cref{assump:C-commutator-bound},
    \begin{align}
        \norm{H_0^m U^{(T)}(t)\psi}
        &\leq \norm{H_0^m\psi} + \int_0^t \norm*{ \comm{H_0^m}{H^{(T)}(s)} U^{(T)}(s) \psi } \dd{s}
        \nonumber\\
        &\leq \norm{H_0^m\psi} + \int_0^t \of*{ \tilde{a}_{m} \norm{H_0^mU^{(T)}(s)\psi} + \tilde{b}_{m} \norm{\psi} } \dd{s}
        ,
    \end{align}
    where the relative boundedness constants $\tilde{a}_{m},\,\tilde{b}_{m}$ can be chosen uniformly on all $\R$ due to the periodicity assumption \cref{assump:C-H-periodic}. To make the notation clearer, we define \(y_m^{(T)}(t) \coloneqq \norm{H_0^mU^{(T)}(t)\psi}\), and the non-decreasing function
    \begin{equation}
        \alpha_m(t)
        \coloneqq \norm{H_0^m \psi} + \tilde{b}_{m} t\,\norm{\psi}
        .
    \end{equation}
    We then have an integral inequality on \(y_m^{(T)}(t)\):
    \begin{equation}
        y^{(T)}_m(t)
        \leq \alpha_m(t) + \int_0^t \tilde{a}_{m} y^{(T)}_m(s) \dd{s}
        .
    \end{equation}
    We can now apply Grönwall's inequality \cite{gronwall1919note} to obtain
    \begin{equation}
        \norm{H_0^mU^{(T)}(t)\psi} = y^{(T)}_m(t)
        \leq \alpha_m(t) \exp(\tilde{a}_{m} t)
        = a'^{(T)}_m \norm*{H_0^m \psi} + b'^{(T)}_m\norm{\psi},\quad \forall t\in I
        ,
    \end{equation}
    with explicit relative boundedness coefficients
    \begin{align}
        a^{\prime (T)}_m 
        &\coloneqq \exp(\tilde{a}_m \tmax)
        \\
        b^{\prime (T)}_{m} 
        &\coloneqq  \tilde{b}_m \tmax \exp(\tilde{a}_m \tmax)
        ,
    \end{align}
    where $\tmax>0$ is the smallest possible number such that $I\subseteq [-\tmax,\tmax]$. Then $a^{\prime (T)}_m$ and $b^{\prime (T)}_m$ are clearly independent of $T$, hence \cref{assump:A-U-bound} follows.
\end{proof}

\section{Examples}\label{sec:examples}
We illustrate the scope of our results in the previous sections with two representative examples: the quantum Rabi model in the interaction picture and the periodically driven harmonic oscillator.
Both of these examples satisfy either \assumpB{} or \assumpC{}, respectively, and thus we can construct effective Hamiltonians as discussed in \cref{sec:Heff}.

\subsection{Quantum Rabi model}\label{sec:quantum-Rabi}
The single-mode quantum Rabi model, see e.g.~\cite[Sec.~2.4]{larson2024jaynes}, is defined by
\begin{equation}
	H_{\mathrm{R}}
	= \frac{\omega + \Delta}{2} \sigma_z \otimes \one_{\mathrm{b}}
    	+ \omega \one_{\mathrm{s}} \otimes \nn
    	+ g \of*{\sigma_- + \sigma_+} \otimes \of*{\an + \ad}
\end{equation}
on $\HH \coloneqq \C^2 \otimes \Ltwo{\R}$. 
Here, 
\begin{gather}
    \an = \frac{1}{\sqrt{2}} \of*{\odv{}{x} + x}
    \quad \text{and} \quad
    \ad = \frac{1}{\sqrt{2}} \of*{-\odv{}{x} + x}
\end{gather}
are the usual bosonic annihilation and creation operators and
\begin{equation}
    \nn = \ad \an = \frac{1}{2}\of*{-\odv[2]{}{x} + x^2 - 1}
\end{equation}
the corresponding number operator on their natural domains, and \(\one_{\mathrm{b}}\) is the identity on the Hilbert space \(\Ltwo{\R}\).
The matrices
\begin{equation}
	\sigma_z
	= \begin{pmatrix}
	1 & 0 \\
	0 & -1
	\end{pmatrix}
	,\quad
	\sigma_+
	= \begin{pmatrix}
	0 & 1 \\
	0 & 0
	\end{pmatrix}
	,\quad
	\sigma_-
	= \begin{pmatrix}
	0 & 0 \\
	1 & 0
	\end{pmatrix}
\end{equation}
and the identity matrix \(\one_{\mathrm{s}}\) act as linear operators on \(\C^2\).
The parameters \(\omega, g > 0\) and \(\Delta \in \R\) respectively describe the frequency of the bosonic field, the coupling strength, and the detuning between the field and the spin system.
The domain of \(H_{\mathrm{R}}\) is
\begin{equation}
    \domof{H_{\mathrm{R}}}
    = \C^2 \otimes \domof{\nn}
    \subseteq \HH
    ,
\end{equation}
where \(\domof{\nn}\) is the domain of \(\nn\).
On this domain, we split \(H_{\mathrm{R}}\) as
\begin{align}
    H_{\mathrm{R}}
	&= \frac{1}{T} H_0 + V
	, \\
	H_0
	&= \pi \of*{ \frac{1}{2}\sigma_z  \otimes \one_{\mathrm{b}} + \one_{\mathrm{s}} \otimes \nn}
	, \\
	V
	&= g \of*{\sigma_- + \sigma_+} \otimes \of*{\an + \ad} + \frac{\Delta}{2} \sigma_z \otimes \one_{\mathrm{b}}
	, \\
	T
	&= \frac{\pi}{\omega}
\end{align}
and show that \(H_0\) and \(V\) satisfy \assumpB{}.

As \(\nn\) is self-adjoint on \(\domof{\nn}\), see \cite[Example~X.6.2]{reedIIFourierAnalysis1975}, and \(\sigma_z\), \(\one_{\mathrm{s}}\), \(\one_{\mathrm{b}}\) are bounded and self-adjoint, \(H_0\) is self-adjoint on \(\C^2 \otimes\domof{\nn}\).
The matrices \(\sigma_-\) and \(\sigma_+\), as well as the operators \(\ad\) and \(\an\) on \(\domof{\nn^{1/2} }\), are mutually adjoints.
Consequently, \(V\) is symmetric on \(\C^2 \otimes \domof{\nn^{1/2} }\) and hence on \(\C^2 \otimes \domof{\nn} \subseteq \C^2 \otimes \domof{\nn^{1/2} }\).
Furthermore, for every \(c \in \R\), the operator \((\nn + c)^m (\an+\ad)\) is relatively bounded with respect to \(\nn^{m+\frac12}\). By interpolation, it is therefore infinitesimally relatively bounded with respect to \(\nn^{m+1}\). These properties follow from the canonical commutation relations for \(\an\) and \(\ad\).
The operators \(\sigma_z \otimes \one_{\mathrm{b}}\) and \(\of*{\sigma_- + \sigma_+}\otimes \one_{\mathrm{b}}\) are clearly bounded and commute (strongly) with \(\one_{\mathrm{s}}\otimes \nn\).
Consequently, \(H_0^m V\) is infinitesimally relatively bounded with respect to \(H_0^{m+1}\) for any \(m \in \N_0\).
This shows that (1) \cref{assump:B-V-sym} holds by the Kato--Rellich theorem, and (2) \cref{assump:B-V-bound} is given with 
\begin{align}
    a_m 
    &= \epsilon 2^{m+1/2} \abs{g}\pi^m \of*{2^m +1 + \abs{\Delta}2^{m- 1/2}}, 
    \\
    b_m
    &= \pi^m \abs{g} \of*{ 2^{m+1} + \abs{\Delta}2^{m - 1/2} } + 2^{m+1/2} + \frac{1}{\epsilon^{2m+1}}
\end{align}
for arbitrary \(\epsilon > 0\) and \(m \in \N_0\).

In the interaction picture, the Hamiltonian takes the form 
\begin{equation}
    H(t)
    = \e^{+ \iu t H_0} V \e^{-\iu t H_0}
    = g \of*{\sigma_- \otimes \ad + \sigma_+ \otimes \an + \e^{\iu 2\pi t} \sigma_+ \otimes \ad + \e^{- \iu 2\pi t} \sigma_- \otimes \an} + \frac{\Delta}{2} \sigma_z \otimes \one_{\mathrm{b}}
    .
\end{equation}
It is evident that \(H(t+1) = H(t)\), i.e. that \cref{assump:B-V-period} holds. 

The spectrum of $H_0$ is given by \(\sigma(H_0) = \pi (\N_0 - \frac{1}{2})\), and is purely discrete.
We choose a partition 
\begin{equation}
    \sigma(H_0) 
    \subseteq \bigcup_{i \in \Z} [\alpha_i,\beta_i]
    ,\quad \alpha_i = \pi(i +1/3),\, \beta_i = \pi(i + 2/3)
    .
\end{equation}
Then
\begin{equation}
    P_{i} 
    =
    \begin{cases}
    \innerp{v_- \otimes \varphi_0}{} v_- \otimes \varphi_0
    & i = -1
    \\
    \innerp{v_- \otimes \varphi_{i+1}}{} v_- \otimes \varphi_{i+1}
    + \innerp{v_+ \otimes \varphi_{i}}{} v_+ \otimes \varphi_{i}
    & 0 \leq i 
    \\
    0
    & \text{else}
    \end{cases}
\end{equation}
are the corresponding projectors, where \(\sigma_z v_\pm = \pm v_\pm\) and \(\nn \varphi_n = n \varphi_n\) are the eigenvectors of \(\sigma_z\) and \(\nn\), respectively.
With this partition, it is easy to see that 
\begin{equation}
    P_i V P_j 
    = 0
    ,\quad \text{if}\ \abs{i-j} \geq 3
    ,
\end{equation}
as
\begin{align}
    &&
    \ad\varphi_n 
    &= \sqrt{n+1}\varphi_{n+1}
    , 
    &\an\varphi_n 
    &= \sqrt{n}\varphi_{n-1}
    ,
    &&
    \\
    &&
    \sigma_+ v_- 
    &= v_+
    , 
    &\sigma_+ v_+ 
    &= 0
    ,
    &&
    \\
    &&
    \sigma_- v_+ 
    &= v_-
    ,\ \text{and}\ 
    &\sigma_- v_- 
    &= 0.
    &&
\end{align}
This means \cref{assump:B-V-spec} also holds.

To verify \cref{assump:B-V-conj}, we define a conjugation with respect to the eigenbasis of \(H_0\) by setting
\begin{equation}
    J v_\pm \otimes \varphi_n 
    = v_\pm \otimes \varphi_n
\end{equation}
and extending it to the whole Hilbert space via anti-linear extension.
It is evident that \(J\) is a conjugation and commutes with \(V\) and \(H_0\) by direct computation using that \(V\) and \(H_0\) both have only real matrix elements in this basis.

Having verified \assumpB{}, we can apply our theory and compute \(\Heff{L}{T}\) at any desired order. For example, setting \(\Delta = 0\) for simplicity, the first two effective Hamiltonians read
\begin{align}
    \Heff{0}{T} 
    &= g \of*{\sigma_- \otimes \ad + \sigma_+ \otimes \an},
    \\
    \Heff{1}{T} 
    &= g \of*{\sigma_- \otimes \ad + \sigma_+ \otimes \an}
    + \frac{g^2}{2\omega} \of*{\sigma_z \otimes \nn - \sigma_- \sigma_+ \otimes \one_{\mathrm{b}} - \sigma_z \otimes \of*{\an^2 + \of*{\ad}^2 } }
    .
\end{align}
Transforming back from the interaction picture to the original frame, we obtain the Hamiltonian
\begin{equation}
\label{eq:BS-schroedinger-picture}
    \e^{-\iu t/T H_0} \Heff{1}{T} \e^{+\iu t/T H_0} + \frac{1}{T} H_0
    = 
    \begin{multlined}[t]
        \frac{\omega + \Delta}{2} \sigma_z \otimes \one_{\mathrm{b}}
    	+ \omega \one_{\mathrm{s}} \otimes \nn
    	+ g \of*{\sigma_- \otimes \ad + \sigma_+ \otimes \an }
        \\
        + \frac{g^2}{2\omega} \of*{  \sigma_z \otimes \nn - \sigma_- \sigma_+ \otimes \one_{\mathrm{b}} - \sigma_z \otimes \of*{\e^{+\iu 2 \omega t}\an^2 + \e^{-\iu 2 \omega t}\of*{\ad}^2 } }
    \end{multlined}
\end{equation}
on $\cinftyofh$. The first three terms on the right-hand side can be identified as the \emph{Jaynes--Cummings Hamiltonian} \cite[Sec.~2.1]{larson2024jaynes}.
The correction term with prefactor \(\frac{g^2}{2\omega}\) consists of two components with different physical interpretations: the diagonal component (with respect to the above basis) is known as \emph{Bloch--Siegert shift} \cite[Eq.~(2.219)]{larson2024jaynes}, see \cite[Sec.~2.4.1]{larson2024jaynes} for a discussion of its physical significance, while the non-diagonal component corresponds to spin-dependent squeezing ~\cite{paingConditionalSqueezingInduced2026}.
We note, however, that the form of the correction term associated with the Bloch--Siegert shift is not uniform across the physics literature (see, e.g., \cite{beaudoinDissipationUltrastrongCoupling2011,rossattoSpectralClassificationCoupling2017,forn-diazUltrastrongCouplingRegimes2019}).

The effective Hamiltonian corresponding to \(L = 2\) has an additional correction term:
\begin{equation}
    \Heff{2}{T} - \Heff{1}{T}
    =
    \begin{aligned}[t]
        \frac{g^3}{2 \omega^2}\biggl(
            &\sigma_z  \sigma_-\otimes \of*{ \textfrac{1}{2} \an^2- \of*{\ad}^2+  \ad+ \of*{\ad}^2  \an}
            + \sigma_+  \sigma_z\otimes \of*{ \textfrac{1}{2} \of*{\ad}^2 - \an^2+ \an  + \ad  \an^2}
        \biggr) .
    \end{aligned}
\end{equation}
\Cref{thm:effective-Hamiltonian-conjugation} guarantees that there is at least one self-adjoint extension of this operator on \(\cinftyofh\). 
On the other hand, the methods developed in recent work~\cite{fischer2025self} can be used to prove that, being of third order in \(\ad\) and \(\an\), this operator is not essentially self-adjoint.

\subsection{Periodically driven quantum harmonic oscillator}\label{sec:periodically-driven-oscillator}
As in the previous example, we consider the creation and annihilation operators $\ad$ and $\an$ and the corresponding number operator $\nn = \ad \an$, and $q=\frac{1}{\sqrt{2}}(\an+\ad)$ and $p = \frac{1}{\iu\sqrt{2}}(\an-\ad)$ the position and momentum operators on $\Ltwo{\R}$, respectively.
Given a continuously differentiable real-valued antisymmetric and 1-periodic function $f$ on $\R$, we define the following Hamiltonian on $\mathcal{H}\coloneqq L^2(\R)$:
\begin{equation}
    H(t)
    \coloneqq \frac{1}{2} \omega \of*{p^2+q^2} + f(t) \sqrt{2} q
    = \omega \of*{\nn + \textfrac{1}{2} }  + f(t) \of*{\an + \ad}
\end{equation}
which describes a periodically driven quantum harmonic oscillator.
We study the high-frequency regime, namely the case where the frequency of $f$ becomes high, while $\omega$ is a constant. In this setting we can then compute the associated Floquet--Magnus expansion together with explicit convergence rates.

We first check that \assumpC{} are satisfied.
Let $H_0 = \omega(\nn+\frac{1}{2})$; then $\domof{H_0}=\domof{\nn}$ and $\nn$ is self-adjoint on this subspace \cite[Example~X.6.2]{reedIIFourierAnalysis1975}.
Furthermore, $(\nn)^m f(t)(\an+\ad)$ is relatively bounded with respect to $(\nn)^{m+\frac{1}{2}}$ as follows from the defining commutation relations for $\an$ and $\ad$.
It then follows from \cite[Lem.~6.2]{teschlMathematicalMethodsQuantum2009} that \(H(t)\) is self-adjoint on $\domof{H_0}$, so \cref{assump:C-H-self-adjoint,assump:C-H-bound} are satisfied.
\Cref{assump:C-H-diff} follows directly from the continuous differentiability of $f$ and the fact that the individual terms are all defined on $\domof{H_0^m}$. 
\Cref{assump:C-H-periodic} is obvious. Iterating the commutator formula
\begin{equation}
    \comm*{\ad \an}{(\an+\ad)}
    = \an - \ad
\end{equation}
shows that $\comm*{H_0^m}{H(t)} = \comm*{H_0^m}{f(t) \of*{\an + \ad} }$ is relatively $H_0^{m-\frac{1}{2}}$-bounded and hence relatively $H_0^m$-bounded, so \cref{assump:C-commutator-bound} is fulfilled.
Since $\sigma(H_0) = \omega(\N_0+\frac{1}{2})$ and $\an+\ad$ changes eigenvalues of $H_0$ by at most $\omega$, we see that \cref{assump:C-H-spec} is fulfilled with $K=1$ and $\alpha_j=(j+\frac13)\omega$, $\beta_j=(j+\frac23)\omega$.
The conjugation in \cref{assump:C-H-conj} is chosen as follows:
\begin{equation}
    J \colon\: \Ltwo{\R} \to \Ltwo{\R}
    ,\quad
    \of*{J\psi}(s)
    = \overline{\psi(-s)}
    .
\end{equation}
Then
\begin{equation}
    \of*{JH(t) - H(-t)J}\psi(s)
    = \overline{ -s \sqrt{2} f(t) \psi(-s) } - s \sqrt{2} f(-t) \overline{ \psi(-s) },
\end{equation}
which vanishes as $f$ is antisymmetric.
For symmetric $f$ instead one could have picked $J \psi(s) = \overline{ \psi(s) }$.

We can now apply our theory and compute $\Heff{L}{T}$. For example, for $L=3$ and $f(t) = \sin(2\pi t)$, we get
\begin{equation}
    \Heff{3}{T}
    = \omega \of*{\nn + \textfrac{1}{2} }
        + T \frac{\omega}{2\pi} \iu \of*{\an - \ad}
        + T^2 \frac{3 \omega}{8\pi^2 }\one
        + T^3 \frac{\omega^3}{8\pi^3} \iu \of*{\an - \ad}
        .
\end{equation}

As pointed out in \cref{rem:error-bounds}, one could now trace back the various relative boundedness constants to obtain and illustrate explicit error bounds for the effective time evolution at order $L$. While the general formulas are very complicated, for specific models such as this one and using numerical tools, this becomes doable. We plan to come back to this point in future work.

\appendix
\section*{Acknowledgements}

We thank Anirban Dey, Paolo Facchi, Karl-Hermann Neeb, and Lauritz van Luijk for useful discussions and suggestions. DL acknowledges financial support by Friedrich-Alexander-Universit\"at Erlangen-N\"urnberg through the funding program “Emerging Talent Initiative” (ETI), and was partially supported by the project TEC-2024/COM-84 QUITEMAD-CM. 

\printbibliography[
    heading=bibintoc,
    ]
\end{document}